\documentclass[a4paper,11pt,fleqn]{article}
\usepackage{amsfonts, amsthm}
\usepackage{bm}
\usepackage{geometry}
\usepackage{amsmath}
\usepackage{amssymb}
\usepackage{amsthm}
\usepackage{a4wide}
\usepackage{lscape}
\usepackage{rotating}
\usepackage{mathpazo}
\usepackage{enumerate}
\usepackage{comment}
\usepackage{filecontents}
\usepackage{float}
\usepackage{epstopdf}
\usepackage{xcolor,colortbl}
\usepackage{booktabs}
\usepackage{url}
\usepackage{authblk}
\usepackage{natbib}
\usepackage{subcaption}
\usepackage{xr}
\input{ee.sty}
\theoremstyle{plain}

\theoremstyle{definition}
\newtheorem{remark}{Remark}

\newcounter{saveeqn}

\renewcommand{\theequation}{S.\arabic{equation}}

\theoremstyle{plain}
\newtheorem{lemma}{Lemma}

\newtheorem{proposition}{Proposition}

\newtheorem{assumption}{Assumption}

\newcommand{\vvarpi}{\bm \varpi}

\geometry{
	includeheadfoot,
	margin=2.54cm
}

\definecolor{Grey}{gray}{0.9}

\makeatletter
\newcommand\Label[1]{&\refstepcounter{equation}(\theequation)\ltx@label{#1}&}
\makeatother

\externaldocument{"CD-IPP_PAPER_16jan2021"}

\begin{document}
	\bibpunct{(}{)}{,}{a}{}{,}
	\bibliographystyle{ecta}

\clearpage\newpage

\begin{center}
	\bigskip
	
	\textbf{\large Supplementary Appendix to }
	
	\bigskip
	
	{\large The Incidental Parameters Problem in Testing for Remaining Cross-section Correlation}
	
	{\large \bigskip}
	
	by
	
	\bigskip
	
	Art\={u}ras Juodis and Simon Reese
	
	\bigskip
	
	January 2021
\end{center}
\tableofcontents
\newpage
\section{Empirical application}
\label{appendix::empirical}
All of the empirical results presented assume the country-industry pair as the unit of analysis (panel group member $i$), of which we have $N = 119$. This panel is unbalanced where for Germany, Portugal, and Sweden the length of the available time-series is substantially shorter than for other countries.

Some of our theoretical results can be easily extended to cover unbalanced datasets. However in order to simplify our analysis we disregard from using the observations for Germany, Portugal, Sweden in the construction of the test statistics. Additionally, we remove two sectors from the Great Britain where observations for $t=2004, 2005$ are missing. This way we are left with $N=82$ and $T=25$.

\section{Additional MC results}
\label{sec:suppl_MC}
This section reports additional results for the Monte Carlo experiments in Section 5 of the main paper. The five tables below are equivalents to Tables 1-5 and their underlying setup differs only in that the errors in $y_{i,t}$ and $x_{i,t}$ are drawn from a standardized $\chi^2(2)$ distribution with zero expected value and unit variance.

\begin{table}[htbp]
	\centering
	\caption{Sample moments of original CD test statistic for remaining CSD under $\mathbb{H}_0$}
	\footnotesize
	\begin{tabular}{rr|rrrr|rrrr}
		\multicolumn{10}{l}{\textbf{Part A: Application of $CD$ to 2WFE residuals}}  \\
		\hline
		\hline
		& \multicolumn{1}{l|}{$c_{\sigma}$:} & 0.1   & 0.5   & 1     & 1.5   & 0.1   & 0.5   & 1     & 1.5 \\
		\multicolumn{1}{l}{N} & \multicolumn{1}{l|}{T} & \multicolumn{4}{c|}{Mean}     & \multicolumn{4}{c}{Variance $\times 100$} \\
		\hline
		25    & 25    & -3.35 & -3.31 & -3.20 & -2.97 & 1.28  & 1.77  & 3.94  & 9.92 \\
		25    & 50    & -4.91 & -4.87 & -4.72 & -4.43 & 0.54  & 0.89  & 2.70  & 8.46 \\
		25    & 100   & -7.08 & -7.02 & -6.82 & -6.43 & 0.23  & 0.60  & 2.71  & 9.99 \\
		25    & 200   & -10.11 & -10.02 & -9.75 & -9.21 & 0.10  & 0.49  & 3.35  & 14.57 \\
		50    & 25    & -3.30 & -3.26 & -3.14 & -2.91 & 1.03  & 1.38  & 2.91  & 7.33 \\
		50    & 50    & -4.86 & -4.80 & -4.65 & -4.33 & 0.38  & 0.65  & 1.99  & 6.37 \\
		50    & 100   & -7.00 & -6.94 & -6.73 & -6.32 & 0.14  & 0.36  & 1.69  & 6.41 \\
		50    & 200   & -10.00 & -9.91 & -9.63 & -9.06 & 0.06  & 0.28  & 1.94  & 8.66 \\
		100   & 25    & -3.28 & -3.24 & -3.11 & -2.86 & 0.88  & 1.16  & 2.35  & 5.87 \\
		100   & 50    & -4.83 & -4.78 & -4.62 & -4.30 & 0.26  & 0.45  & 1.38  & 4.50 \\
		100   & 100   & -6.97 & -6.90 & -6.68 & -6.26 & 0.09  & 0.23  & 1.06  & 4.22 \\
		100   & 200   & -9.95 & -9.86 & -9.57 & -8.99 & 0.03  & 0.16  & 1.07  & 4.77 \\
		200   & 25    & -3.27 & -3.23 & -3.10 & -2.85 & 0.73  & 0.96  & 1.95  & 4.90 \\
		200   & 50    & -4.82 & -4.76 & -4.60 & -4.28 & 0.21  & 0.35  & 1.01  & 3.24 \\
		200   & 100   & -6.95 & -6.88 & -6.66 & -6.23 & 0.07  & 0.17  & 0.75  & 2.95 \\
		200   & 200   & -9.92 & -9.83 & -9.54 & -8.94 & 0.02  & 0.11  & 0.71  & 3.21 \\
		\hline
		\hline
		& \multicolumn{1}{r}{} &       &       &       & \multicolumn{1}{r}{} &       &       &       &  \\
		\multicolumn{10}{l}{\textbf{Part B: Application of $CD$ to CCE residuals}} \\
		\hline
		\hline
		& \multicolumn{1}{l|}{$c_{\sigma}$:} & 0.1   & 0.5   & 1     & 1.5   & 0.1   & 0.5   & 1     & 1.5 \\
		\multicolumn{1}{l}{N} & \multicolumn{1}{l|}{T} & \multicolumn{4}{c|}{Mean}     & \multicolumn{4}{c}{Variance $\times 100$} \\
		\hline
		25    & 25    & -3.35 & -3.31 & -3.21 & -3.02 & 1.37  & 1.86  & 3.90  & 9.30 \\
		25    & 50    & -4.92 & -4.87 & -4.74 & -4.47 & 0.50  & 0.86  & 2.66  & 8.14 \\
		25    & 100   & -7.08 & -7.02 & -6.84 & -6.48 & 0.22  & 0.57  & 2.63  & 9.45 \\
		25    & 200   & -10.11 & -10.03 & -9.77 & -9.28 & 0.10  & 0.50  & 3.26  & 13.22 \\
		50    & 25    & -3.28 & -3.24 & -3.13 & -2.90 & 1.20  & 1.60  & 3.30  & 8.04 \\
		50    & 50    & -4.85 & -4.81 & -4.66 & -4.37 & 0.38  & 0.60  & 1.73  & 5.41 \\
		50    & 100   & -7.00 & -6.94 & -6.74 & -6.35 & 0.13  & 0.34  & 1.54  & 5.83 \\
		50    & 200   & -10.00 & -9.91 & -9.64 & -9.09 & 0.06  & 0.26  & 1.76  & 7.69 \\
		100   & 25    & -3.23 & -3.19 & -3.07 & -2.83 & 1.23  & 1.59  & 2.97  & 6.77 \\
		100   & 50    & -4.82 & -4.77 & -4.61 & -4.31 & 0.31  & 0.48  & 1.32  & 4.09 \\
		100   & 100   & -6.97 & -6.90 & -6.69 & -6.27 & 0.09  & 0.22  & 1.01  & 4.01 \\
		100   & 200   & -9.95 & -9.86 & -9.57 & -9.01 & 0.03  & 0.17  & 1.11  & 4.89 \\
		200   & 25    & -3.16 & -3.12 & -3.00 & -2.75 & 1.64  & 1.90  & 3.04  & 6.35 \\
		200   & 50    & -4.80 & -4.75 & -4.59 & -4.27 & 0.27  & 0.40  & 1.05  & 3.22 \\
		200   & 100   & -6.94 & -6.88 & -6.67 & -6.24 & 0.07  & 0.16  & 0.71  & 2.86 \\
		200   & 200   & -9.92 & -9.83 & -9.54 & -8.96 & 0.02  & 0.10  & 0.69  & 3.17 \\
		\hline
		\hline
	\end{tabular}%
	\par
	\begin{minipage}{0.67\textwidth}
		\smallskip
		\scriptsize Notes. \itshape $\epsi_{i,t}$ and $e_{i,t}$ are standardized $\chi^2(2)$ with zero expected value and unit variance. $\sigma _{i,y}^{2}=c_{\sigma }\left( \varsigma _{i,y}^{2}-2\right) /4+1$; $\varsigma _{i,y}^{2}$ is $\chi^2(2)$. In part B, $r=2$ and loadings $\vlambda_{i}$ are drawn from $U(0.5, 1.5)$.  $\mLambda_{i}$ has first element from $U(0.5, 1.5)$ and second from $U(-0.5,0.5)$. Factors $\vf_t$ drawn from $N(0,1)$. In part A, we restrict $\vf_t = \left[f_t^{(1)}, 1 \right]'$, $\vlambda_{i}= \left[1, \lambda_{i}^{(2)} \right]'$ and $\mLambda_{i}= \left[1, \Lambda_{i}^{(2)} \right]'$.
	\end{minipage}
	\label{tab:supp_CDmoments_H0}%
\end{table}%

\begin{table}[htbp]
	\centering
	\caption{Sample moments of original CD test statistic applied to 2WFE residuals under $\mathbb{H}_1$}
	\footnotesize
	\begin{tabular}{rrrr|rr|rr|rr}
		\hline
		\hline
		& \multicolumn{1}{r|}{} & \multicolumn{4}{c|}{Mean}     & \multicolumn{4}{c}{Variance} \\
		& \multicolumn{1}{l|}{$\vlambda_i$:} & \multicolumn{2}{c|}{symmetric} & \multicolumn{2}{c|}{skewed} & \multicolumn{2}{c|}{symmetric} & \multicolumn{2}{c}{skewed} \\
		& \multicolumn{1}{l|}{$\sigma^2_i$:} & \multicolumn{1}{l}{$\perp \vlambda_i$} & \multicolumn{1}{l|}{$f(\vlambda_i)$} & \multicolumn{1}{l}{$\perp \vlambda_i$} & \multicolumn{1}{l|}{$f(\vlambda_i)$} & \multicolumn{1}{l}{$\perp \vlambda_i$} & \multicolumn{1}{l|}{$f(\vlambda_i)$} & \multicolumn{1}{l}{$\perp \vlambda_i$} & \multicolumn{1}{l}{$f(\vlambda_i)$} \\
		
		\multicolumn{1}{l}{N} & \multicolumn{1}{l|}{T} &       &       &       &       &       &       &       &  \\ 		\hline
		25    & \multicolumn{1}{r|}{25} & -3.37 & -3.09 & -3.30 & -2.68 & 0.02  & 0.10  & 0.02  & 0.41 \\
		25    & \multicolumn{1}{r|}{50} & -4.86 & -4.50 & -4.78 & -3.89 & 0.01  & 0.13  & 0.02  & 0.70 \\
		25    & \multicolumn{1}{r|}{100} & -6.95 & -6.47 & -6.86 & -5.63 & 0.02  & 0.25  & 0.02  & 1.27 \\
		25    & \multicolumn{1}{r|}{200} & -9.88 & -9.25 & -9.76 & -8.06 & 0.02  & 0.39  & 0.02  & 2.23 \\
		50    & \multicolumn{1}{r|}{25} & -3.32 & -2.87 & -3.25 & -2.19 & 0.01  & 0.14  & 0.02  & 0.58 \\
		50    & \multicolumn{1}{r|}{50} & -4.81 & -4.23 & -4.73 & -3.26 & 0.01  & 0.18  & 0.01  & 0.94 \\
		50    & \multicolumn{1}{r|}{100} & -6.88 & -6.09 & -6.78 & -4.78 & 0.01  & 0.31  & 0.01  & 1.55 \\
		50    & \multicolumn{1}{r|}{200} & -9.78 & -8.73 & -9.65 & -6.85 & 0.01  & 0.52  & 0.02  & 2.80 \\
		100   & \multicolumn{1}{r|}{25} & -3.30 & -2.49 & -3.23 & -1.34 & 0.01  & 0.27  & 0.01  & 0.98 \\
		100   & \multicolumn{1}{r|}{50} & -4.78 & -3.76 & -4.70 & -2.13 & 0.01  & 0.32  & 0.01  & 1.42 \\
		100   & \multicolumn{1}{r|}{100} & -6.83 & -5.47 & -6.73 & -3.19 & 0.01  & 0.49  & 0.01  & 2.41 \\
		100   & \multicolumn{1}{r|}{200} & -9.73 & -7.84 & -9.59 & -4.65 & 0.01  & 0.83  & 0.01  & 4.14 \\
		200   & \multicolumn{1}{r|}{25} & -3.30 & -1.80 & -3.22 & 0.33  & 0.01  & 0.56  & 0.01  & 1.97 \\
		200   & \multicolumn{1}{r|}{50} & -4.77 & -2.88 & -4.68 & 0.07  & 0.01  & 0.60  & 0.01  & 2.68 \\
		200   & \multicolumn{1}{r|}{100} & -6.82 & -4.28 & -6.71 & -0.12 & 0.01  & 0.94  & 0.01  & 4.25 \\
		200   & 200   & -9.70 & -6.22 & -9.56 & -0.43 & 0.01  & 1.48  & 0.01  & 7.28 \\
		\hline
		\hline
	\end{tabular}%
	\par
	\begin{minipage}{0.67\textwidth}
		\smallskip
		\scriptsize Notes. \itshape The model has a factor error structure with 3 factors. $\epsi_{i,t}$ and $e_{i,t}$ are standardized $\chi^2(2)$ with zero expected value and unit variance. ``$\sigma_{i}^2$: $\perp \vlambda_{i}$'' means that $\sigma _{i,y}^{2}=\left( \varsigma _{i,y}^{2}-2\right) /4+1$ where $\varsigma _{i,y}^{2}$ is $\chi^2(2)$. For ``$\sigma_{i}^2$: $f(\vlambda_{i})$'', we let $\sigma_{i,y}^{2} = d_{\sigma} T^{-1} \sum_{t=1}^T (\vlambda_{i}' \vf_t)^2$.
		
		The case ``$\vlambda_{i}$: symmetric'' corresponds to drawing $\vlambda_{i}$ from $U(0.5, 1.5)$. ``$\vlambda_{i}$: skewed'' loadings are from a standardized $\chi ^{2}\left(2\right)$ distribution with mean and variance equal to 1.
		
		$\mLambda_{i}$ has first element from $U(0.5, 1.5)$ and all others from $U(-0.5,0.5)$. Factors $\vf_t$ are drawn from $N(0,1)$.
	\end{minipage}
	\label{tab:supp_CDmoments_H1_FE}%
\end{table}%

\begin{table}[htbp]
	\centering
	\caption{Sample moments of original CD test statistic applied to CCE residuals under $\mathbb{H}_1$}
	\footnotesize
	\begin{tabular}{rr|rr|rr|rr|rr}
		\hline
		\hline
		&       & \multicolumn{4}{c|}{Mean}     & \multicolumn{4}{c}{Variance} \\
		& \multicolumn{1}{l|}{$\vlambda_i$:} & \multicolumn{2}{c|}{symmetric} & \multicolumn{2}{c|}{skewed} & \multicolumn{2}{c|}{symmetric} & \multicolumn{2}{c}{skewed} \\
		& \multicolumn{1}{l|}{$\sigma^2_i$:} & \multicolumn{1}{l}{$\perp \vlambda_i$} & \multicolumn{1}{l|}{$f(\vlambda_i)$} & \multicolumn{1}{l}{$\perp \vlambda_i$} & \multicolumn{1}{l|}{$f(\vlambda_i)$} & \multicolumn{1}{l}{$\perp \vlambda_i$} & \multicolumn{1}{l|}{$f(\vlambda_i)$} & \multicolumn{1}{l}{$\perp \vlambda_i$} & \multicolumn{1}{l}{$f(\vlambda_i)$} \\
		\multicolumn{1}{l}{N} & \multicolumn{1}{l|}{T} &       &       &       &       &       &       &       &  \\
		\hline
		25    & 25    & -3.27 & -3.21 & -3.25 & -3.01 & 0.03  & 0.04  & 0.03  & 0.19 \\
		25    & 50    & -4.78 & -4.72 & -4.76 & -4.42 & 0.02  & 0.03  & 0.02  & 0.25 \\
		25    & 100   & -6.87 & -6.80 & -6.86 & -6.40 & 0.02  & 0.04  & 0.02  & 0.41 \\
		25    & 200   & -9.80 & -9.70 & -9.78 & -9.16 & 0.03  & 0.06  & 0.02  & 0.68 \\
		50    & 25    & -3.21 & -3.14 & -3.19 & -2.93 & 0.02  & 0.04  & 0.02  & 0.15 \\
		50    & 50    & -4.71 & -4.65 & -4.70 & -4.37 & 0.02  & 0.02  & 0.01  & 0.17 \\
		50    & 100   & -6.78 & -6.70 & -6.78 & -6.32 & 0.01  & 0.03  & 0.01  & 0.26 \\
		50    & 200   & -9.68 & -9.58 & -9.67 & -9.04 & 0.02  & 0.04  & 0.01  & 0.42 \\
		100   & 25    & -3.16 & -3.10 & -3.14 & -2.90 & 0.02  & 0.03  & 0.02  & 0.10 \\
		100   & 50    & -4.67 & -4.61 & -4.66 & -4.34 & 0.01  & 0.02  & 0.01  & 0.11 \\
		100   & 100   & -6.73 & -6.66 & -6.73 & -6.28 & 0.01  & 0.02  & 0.01  & 0.16 \\
		100   & 200   & -9.62 & -9.52 & -9.61 & -8.99 & 0.01  & 0.03  & 0.01  & 0.27 \\
		200   & 25    & -3.11 & -3.05 & -3.09 & -2.84 & 0.02  & 0.03  & 0.02  & 0.09 \\
		200   & 50    & -4.66 & -4.58 & -4.64 & -4.31 & 0.01  & 0.02  & 0.01  & 0.09 \\
		200   & 100   & -6.71 & -6.64 & -6.71 & -6.26 & 0.01  & 0.02  & 0.01  & 0.11 \\
		200   & 200   & -9.59 & -9.48 & -9.58 & -8.96 & 0.01  & 0.02  & 0.01  & 0.21 \\
		\hline
		\hline
	\end{tabular}%
	\par
	\begin{minipage}{0.7\textwidth}
		\smallskip
		\scriptsize Notes. \itshape See Table \ref{tab:supp_CDmoments_H1_FE}.
	\end{minipage}
	\label{tab:supp_CDmoments_H1_CCE}%
\end{table}%

\begin{table}[htbp]
	\centering
	\caption{Rejection rates for weighted CD test statistic when applied to 2WFE residuals}
	\footnotesize
	\begin{tabular}{rr|rrr|rrr|rrr|rrr}
		\multicolumn{14}{l}{\textbf{Part A: Size}} \\
		\hline
		\hline
		& \multicolumn{1}{l|}{$\vlambda_i$:} & \multicolumn{6}{c|}{symmetric}                & \multicolumn{6}{c}{skewed} \\
		& \multicolumn{1}{l|}{$\sigma^2_i$:} & \multicolumn{3}{c|}{$\perp \vlambda_i$} & \multicolumn{3}{c|}{$f(\vlambda_i)$} & \multicolumn{3}{c|}{$\perp \vlambda_i$} & \multicolumn{3}{c}{$f(\vlambda_i)$} \\
		\multicolumn{1}{l}{N} & \multicolumn{1}{l|}{T} & \multicolumn{1}{l}{$CD_{W}$} & \multicolumn{1}{l}{$CD_{W+}$} & \multicolumn{1}{l|}{$CD_{BC}$} & \multicolumn{1}{l}{$CD_{W}$} & \multicolumn{1}{l}{$CD_{W+}$} & \multicolumn{1}{l|}{$CD_{BC}$} & \multicolumn{1}{l}{$CD_{W}$} & \multicolumn{1}{l}{$CD_{W+}$} & \multicolumn{1}{l|}{$CD_{BC}$} & \multicolumn{1}{l}{$CD_{W}$} & \multicolumn{1}{l}{$CD_{W+}$} & \multicolumn{1}{l}{$CD_{BC}$} \\
		\hline
		25    & 25    & 4.9   & 5.4   & 2.5   & 5.1   & 5.6   & 2.9   & 5.0   & 5.9   & 2.5   & 6.1   & 6.6   & 2.7 \\
		25    & 50    & 6.5   & 7.0   & 3.9   & 6.8   & 7.4   & 3.9   & 5.7   & 6.5   & 3.5   & 6.2   & 6.8   & 3.3 \\
		25    & 100   & 8.8   & 9.0   & 5.0   & 8.0   & 8.2   & 4.6   & 8.4   & 8.8   & 5.1   & 8.0   & 8.1   & 3.9 \\
		25    & 200   & 10.8  & 10.9  & 5.7   & 9.7   & 9.8   & 6.1   & 11.0  & 11.0  & 7.0   & 10.9  & 11.1  & 4.3 \\
		50    & 25    & 5.1   & 5.9   & 2.8   & 5.7   & 6.2   & 3.6   & 5.7   & 6.7   & 2.3   & 5.1   & 5.7   & 2.9 \\
		50    & 50    & 5.0   & 6.6   & 4.3   & 4.6   & 5.9   & 4.0   & 6.5   & 7.6   & 3.4   & 5.4   & 6.8   & 3.1 \\
		50    & 100   & 6.6   & 7.1   & 3.9   & 5.9   & 6.7   & 3.2   & 6.6   & 7.0   & 4.6   & 5.9   & 6.5   & 3.5 \\
		50    & 200   & 6.6   & 6.8   & 4.7   & 5.7   & 5.9   & 4.4   & 5.8   & 6.2   & 4.4   & 6.1   & 6.3   & 3.5 \\
		100   & 25    & 5.3   & 5.8   & 2.8   & 5.3   & 5.5   & 3.1   & 4.6   & 4.7   & 3.5   & 5.0   & 5.4   & 3.7 \\
		100   & 50    & 5.9   & 8.7   & 5.1   & 4.5   & 6.7   & 4.1   & 4.6   & 7.4   & 3.9   & 4.8   & 7.4   & 4.0 \\
		100   & 100   & 5.6   & 6.6   & 4.5   & 5.2   & 7.0   & 4.3   & 4.9   & 6.1   & 5.0   & 4.2   & 5.9   & 2.8 \\
		100   & 200   & 5.8   & 6.3   & 4.0   & 5.5   & 6.2   & 4.8   & 5.5   & 5.9   & 3.1   & 5.9   & 6.3   & 4.1 \\
		200   & 25    & 4.4   & 4.5   & 2.9   & 4.8   & 4.8   & 3.5   & 5.0   & 5.1   & 3.2   & 4.8   & 4.9   & 3.1 \\
		200   & 50    & 4.6   & 7.4   & 3.6   & 4.8   & 8.6   & 3.8   & 4.5   & 8.0   & 3.7   & 5.8   & 8.7   & 4.0 \\
		200   & 100   & 4.7   & 8.3   & 4.2   & 5.5   & 9.7   & 4.7   & 5.2   & 8.6   & 4.1   & 5.7   & 9.4   & 4.1 \\
		200   & 200   & 4.2   & 5.4   & 5.0   & 5.4   & 7.4   & 5.0   & 5.9   & 6.8   & 5.2   & 4.9   & 6.6   & 3.6 \\
		\hline
		\hline
		& \multicolumn{1}{r}{} &       &       & \multicolumn{1}{r}{} &       &       & \multicolumn{1}{r}{} &       &       & \multicolumn{1}{r}{} &       &       &  \\
		\multicolumn{14}{l}{\textbf{Part B: Power}} \\
		\hline
		\hline
		& \multicolumn{1}{l|}{$\vlambda_i$:} & \multicolumn{6}{c|}{symmetric}                & \multicolumn{6}{c}{skewed} \\
		& \multicolumn{1}{l|}{$\sigma^2_i$:} & \multicolumn{3}{c|}{$\perp \vlambda_i$} & \multicolumn{3}{c|}{$f(\vlambda_i)$} & \multicolumn{3}{c|}{$\perp \vlambda_i$} & \multicolumn{3}{c}{$f(\vlambda_i)$} \\
		\multicolumn{1}{l}{N} & \multicolumn{1}{l|}{T} & \multicolumn{1}{l}{$CD_{W}$} & \multicolumn{1}{l}{$CD_{W+}$} & \multicolumn{1}{l|}{$CD_{BC}$} & \multicolumn{1}{l}{$CD_{W}$} & \multicolumn{1}{l}{$CD_{W+}$} & \multicolumn{1}{l|}{$CD_{BC}$} & \multicolumn{1}{l}{$CD_{W}$} & \multicolumn{1}{l}{$CD_{W+}$} & \multicolumn{1}{l|}{$CD_{BC}$} & \multicolumn{1}{l}{$CD_{W}$} & \multicolumn{1}{l}{$CD_{W+}$} & \multicolumn{1}{l}{$CD_{BC}$} \\
		\hline
		25    & 25    & 12.6  & 26.6  & 8.0   & 10.7  & 14.0  & 5.8   & 12.3  & 19.6  & 14.7  & 14.3  & 20.0  & 24.0 \\
		25    & 50    & 21.2  & 87.4  & 14.8  & 17.5  & 46.1  & 12.7  & 18.0  & 59.6  & 32.9  & 23.2  & 63.7  & 48.4 \\
		25    & 100   & 34.6  & 99.9  & 26.4  & 27.3  & 95.2  & 20.9  & 32.3  & 90.8  & 50.1  & 36.3  & 96.7  & 66.6 \\
		25    & 200   & 50.4  & 100 & 39.7  & 43.6  & 100 & 34.7  & 44.1  & 99.3  & 64.4  & 51.3  & 100 & 80.4 \\
		50    & 25    & 11.7  & 21.1  & 8.1   & 10.7  & 11.7  & 5.7   & 11.7  & 20.9  & 35.2  & 12.6  & 16.0  & 58.9 \\
		50    & 50    & 19.6  & 98.3  & 15.7  & 17.5  & 56.0  & 12.8  & 18.7  & 85.5  & 59.5  & 23.0  & 79.7  & 86.0 \\
		50    & 100   & 35.4  & 100 & 27.0  & 26.3  & 99.7  & 22.2  & 29.7  & 99.8  & 80.2  & 37.5  & 100 & 95.4 \\
		50    & 200   & 49.3  & 100 & 40.6  & 43.5  & 100 & 37.4  & 45.3  & 100 & 90.7  & 53.2  & 100 & 99.1 \\
		100   & 25    & 11.1  & 14.4  & 6.5   & 10.6  & 10.9  & 5.1   & 10.4  & 19.1  & 73.4  & 12.0  & 12.9  & 93.5 \\
		100   & 50    & 19.9  & 99.9  & 15.0  & 16.3  & 66.4  & 13.3  & 18.1  & 98.9  & 92.9  & 22.3  & 94.6  & 99.7 \\
		100   & 100   & 33.3  & 100 & 25.8  & 26.2  & 100 & 23.8  & 30.8  & 100 & 98.8  & 36.7  & 100 & 100 \\
		100   & 200   & 48.7  & 100 & 41.3  & 41.9  & 100 & 37.0  & 46.2  & 100 & 100 & 52.3  & 100 & 100 \\
		200   & 25    & 12.7  & 12.7  & 8.2   & 10.8  & 10.8  & 6.6   & 11.9  & 13.9  & 96.5  & 12.9  & 13.0  & 100 \\
		200   & 50    & 18.9  & 100 & 16.6  & 17.9  & 73.1  & 13.3  & 18.8  & 100 & 100 & 22.1  & 98.2  & 100 \\
		200   & 100   & 33.9  & 100 & 27.4  & 26.3  & 100 & 21.5  & 32.8  & 100 & 100 & 36.9  & 100 & 100 \\
		200   & 200   & 50.9  & 100 & 41.9  & 43.6  & 100 & 37.2  & 46.7  & 100 & 100 & 54.6  & 100 & 100 \\
		\hline
		\hline
	\end{tabular}%
	\par
	\begin{minipage}{1.05\textwidth}
		\smallskip
		\scriptsize Notes. \itshape In Part A the model has two factors which are restricted as noted in Table \ref{tab:CDmoments_H0}. Part B corresponds to a model with factor error structure and 3 factors. For details on all other model parameters, see Table \ref{tab:supp_CDmoments_H1_FE}.
		
		$CD_W$ is the weighted CD test statistic introduced in Theorem \ref{thm:CD_W_combined}. $CD_{W+}$ is its power-enhanced refinement. $CD_{BC}$ is a CD test statistic with analytic bias correction.
	\end{minipage}
	\label{tab:supp_CDW_rejections_FE}
\end{table}%

\begin{table}[htbp]
	\centering
	\caption{Rejection rates for weighted CD test statistic when applied to CCE residuals}
	\footnotesize
	\begin{tabular}{rr|rrr|rrr|rrr|rrr}
		\multicolumn{14}{l}{\textbf{Part A: Size}} \\
		\hline
		\hline
		& \multicolumn{1}{l|}{$\vlambda_i$:} & \multicolumn{6}{c|}{symmetric}                & \multicolumn{6}{c}{skewed} \\
		& \multicolumn{1}{l|}{$\sigma^2_i$:} & \multicolumn{3}{c|}{$\perp \vlambda_i$} & \multicolumn{3}{c|}{$f(\vlambda_i)$} & \multicolumn{3}{c|}{$\perp \vlambda_i$} & \multicolumn{3}{c}{$f(\vlambda_i)$} \\
		\multicolumn{1}{l}{N} & \multicolumn{1}{l|}{T} & \multicolumn{1}{l}{$CD_{W}$} & \multicolumn{1}{l}{$CD_{W+}$} & \multicolumn{1}{l|}{$CD_{BC}$} & \multicolumn{1}{l}{$CD_{W}$} & \multicolumn{1}{l}{$CD_{W+}$} & \multicolumn{1}{l|}{$CD_{BC}$} & \multicolumn{1}{l}{$CD_{W}$} & \multicolumn{1}{l}{$CD_{W+}$} & \multicolumn{1}{l|}{$CD_{BC}$} & \multicolumn{1}{l}{$CD_{W}$} & \multicolumn{1}{l}{$CD_{W+}$} & \multicolumn{1}{l}{$CD_{BC}$} \\
		\hline
		25    & 25    & 6.2   & 6.8   & 4.6 & 5.7   & 6.4   & 4.4   & 5.6   & 6.1   & 4.7   & 5.6   & 6.6   & 5.9 \\
		25    & 50    & 7.4   & 8.1   & 5.7 & 5.8   & 6.6   & 5.9   & 7.0   & 7.9   & 6.1   & 6.7   & 7.3   & 8.9 \\
		25    & 100   & 8.3   & 8.5   & 6.6 & 8.0   & 7.9   & 7.8   & 9.0   & 9.7   & 6.4   & 8.2   & 8.8   & 14.5 \\
		25    & 200   & 9.2   & 10.1  & 7.4 & 11.3  & 11.6  & 8.8   & 10.9  & 12.2  & 9.5   & 13.0  & 14.0  & 24.2 \\
		50    & 25    & 5.1   & 5.8   & 4.1 & 4.4   & 5.0   & 4.8   & 5.9   & 6.6   & 4.9   & 5.5   & 6.7   & 8.0 \\
		50    & 50    & 5.1   & 6.5   & 5.1 & 6.1   & 7.0   & 4.2   & 5.5   & 7.0   & 5.8   & 7.2   & 8.0   & 8.1 \\
		50    & 100   & 5.5   & 6.0   & 4.7 & 5.2   & 5.5   & 6.3   & 6.3   & 6.6   & 4.7   & 6.6   & 7.2   & 11.8 \\
		50    & 200   & 6.4   & 6.5   & 5.9 & 6.9   & 6.9   & 6.7   & 6.0   & 6.6   & 6.9   & 7.4   & 7.9   & 21.9 \\
		100   & 25    & 5.2   & 5.4   & 5.9 & 5.5   & 5.7   & 6.7   & 5.4   & 5.8   & 6.6   & 5.5   & 5.7   & 8.9 \\
		100   & 50    & 5.0   & 7.4   & 4.3 & 5.3   & 8.3   & 5.9   & 5.3   & 7.0   & 4.5   & 5.1   & 6.9   & 8.0 \\
		100   & 100   & 5.5   & 6.6   & 5.1 & 5.1   & 7.0   & 5.2   & 5.8   & 7.0   & 5.4   & 5.1   & 6.9   & 11.3 \\
		100   & 200   & 5.0   & 5.5   & 4.8 & 6.0   & 6.2   & 5.1   & 5.2   & 5.9   & 5.7   & 5.8   & 6.1   & 19.2 \\
		200   & 25    & 5.8   & 5.8   & 11.7 & 5.1   & 5.1   & 12.7  & 6.3   & 6.3   & 11.7  & 4.9   & 5.0   & 16.3 \\
		200   & 50    & 6.3   & 9.4   & 4.5 & 5.6   & 9.7   & 5.3   & 5.0   & 9.1   & 5.0   & 5.6   & 9.8   & 9.5 \\
		200   & 100   & 5.4   & 8.5   & 5.5 & 5.1   & 8.3   & 5.8   & 4.9   & 8.3   & 4.9   & 5.1   & 8.4   & 10.8 \\
		200   & 200   & 5.9   & 7.0   & 5.3 & 4.8   & 5.9   & 5.9   & 4.8   & 6.7   & 4.7   & 5.0   & 7.1   & 15.8 \\
		\hline
		\hline
		\multicolumn{14}{r}{}  \\
		\multicolumn{14}{l}{\textbf{Part B: Power}} \\
		\hline
		\hline
		& \multicolumn{1}{l|}{$\vlambda_i$:} & \multicolumn{6}{c|}{symmetric}                & \multicolumn{6}{c}{skewed} \\
		& \multicolumn{1}{l|}{$\sigma^2_i$:} & \multicolumn{3}{c|}{$\perp \vlambda_i$} & \multicolumn{3}{c|}{$f(\vlambda_i)$} & \multicolumn{3}{c|}{$\perp \vlambda_i$} & \multicolumn{3}{c}{$f(\vlambda_i)$} \\
		\multicolumn{1}{l}{N} & \multicolumn{1}{l|}{T} & \multicolumn{1}{l}{$CD_{W}$} & \multicolumn{1}{l}{$CD_{W+}$} & \multicolumn{1}{l|}{$CD_{BC}$} & \multicolumn{1}{l}{$CD_{W}$} & \multicolumn{1}{l}{$CD_{W+}$} & \multicolumn{1}{l|}{$CD_{BC}$} & \multicolumn{1}{l}{$CD_{W}$} & \multicolumn{1}{l}{$CD_{W+}$} & \multicolumn{1}{l|}{$CD_{BC}$} & \multicolumn{1}{l}{$CD_{W}$} & \multicolumn{1}{l}{$CD_{W+}$} & \multicolumn{1}{l}{$CD_{BC}$} \\
		\hline
		25    & 25    & 10.4  & 16.3  & 5.7   & 9.2   & 10.7  & 5.6   & 8.5   & 13.4  & 6.2   & 9.7   & 12.8  & 7.1 \\
		25    & 50    & 15.0  & 47.6  & 9.9   & 10.2  & 19.8  & 8.9   & 12.5  & 37.4  & 11.8  & 15.2  & 31.4  & 14.5 \\
		25    & 100   & 22.6  & 83.9  & 14.0  & 17.6  & 50.6  & 13.2  & 20.5  & 64.4  & 18.7  & 23.3  & 67.0  & 24.8 \\
		25    & 200   & 33.6  & 97.6  & 22.5  & 26.0  & 85.7  & 22.5  & 28.2  & 86.6  & 28.8  & 34.4  & 91.1  & 42.4 \\
		50    & 25    & 9.7   & 12.6  & 5.7   & 7.8   & 8.5   & 4.9   & 8.7   & 15.0  & 5.9   & 8.6   & 10.3  & 7.4 \\
		50    & 50    & 11.5  & 67.0  & 8.6   & 11.5  & 23.5  & 8.2   & 10.9  & 58.8  & 9.4   & 13.5  & 42.7  & 14.9 \\
		50    & 100   & 20.6  & 98.4  & 14.7  & 14.7  & 68.5  & 13.9  & 18.1  & 91.6  & 21.6  & 21.2  & 90.4  & 28.1 \\
		50    & 200   & 28.0  & 100 & 23.5  & 24.3  & 98.6  & 22.9  & 26.6  & 99.0  & 32.1  & 33.2  & 99.6  & 43.0 \\
		100   & 25    & 9.3   & 9.9   & 4.8   & 7.5   & 7.8   & 5.5   & 9.0   & 13.0  & 6.2   & 9.0   & 9.3   & 8.5 \\
		100   & 50    & 12.6  & 80.4  & 6.4   & 9.3   & 25.4  & 7.6   & 12.2  & 82.9  & 11.9  & 13.1  & 54.9  & 15.9 \\
		100   & 100   & 18.2  & 100 & 12.5  & 15.2  & 85.8  & 12.8  & 17.6  & 99.7  & 21.8  & 22.2  & 98.9  & 29.5 \\
		100   & 200   & 29.4  & 100 & 21.9  & 23.6  & 99.9  & 24.3  & 28.6  & 100 & 36.2  & 29.7  & 100 & 46.7 \\
		200   & 25    & 9.4   & 9.7   & 7.9   & 8.1   & 8.1   & 6.6   & 7.5   & 8.1   & 7.6   & 10.7  & 10.8  & 8.6 \\
		200   & 50    & 12.1  & 90.8  & 9.4   & 10.9  & 33.4  & 7.2   & 11.9  & 95.2  & 11.2  & 12.8  & 64.5  & 17.2 \\
		200   & 100   & 20.0  & 100 & 13.0  & 14.8  & 96.3  & 13.0  & 16.6  & 100 & 21.8  & 21.1  & 99.8  & 29.8 \\
		200   & 200   & 27.2  & 100 & 22.6  & 23.5  & 100 & 21.7  & 26.6  & 100 & 42.4  & 31.0  & 100 & 49.0 \\
		\hline
		\hline
	\end{tabular}%
	\par
	\begin{minipage}{1.05\textwidth}
		\smallskip
		\scriptsize Notes. \itshape In Part A the model has two factors which are restricted as noted in Table \ref{tab:supp_CDmoments_H0}. Part B corresponds to a model with factor error structure and 3 factors. For details on all other model parameters, see Table \ref{tab:supp_CDmoments_H1_FE}.
		
		$CD_W$ is the weighted CD test statistic introduced in Theorem \ref{thm:CD_W_combined}. $CD_{W+}$ is its power-enhanced refinement. $CD_{BC}$ is a CD test statistic with analytic bias correction.
	\end{minipage}
	\label{tab:addlabel}%
\end{table}
\clearpage
\section{Additional theoretical contributions}
\label{sec:suppl_theory}

\subsection{Power analysis: Additive}
\label{section:power}

To simplify the discussion, we disregard from the effect of covariates on the variable of interest $y_{i,t}$ and assume that the true model is given by a pure static factor model, amounting to model \eqref{eq:themodelFactor} with $\vbeta=\vzeros$. Extending our results to the case of general $\vbeta$ is possible, although formally cumbersome.\footnote{To appreciate this point note that, even if the unobserved heterogeneity driving the data is misspecified, least squares estimates of $\vbeta$ are consistent for the pseudo-true parameter vector $\vdelta =\underset{N,T \to \infty}{\plim} \left[\left( \sum_{i=1}^N \breve{\mX}_i' \breve{\mX}_i \right)^{-1} \left( \sum_{i=1}^N \breve{\mX}_i' \breve{\vy}_i \right)\right]$ with $\breve{\mX}_i$ and $\breve{\vy}_i$ being orthogonal to estimates of the assumed sources of unobserved heterogeneity.}
In a first instance, suppose that a researcher erroneously assumes cross-section dependence to stem from time fixed effects. The corresponding mis-specified model is formally given by
\begin{align}
\label{eq:2WFE_H1}
	y_{i,t} = \mu_{y,t} + \nu_{i,t},
\end{align}
where $\mu_{y,t} = \vf_t' \E \left[ \vlambda_{i} \right]$ and $\nu_{i,t} = \vf_t' \left(\vlambda_{i} -\E\left[ \vlambda_{i} \right]\right) + \tilde{\vepsi}_{i,t}$.
Deviations of the data from their cross-section averages can accordingly be written as
\begin{align}
	\label{eq:themodel_2wfe_H1}
	\widehat{\nu}_{i,t} &= y_{i,t} - N^{-1}\sum_{i=1}^N y_{i,t} \\
	&= \vf_t' \tilde{\vlambda}_i + \tilde{\vepsi}_{i,t} \notag
\end{align}
where $\tilde{\vlambda}_{i} = \vlambda_{i} - N^{-1}\sum_{i=1}^N \vlambda_{i}$ and $\tilde{\vepsi}_{i,t} = \epsi_{i,t}- N^{-1}\sum_{i=1}^N \epsi_{i,t}$. These deviations are sample equivalents of the composite error term $\nu_{i,t}$ in the mis-specified time fixed effects model \eqref{eq:2WFE_H1}. Let the variance of this composite error be defined as $\varsigma_{\nu,i}^2 = \E\left[ \nu_{i,t}^2 ~|~  \sigma_{i}, \vlambda_{i} \right] = \left( \vlambda_{i}-\E\left[\vlambda_{i}\right] \right) ^{\prime }\mSigma_{F}\left( \vlambda_{i}-\E\left[ \vlambda
_{i}\right] \right) +\sigma _{i}^{2}$ and let $\overline{\varsigma_{\nu}^k}= N^{-1} \sum_{i=1}^N \varsigma_{\nu,i}^k, \; k \in \mathbb{Z} $ be shorthand notation for cross-section averages of $\varsigma_{\nu,i}^k$. Finally, define the stacked vector of the residual error term as $ \vnu_{i}=\left[ \nu_{i,1},\nu_{i,2}, \ldots,\nu_{i,T} \right]'$.

For the sake of simplicity, we  assume that $\varsigma_{\nu,i}^2$ is known and used to standardize regression residuals when constructing the CD test statistic. This greatly simplifies the proofs of the following proposition while leaving the main results of this section qualitatively unaffected (as we formally show for results under $\mathbb{H}_{0}$ in the main text). We confirm this conjecture with corresponding simulations in Section \ref{section::MC}, allowing for both unknown error variances and unknown, general slope coefficients $\vbeta$.
Given our current setup, let the CD test statistic constructed from $\widehat{\nu}_{i,t}$ be expressed by
\begin{align}
\label{eq:CD_H1}
	CD_{\mathbb{H}_1} = \sqrt{\frac{2}{TN(N-1)}}\sum_{i=2}^N\sum_{j=1}^N \sum_{t=1}^T \varsigma_{\nu,i}^{-1} \widehat{\nu}_{i,t} \widehat{\nu}_{j,t}\varsigma_{\nu,j}^{-1}.
\end{align}
The properties of this test statistic are characterized as follows:

\begin{proposition}
	\label{prop:CD_2WFE_H1}Suppose that the true model is given by \eqref{eq:2WFE_H1} and that its components satisfy Assumptions
	\ref{ass:errors}--\ref{ass:rank} and $\varsigma_{\nu,i}\in [\delta;M]$. Let the CD test statistic $CD_{\mathbb{H}_1}$ be constructed from the cross-sectionally demeaned data \eqref{eq:themodel_2wfe_H1}. Then,
	\begin{align}
	\label{eq:CDH1_2WFE}
		CD_{\mathbb{H}_{1}} &= \sqrt{\frac{2}{TN\left( N-1\right)}} \sum_{i=2}^{N}\sum_{j=1}^{i-1} \left(\varsigma_{\nu,i}^{-1} - \E\left[\varsigma_{\nu,i}^{-1}\right] \right) \vnu_{i}^{\prime}\vnu_{j} \left(\varsigma_{\nu,j}^{-1} - \E\left[\varsigma_{\nu,j}^{-1}\right] \right) + \sqrt{T}\Xi_{\mathbb{H}_{1}} \notag \\
		&+ \largeO_{P}\left( N^{-1}\sqrt{T}\right),
	\end{align}%
	where
	\begin{align*}
		\Xi_{\mathbb{H}_1} &= \sqrt{\frac{N}{2\left(N-1\right)}} \frac{1}{NT} \sum_{i=1}^N \vnu_{i}' \vnu_{i} \left[\left(\overline{\varsigma_{\nu}^{-1}}\right)^2 - 2\overline{\varsigma_{\nu}^{-1}}\varsigma_{\nu,i}^{-1} \right].
	\end{align*}%
	Furthermore, the leading stochastic term $\sqrt{\frac{1%
		}{2TN\left( N-1\right) }}\sum_{i=2}^{N}\sum_{j=1}^{i-1}\left(
	\varsigma _{\nu,i}^{-1}-\E\left[ \varsigma _{\nu,i}^{-1}\right] \right)
	\vnu_{i}^{\prime}\vnu_{j}\left( \varsigma _{\nu,j}^{-1}-\E\left[ \varsigma
	_{v,j}^{-1}\right] \right) $\ has expected value $N\sqrt{T}\cov
	\left[ \varsigma _{v,1}^{-1},\vlambda_{1}^{\prime }\right]
	\mSigma_{F}\cov\left[ \vlambda_{1},\varsigma _{v,1}^{-1}\right]$. In the special case $\cov\left[ \vlambda_{1},\varsigma_{v,1}^{-1}\right]
	=\vzeros $ the order of this leading term is determined by its variance which diverges at rate $T$.
\end{proposition}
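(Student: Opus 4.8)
The plan is to mirror the decomposition used for Theorem~\ref{theorem::additiveHetero}, letting the composite error $\vnu_i$ and the standardising weight $\varsigma_{\nu,i}^{-1}$ play the roles taken there by $\vepsi_i$ and $\sigma_i^{-1}$, and then to exploit the one structural change: under the alternative the $\vnu_i$ are cross-sectionally \emph{dependent} through the common factor $\vf_t$. Writing $w_i=\varsigma_{\nu,i}^{-1}$, $\hat\nu_{i,t}=\nu_{i,t}-\overline\nu_t$ with $\overline\nu_t=N^{-1}\sum_k\nu_{k,t}$, and $\overline w=\overline{\varsigma_\nu^{-1}}$, I would first use the elementary identity
\begin{equation*}
\sum_{i>j}\sum_t w_iw_j\hat\nu_{i,t}\hat\nu_{j,t}=\tfrac12\sum_t\Big[\big(\textstyle\sum_i w_i\hat\nu_{i,t}\big)^2-\sum_i w_i^2\hat\nu_{i,t}^2\Big],
\end{equation*}
together with the key algebraic fact that cross-section demeaning recentres the weights, $\sum_i w_i\hat\nu_{i,t}=\sum_i(w_i-\overline w)\nu_{i,t}$. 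The off-diagonal part of the squared-sum term yields the leading double sum, while its diagonal part combined with $-\tfrac12\sum_i w_i^2\hat\nu_{i,t}^2$ reproduces $\sqrt T\Xi_{\mathbb{H}_1}$ exactly as in Theorem~\ref{theorem::additiveHetero}, using $(w_i-\overline w)^2-w_i^2=(\overline{\varsigma_\nu^{-1}})^2-2\overline{\varsigma_\nu^{-1}}\varsigma_{\nu,i}^{-1}$ and $\sum_t\nu_{i,t}^2=\vnu_i'\vnu_i$.

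The genuinely new content is the analysis of the leading term $L=\sqrt{\tfrac{2}{TN(N-1)}}\sum_{i>j}c_ic_j\,\vnu_i'\vnu_j$ with $c_i=\varsigma_{\nu,i}^{-1}-\E[\varsigma_{\nu,i}^{-1}]$ and $\vtheta_i=\vlambda_i-\E[\vlambda_i]$. Since the triples $(\vlambda_i,\sigma_i,\{\epsi_{i,t}\})$ are i.i.d.\ across $i$ with $\E[\vtheta_i]=\vzeros$ and $\E[\vnu_i'\vnu_j\mid\vlambda]=T\vtheta_i'\mSigma_F\vtheta_j$ for $i\neq j$, independence collapses the off-diagonal expectation to a covariance form:
\begin{equation*}
\E[L]=\sqrt{\tfrac{2}{TN(N-1)}}\,\tfrac{N(N-1)}{2}\,T\,\E[c_1\vtheta_1']\,\mSigma_F\,\E[\vtheta_1c_1]=\sqrt{\tfrac{TN(N-1)}{2}}\;\cov[\varsigma_{\nu,1}^{-1},\vlambda_1']\,\mSigma_F\,\cov[\vlambda_1,\varsigma_{\nu,1}^{-1}],
\end{equation*}
of order $N\sqrt T$, which establishes divergence. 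The \emph{population} centring in $c_i$ is exactly what makes $\E[c_i]=0$ and hence drives this collapse. For the knife-edge case $\cov[\vlambda_1,\varsigma_{\nu,1}^{-1}]=\vzeros$ the mean vanishes and I would instead evaluate $\var[L]=\E[L^2]$: matching the index pairs $(i,j)=(k,l)$ and using that the factor process is serially uncorrelated to leading order gives $\E[(\vnu_i'\vnu_j)^2\mid\vlambda]=T^2(\vtheta_i'\mSigma_F\vtheta_j)^2+\largeO(T)$, while every cross-pair sharing one index carries an isolated $\E[c\vtheta]=\cov[\vlambda,\varsigma_\nu^{-1}]=\vzeros$ factor and drops at this order. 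Hence
\begin{equation*}
\var[L]\approx\tfrac{2}{TN(N-1)}\cdot\tfrac{N(N-1)}{2}\cdot T^2\,\E\!\big[c_1^2c_2^2(\vtheta_1'\mSigma_F\vtheta_2)^2\big]=T\,\tr\!\big(\mSigma_F\mPsi\mSigma_F\mPsi\big),\qquad \mPsi=\E[c_1^2\vtheta_1\vtheta_1'],
\end{equation*}
confirming that $L$ diverges only at rate $\sqrt T$ once the mean degenerates.

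The main obstacle is the bookkeeping of the lower-order terms, which is far more delicate than under $H_0$ precisely because the shared factor couples the cross-section. Two sources must be controlled. First, cross-section demeaning leaves ``leakage'' pieces $\sum_t\sum_i w_i^2\nu_{i,t}\overline\nu_t$ and $\sum_t\sum_i w_i^2\overline\nu_t^2$, the analogues of terms that were harmlessly negligible under independence; here their off-diagonal cross-sectional parts are a priori larger, so extracting their true order requires systematic use of $\E[\vtheta_i]=\vzeros$ (which annihilates off-diagonal means) together with the identity $\varsigma_{\nu,i}^2=\vtheta_i'\mSigma_F\vtheta_i+\sigma_i^2$ (which yields $\E[w_i^2\|\vnu_i\|^2]=T$). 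Second, the stated leading term uses population centring $c_i$, whereas the decomposition naturally produces sample centring $w_i-\overline w$; replacing one by the other generates corrections driven by $b=\overline w-\E[w]=\largeO_P(N^{-1/2})$ that interact with the factor structure and must be shown not to alter the leading rate or the bias. I would organise both by conditioning on $\{\vlambda_i,\sigma_i\}$, integrating out the factors and idiosyncratic errors first and the cross-sectional moments second, invoking $\E[\vtheta_i]=\vzeros$ repeatedly to kill off-diagonal contributions, with the assumed eighth-moment bounds ensuring all remainders are well defined. It is this control of cross-terms in the presence of genuine cross-section dependence, rather than the leading moment computations, where the real work lies.
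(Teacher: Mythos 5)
Your proposal is correct and follows essentially the same route as the paper's own proof: the same Theorem-1-style algebraic decomposition into the population-centred double sum plus $\sqrt{T}\Xi_{\mathbb{H}_1}$ and an $\largeO_P(N^{-1}\sqrt{T})$ remainder, the same conditioning-on-loadings argument that collapses the mean of the leading term to $\cov[\varsigma_{v,1}^{-1},\vlambda_1']\mSigma_F\cov[\vlambda_1,\varsigma_{v,1}^{-1}]$ times $N\sqrt{T}$, and the same pairwise-variance calculation in the knife-edge case $\cov[\vlambda_1,\varsigma_{v,1}^{-1}]=\vzeros$. If anything you are slightly sharper than the paper in two places: your variance expression $T\tr(\mSigma_F\mPsi\mSigma_F\mPsi)$ is an exact leading-order limit where the paper only establishes the upper bound $\largeO(N^2T^2)$ before rescaling (though your justification should invoke absolute summability of the factor autocovariances rather than serial uncorrelatedness, since Assumption \ref{ass:factors} permits serial dependence), and you explicitly flag the sample-versus-population centring correction that the paper absorbs silently into its appeal to Theorem \ref{theorem::additiveHetero}.
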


Proposition \ref{prop:CD_2WFE_H1} establishes a decomposition of the CD test statistic that is almost identical to that obtained under the null
hypothesis: Analogous to Theorem \ref{theorem::additiveHetero}, $CD_{\mathbb{H}_{1}}$ consists of a leading stochastic component which
reflects a CD test statistic with incorrect normalization as well as an equivalent to the bias term in Eq. \eqref{eq::theorem_additive_hetero} which we denote by $\sqrt{T}\Xi_{\mathbb{H}_{1}}$. The key difference is that both components are functions of the composite model error $\nu_{i,t}=\vf_{t}^{\prime}\left( \vlambda_{i}-\E\left[\vlambda_{i}\right] \right) +\epsi_{i,t}$ rather than the true errors $\epsi_{i,t}$.

Noticeable differences with regards to Theorem \ref{theorem::additiveHetero} are given as far as the leading stochastic component of $CD_{\mathbb{H}_{1}}$, the first term in Eq. \eqref{eq:CDH1_2WFE}, is concerned. Its mean is a
positive-definite quadratic form and will therefore generally diverge to $+\infty$ at rate $N\sqrt{T}$. This result is interesting in the context of \citet{Sarafidis2009} who conjecture ``\emph{[...] that the CD test will have poor power properties when it is applied to a regression with time dummies or on cross-sectionally demeaned data}''. First, the cited statement does not acknowledge the bias term $\Xi_{\mathbb{H}_{1}}$ which generally leads $CD_{\mathbb{H}_{1}}$ to diverge. However, it is reasonable not to consider this term as a source of power towards the alternative hypothesis since it does not involve sample estimates of cross-section covariances (see Remark \ref{rem:Prop2Thm1_similar} below). Second, Proposition \ref{prop:CD_2WFE_H1} indicates that the CD test indeed has power since remaining sources of co-movements across cross-sections generally affect the mean of its leading stochastic component.
The failure to appreciate this property is rooted in a convention to treat error variances as fixed parameters, thereby ruling out any correlation between them and factor loadings. Extending this understanding of error variances to the variance of the composite error term $\nu_{i,t}$ (i.e. $\varsigma _{\nu,i}^2$) leads to the special case $\cov\left[ \vlambda_{1},\varsigma_{\nu,1}^{-1}\right]
	=\vzeros $ in which the power of $CD$ is indeed considerably reduced.

Still, while the rate at which $CD_{\mathbb{H}_1}$ diverges is in general unaffected by the inclusion of time fixed effects, power in small samples may be compromised because of the presence of two diverging components with opposite signs in Eq. \eqref{eq:CDH1_2WFE}. While the mean of the leading stochastic component dominates in large samples, it may be cancelled out by $\Xi_{\mathbb{H}_{1}}$ if the number of observations available to the researcher is rather small.

\begin{remark}
	It is straightforward to extend the results of Proposition \ref{prop:CD_2WFE_H1} to a model specification without time fixed effects and to
	address a common concern about the power of $CD$ that has repeatedly been made in the literature. As argued in \citet{PUY2008} and \citet{Sarafidis2009}, the power of this test is reduced substantially if unaccounted sources of cross-section correlation average out, as for example
	in a factor model with zero mean factor loadings. This property is given by	construction for two-way fixed effects or time fixed effects residuals by
	the mere fact that the within transformation involves cross-sectionally	demeaning the data. While the bias term in Proposition \ref{prop:CD_2WFE_H1}
	and the scaling effect on the leading stochastic component in $CD_{\mathbb{H}_{1}}$ are exclusively a consequence of accounting for time fixed effects,
	the defining role of $\cov\left[ \vlambda_{1},\varsigma_{\nu,1}^{-1}\right]$ for the rate at which $CD_{\mathbb{H}_{1}}$ diverges is a general property in models with zero mean loadings. It follows that existing claims about the power losses of $CD$ in the presence of factor with zero mean loadings merely address the special case with $\cov\left[ \vlambda_{1},\varsigma_{\nu,1}^{-1}\right]
	=\vzeros $. In fact, Monte Carlo results that have been reported to support these claims, as e.g. in \citet{CDtest2004,doi:10.1080/07474938.2014.956623}, \citet{Sarafidis2009} and \citet{BALTAGI2012164}, all use specifications which entail a zero covariance between factor loadings and the inverse error variances.\footnote{This follows because all cited articles draw $\vlambda_{i} $ and $\sigma_{i}$ independently of each other and assume a symmetric distribution for $\vlambda_{i}$. Using the integer representation of the expected value it is possible to show that $\cov\left[\vlambda_{1},\varsigma_{\nu,1}^{-1}\right]=\vzeros$ in this case.} The CD test statistic is very likely to have much better power properties in simulation experiments that enforce a non-zero correlation between factor loadings and error variances. A simple way of achieving this would be to impose a constant ratio between the magnitude of common variation and that of idiosyncratic variation which holds for all cross-sections. See \citet[Eq. (27)]{parkersul2016} or Section \ref{section::MC} in this study for a formal definition.
\end{remark}

\begin{remark}
\label{rem:Prop2Thm1_similar}
	An additional similarity of Proposition \ref{prop:CD_2WFE_H1} with Theorem \ref{theorem::CCE_sigma} is that the term $\Xi _{\mathbb{H}_{1}}$ does not involve sample estimates of cross-section covariances, implying that this term is not indicative of the degree of cross-section co-movement in the data. Moreover, it can be shown that $
	\Xi_{\mathbb{H}_{1}}\overset{p}{\rightarrow }\left( \E\left[\varsigma _{\nu,i}^{-1}\right] \right) ^{2}\E\left[ \varsigma
	_{\nu,i}^{2}\right] -2 \E\left[ \varsigma _{\nu,i}^{-1}\right] \E\left[ \varsigma
	_{\nu,i}\right] $, so that the leading term in $\sqrt{T}\Xi _{\mathbb{H}_{1}}$ would approach $-\sqrt{T/2}$ as $ \varsigma _{\nu,i} \to\varsigma _{\nu}$, for some constant $\varsigma _{\nu}>0$.
\end{remark}
\subsection{Power analysis: Multifactor}
\label{appendix::CCE_power}

A mis-specified latent common factor model can be characterized by equation \eqref{eq:themodelFactor} such that Assumptions \ref{ass:errors}--\ref{ass:indep} as well as the following Assumption \ref{ass:failrankcond} hold.
\begin{assumption}
	\label{ass:failrankcond}
	$\rk\left( \left[\vmu_{\vlambda} , \vmu_{\mLambda} \right]\right) = m+1 < r$.
\end{assumption}
Assumption \ref{ass:failrankcond} enforces a failure of the rank condition set up by \citet{ECTA:ECTA692} to ensure that the space spanned by factor estimates can be consistent for the space spanned by the true factors. Under failure of the rank condition, a fraction of the common variation affecting the dependent variable $y_{i,t}$ remains asymptotically unaccounted for. As proved in Lemma \ref{lem:facpartition}, we can assume without loss of generality that accounted and unaccounted sources of cross-section dependence are due to two uncorrelated sets of unobserved factors. Formally, we decompose
\begin{equation}
\mF\vlambda_{i}=\mF^{(1)} \vlambda_{i}^{(1)}+\mF^{(2)} \vlambda_{i}^{(2)},
\end{equation}
where a rotation of the $m+1$ factors $\mF^{(1)}$ is consistently estimated by cross-section averages $\widehat{\mF}=\left[\overline{\vy}, \overline{\mX} \right]$ whereas the remaining $r-m-1$ factors $\mF^{(2)}$ are asymptotically orthogonal to $\widehat{\mF}$. An analogous decomposition can be applied to the matrix product $\mF \mLambda_{i}$, allowing us to split the loadings matrix into two blocks $\mLambda_{i}^{(1)}$ and $\mLambda_{i}^{(2)}$. Most importantly, this decomposition implies that $\E\left[\vlambda_{i}^{(2)}\right]=\vzeros$ and $\E\left[\mLambda_{i}^{(2)}\right] = \mZeros$.

We can accordingly express the defactored data as
\begin{align}
	 \label{eq:CCE_resH1}
	 \widehat{\vnu}_{i} &= \mM_{\widehat{\mF}}\vy_i \notag \\
	 &=\mM_{\widehat{\mF}}\mF^{(1)} \vlambda_{i}^{(1)} + \mM_{\widehat{\mF}}\mF^{(2)} \vlambda_{i}^{(2)}  + \mM_{\widehat{\mF}}\vepsi_i.
\end{align}
where Assumption \ref{ass:failrankcond} and Lemma \ref{lem:facpartition} ensure that  $\mM_{\widehat{\mF}}\mF^{(2)} \vlambda_{i}^{(2)}$ does not vanish as the sample size increases. Again, we denote the variance of the composite error term $\nu_{i,t} = \vf_t^{(2)\prime}\vlambda_{i}^{(2)} +\epsi_{i,t}$ by $\varsigma _{\nu,i}^{2} = \E\left[ \nu _{i,t}^{2} ~ | ~ \vlambda_{i}^{\left( 2\right)}, \sigma_{i}^{2}\right] = \vlambda_{i}^{\left( 2\right) \prime}
\mSigma_{F,22} \vlambda_{i}^{\left( 2\right)
} +\sigma_{i}^{2}$, where the $r-m-1$-dimensional square matrix $\mSigma_{F,22}$ is the lower-right block of $\mSigma_{F}$. The order in probability of the CD test statistic can then be characterized as follows.

\begin{proposition}
	\label{prop:CD_CCE_H1}
	Suppose that the true model is given by \eqref{eq:2WFE_H1} and that its components satisfy Assumptions
	\ref{ass:errors}--\ref{ass:indep} and \ref{ass:failrankcond}, and $\varsigma_{\nu,i}\in [\delta;M]$. 
Let the CD test statistic $CD_{\mathbb{H}_{1}}$, defined in \eqref{eq:CD_H1}, be constructed from defactored data \eqref{eq:CCE_resH1}. Then,
	\begin{equation}
	\label{eq:CDH1_CCE}
	CD_{\mathbb{H}_{1}}=\sqrt{\frac{2}{TN\left( N-1\right) }}\sum_{i=2}^{N}
	\sum_{j=1}^{i-1}\vvarpi_{i}^{\prime }\vvarpi_{j}+%
	\sqrt{T}\left( \Phi _{1,\mathbb{H}_{1}}-2\Phi _{2,\mathbb{H}_{1}}\right)
	+\largeO_{P}\left(N^{-1/2} \sqrt{T}\right) 	+\largeO_{P}\left(T^{-1/2}\right)
	\end{equation}%
	where
	\begin{align*}
		\vvarpi_{i} &= \left[ \varsigma_{\nu,i}^{-1} \vnu_{i} - \mD_{i} \left( \overline{\mC}^{\left( 1\right) }\right)^{-1} \left( N^{-1}\sum_{\ell=1}^{N}\vlambda_{\ell}^{\left(1\right) }\varsigma_{\nu,\ell}^{-1}\right) \right], \\
		\Phi _{1,\mathbb{H}_{1}} &= \left( N^{-1}\sum_{\ell=1}^{N} \varsigma_{\nu,\ell}^{-1} \vlambda_{\ell}^{\left( 1\right) \prime }\right)
		\left( \overline{\mC}^{\left( 1\right) \prime }\right)^{-1} \left(
		N^{-1}\sum_{i=1}^{N}\mD_{i}^{\prime }\mD_{i}\right) \left(\overline{\mC}^{\left( 1\right) }\right)^{-1} \left( N^{-1}\sum_{\ell=1}^{N}\vlambda_{\ell}^{\left( 1\right) } \varsigma_{\nu,\ell}^{-1}\right) , \\
		\Phi _{2,\mathbb{H}_{1}} &= N^{-1}\sum_{i=1}^{N}\varsigma _{\nu,i}^{-1}
		\vnu_{i}' \mD_{i} \left( \overline{\mC}^{\left(1\right) }\right)^{-1} \left( N^{-1}\sum_{\ell=1}^{N}\vlambda_{\ell }^{\left( 1\right) }\varsigma _{\nu,\ell}^{-1}\right),
	\end{align*}
	where $\overline{\mC}^{(1)} = \begin{bmatrix}
	\overline{\vlambda}^{(1)}, ~ \overline{\mLambda}^{(1)}
	\end{bmatrix}$ and $\mD_i = \mF \begin{bmatrix}
	\vlambda_{i}^{(2)}, ~ \mLambda_{i}^{(2)}
	\end{bmatrix} + \begin{bmatrix}
	\vepsi_{i}, ~ \ve_{i}
	\end{bmatrix} $ .
	Furthermore, it holds that
	\begin{equation*}
	\plim_{N,T\rightarrow \infty }N^{-1}T^{-1/2}\sqrt{\frac{2}{TN\left(
			N-1\right) }}\sum_{i=2}^{N}\sum_{j=1}^{i-1}\vvarpi_{i}'
	\vvarpi_{j}=\cov\left[ \varsigma _{\nu,1}^{-1}, ~ ~ \vlambda_{1}^{\left( 2\right) \prime }\right] \mSigma_{F,22}\cov\left[ \vlambda_{1}^{\left(
		2\right) },~ \varsigma_{\nu,1}^{-1}\right].
	\end{equation*}%
	In the special case $\cov\left[ \varsigma _{\nu,1}^{-1},~
	\vlambda_{1}^{\left( 2\right) \prime }\right] =\vzeros$ ,
	the term$\sqrt{\frac{2}{TN\left( N-1\right) }}\sum_{i=2}^{N}\sum_{j=1}^{i-1}%
	\vvarpi_{i}'\vvarpi_{j}$ is
	asymptotically centered around zero and has a variance that diverges at rate
	$T$.
\end{proposition}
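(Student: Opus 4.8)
The plan is to adapt the proof of Theorem~\ref{theorem::CCE} to the rank-deficient alternative, substituting the composite error $\vnu_i = \mF^{(2)}\vlambda_i^{(2)} + \vepsi_i$ for the true error $\vepsi_i$ and the normalization $\varsigma_{i,v}^{-1}$ for $\sigma_i^{-1}$, while exploiting the factor partition of Lemma~\ref{lem:facpartition}. I would start from the definition \eqref{eq:CD_H1} of $CD_{\mathbb{H}_1}$ and insert the defactored data \eqref{eq:CCE_resH1}. Writing $\vy_i = \mF^{(1)}\vlambda_i^{(1)} + \vnu_i$ and using $\mM_{\hat{\mF}}\mF^{(1)} = -\mM_{\hat{\mF}}(\mF^{(2)}\overline{\mC}^{(2)} + \overline{\mU})(\overline{\mC}^{(1)})^{-1}$, which is $\largeO_P(N^{-1/2})$ entrywise since $\overline{\mC}^{(2)} = \largeO_P(N^{-1/2})$ by $\E[\vlambda_i^{(2)}] = \vzeros$, I would show that the $\mF^{(1)}\vlambda_i^{(1)}$ block contributes only to the remainder $\largeO_P(N^{-1/2}\sqrt T)$, leaving $\hat{\vnu}_i \approx \mM_{\hat{\mF}}\vnu_i = \vnu_i - \mP_{\hat{\mF}}\vnu_i$ as the effective object.

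The heart of the argument is the expansion of the pairwise products. By idempotency, $\hat{\vnu}_i'\hat{\vnu}_j = \vnu_i'\vnu_j - \vnu_i'\mP_{\hat{\mF}}\vnu_j$ up to the above remainder. The decisive representation is $\hat{\mF} = \mF^{(1)}\overline{\mC}^{(1)} + N^{-1}\sum_\ell\mD_\ell$, in which $N^{-1}\sum_\ell\mD_\ell = \mF^{(2)}\overline{\mC}^{(2)} + \overline{\mU}$ encodes exactly the leakage of the unaccounted factors together with the idiosyncratic sampling error; this identifies the CCE influence function as the analog of \eqref{eq::influence}, namely $\vpsi_{i,t} = (\overline{\mC}^{(1)\prime})^{-1}\vd_{i,t}$ with $\vd_{i,t}$ the $t$-th row of $\mD_i$. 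Writing $\vnu_i'\mP_{\hat{\mF}}\vnu_j = T\,(T^{-1}\hat{\mF}'\vnu_i)'(T^{-1}\hat{\mF}'\hat{\mF})^{-1}(T^{-1}\hat{\mF}'\vnu_j)$ and substituting $T^{-1}\hat{\mF}'\vnu_i = \overline{\mC}^{(1)\prime}T^{-1}\sum_t\vf_t^{(1)}\nu_{i,t} + T^{-1}\sum_t\overline{\vd}_t\nu_{i,t}$ (with $\overline{\vd}_t = N^{-1}\sum_\ell\vd_{\ell,t}$), the first summand is $\largeO_P(T^{-1/2})$ because $\mF^{(1)}$ is asymptotically orthogonal to $\vnu_i$, while the second is governed by the cross-sectional average of the influence functions. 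Collecting terms exactly as in Theorem~\ref{theorem::CCE} reorganizes the double sum into the leading U-statistic in $\vvarpi_i$ — whose $t$-th entry equals $\varsigma_{i,v}^{-1}\nu_{i,t} - \vpsi_{i,t}'(N^{-1}\sum_\ell\vlambda_\ell^{(1)}\varsigma_{\ell,v}^{-1})$, i.e. the score $\xi_{iN,t}$ of Theorem~\ref{theorem::CCE} under the stated substitutions — plus the quadratic-in-influence self term $\sqrt T\,\Phi_{1,\mathbb{H}_1}$ and the score-influence cross term $-2\sqrt T\,\Phi_{2,\mathbb{H}_1}$. Bounding the discarded higher-order products supplies the remainders $\largeO_P(N^{-1/2}\sqrt T) + \largeO_P(T^{-1/2})$.

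For the probability limit I would compute the first moment of the leading term via $\sum_{i>j}\vvarpi_i'\vvarpi_j = \tfrac12(\|\sum_i\vvarpi_i\|^2 - \sum_i\|\vvarpi_i\|^2)$, paralleling the mean computation in Proposition~\ref{prop:CD_2WFE_H1}. Because $\E[\vlambda_i^{(2)}] = \vzeros$, one has $\sum_i\varsigma_{i,v}^{-1}\vlambda_i^{(2)} = N\cov[\varsigma_{1,v}^{-1},\vlambda_1^{(2)}] + \largeO_P(\sqrt N)$, whereas the influence-function correction contributes only $\largeO_P(\sqrt N)$ to $\sum_i\vvarpi_i$ since $N^{-1}\sum_\ell\mD_\ell = \largeO_P(N^{-1/2})$; hence $\sum_i\vvarpi_i = N\mF^{(2)}\cov[\varsigma_{1,v}^{-1},\vlambda_1^{(2)}] + \largeO_P(\sqrt N)$. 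Invoking $T^{-1}\mF^{(2)\prime}\mF^{(2)} \pto \mSigma_{F,22}$ from Assumption~\ref{ass:factors} then gives $\|\sum_i\vvarpi_i\|^2 = N^2 T\,\cov[\varsigma_{1,v}^{-1},\vlambda_1^{(2)\prime}]\mSigma_{F,22}\cov[\vlambda_1^{(2)},\varsigma_{1,v}^{-1}](1 + \smallO_P(1))$, which after the normalization $N^{-1}T^{-1/2}\sqrt{2/(TN(N-1))}$ yields the stated covariance sandwich. In the special case $\cov[\varsigma_{1,v}^{-1},\vlambda_1^{(2)}] = \vzeros$ the $N$-order term vanishes and $\sum_i\vvarpi_i = \largeO_P(\sqrt{NT})$, so the leading U-statistic is asymptotically centered at zero; a direct second-moment computation, dominated by the diagonal $\{i,j\}=\{k,l\}$ of the double sum, then shows its variance is of order $T$.

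The main obstacle is the treatment of $\mP_{\hat{\mF}}$ under the failed rank condition of Assumption~\ref{ass:failrankcond}. Unlike the full-rank setting of Theorem~\ref{theorem::CCE}, the estimated factor space is tilted toward the unaccounted factors by $\overline{\mC}^{(2)} = \largeO_P(N^{-1/2})$, so $\mP_{\hat{\mF}}\mF^{(2)}$ does not vanish and interacts with the $\mF^{(2)}$-component of $\vnu_i$; the delicate bookkeeping is to separate this leakage from the idiosyncratic sampling error $\overline{\mU}$ and to verify that their combination reproduces precisely the influence-function structure built from $\mD_i$ and $(\overline{\mC}^{(1)})^{-1}$, rather than generating spurious additional terms. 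Relatedly, one must track orders carefully enough to retain the cross-products between the $\largeO_P(N^{-1/2})$ leakage and $\vnu_i$ — which feed $\Phi_{1,\mathbb{H}_1}$, $\Phi_{2,\mathbb{H}_1}$ and $\vvarpi_i$ — while relegating genuinely higher-order products to $\largeO_P(N^{-1/2}\sqrt T) + \largeO_P(T^{-1/2})$.
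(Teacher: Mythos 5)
Your outline matches the paper's in several places (the use of Lemma \ref{lem:facpartition}, the computation of the probability limit via conditional expectations and an LLN for $\sum_{i}\varsigma_{i,v}^{-1}\vlambda_{i}^{(2)}$, and the second-moment argument for the degenerate case are all essentially the paper's), but the central algebraic step is wrong. You claim that since $\mM_{\hat{\mF}}\mF^{(1)}=-\mM_{\hat{\mF}}\overline{\mD}\left(\overline{\mC}^{(1)}\right)^{-1}$ is $\largeO_{P}(N^{-1/2})$ entrywise, the block $\mM_{\hat{\mF}}\mF^{(1)}\vlambda_{i}^{(1)}$ can be relegated to the remainder, leaving $\hat{\vnu}_{i}\approx\mM_{\hat{\mF}}\vnu_{i}$ as the effective object. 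That is precisely the step the incidental-parameters mechanism forbids: the CD statistic aggregates $N(N-1)/2$ pairwise products with scaling $k_{N,T}=\left(2TN(N-1)\right)^{-1/2}$, so entrywise $\largeO_{P}(N^{-1/2})$ factor-estimation errors do not wash out. Concretely, the quadratic piece you discard, $k_{N,T}\sum_{i\neq j}\varsigma_{i,v}^{-1}\varsigma_{j,v}^{-1}\vlambda_{i}^{(1)\prime}\left(\overline{\mC}^{(1)\prime}\right)^{-1}\overline{\mD}^{\prime}\mM_{\hat{\mF}}\overline{\mD}\left(\overline{\mC}^{(1)}\right)^{-1}\vlambda_{j}^{(1)}$, is $\largeO_{P}(\sqrt{T})$, not $\largeO_{P}(N^{-1/2}\sqrt{T})$: indeed $N^{2}\overline{\mD}^{\prime}\overline{\mD}=\sum_{\ell,\ell^{\prime}}\mD_{\ell}^{\prime}\mD_{\ell^{\prime}}$ contains the own-product sum $\sum_{\ell}\mD_{\ell}^{\prime}\mD_{\ell}=\largeO_{P}(NT)$, so after the $k_{N,T}$ scaling this term is exactly (the leading part of) $\sqrt{T}\,\Phi_{1,\mathbb{H}_{1}}$; the discarded cross term with $\vnu_{i}$ likewise carries $\Phi_{2,\mathbb{H}_{1}}$ and the $-\mD_{i}\left(\overline{\mC}^{(1)}\right)^{-1}\left(N^{-1}\sum_{\ell}\vlambda_{\ell}^{(1)}\varsigma_{\ell,v}^{-1}\right)$ correction that defines $\vvarpi_{i}$. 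In short, your ``effective object'' has already thrown away the $\sqrt{T}$ bias that the proposition is designed to establish.

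The fallback you propose does not repair this, because $\Phi_{1,\mathbb{H}_{1}}$ and $\Phi_{2,\mathbb{H}_{1}}$ are built from own-products, $\sum_{i}\mD_{i}^{\prime}\mD_{i}$ and $\sum_{i}\varsigma_{i,v}^{-1}\vnu_{i}^{\prime}\mD_{i}$, which can never be generated by expanding the cross-unit projections $\vnu_{i}^{\prime}\mP_{\hat{\mF}}\vnu_{j}$ with $i\neq j$ (those only yield corrections involving the average leakage $\overline{\mC}^{(2)}$, not individual $\mD_{i}$'s). The paper's logic is the reverse of yours: the full double sum $\sum_{i,j}$ is evaluated exactly as $k_{N,T}\bigl\lVert\mM_{\hat{\mF}}\sum_{i}\vvarpi_{i}\bigr\rVert^{2}$ — the identity $\sum_{i}\varsigma_{i,v}^{-1}\overline{\mD}\left(\overline{\mC}^{(1)}\right)^{-1}\vlambda_{i}^{(1)}=\left(\sum_{i}\mD_{i}\right)\left(\overline{\mC}^{(1)}\right)^{-1}\left(N^{-1}\sum_{\ell}\vlambda_{\ell}^{(1)}\varsigma_{\ell,v}^{-1}\right)$ is what converts the average $\overline{\mD}$ into individual $\mD_{i}$'s — and the bias then emerges as the mismatch between the diagonal of $\sum_{i,j}\vvarpi_{i}^{\prime}\vvarpi_{j}$ and the true excluded diagonal $II$, which Lemma \ref{lem:II_power} shows equals $k_{N,T}\sum_{i}\varsigma_{i,v}^{-2}\vnu_{i}^{\prime}\vnu_{i}$ plus the stated remainders; the projection double sum $k_{N,T}\sum_{i,j}\vvarpi_{i}^{\prime}\mP_{\hat{\mF}}\vvarpi_{j}$, far from carrying the bias, is shown to be of smaller relative order by Lemma \ref{lem:sumij_varpii_PF_varpij}, just as $I_{2}$ is negligible in Theorem \ref{theorem::CCE}, whose proof you invoke. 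To fix your argument, keep the exact identity $\hat{\vnu}_{i}=\mM_{\hat{\mF}}\left(\vnu_{i}-\overline{\mD}\left(\overline{\mC}^{(1)}\right)^{-1}\vlambda_{i}^{(1)}\right)$, pass to $\sum_{i}\vvarpi_{i}$ before squaring, and only then separate the off-diagonal U-statistic, the diagonal (source of $\Phi_{1,\mathbb{H}_{1}}$ and $\Phi_{2,\mathbb{H}_{1}}$), and the projection remainder.
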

The remarks made concerning Proposition \ref{prop:CD_2WFE_H1} apply in an identical fashion to Proposition \ref{prop:CD_CCE_H1}. That is, the CD test statistic allows for the same type of decomposition under the null hypothesis as it does under the alternative, differing only in that the expressions in Proposition \ref{prop:CD_CCE_H1} contain the composite error term $\nu_{i,t}$ rather than the true model errors $\epsi_{i,t}$.
\subsection{Weighted CD statistic and power enhancement}
Asymptotic unbiasedness of $CD_W$ comes at the cost of power. More specifically, our approach to bias correction centers the leading components of $CD_W$ around zero, irrespective of whether cross-section correlation in the data is completely controlled for or not. As a consequence, only increases in the variance of $CD_W$ under its alternative hypothesis lead to power against the null hypothesis of this test.

This point can be formally illustrated by a minor modification of Proposition \ref{prop:CD_2WFE_H1} which consists of replacing $\varsigma_{\nu,i}^{-1} - \E\left[ \varsigma_{\nu,i}^{-1} \right] $ with $w_i$. Given random weights, the leading stochastic component in Eq. \eqref{eq:CDH1_2WFE},
\begin{align}
\label{eq:power_leadcomp_randweight}
\sqrt{\frac{2}{TN\left( N-1\right)}} \sum_{i=2}^{N}\sum_{j=1}^{i-1} w_i \vnu_{i}^{\prime}\vnu_{j} w_j,
\end{align}
has an asymptotic mean of zero if $N^{-1}\sqrt{T}\to 0$ since independent Rademacher distributed weights ensure that $\E\left[w_i \nu_{i,t}\nu_{j,t} w_j\right] = 0$ holds for all $i\neq j$. Thus, remaining sources of cross-section co-movement in the composite error term $\nu_{i,t}$ will not shift the location of expression \eqref{eq:power_leadcomp_randweight}. By contrast, it can be shown that the variance of this term continues to diverge at rate $T$. This entails that $CD_W$ diverges at rate $\sqrt{T}$ under its alternative hypothesis, a property that  holds under the conditions of Proposition \ref{prop:CD_2WFE_H1} as well.

Using Lemma \ref{lem:sigmahat} we can establish that:
\begin{equation}
\label{eq:rho_indiv}
	\widehat{\rho}_{ij}=0+\largeO_{P}(T^{-1/2})+\largeO_{P}((NT)^{-1/2})+\largeO_{P}(N^{-1}),
\end{equation}
for all pairs $i,j$. The threshold $2\sqrt{\ln(N)/T}$ is justified by the behavior of the maximum out of $N(N-1)/2$ estimated correlation coefficients in a model with $\rho_{ij} = 0$ for all $i,j$. Under the assumption that a CLT holds for $\widehat{\rho}_{ij}$ and that $\sqrt{T}N^{-1} \to 0$ as $N,T,\to \infty$, this maximum should diverge at rate $2\sqrt{\ln(N)/T}$. Multiplication of each indicator function in the screening statistic $\sum_{i=2}^N \sum_{j=1}^{i-1}\left| \widehat{\rho}_{ij} \right| \mathbf{1}\left( \left|\widehat{\rho}_{ij} \right| > 2\sqrt{\ln(N)/T}  \right)$ with the absolute value of its corresponding correlation coefficient then ensures that the power enhancement component in \eqref{eq:CD_Wplus} converges to 0 if $\rho_{ij} = 0$ for all pairs $i,j$, subject to additional tail regularity conditions.

\section{Further discussions}
\label{section:discussion_heuristics}
\subsection{An analytically bias-corrected CD statistic}
\label{section:CDBC_details}
Section \ref{section:BC} introduced parametric bias-correction as a feasible approach to re-establish asymptotically normal inference for the CD tests statistic under its null hypothesis. This section provides details on its implementation as a benchmark CD test statistic in the Monte Carlo experiments of Section \ref{section::MC}. The analytically bias-corrected CD test statistic $CD_{BC}$ is corrected with a plug-in estimate of the bias terms of Theorems \ref{theorem::additiveHetero} and \ref{theorem::CCE}. That is, whenever $CD_{BC}$ is applied to 2WFE residuals, we construct it as
\begin{align*}
CD_{BC} = \widehat{\Omega}^{-1/2}_{FE} \left(CD - \sqrt{T}\widehat{\Xi}\right)
\end{align*}
where
\begin{align*}
\widehat{\Xi}& = \sqrt{\frac{1}{2N(N-1)}} \left[\sum_{i=1}^N\left(1 - \sqrt{\widehat{\sigma}_i^2}N^{-1}\sum_{\ell=1}^N\left(\widehat{\sigma}_{\ell}^2\right)^{-1/2}\right)^2 - N\right], \\
\widehat{\Omega}_{FE} &= \frac{2\left(N-1\right)}{N} \left(\widehat{\Xi} + 1 \right)^2
\end{align*}
with $\widehat{\sigma}_i^2 = T^{-1}\sum_{t=1}^T  \widehat{\epsi}_{i,t}^2$ for 2WFE residuals $\widehat{\epsi}_{i,t}$. The expressions above arise from rearranging the terms of Theorem \ref{theorem::additiveHetero} so that their estimates can be obtained with minimal computational burden. In this regard, it is helpful to note that the asymptotic variance of $\Omega$ can be written as a function of the bias term $\Xi$. Likewise, for application of $CD_{BC}$ to CCE residuals, we set
\begin{align*}
CD_{BC} = \widehat{\Omega}^{-1/2}_{CCE} \left[CD - \sqrt{T}\left(\widehat{\Phi}_1 -2\widehat{\Phi}_2 \right)\right],
\end{align*}
where
\begin{align*}
\widehat{\Phi}_1 &= \sqrt{\frac{1}{2N(N-1)}} \left[N^{-1} \sum_{i=1}^N \left(\widehat{\sigma}^2_i\right)^{-1/2}\right]^2 \overline{\widehat{\vlambda}}' \widehat{\mB}'
\begin{bmatrix} \sum_{i=1}^N \widehat{\sigma}_i^2, & \vzeros_{m}' \\ \vzeros_{m}, & \sum_{i=1}^N \widehat{\mSigma}_i \end{bmatrix}
\widehat{\mB} \overline{\widehat{\vlambda}}, \\
\widehat{\Phi}_2 &= \sqrt{\frac{1}{2N(N-1)}} \left[N^{-1} \sum_{i=1}^N \left(\widehat{\sigma}^2_i\right)^{-1/2}\right] \overline{\widehat{\vlambda}}'  \begin{bmatrix}  \sum_{i=1}^N \widehat{\sigma}_i^2 \\ \vzeros_{m} \end{bmatrix}, \\
\widehat{\Omega}_{CCE} &=  \left[ 1 + \sqrt{\frac{2\left(N-1\right)}{N}}\left( \widehat{\Phi}_1 - 2\widehat{\Phi}_2\right) \right]^2
\end{align*}
with
\begin{align*}
\widehat{\mB} &= \begin{bmatrix}
1, & \vzeros_{m}' \\
\widehat{\vbeta}^{CCE}, & \mI_m
\end{bmatrix}, \\
\widehat{\vbeta}^{CCE} &= \left(\frac{1}{NT}\sum_{i=1}^N\mX_i'\mM_{\widehat{\mF}}\mX_i\right)^{-1}\left(\frac{1}{NT} \sum_{i=1}^N\mX_i'\mM_{\widehat{\mF}}\vy_i \right), \\
\overline{\widehat{\vlambda}} &= \frac{1}{N} \sum_{i=1}^N \widehat{\vlambda}_i, \\
\widehat{\vlambda}_i &= \left(\widehat{\mF}'\widehat{\mF}\right)^{-1} \widehat{\mF}'\left( \vy_{i} - \mX_i \widehat{\vbeta}^{CCE} \right), \\
\widehat{\mSigma}_i &= \frac{1}{T}\mX_i'\mM_{\widehat{\mF}}\mX_i,
\end{align*}
as well as  $\widehat{\sigma}_i^2 = T^{-1} \widehat{\vepsi}_{i}'\widehat{\vepsi}_{i}$ and $\widehat{\vepsi}_{i} = \mM_{\widehat{\mF}}(\vy_{i} - \mX_{i} \widehat{\vbeta}^{CCE})$. The plug-in estimators $\widehat{\Phi}_1$ and $\widehat{\Phi}_2$ are calculated under the assumption of independence between factor loadings and error variances so that we can write $\E\left[ \vlambda_{i} \sigma_{i}^{-1} \right] = \E\left[ \vlambda_{i} \right] \E \left[ \sigma_{i}^{-1} \right]$. While this assumption allows making the estimates of both bias terms sufficiently precise in the setup of our Monte Carlo experiments, it may lead to size distortion in a DGP with dependence between factor loadings and error variances.


\subsection{Serial correlation}
\label{section:serialcorrelation}
The original CD test of \citet{CDtest2004, doi:10.1080/07474938.2014.956623} assumes that idiosyncratic error terms are serially uncorrelated, an assumption that can be difficult to justify for most economic datasets. While this problem is mostly ignored in practice, \citet{BaltagiEtAlCD2016} have very recently proposed a modification of the CD test statistic which ensures that the test statistic is asymptotically standard normal as long as $\epsi_{i,t}$ is a stationary short memory process. In particular, under this type of assumption it can be shown that the relevant asymptotic variance for the result in Theorem \ref{thm:CD_W_combined} is given by:
\begin{equation}
\Omega=\sum_{s=-\infty}^{\infty}\left[\E[(w_{i}-\E[w_{i}])^{2}\epsi_{i,t}\epsi_{i,t-s}]\right]^{2}.
\end{equation}
The above quantity is non-standard, as it not a function of the long-run variance of $(w_{i}-\E[w_{i}])\epsi_{i,t}$. In particular, one can use data which is over-differenced in the construction of the CD statistic.

In the context of our testing problem the natural plug-in estimator of $\Omega$ is given by
\begin{equation}
\widehat{\Omega}_{N}=\frac{2}{TN(N-1)}\sum_{i=2}^{N}\sum_{j<i}(\vl_{i}'\vl_{j})^{2},
\end{equation}
where $\vl_{i}=(w_{i}-\overline{w})\widehat{\ve}_{i}$ is constructed using residuals $\widehat{\ve}_{i}$. It can be expected that under reasonable regularity conditions on the memory properties of $\epsi_{i,t}$ as well as appropriate restrictions on $N,T$ estimator $\widehat{\Omega}_{N}$ is consistent in our setup. However, we do not attempt to prove this conjecture as it does not add to the main message of this paper.

Note that \citet{BaltagiEtAlCD2016} use mean adjustment variance estimate of \citet{chen2010}, which is motivated by the need to obtain an unbiased (not only consistent) estimator of $\Omega$. However, as in our setting  $\vx_{i}$ are correlated by construction, any theoretical justification for including such an adjustment term under null hypothesis is lost. Also note that a factor of $2$ is missing in \citet{BaltagiEtAlCD2016}.

\subsection{Optimal weights}
\label{ssection::optimal_weights}
Propositions \ref{prop:CD_2WFE_H1} and \ref{prop:CD_CCE_H1} stated that the original CD test statistic diverges at rate $N\sqrt{T}$ if certain conditions on the dependence between error variances and loadings associated with unaccounted common factors hold. A simple generalization of Proposition \ref{prop:CD_CCE_H1} to the case of general weights $\{w_1, \ldots, w_N\}$ suggest that the same rate of divergence can be achieved under the condition $\cov \left[w_i,~\vlambda_{i}^{(2)}\right] \neq \vzeros$.\footnote{Analogous results hold for a generalization of Proposition \ref{prop:CD_2WFE_H1}} 

Consequently, a set of weights that generally leads to high power would be given by functions of the data that are estimates of $\vlambda_{i}^{(2)}$ under $\mathbb{H}_1$, such as $w_i = T^{-1}\sum_{t=1}^{T} \widehat{\nu}_{i,t}$ or $w_{ij} = T^{-1}\sum_{t=1}^{T} \widehat{\nu}_{i,t}\widehat{\nu}_{j,t}$ for a more general statistic with index-pair specific weights. However, the latter weight suggestion directly results in a different test statistic, namely one based on \emph{squared} cross-section covariances. This new test statistic, as well as one based on the first suggestion $w_i = T^{-1}\sum_{t=1}^{T} \widehat{\nu}_{i,t}$, is closer to an existing, separate literature on LM tests for cross-section dependence than it is to the CD test and its extant modifications. The fact that a test statistic based on summing squared cross-section covariances is not centered around zero under its null hypothesis, and that it hence needs additional recentering, further emphasizes its relation to LM tests.

As an alternative to estimates of $\vlambda_{i}^{(2)}$, estimates of other model components may be used to set up a weighted CD test statistic. In this regard, the most sensible choice is an estimate of loadings associated with unaccounted factors in the covariates of our regression model of interest, i.e. the parameter matrix $\mLambda_{i}^{(2)}$ in
\begin{align*}
\mX_{i} = \mF^{(1)} \mLambda_{i}^{(1)} + \mF^{(2)} \mLambda_{i}^{(2)} + \mE_i.
\end{align*}
This choice directs the power of $CD_W$ towards alternatives under which loadings associated with unaccounted factors are correlated in the sense of \citet{Westerlund2013247} and \citet{kapetanios2019}. Following the theoretical results in these two studies, a CD test statistic that weights cross-section covariances with estimates of $\mLambda_{i}^{(2)}$ is effectively a test for inconsistency of either the two-way fixed effects or CCE estimator of $\vbeta$. It is generally possible to direct the CD test statistic towards these more specific hypothesis. However, if one is to test whether an estimate of the slope coefficients $\vbeta$ on $\mX_i$ are inconsistent, it is simpler and more sensible to do this using a test statistic that addresses the moment condition $\cov \left[ \vlambda_{i}^{(2)},~\vec\left(\mLambda_{i}^{(2)}\right) \right]=\mZeros$ directly.

\newpage

\section{Proofs}
\subsection{Notation}
Extending the paragraph on notation in introduction in the main text, we will use the following notation in this appendix.
\begin{itemize}
\item $\mI_m$ denotes an $m \times m$ identity matrix and the subscript is sometimes disregarded from for the sake of simplicity. $\vzeros$ denotes a column vector of zeros while $\mZeros$ stands for a matrix of zeros. $\vs_m$ denotes a selection vector all of whose elements are zero except for element $m$ which is one. $\viota$ is a vector entirely consisting of ones. The dimension of these latter vectors and matrices is generally suppressed for the sake of simplicity and needs to be inferred from context.
\item For a generic $m \times n$ matrix $\mA$, $\mP_{\mA} = \mA(\mA'\mA)^{-1}\mA'$ projects onto the space spanned by the columns of $\mA$ and $\mM_{\mA} = \mI_m - \mP_{\mA}$. $\dim(\mA), \operatorname{col}(\mA)$ and $\ker(\mA)$ refer to the dimension, column space and kernel of $\mA$. Moreover, $\rk(\mA)$ denotes the rank of $\mA$, $\tr(\mA)$ its trace and $\| \mA \| =\left( \tr(\mA'\mA)\right)^{1/2}$ the Frobenius norm of $\mA$.
\item For a set of $m \times n$ matrices $\{ \mA_1,\mA_{2}, \ldots, \mA_N \}$, $\overline{\mA} = N^{-1}\sum_{i=1}^N \mA_i$. Multiple sums are generally abbreviated, so that $\sum_{i,j}^N$ is shorthand notation for $\sum_{i=1}^N \sum_{j=1}^N$.
\item $\delta$ and $M$ stand for a small and large positive real number, respectively. For two real numbers $a$ and $b$, $a \vee b = \max\{ a,b\}$.
\item For some random variable $\epsi_{i,t}$, $\kappa_4\left[ \epsi_{i,t} \right]$ denotes its fourth-order cumulant. Moreover, $\largeO(\cdot)$ and $\smallO(\cdot)$ express order of magnitude relations whereas $\largeO_{P}(\cdot)$ and $\smallO_P(\cdot)$ denote stochastic order relations, see e.g. Definitions 2.5 and 2.33 in \citet{white2001}.
\item Define $k_{N,T}=\sqrt{\frac{1}{2TN(N-1)}}$.
\item Notice that we use the following vector and scalar notation for idiosyncratic components: $\vepsi_{i}=(\epsi_{i,1},\epsi_{i,2},\ldots,\epsi_{i,T})'$, $\overline{\vepsi}=N^{-1}\sum_{i=1}^{N}\vepsi_{i}$, and $\overline{\epsi}=(NT)^{-1}\sum_{i=1}^{N}\sum_{t=1}^{T}\epsi_{i,t}$.
\end{itemize}

\paragraph{Additional notation in the multiplicative model} \hfill \par
Proofs that address a model with multifactor error structure which is controlled for via cross-section averages frequently employ the rotation and rescaling matrix
\begin{align*}
	\mB = \begin{bmatrix}
		1, & \vzeros' \\ \vbeta, & \mI_m
	\end{bmatrix}
\end{align*}
which relates the $T \times (m+1)$ matrix $\left[ \vy_i , \mX_i \right]$ (or alternatively the average over all $i$) to common and idiosyncratic variation affecting each of the observed variables directly. This relation is given by
\begin{align*}
	\begin{bmatrix}
	\vy_i, & \mX_i
	\end{bmatrix}
	&=
	\left( \mF
	\begin{bmatrix}
	\vlambda_{i}, & \mLambda_i
	\end{bmatrix} +
	\begin{bmatrix}
	\vepsi_i, & \mE_i
	\end{bmatrix} \right)
	\mB \\
	&= \mF \mC_i + \mU_i
\end{align*}
In the following, it is assumed that $\vbeta$ is bounded so that $\left\|\mB \right\| < \infty$. Furthermore, $\mB$ is nonsingular by construction, implying that $\mB^{-1}$ exists. Given these properties, it trivially follows from the properties of $\vlambda_{i}$ and $\mLambda_i$ that $\| \overline{\mC}\|=\largeO_{P} \left(1\right)$ and $\| \overline{\mC}^{-1}\|= \largeO_{P}(1)$. The same holds for the matrix $\overline{\mC}^{(1)}$ which is considered in the proof of Proposition \ref{prop:CD_CCE_H1}.

\subsection{Proofs of Theorem \ref{theorem::additiveHetero} and Corollary \ref{corollary::homosigmas}}
\label{sec:proof_additiveHetero}

\begin{proof}[\textbf{Proof of Theorem \ref{theorem::additiveHetero}}]
We start the proof by focusing on the individual components of the double sum over cross-sections which constitutes the CD test statistic. Here, we apply a first-order Taylor expansion around the true error variances. Using the mean-value theorem, we can hence set up the equality
\begin{align*}
\frac{\widehat{\epsi}_{i,t}\widehat{\epsi}_{j,t}}{\sqrt{\widehat{\sigma}_{i}^{2}\widehat{\sigma}_{j}^{2}}}=\frac{\widehat{\epsi}_{i,t}\widehat{\epsi}_{j,t}}{\sqrt{\sigma_{i}^{2}\sigma_{j}^{2}}}-\frac{1}{2}\frac{\widehat{\epsi}_{i,t}\widehat{\epsi}_{j,t}}{\left(\varsigma_{i}^{2}\right)^{3/2}\left(\sigma_{j}^{2}\right)^{1/2}}\left(\widehat{\sigma}_{i}^{2}-\sigma_{i}^{2}\right)-\frac{1}{2}\frac{\widehat{\epsi}_{i,t}\widehat{\epsi}_{j,t}}{\left(\sigma_{i}^{2}\right)^{1/2}\left(\varsigma_{j}^{2}\right)^{3/2}}\left(\widehat{\sigma}_{j}^{2}-\sigma_{j}^{2}\right)
\end{align*}
for some $\left(\varsigma_{i}^{2},\varsigma_{j}^{2}\right)$ on the
line interval between $\left(\sigma_{i}^{2},\sigma_{j}^{2}\right),$
and $\left(\widehat{\sigma}_{i}^{2},\widehat{\sigma}_{j}^{2}\right).$ This
Taylor approximation has two extra terms. However, both of them can
be combined into one single extra term once we sum over all possible
combinations of $i$ and $j$. Hence, we can write 
\begin{align}
\label{eq:CD_Taylordecomp_2WFE}
CD&=2k_{N,T} \sum_{i=2}^{N}\sum_{j=1}^{i-1}\frac{\widehat{\vepsi}_{i}'\widehat{\vepsi}_{j}}{\sqrt{\sigma_{i}^{2}\sigma_{j}^{2}}} - k_{N,T}\sum_{i=1}^{N}\sum_{j\neq i}^{N}\frac{\widehat{\vepsi}_{i}'\widehat{\vepsi}_{j}}{\left(\varsigma_{i}^{2}\right)^{3/2}\left(\sigma_{j}^{2}\right)^{1/2}}\left(\widehat{\sigma}_{i}^{2}-\sigma_{i}^{2}\right) \notag \\
&= CD_{\sigma} +CD_{(\widehat{\sigma} - \sigma)}.
\end{align}
Consider now ${CD}_{\sigma}$. Since model residuals can be written out as $\widehat{\epsi}_{i,t}= \epsi_{i,t}-\overline{\epsi}_{i}-\overline{\epsi}_t +\overline{\epsi},$ we can further decompose this term into 
\begin{align}
	\label{eq:CD_FEdecomp_2WFE}
	{CD}_{\sigma} &= \sqrt{\frac{2}{TN(N-1)}}\sum_{i=2}^{N} \sum_{j=1}^{i-1}\sum_{t=1}^{T}\frac{(\epsi_{i,t}-\overline{\epsi}_{t})(\epsi_{j,t}-\overline{\epsi}_{t})}{\sigma_{i}\sigma_{j}} - Tk_{N,T} \left[ \left(\sum_{i=1}^N \frac{\overline{\epsi}_{i}-\overline{\epsi}}{\sigma_i}\right)^2 - \sum_{i=1}^N \left(\frac{\overline{\epsi}_{i}-\overline{\epsi}}{\sigma_i}\right)^2 \right] \notag \\
	&= {CD}_{\mu,\sigma} - {CD}_{(\widehat{\mu} - \mu)}.
\end{align}
As shown formally in Section \ref{sec:var_estim_2WFE}, ${CD}_{(\widehat{\mu} - \mu)} = \largeO_{P}\left(T^{-1/2}\right)$ and $CD_{(\widehat{\sigma} - \sigma)} = \largeO_{P}(\sqrt{N}T^{-1}) + \largeO_{P}(T^{-1/2}) +\largeO_{P}(N^{-1/2}) + \largeO_{P}(N^{-1}\sqrt{T})$. Accordingly, terms that capture the effect of estimating fixed effects and error variances have an asymptotically negligible effect if some weak restrictions on the relative rate of divergence of $N$ and $T$ are satisfied. For this reason, $CD_{\mu, \sigma}$ is the leading component of the CD test statistic. It can be expressed in terms of four U statistics, and two additional terms that contribute to the bias. First, let
\begin{align*}
CD_{\vepsi} &= \sqrt{\frac{2T}{N(N-1)}}\sum_{i=2}^{N} \sum_{j=1}^{i-1}\frac{1}{T}\sum_{t=1}^{T}\epsi_{i,t}\epsi_{j,t},\\
CD_{\vepsi/\sigma} &= \sqrt{\frac{2T}{N(N-1)}}\sum_{i=2}^{N} \sum_{j=1}^{i-1}\frac{1}{T}\sum_{t=1}^{T}\frac{\epsi_{i,t}\epsi_{j,t}}{\sigma_{i}\sigma_{j}},\\
CD_{\vepsi+} &= \sqrt{\frac{2T}{N(N-1)}}\sum_{i=2}^{N} \sum_{j=1}^{i-1}\frac{1}{T}\sum_{t=1}^{T}\epsi_{i,t}\epsi_{j,t}(\sigma_{i}^{-1}+\sigma_{j}^{-1}),\\
CD_{\vepsi++} &= \sqrt{\frac{2T}{N(N-1)}}\sum_{i=2}^{N} \sum_{j=1}^{i-1}\frac{1}{T}\sum_{t=1}^{T}\epsi_{i,t}\epsi_{j,t}(\sigma_{i}^{-2}+\sigma_{j}^{-2}).\\
\end{align*}


Now observe that the leading term of $CD$ can be decomposed into
\begin{align*}
{CD}_{\mu, \sigma}&= \sqrt{\frac{2}{TN(N-1)}}\sum_{i=2}^{N} \sum_{j=1}^{i-1}\sum_{t=1}^{T}\frac{(\epsi_{i,t}-\overline{\epsi}_{t})(\epsi_{j,t}-\overline{\epsi}_{t})}{\sigma_{i}\sigma_{j}}\\
&=k_{N,T}\sum_{t=1}^{T}\left(\sum_{i=1}^{N}\frac{(\epsi_{i,t}-\overline{\epsi}_{t})}{\sigma_{i}}\right)^{2}
-k_{N,T}\sum_{i=1}^{N}\sum_{t=1}^{T}\left(\frac{\epsi_{i,t}-\overline{\epsi}_{t}}{\sigma_{i}}\right)^{2}\\
&-k_{N,T}\sum_{t=1}^{T}\left(\sum_{i=1}^{N}\frac{(\overline{\epsi}_{i}-\overline{\epsi})}{\sigma_{i}}\right)^{2}+k_{N,T}\sum_{i=1}^{N}\sum_{t=1}^{T}\left(\frac{\overline{\epsi}_{i}-\overline{\epsi}}{\sigma_{i}}\right)^{2}\\
&=I-II-III+IV.
\end{align*}
Let us now consider each term separately. First,
\begin{align*}
I&=k_{N,T}\sum_{t=1}^{T}\left(\sum_{i=1}^{N}\frac{\epsi_{i,t}}{\sigma_{i}}\right)^{2}
-2k_{N,T}\overline{\sigma^{-1}}\sum_{t=1}^{T}\left[\left(\sum_{i=1}^{N}\frac{\epsi_{i,t}}{\sigma_{i}}\right)\left(\sum_{i=1}^{N}\epsi_{i,t}\right)\right]
+k_{N,T}\left(\overline{\sigma^{-1}}\right)^{2}\sum_{t=1}^{T}\left(\sum_{i=1}^{N}\epsi_{i,t}\right)^{2}\\
&=CD_{\vepsi/\sigma}-\overline{\sigma^{-1}}CD_{\vepsi+}+\left(\overline{\sigma^{-1}}\right)^{2}CD_{\vepsi}\\
&+k_{N,T}\sum_{i=1}^{N}\sum_{t=1}^{T}\left(\frac{\epsi_{i,t}}{\sigma_{i}}\right)^{2} - 2k_{N,T}\overline{\sigma^{-1}}\sum_{i=1}^{N}\sum_{t=1}^{T}\sigma_{i}\left(\frac{\epsi_{i,t}}{\sigma_{i}}\right)^{2} + k_{N,T}\left(\overline{\sigma^{-1}}\right)^2\sum_{i=1}^{N}\sum_{t=1}^{T}\epsi_{i,t}^{2}.
\end{align*}
Similarly, for the second term we have
\begin{align*}
II&=k_{N,T}\sum_{t=1}^{T}\sum_{i=1}^{N}\left(\frac{\epsi_{i,t}}{\sigma_{i}}\right)^{2}
-2 k_{N,T}N^{-1}\sum_{t=1}^{T}\left[\left(\sum_{i=1}^{N}\frac{\epsi_{i,t}^2}{\sigma_{i}^2}\right)\left(\sum_{i=1}^{N}\epsi_{i,t}\right)\right]
+k_{N,T}N^{-1}\overline{\sigma^{-2}}\sum_{t=1}^{T}\left(\sum_{i=1}^{N}\epsi_{i,t}\right)^{2}\\
&=-N^{-1}CD_{\vepsi++} + N^{-1}\overline{\sigma^{-2}}CD_{\vepsi}\\
&+k_{N,T}\sum_{t=1}^{T}\sum_{i=1}^{N}\left(\frac{\epsi_{i,t}}{\sigma_{i}}\right)^{2}
-2 k_{N,T}N^{-1}\sum_{i=1}^{N}\sum_{t=1}^{T}\left(\frac{\epsi_{i,t}}{\sigma_{i}}\right)^{2}
+k_{N,T}N^{-1}\overline{\sigma^{-2}}\sum_{i=1}^{N}\sum_{t=1}^{T}\left(\epsi_{i,t}\right)^{2}.
\end{align*}
As for the third component $III$,
\begin{align*}
III&=T k_{N,T}\left(\sum_{i=1}^{N}\frac{(\overline{\epsi}_{i}-\overline{\epsi})}{\sigma_{i}}\right)^{2}\\
&=N k_{N,T}\left(\frac{1}{\sqrt{NT}}\sum_{i=1}^{N}\sum_{t=1}^{T}\frac{\epsi_{i,t}}{\sigma_{i}}-\frac{1}{\sqrt{NT}}\sum_{i=1}^{N}\sum_{t=1}^{T}\epsi_{i,t}\right)^{2}\\
&=\sqrt{\frac{N}{2T(N-1)}}\left(\largeO_{P}(1)-\largeO_{P}(1)\right)^{2}\\
&=\largeO_{P}(T^{-1/2}).
\end{align*}
Here the third line follows from Theorem 3 in \citet{phillipsMoon1999} applied to sequences $\epsi_{i,t}$ and $\epsi_{i,t}/\sigma_{i}$.

As for the fourth component $IV$, notice that:
\begin{align*}
IV&=T k_{N,T}\sum_{i=1}^{N}\left(\frac{\overline{\epsi}_{i}-\overline{\epsi}}{\sigma_{i}}\right)^{2}\\
&=k_{N,T}\sum_{i=1}^{N}\left(\frac{1}{\sqrt{T}}\sum_{t=1}^{T}\frac{\epsi_{i,t}}{\sigma_{i}}\right)^{2}-2 k_{N,T}\left(\frac{1}{\sqrt{NT}}\sum_{i=1}^{N}\sum_{t=1}^{T}\frac{\epsi_{i,t}}{\sigma_{i}}\right)\left(\frac{1}{\sqrt{NT}}\sum_{i=1}^{N}\sum_{t=1}^{T}\epsi_{i,t}\right)\\
&+k_{N,T}\left(\frac{1}{N}\sum_{i=1}^{N}\frac{1}{\sigma_{i}^{2}}\right)\left(\sqrt{NT} \overline{\epsi}\right)^{2}\\
&=k_{N,T}\sum_{i=1}^{N}\left(\frac{1}{\sqrt{T}}\sum_{t=1}^{T}\frac{\epsi_{i,t}}{\sigma_{i}}\right)^{2}-\largeO_{P}(N^{-1}T^{-1/2}).
\end{align*}
Here in the final use we use the same CLT from Theorem 3 in \citet{phillipsMoon1999}. For the final component notice that:
\begin{align*}
k_{N,T}\sum_{i=1}^{N}\left(\frac{1}{\sqrt{T}}\sum_{t=1}^{T}\frac{\epsi_{i,t}}{\sigma_{i}}\right)^{2}&=N k_{N,T}+\sqrt{N}k_{N,T}\frac{1}{\sqrt{N}}\sum_{i=1}^{N}\left(\left(\frac{1}{\sqrt{T}}\sum_{t=1}^{T}\frac{\epsi_{i,t}}{\sigma_{i}}\right)^{2}-1\right)\\
&=\largeO(T^{-1/2})+\largeO_{P}(N^{-1/2}T^{-1/2}).
\end{align*}
It follows that $IV=\largeO_{P}(T^{-1/2})$.

Combining these expressions yields the expression in terms of four U-statistics and two bias terms referred to above:
\begin{align}
\label{eq:CD_Ustat_decomp}
{CD}_{\mu, \sigma}&=I-II -III+IV\notag\\
&=CD_{\vepsi/\sigma} - 2\overline{\sigma^{-1}}CD_{\vepsi+} - 2N^{-1}CD_{\vepsi++} +  \left(\left(\overline{\sigma^{-1}}\right)^{2}-N^{-1}\overline{\sigma^{-2}}\right)CD_{\vepsi} \notag\\
&+k_{N,T}\sum_{i=1}^{N}\sum_{t=1}^{T} \epsi_{i,t}^{2}\left( \left( \overline{\sigma^{-1}}\right)^{2} - 2\overline{\sigma^{-1}}\sigma_{i}^{-1}  \right)
- N^{-1} k_{N,T} \sum_{i=1}^{N}\sum_{t=1}^{T} \epsi_{i,t}^{2} \left(  \left( \overline{\sigma^{-1}}\right)^{2} - 2\sigma^{-2}_i \right)+\largeO_{P}(T^{-1/2}).
\end{align}
Concerning the second bias term in the last line above, we note that $(NT)^{-1} \sum_{i=1}^{N}\sum_{t=1}^{T} \epsi_{i,t}^{2}=\largeO_{P}(1)$ and $(NT)^{-1} \sum_{i=1}^{N}\sum_{t=1}^{T} \epsi_{i,t}^{2}\sigma_{i}^{-2}=\largeO_{P}(1)$ by application of Markov's inequality. It follows that
\begin{align*}
N^{-1} k_{N,T} \sum_{i=1}^{N}\sum_{t=1}^{T} \epsi_{i,t}^{2} \left(  \left( \overline{\sigma^{-1}}\right)^{2} - 2\sigma^{-2}_i \right)=\largeO_{P}\left( N^{-1} \sqrt{T} \right).
\end{align*}
Concerning the four U-statistics, note that under Assumption \ref{ass:errorsAddHetero} the error variances $\sigma_{i}^2$ are bounded, implying that all averages involving $\sigma_{i}$ are in general of order $\largeO_{P}(1)$ by Markov's inequality. Lemma \ref{lem:U_stat} can hence be used to show that all four U-statistics $ CD_{\vepsi},  CD_{\vepsi/\sigma},  CD_{\vepsi+}$ and $CD_{\vepsi++}$ are of order $\largeO_{P}(1)$.\footnote{Strictly speaking, Lemma \ref{lem:U_stat} does not apply to $CD_{\vepsi+}$ and $CD_{\vepsi++}$. However, it can be straightforwardly extended to accommodate the specific structure of the U-statistics in both expressions .} As a consequence, all terms in the second line of Eq.  \eqref{eq:CD_Ustat_decomp} involving $N^{-1}$ are $\largeO_{P}(N^{-1})$. Additionally, a standard (cross-sectional) Lindeberg-Levi CLT implies that
\begin{equation*}
\overline{\sigma^{-1}}=\E[\sigma_{i}^{-1}]+\largeO_{P}(N^{-1/2}).
\end{equation*}
This allows us to write
\begin{align*}
&CD_{\vepsi/\sigma} - 2\overline{\sigma^{-1}}CD_{\vepsi+} + \left(\overline{\sigma^{-1}}\right)^{2}CD_{\vepsi} \\
&= 2k_{N,T} \sum_{i=2}^{N} \sum_{j=1}^{i-1}\sum_{t=1}^{T}\epsi_{i,t}\epsi_{j,t}\left(\sigma_{i}^{-1}-\E[\sigma_{i}^{-1}]\right)\left(\sigma_{j}^{-1}-\E[\sigma_{i}^{-1}]\right) + \largeO_{P}(N^{-1/2}).
\end{align*}
Combining our results on the components of equation \eqref{eq:CD_Ustat_decomp}, we conclude that
\begin{align*}
{CD}_{\mu, \sigma} &= \sqrt{\frac{2}{TN(N-1)}}\sum_{i=2}^{N} \sum_{j=1}^{i-1}\sum_{t=1}^{T}\epsi_{i,t}\epsi_{j,t}\left(\sigma_{i}^{-1}-\E[\sigma_{i}^{-1}]\right)\left(\sigma_{j}^{-1}-\E[\sigma_{i}^{-1}]\right) \\
&+ \sqrt{T}\Xi + \largeO_{P}(N^{-1/2})+ \largeO_{P}(T^{-1/2}),
\end{align*}
where
\begin{align*}
\Xi&=\sqrt{\frac{N}{2(N-1)}}\left(NT\right)^{-1}\sum_{i=1}^{N}\sum_{t=1}^{T} \epsi_{i,t}^{2}\left( \left( \overline{\sigma^{-1}}\right)^{2} - 2\overline{\sigma^{-1}}\sigma_{i}^{-1}  \right).
\end{align*}
Adding the remaining two components of the decomposition of $CD$ in Eqs. \eqref{eq:CD_FEdecomp_2WFE} and \eqref{eq:CD_Taylordecomp_2WFE} merely results in the additional order terms $\largeO_{P}\left(T^{-1}\sqrt{N}\right) + \largeO_{P}\left(N^{-1}\sqrt{T}\right)$.
It remains to prove weak convergence of the leading stochastic component in $CD$. For this purpose, set $a_{i,t}=\epsi_{i,t}(\sigma_{i}^{-1}-\E[\sigma_{i}^{-1}])$ and $q_{i}=\sigma_{i}(\sigma_{i}^{-1}-\E[\sigma_{i}^{-1}])$. We can then apply Lemma \ref{lem:U_stat} and conclude that
\begin{equation}
CD-\sqrt{T}\Xi \dto N(0,\Omega),
\end{equation}
as $N,T \to \infty$ subject to $\sqrt{N}T^{-1} \to 0$ and $\sqrt{T}N^{-1} \to 0$, where $\Omega=(\E[q_{i}^{2}])^{2}=\left(\E[(1-\sigma_{i}\E[\sigma_{i}^{-1}])^{2}]\right)^{2}$.
\end{proof}	

\bigskip

\begin{proof}[\textbf{Proof of Corollary \ref{corollary::homosigmas}}]

Using the decomposition in Eqs. \eqref{eq:CD_Taylordecomp_2WFE}- \eqref{eq:CD_FEdecomp_2WFE} for a general value of $\sigma^{2}$, we have
\begin{align}
CD 	&= CD_{\mu,\sigma}+CD_{(\widehat{\sigma}-\sigma)}-CD_{(\widehat{\mu}-\mu)},
\end{align}
where
\begin{equation}
CD_{\mu,\sigma}=\frac{1}{\sigma^{2}}\sqrt{\frac{1}{2TN(N-1)}}\sum_{t=1}^{T} \left(\sum_{i=1}^{N}\widehat{\epsi}_{i,t}\right)^2 - \frac{1}{\sigma^{2}}\sqrt{\frac{1}{2TN(N-1)}}\sum_{t=1}^{T}\sum_{i=1}^{N}\widehat{\epsi}_{i,t}^2.
\end{equation}
The first component on the right-hand side above is $0$ by construction. The second can be rewritten as
\begin{align*}
CD_{\mu,\sigma} 	&=\frac{1}{N}CD_{\epsi/\sigma}-\frac{1}{\sigma^{2}}\sqrt{\frac{2}{TN(N-1)}}\frac{N-1}{2N}\sum_{t=1}^{T}\left(\sum_{i=1}^{N}\epsi_{i,t}^{2}\right)\\
&=\frac{1}{N}CD_{\epsi/\sigma}-\sqrt{\frac{N-1}{2N^{2}}}\frac{1}{\sqrt{NT}}\sum_{t=1}^{T}\sum_{i=1}^{N}\left(\left(\frac{\epsi_{i,t}}{\sigma}\right)^{2}-1\right)-\sqrt{\left(T-\frac{T}{N}\right)/2},\\
&=-\sqrt{\left(T-\frac{T}{N}\right)/2}-\sqrt{\frac{1}{2N}}\frac{1}{\sqrt{NT}}\sum_{t=1}^{T}\sum_{i=1}^{N}\left(\left(\frac{\epsi_{i,t}}{\sigma}\right)^{2}-1\right)+\small\largeO_{P}(N^{-1/2}).
\end{align*}
Concerning the first expression in the last line, we can use the restriction $\sqrt{T}N\to 0$

Here, $CD_{\epsi/\sigma}$ is the usual CD test statistic based on raw error terms $\epsi_{i,t}/\sigma$. Lemma \ref{lem:U_stat} allows us to conclude that $CD_{\epsi/\sigma}=\largeO_{P}(1)$. The final result follows after observing that by double-index CLT:
\begin{equation}
\frac{1}{\sqrt{NT}}\sum_{t=1}^{T}\sum_{i=1}^{N}\left(\left(\frac{\epsi_{i,t}}{\sigma}\right)^{2}-1\right)=\frac{1}{\sqrt{NT}}\sum_{t=1}^{T}\sum_{i=1}^{N}\left(\eta_{i,t}^{2}-1\right)=\largeO_{P}(1),
\end{equation}
given $\E[|\eta_{i,t}|^{8}]<\infty$.
\end{proof}	

\newpage
\subsection{Proofs of Theorem \ref{theorem::CCE} and Proposition \ref{theorem::CCE_sigma}}
\label{sec:proof_CCE}

\begin{proof}[\textbf{Proof of Proposition \ref{theorem::CCE_sigma}}]\ \\
Analogous to the proof of Theorem \ref{theorem::additiveHetero}, we decompose $CD$ into
\begin{align}
\label{eq:CD_Taylordecomp_CCE}
CD&=2k_{N,T} \sum_{i=2}^{N}\sum_{j=1}^{i-1}\frac{\widehat{\vepsi}_{i}'\widehat{\vepsi}_{j}}{\sqrt{\sigma_{i}^{2}\sigma_{j}^{2}}}-k_{N,T}\sum_{i=1}^{N}\sum_{j\neq i}^{N}\frac{\widehat{\vepsi}_{i}'\widehat{\vepsi}_{j}}{\left(\varsigma_{i}^{2}\right)^{3/2}\left(\sigma_{j}^{2}\right)^{1/2}}\left(\widehat{\sigma}_{i}^{2}-\sigma_{i}^{2}\right) \nonumber \\
&= CD_{\sigma} +CD_{(\widehat{\sigma} - \sigma)}.
\end{align}
Under the assumptions of this proposition, the leading term of the CD test statistic is given by
	\begin{align*}
		CD_{\sigma} = \frac{1}{\sigma^{2}}2k_{N,T}\sum_{i=2}^{N}\sum_{j=1}^{i-1}\widehat{\vepsi}_{i}^{\prime }\widehat{\vepsi}_{j} &=
		\frac{1}{\sigma^{2}}k_{N,T}\sum_{i,j}^{N}\widehat{\vepsi}_{i}^{\prime }\widehat{\vepsi}_{j}- \frac{1}{\sigma^{2}}k_{N,T}\sum_{i=1}^{N}\widehat{\vepsi}_{i}^{\prime }\widehat{\vepsi}_{i} \\
		&=I-II.
	\end{align*}
	We begin by proving that $I=0$. Note that
	\begin{align*}
		\sum_{i,j}^{N}\widehat{\vepsi}_{i}^{\prime }\widehat{\vepsi}_{j}
		=\sum_{t=1}^{T}\left(\sum_{i=1}^{N}\widehat{\epsi}_{i,t}\right)^{2}.
	\end{align*}
	Furthermore, for each time period $t$ the sum of residuals can be expressed as
	\begin{equation}
		\sum_{i=1}^{N}\widehat{\epsi}_{i,t}=N\left(\overline{y}_{t}-\overline{\widehat{\vlambda}}^{\prime}\widehat{\vf}_{t}\right),
	\end{equation}
	where the average estimated loading $\overline{\widehat{\vlambda}}$ is implicitly defined by projection off the space spanned by $\widehat{\vf}_{t}$. That is, it can be written
	\begin{equation}
		\overline{\widehat{\vlambda}}=\left(\sum_{t=1}^{T}\widehat{\vf}_{t}\widehat{\vf}_{t}'\right)^{-1}\sum_{t=1}^{T}\widehat{\vf}_{t}\overline{y}_{t}=\left(\sum_{t=1}^{T}\widehat{\vf}_{t}\widehat{\vf}_{t}'\right)^{-1}\sum_{t=1}^{T}\widehat{\vf}_{t}\widehat{\vf}_{t}'\left[1,\vzeros_{m}'\right]'=\left[1,\vzeros_{m}'\right]',
	\end{equation}
	such that $\overline{\widehat{\vlambda}}^{\prime}\widehat{\vf}_{t}=\overline{y}_{t}$. Combining these results we conclude:
	\begin{equation}
		\sum_{i=1}^{N}\widehat{\epsi}_{i,t}=N\left(\overline{y}_{t}-\overline{\widehat{\vlambda}}^{\prime}\widehat{\vf}_{t}\right)=0,
	\end{equation}
	therefore also $I=0$. Next, we show that $II=\largeO_{P}(\sqrt{T})$. Using corresponding results from the proof of Theorem \ref{theorem::CCE} with $w_i = \sigma^{-1} \; \forall i$, we can write
	\begin{align*}
		II&=k_{N,T}\frac{1}{\sigma^{2}}\sum_{i=1}^{N}\sum_{t=1}^{T}\epsi_{i,t}^{2}+\largeO_{P}(N^{-1/2}\vee T^{-1/2})+\largeO_{P}(N^{-1}\sqrt{T})\\
		&=NT\sqrt{\frac{1}{2TN(N-1)}}+\frac{1}{\sqrt{2(N-1)}}\frac{1}{\sqrt{NT}}\left(\sum_{i=1}^{N}\sum_{t=1}^{T}\eta_{i,t}^{2}-1\right)+\largeO_{P}(N^{-1/2}\vee T^{-1/2})+\largeO_{P}(N^{-1}\sqrt{T}),\\
		&=\sqrt{\frac{TN}{2(N-1)}}+\largeO_{P}(N^{-1/2}\vee T^{-1/2})+\largeO_{P}(N^{-1}\sqrt{T}),\\
		&=\sqrt{\frac{T-T/N}{2}}+\largeO_{P}(N^{-1/2}\vee T^{-1/2})+\largeO_{P}(N^{-1}\sqrt{T}).
	\end{align*}
	Here the second line follows from double-index CLT applied to the sequence $\eta_{i,t}^{2}-1$. Combining our results with those on expressions $I$ and $II$ we conclude
	\begin{equation}
		CD=-\sqrt{\left(T-\frac{T}{N}\right)/2} + \largeO_{P}(N^{-1/2})+\largeO_{P}(T^{-1/2})+\largeO_{P}(N^{-1}\sqrt{T}).
	\end{equation}
\end{proof}

\bigskip

\begin{proof}[\textbf{Proof of Theorem \ref{theorem::CCE}}]
As in Proposition \ref{theorem::CCE_sigma}:

\begin{align}
\label{eq:CD_Taylordecomp_CCE}
CD&=2k_{N,T} \sum_{i=2}^{N}\sum_{j=1}^{i-1}\frac{\widehat{\vepsi}_{i}'\widehat{\vepsi}_{j}}{\sqrt{\sigma_{i}^{2}\sigma_{j}^{2}}}-k_{N,T}\sum_{i=1}^{N}\sum_{j\neq i}^{N}\frac{\widehat{\vepsi}_{i}'\widehat{\vepsi}_{j}}{\left(\varsigma_{i}^{2}\right)^{3/2}\left(\sigma_{j}^{2}\right)^{1/2}}\left(\widehat{\sigma}_{i}^{2}-\sigma_{i}^{2}\right) \nonumber \\
&= CD_{\sigma} +CD_{(\widehat{\sigma} - \sigma)}.
\end{align}
As shown in Section \ref{sec:var_estim_CCE}, $CD_{(\widehat{\sigma} - \sigma)}=\largeO_{P}\left( \sqrt{N}T^{-1} \right)+ \largeO_{P}\left( T^{-1/2} \right) + \largeO_{P}\left( N^{-1/2} \right) + \largeO_{P}\left( N^{-1} \sqrt{T} \right)$, implying that the estimation of unit-specific error variances is asymptotically negligible under weak assumptions on the relative rate of divergence of $N$  and $T$. 

Consider now $CD_{\sigma}$. To simplify notation and to emphasize the generality of our result, we derive its asymptotic properties for a generic set of random weights $\{ w_{i}\}_{i=1}^{N}$ which has the same properties as spelled out in Assumption \ref{ass:errors}.
Taking some conflicting notation with respect to Theorem \ref{thm:CD_W_combined} into account, let the equivalent to  ${CD}_{\sigma}$ based on these generic weights be denoted
\begin{align*}
CD_w = 2k_{N,T}\sum_{i=2}^{N}\sum_{j=1}^{i-1}\widehat{\vepsi}%
_{i}^{\prime }\widehat{\vepsi}_{j}w _{i}w _{j} &=
k_{N,T}\sum_{i,j}^{N}\widehat{\vepsi}_{i}^{\prime }\widehat{\vepsi}_{j}w_{i}w_{j}- k_{N,T}\sum_{i=1}^{N}\widehat{\vepsi}_{i}^{\prime }\widehat{\vepsi}_{i}w_{i}^{2} \\
&=I-II,
\end{align*}
where
\begin{align*}
\widehat{\vepsi}_{i} &=\mM_{\widehat{\mF}}\left(
\mF\vlambda_{i}+\vepsi_{i}\right) \\
&=\mM_{\widehat{\mF}}\vepsi_{i}-\mM_{%
	\widehat{\mF}}\left( \widehat{\mF}-\mF\overline{\mC}\right) \left(
\overline{\mC}^{-1}\right) \vlambda_{i} \\
&=\mM_{\widehat{\mF}}\vepsi_{i}-\mM_{%
	\widehat{\mF}}\overline{\mU}\left( \overline{\mC}^{-1}\right) \vlambda_{i}.
\end{align*}
Now let	
\begin{align*}
I &=k_{N,T}\sum_{i,j}^{N}\left(\vepsi_{i}-\overline{\mU}%
\left( \overline{\mC}^{-1}\right) \vlambda_{i}\right) ^{\prime
}\left(\vepsi_{j}-\overline{\mU}\left( \overline{\mC}%
^{-}\right) \vlambda_{j}\right) w_{i}w_{j} \\
&-k_{N,T}\sum_{i,j}^{N}\left(\vepsi_{i}-\overline{\mU}%
\left( \overline{\mC}^{-1}\right) \vlambda_{i}\right) ^{\prime }%
\mP_{\widehat{\mF}}\left(\vepsi_{j}-{\overline{\mU}%
}\left( \overline{\mC}^{-1}\right) \vlambda_{j}\right) w_{i}w_{j} \\
&=I_{1}-I_{2}.
\end{align*}
Here, analogous to how we proceeded in the time fixed effects model, we write
\begin{align*}
\sum_{i=1}^{N}\left(\vepsi_{i}-\overline{\mU}\left(
\overline{\mC}^{-1}\right) \vlambda_{i}\right) w_{i}=\sum_{i=1}^{N}\vepsi_{i}w_{i}-\sum_{i=1}^{N}%
\mU_{i}\left( \overline{\mC}^{-1}\right) \left( N^{-1}\sum_{\ell
	=1}^{N}\vlambda_{\ell }w_{\ell }\right) .
\end{align*}	
Consequently, letting $\overline{\vlambda}_{w }=N^{-1}\sum_{\ell=1}^{N}%
\vlambda_{\ell }w _{\ell }$\ and $\xi_{i}=%
\vepsi_{i}w _{i}-\mU_{i}\left( \overline{\mC}%
^{-}\right) \overline{\vlambda}_{w }$
\begin{align}
\label{eq:I1_arranged}
I_{1} &=k_{N,T}\sum_{i,j}^{N}\xi_{i}^{\prime }\xi%
_{j} \notag \\
&=2k_{N,T}\sum_{i=2}^{N}\sum_{j=1}^{i-1}\xi_{i}^{\prime }%
\xi_{j}+k_{N,T}\sum_{i=1}^{N}\xi_{i}^{\prime }%
\xi_{i},
\end{align}
where
\begin{align*}
k_{N,T}\sum_{i=1}^{N}\xi_{i}^{\prime }%
\xi_{i} &= k_{N,T}\sum_{i=1}^{N}\vepsi_{i}^{\prime }\vepsi_{i}w_{i}^{2} + k_{N,T}\overline{\vlambda}_{w }^{\prime }\left( \overline{\mC}%
^{-1}\right) ^{\prime }\left( \sum_{i=1}^{N}\mU_{i}^{\prime }\mU%
_{i}\right) \left( \overline{\mC}^{-1}\right) \overline{\vlambda}_{w
} \\
&-2k_{N,T}\overline{\vlambda}_{w }^{\prime }\left( \overline{\mC}%
^{-1}\right) ^{\prime }\left( \sum_{i=1}^{N}\mU_{i}^{\prime }\vepsi_{i}w _{i}\right)
\end{align*}
is a linear function of asymptotically non-negligible terms.
Lemma \ref{lem:I2} implies that $I_2 = \smallO_P(1)$ if both $\sqrt{T}/N \to 0$ and $T\to\infty$. Hence, proceed to
\begin{align*}
II &=k_{N,T}\sum_{i=1}^{N}\left(\vepsi_{i}-\overline{\mU}%
\left( \overline{\mC}^{-1}\right) \vlambda_{i}\right) ^{\prime
}\left(\vepsi_{i}-\overline{\mU}\left( \overline{\mC}%
^{-1}\right) \vlambda_{i}\right) w
_{i}^{2} \\
&-k_{N,T}\sum_{i=1}^{N}\left(\vepsi_{i}-\overline{\mU}%
\left( \overline{\mC}^{-1}\right) \vlambda_{i}\right) ^{\prime }%
\mP_{\widehat{\mF}}\left(\vepsi_{i}-\overline{\mU}\left( \overline{\mC}^{-1}\right) \vlambda_{i}\right) w
_{i}^{2} \\
&=II_{1}-II_{2}.
\end{align*}
As shown in Lemma \ref{lem:II2}, $II_{2} $ is $\smallO_P(1)$, if $N^{-1}\sqrt{T} \to 0$ and $T\to\infty$. For this reason, consider the first term above instead. We can write
\begin{align*}
II_{1}=k_{N,T}\sum_{i=1}^{N}\vepsi_{i}^{\prime }\vepsi _{i}w _{i}^{2}+k_{N,T}\sum_{i=1}^{N} \left\| \overline{\mU}\left( \overline{\mC}^{-1}\right) \vlambda_{i} \right\|^2 w _{i}^{2}-2k_{N,T}\sum_{i=1}^{N}\vlambda_{i}^{\prime
}\left( \overline{\mC}^{-1}\right) ^{\prime }\overline{\mU}^{\prime }%
\vepsi_{i}w _{i}^{2},
\end{align*}
where
\begin{align*}
&k_{N,T}\sum_{i=1}^{N}\left\| \overline{\mU}\left( \overline{\mC}^{-1}\right) \vlambda_{i} \right\|^2 w _{i}^{2} \\
&\leq k_{N,T} \frac{T}{N}\sum_{i=1}^{N} w _{i}^{2}\vlambda_{i}^{\prime }\vlambda_{i} \tr\left(  (N/T) \overline{\mU}%
^{\prime }\overline{\mU} \right) \left\| \overline{\mC}^{-1} \right\|^2  \\
&= \largeO_{P}(N^{-1}\sqrt{T})
\end{align*}	
by using result \eqref{eq:UbarUbar} and Markov's inequality on $\frac{1}{N} \sum_{i=1}^{N}w _{i}^{2}\vlambda_{i}^{\prime }\vlambda_{i}$. Next,
\begin{align*}
2k_{N,T} \left|\sum_{i=1}^{N}\vlambda_{i}^{\prime
}\left( \overline{\mC}^{-1}\right)^{\prime }\overline{\mU}^{\prime }%
\vepsi_{i}w _{i}^{2}\right| &\leq 2k_{N,T}\sqrt{N} \left( N^{-1} \sum_{i=1}^{N} \left\| \vlambda_{i}^{\prime}\left( \overline{\mC}^{-1}\right)^{\prime } \right\|^2 \right)^{1/2} \left( \sum_{i=1}^{N} \left\| \overline{\mU}^{\prime }%
\vepsi_{i}w _{i}^{2} \right\|^2 \right)^{1/2}.
\end{align*}
As noted above, $N^{-1} \sum_{i=1}^{N} \left\| \vlambda_{i}^{\prime}\left( \overline{\mC}^{-1}\right)^{\prime }  \right\|^{2} =\largeO_{P}(1)$. In order to proceed in  with the second term above, we use the identity $\overline{\mU} = \left[
\overline{\vepsi}, \; \overline{\mE}
\right]\mB$ and write
\begin{align*}
\E \left[ \sum_{i=1}^{N} \left\| \overline{\mU}^{\prime }%
\vepsi_{i}w _{i}^{2} \right\|^{2} \right] &\leq \left\| \mB \right\|^2 N^{-2}\sum_{i,i',i''}^{N} \sum_{t,t'}^T \E\left[ \epsi_{i,t} \epsi_{i',t}\epsi_{i'',t'} \epsi_{i,t'} \right] \E\left[w_{i}^{4}\right] \\
&+ \left\| \mB \right\|^2 N^{-2}\sum_{i,i',i''}^{N} \sum_{t,t'}^T \E\left[ \epsi_{i,t} \epsi_{i,t'} \right] \E \left[ \ve'_{i',t}\ve_{i'',t'}  \right]  \E\left[w_{i}^{4}\right] \\
&= \largeO(T) + \largeO(N^{-1}T^2),
\end{align*}
which follows from combining indexes such that only nonzero expectations remain. Together with uniform boundedness of $w_{i}^{4}$ and boundedness of $\left\| \mB \right\|^2$, we have
\begin{align}
\label{eq:II1_crossterm}
2k_{N,T} \left|\sum_{i=1}^{N}\vlambda_{i}^{\prime
}\left( \overline{\mC}^{-1}\right)^{\prime }\overline{\mU}^{\prime }%
\vepsi_{i}w _{i}^{2}\right| &= \sqrt{\frac{2}{TN(N-1)}} \sqrt{N} \largeO_{P}(1) \left(\largeO_{P}(\sqrt{T}) + \largeO_{P}(N^{-1/2}T) \right) \notag \\
&= \largeO_{P}(N^{-1/2}) + \largeO_{P}(N^{-1}\sqrt{T}).
\end{align}
Hence, we can conclude that
\begin{align}
\label{eq:II1_limit}
II_1 = k_{N,T}\sum_{i=1}^{N}\vepsi_{i}^{\prime }		\vepsi_{i}w_{i}^{2} + \largeO_{P}(N^{-1/2})+ \largeO_{P}(T^{-1/2}) + \largeO_{P}(N^{-1}\sqrt{T})
\end{align}
Combining the results on $I_1$, $I_2$, $II_1$ and $II_2$, we have
\begin{align*}
CD_w & = I_1 - I_2 - II_1 + II_2  \\
&= 2k_{N,T}\sum_{i=2}^{N}\sum_{j=1}^{i-1}\xi_{i}^{\prime }%
\xi_{j} + k_{N,T}\overline{\vlambda}_{w }^{\prime }\left( \overline{\mC}%
^{-1}\right) ^{\prime }\left( \sum_{i=1}^{N}\mU_{i}^{\prime }\mU%
_{i}\right) \left( \overline{\mC}^{-1}\right) \overline{\vlambda}_{w
} \\
&- 2k_{N,T}\overline{\vlambda}_{w }^{\prime }\left( \overline{\mC}%
^{-1}\right) ^{\prime }\left( \sum_{i=1}^{N}\mU_{i}^{\prime }\vepsi_{i} w_{i}\right) + \largeO_{P}(T^{-1/2}) + \largeO_{P}(N^{-1/2}) + \largeO_{P}(N^{-1}\sqrt{T}).
\end{align*}
This result transforms into one about ${CD}_{\sigma}$, the leading component of $CD$ in Eq. \eqref{eq:CD_Taylordecomp_CCE}, if we set $w_i = \sigma_i^{-1}$,  and define the influence function as $\vpsi_{i,t} = \left( \overline{\mC}'\right)^{-1}\vu_{i,t}$. Additionally using the results of Section \ref{sec:var_estim_CCE}, the main result of this theorem follows. 
\end{proof}	

\newpage
\subsection{Proof of Theorem \ref{thm:CD_W_combined} }
\label{sec:proof_CD_W_combined}	

\paragraph{Additive part.}\ \\ 
Consider the $CD_{W}$ statistic without variance scaling, and denote it by $\ddot{CD}_{W}$, i.e:
\begin{align*}
\ddot{CD}_{W}&=\sqrt{\frac{2}{TN(N-1)}}  \sum_{i=2}^N \sum_{j=1}^{i-1} \widehat{\vepsi}_{i}' \widehat{\vepsi}_{j}w_i w_j \\
&=CD_{\mu,w}-CD_{(\widehat{\mu}-\mu)}.
\end{align*}
From the results in Section \ref{sec:var_estim_2WFE} we notice that $CD_{(\widehat{\mu}-\mu)}=\largeO_{P}(T^{-1/2})$ irrespective if one considers $\sigma_{i}^{-1}$ or any $w_{i}$ as weights. Moreover, the decomposition of $CD$ provided in Theorem \ref{theorem::additiveHetero} is purely algebraic and holds even with general weights $w_i$. 
From Theorem \ref{theorem::additiveHetero} :
	\begin{align*}
			CD_{\mu,w}= 2k_{N,T}  \sum_{i=2}^N \sum_{j=1}^{i-1} \vepsi_{i}' \vepsi_{j} \left(w_i - \overline{w} \right) \left(w_j - \overline{w} \right) + \sqrt{T}\left(\Xi_{W,1} + \Xi_{W,2}\right),
	\end{align*}
where
\begin{align*}
	\Xi_{W,1} &= \sqrt{\frac{N}{2(N-1)}} \frac{1}{NT}\sum_{i=1}^{N}\sum_{t=1}^{T} \epsi_{i,t}^{2}\left(\left(\overline{w}\right)^{2} - 2\overline{w}w_{i} \right) \\
	\Xi_{W,2} &= \sqrt{\frac{1}{2N(N-1)}} \frac{1}{NT} \sum_{i=1}^{N}\sum_{t=1}^{T} \epsi_{i,t}^{2} \left(  \left( \overline{w}\right)^{2} - 2w_i^2 \right).
\end{align*}
Here, we consider the leading bias term $\Xi_{1,W}$ where $(NT)^{-1} \sum_{i=1}^N\sum_{t=1}^T \epsi_{i,t}^2 = \largeO_{P}(1)$,  $\overline{w}= \largeO_{P}\left(N^{-1/2}\right)$ and $N^{-1/2}T^{-1}\sum_{i=1}^{N}\sum_{t=1}^{T} \epsi_{i,t}^{2}w_i^2 = \largeO_{P}\left(1\right)$ straightforwardly hold by Markov's and Chebyshev's inequalities and uniform boundedness of the error variances. It follows that $\Xi_{1,W} = \largeO_{P}\left(N^{-1}\right)$. For the second remainder term $\Xi_{W,2}$, we note that $ \frac{1}{NT} \sum_{i=1}^{N}\sum_{t=1}^{T} \epsi_{i,t}^{2}w_i^2=\largeO_{P}(1)$ which together with the previous intermediary results implies $\Xi_{W,2} = \largeO_{P}(N^{-2})$.
	
We continue with the leading stochastic term in $CD_W$, given by
\begin{align*}
& 2k_{N,T}  \sum_{i=2}^N \sum_{j=1}^{i-1} \sum_{t=1}^T \epsi_{i,t} \epsi_{j,t}\left(w_i - \overline{w} \right) \left(w_j - \overline{w} \right)  \\
&=2k_{N,T}  \sum_{i=2}^N \sum_{j=1}^{i-1} \sum_{t=1}^T  \epsi_{i,t} \epsi_{j,t}w_i w_j - (\overline{w}-N^{-1})\left(2k_{N,T} \sum_{i=2}^N \sum_{j=1}^{i-1} \sum_{t=1}^T \epsi_{i,t} \epsi_{j,t}(w_i+w_j)\right)\\
&+(\overline{w}^{2}-N^{-1}\overline{w^{2}})\left(2k_{N,T} \sum_{i=2}^N \sum_{j=1}^{i-1} \sum_{t=1}^T \epsi_{i,t} \epsi_{j,t}\right).
\end{align*}
Given that $\E[w_{i}]=0$, the second and third components are of orders $\largeO_{P}(N^{-1/2})$ and $\largeO_{P}(N^{-1})$, respectively. This follows from the application of Lemma \ref{lem:U_stat} on corresponding U-statistics. The remaining non-negligible term is of the form:
	\begin{align*}
	\sqrt{\frac{2}{TN(N-1)}}  \sum_{i=2}^N \sum_{j=1}^{i-1} \sum_{t=1}^T \epsi_{i,t} \epsi_{j,t}w_i w_j.
	\end{align*}
	From Lemma \ref{lem:U_stat} it follows that:
	\begin{align}
	\label{eq:eps_CLT}
	\sqrt{\frac{2}{TN(N-1)}}\sum_{i=2}^N \sum_{j=1}^{i-1} \sum_{t=1}^T\epsi_{i,t}\epsi_{j,t}w_{i}w_{j} \dto N(0,\Omega)
	\end{align}
	as $N,T \to \infty$ jointly, where $\Omega=\E[\sigma_{i}^{2}w_{i}^{2}]^{2}$
	The last step in this proof is to show consistency of the standard deviation estimator $\widehat{\Omega} = (NT)^{-1} \sum_{i=1}^N \sum_{t=1}^{T} \widehat{\epsi}_{i,t}^{2} w_i^2$ so that convergence in law of $CD_W$ to a standard normal distribution follows by Slutsky's Theorem. Note here that an equivalent expression is given by term $II$ in the proof of Theorem  \ref{theorem::additiveHetero}. Drawing from results in Section \ref{sec:var_estim_2WFE}, we conclude that
	\begin{align*}
	\frac{1}{NT} \sum_{i=1}^N \sum_{t=1}^{T} \widehat{\epsi}_{i,t}^2 w_i^2 =\frac{1}{N} \sum_{i=1}^N \widehat{\sigma}_{i}^{2} w_i^2= \frac{1}{N} \sum_{i=1}^N \sigma_{i}^{2} w_i^2 + \largeO_{P}((NT)^{-1/2})+\largeO_{P}(T^{-1})+\largeO_{P}(N^{-1}).
	\end{align*}
	Consistency follow from the Chebyshev's inequality. It follows that $CD_W \dto N(0,1)$.

\paragraph{Multifactor error part.}\ \\
	Proceeding from the decomposition given in Theorem \ref{theorem::CCE}, we consider the two bias terms $\Phi_{W,1}$ and $\Phi_{W,2}$. For both terms, it is instructive to note that
	\begin{align*}
		\E\left[ \left\| N^{-1}\sum_{i=1}^N w_i \vlambda_{i} \right\|^2 \right] &= N^{-2}\sum_{i,j}^N \E\left[w_iw_j\right] \E \left[ \vlambda_{i}' \vlambda_j\right]
		= N^{-2}\sum_{i=1}^N \var\left[w_i\right] \E \left[ \vlambda_{i}' \vlambda_{i}\right]
		= \largeO\left(N^{-1}\right),
	\end{align*}
so that
\begin{align}
\label{eq:sumN_wilami}
	\left\| N^{-1}\sum_{i=1}^N w_i \vlambda_{i} \right\| = \largeO_{P}\left( N^{-1/2}\right).
\end{align}
 Concerning $\Phi_{W,1}$, we also need to take into account
\begin{align*}
	\left\| \left(NT\right)^{-1} \sum_{i=1}^N \sum_{t=1}^T \vpsi_{i,t} \vpsi_{i,t}'\right\| &\leq \left(NT\right)^{-1} \sum_{i=1}^N \sum_{t=1}^T \vu_{i,t}'\vu_{i,t} \left\| \overline{\mC}^{-1} \right\|^2 \\
	&= \largeO_{P}\left(1\right),
\end{align*}
which follows from Markov's inequality. Using this result, we have
\begin{align*}
\Phi_{1,W} &\leq \sqrt{\frac{N}{2(N-1)}} \left\| \left(NT\right)^{-1} \sum_{i=1}^N \sum_{t=1}^T \vpsi_{i,t} \vpsi_{i,t}'\right\|  \left\| N^{-1}\sum_{i=1}^N w_i \vlambda_{i} \right\|^2 \\
&= \largeO_{P}(N^{-1}).
\end{align*}
Next, consider $\widetilde{\Phi}_{2,W}$. By isolating $\overline{\mC}^{-1}$ from the definition of $\vpsi_{i,t}$ and by using result \eqref{eq:sumNT_uit_epsit_wi} on the remaining part:
\begin{align*}
	\left\| \left( NT \right)^{-1}  \sum_{i=1}^N \sum_{t=1}^T \vpsi_{i,t} w_i \epsi_{i,t} \right\| = \largeO_{P}\left( N^{-1/2} \right).
\end{align*}
Consequently,
\begin{align*}
	\left| \Phi_{W,2} \right| &\leq \sqrt{\frac{N}{2(N-1)}} \left\| N^{-1}\sum_{i=1}^N w_i \vlambda_{i} \right\| \left\| \left( NT \right)^{-1}  \sum_{i=1}^N \sum_{t=1}^T \vpsi_{i,t} w_i \epsi_{i,t} \right\| \\
	&= \largeO_{P}\left( N^{-1}	\right).
\end{align*}
Thus, the two bias terms affecting $CD_W$ in Theorem \ref{theorem::CCE} are negligible for weights satisfying Assumption \ref{ass:weights} as long as $N^{-1}\sqrt{T} \to 0$. We continue with the leading stochastic term of the weighted CD test statistic, given by
\begin{align*}
	2k_{N,T} \sum_{i=2}^N \sum_{j=1}^{i-1} \sum_{t=1}^T  \xi_{iN,t} \xi_{jN,t} 
\end{align*}
with $ \xi_{iN,t} = \left[w_i \epsi_{i,t} - \left( N^{-1} \sum_{\ell=1}^N w_{\ell} \vlambda_{\ell}' \right) \vpsi_{i,t} \right]$  and $\vpsi_{i,t} =\overline{\mC}^{-1} \vu_{i,t}$. Analogous to the time fixed effects part of this proof, we can apply Lemma \ref{lem:U_stat} with $a_{i,t} = w_i\epsi_{i,t}$ and $q_{i} = w_i\sigma_{i}$ to obtain
\begin{align*}
\sqrt{\frac{2}{TN(N-1)}}\sum_{i=2}^N \sum_{j=1}^{i-1} \sum_{t=1}^T \epsi_{i,t}\epsi_{j,t} w_i w_j \dto N(0,   (\sigma_w^2 \E \left[\sigma_{i}^2\right])^2)
\end{align*}
as $N,T \to \infty$ subject to $N^{-1}\sqrt{T} \to 0$. Now note that by Eq. \eqref{eq:II1_limit} in the proof of Theorem \ref{theorem::CCE},
\begin{align*}
	(NT)^{-1} \sum_{i=1}^N \sum_{t=1}^T \widehat{\epsi}_{i,t}^2 w_i^2 = (NT)^{-1}\sum_{i=1}^{N}\vepsi_{i}^{\prime }\vepsi_{i}w_{i}^{2} + \largeO_{P}\left[(NT)^{-1/2}\right] + \largeO_{P}(N^{-1})+\largeO_{P}(T^{-1})
\end{align*}
where $(NT)^{-1}\sum_{i=1}^{N}\vepsi_{i}^{\prime }\vepsi_{i}w_{i}^{2} \pto \sigma_w^2 \E \left[\sigma_{i}^2\right]$ as argued in the time fixed effects part of the proof of this theorem. By application of Slutsky's Theorem, it then follows that
\begin{align*}
\left[ (NT)^{-1} \sum_{i=1}^N \sum_{t=1}^T \widehat{\epsi}_{i,t}^2 w_i^2\right]^{-1}\sqrt{\frac{2}{TN(N-1)}}\sum_{i=2}^N \sum_{j=1}^{i-1} \sum_{t=1}^T\epsi_{i,t}\epsi_{j,t}w_i w_j \dto N(0,1).
\end{align*}
Next, consider the sum
\begin{align}
\label{eq:sumNNT_uit_ujt}
	2k_{N,T}  \sum_{i=2}^N \sum_{j=1}^{i-1} \sum_{t=1}^T \vu_{i,t}  \vu_{j,t}' = \mB' \left(2k_{N,T} \sum_{i=2}^N \sum_{j=1}^{i-1} \sum_{t=1}^T
	 (\epsi_{i,t},\ve_{i,t}')'(\epsi_{j,t},\ve_{j,t}') \right)  \mB.
\end{align}
Given that $\ve_{i,t}$ has properties similar to $\epsi_{i,t}$, it can be shown analogously the reasoning leading to \eqref{eq:eps_CLT} that a CLT holds for every element of the $(m+1) \times (m+1)$ matrix in the parentheses of Eq.  \eqref{eq:sumNNT_uit_ujt} by Lemma \ref{lem:U_stat}. This allows us to conclude that the elements of \eqref{eq:sumNNT_uit_ujt} are stochastically bounded. Thus, recalling result \eqref{eq:sumN_wilami},
\begin{align*}
	\left( N^{-1} \sum_{\ell=1}^N w_{\ell} \vlambda_{\ell}' \right) \left(\sqrt{\frac{2}{TN(N-1)}}  \sum_{i=2}^N \sum_{j=1}^{i-1} \sum_{t=1}^T \vpsi_{i,t} \vpsi_{j,t}'\right) \left( N^{-1} \sum_{\ell=1}^N \vlambda_{\ell}  w_{\ell} \right) &= \largeO_{P}(N^{-1}), \\
	\left(\sqrt{\frac{2}{TN(N-1)}}  \sum_{i=2}^N \sum_{j=1}^{i-1} \sum_{t=1}^T \epsi_{i,t} \vpsi_{j,t}'\right) \left( N^{-1} \sum_{\ell=1}^N \vlambda_{\ell}  w_{\ell} \right)&=\largeO_{P}(N^{-1/2}), \\
	\left( N^{-1} \sum_{\ell=1}^N w_{\ell} \vlambda_{\ell}' \right) \left(\sqrt{\frac{2}{TN(N-1)}}  \sum_{i=2}^N \sum_{j=1}^{i-1} \sum_{t=1}^T \vpsi_{i,t}  \epsi_{i,t}\right) &= \largeO_{P}(N^{-1/2}),
\end{align*}
so that
\begin{align*}
 \left( (NT)^{-1} \sum_{i=1}^N \sum_{t=1}^T \widehat{\epsi}_{i,t}^2 w_i^2\right)^{-1} \sqrt{\frac{2}{TN(N-1)}}  \sum_{i=2}^N \sum_{j=1}^{i-1} \sum_{t=1}^T \xi_{iN,t} \xi_{jN,t} \dto N(0,1),
\end{align*}
which leads to the final result of this part of the theorem.

\subsection{Auxiliary lemmas for Sections \ref{sec:proof_additiveHetero}--\ref{sec:proof_CD_W_combined}}
\label{sec:suppl_lemmas}
\begin{lemma}
	\label{lem:U_stat}
	Let $\{a_{i,t}\}_{t=1}^{T}$ be a scalar and $\vc_{i}$ $L-$dimensional sequences of random variables for $i=1,\ldots,N$ such that:
	\begin{itemize}
		\item $a_{i,t},a_{j,s}$ are independent for all $i\neq j$ and all $s,t$.
		\item $a_{i,t},a_{i,s}$ are iid conditionally on $\vc_{i}$ for all $s,t$.
		\item $\E[|a_{i,t}|^{8}|\vc_{i}]<M<\infty$. $\vc_{i}$ are iid over $i$.
		\item $q_{i}^{k}\equiv\E[|a_{i,t}|^{k}|\vc_{i}]$ and $\E[a_{i,t}|\vc_{i}]=0$ for $k\leq 8$.
		\item $\E[|q_{i}^{k}|]<\infty$ for $k\leq 8$.
	\end{itemize}
	Then:
	\begin{equation}
		U=\sqrt{\frac{2}{NT(N-1)}}\sum_{i=2}^{N}\sum_{j=1}^{i-1}\sum_{t=1}^{T}a_{i,t}a_{j,t}\dto N(0,\E[q_{i}^{2}]^{2}),
	\end{equation}
	jointly as $N,T\to\infty$.
\end{lemma}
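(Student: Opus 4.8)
The plan is to recognize $U$ as a normalized sum of martingale differences taken in the cross-sectional index $j$ and then to apply the martingale central limit theorem of \citet{HallHeyde1980}. Write $U = \big(T\binom{N}{2}\big)^{-1/2}\sum_{j=2}^{N} D_j$, where $D_j = \sum_{t=1}^{T} a_{j,t}\big(\sum_{i=1}^{j-1} a_{i,t}\big)$, and take the filtration $\mathcal{F}_{j} = \sigma\big(\{a_{i,t}: i\le j\}\cup\{\vc_{i}\}_{i=1}^{N}\big)$. Since $a_{j,\cdot}$ is, conditionally on $\vc_{j}$, independent of $\mathcal{F}_{j-1}$ with $\E[a_{j,t}\mid\vc_{j}]=0$, one gets $\E[D_{j}\mid\mathcal{F}_{j-1}]=0$, so $\{D_{j},\mathcal{F}_{j}\}$ is a martingale difference sequence. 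A direct second-moment computation, using independence across $i$ and conditional iid-ness over $t$, gives $\var(U)=(\E[q_{i}^{2}])^{2}$ exactly for every $N,T$, which already identifies the candidate limiting variance.

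The crux is the stabilization of the sum of conditional variances. Using $\E[a_{j,t}a_{j,s}\mid\mathcal{F}_{j-1}]=q_{j}^{2}\mathbf{1}(t=s)$ one obtains $\E[D_{j}^{2}\mid\mathcal{F}_{j-1}]=q_{j}^{2}\sum_{t=1}^{T}\big(\sum_{i<j}a_{i,t}\big)^{2}$, so the normalized conditional variance is $V_{N,T}=\big(T\binom{N}{2}\big)^{-1}\sum_{j=2}^{N} q_{j}^{2}\sum_{t}\big(\sum_{i<j}a_{i,t}\big)^{2}$. I would show $V_{N,T}\pto(\E[q_{i}^{2}])^{2}$ by splitting $\sum_{t}(\sum_{i<j}a_{i,t})^{2}$ into a diagonal part $\sum_{i<j}\sum_{t}a_{i,t}^{2}$ and an off-diagonal part $\sum_{i\neq i'<j}\sum_{t}a_{i,t}a_{i',t}$. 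The diagonal part, after dividing by $T$, is governed by a weighted double average of the iid quantities $q_{i}^{2}q_{j}^{2}$, whose mean is $(\E[q_{i}^{2}])^{2}\binom{N}{2}$ and whose variance I would bound by $\largeO(T^{2}N^{3})$, hence $\smallO\big((T\binom{N}{2})^{2}\big)$; the off-diagonal part has zero mean and variance of order $\largeO(TN^{4})=\smallO\big((T\binom{N}{2})^{2}\big)$. This is the main obstacle, since it requires controlling fluctuations jointly in $N$ and $T$ — in particular the time-series averaging error $T^{-1}\sum_{t}a_{i,t}^{2}-q_{i}^{2}$ and the cross-sectional U-statistic averaging error must both be shown negligible — and is where the eighth-moment assumption is consumed. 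Because the $q_{i}^{2}$ are iid, the limit is the deterministic constant $(\E[q_{i}^{2}])^{2}$, which is what yields an unmixed Gaussian limit.

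It remains to verify a conditional Lindeberg condition, for which I would use the Lyapunov-type bound $\big(T\binom{N}{2}\big)^{-2}\sum_{j=2}^{N}\E[D_{j}^{4}]\to 0$. Expanding $\E[D_{j}^{4}]$ and retaining only index patterns that pair up in $t$ gives $\E[D_{j}^{4}]=\largeO(T^{2}j^{2})$, so the sum is $\largeO(T^{2}N^{3})$ and the ratio is $\largeO(N^{-1})\to 0$; note that only fourth moments are needed here. With the martingale difference property, the conditional-variance convergence, and the Lindeberg condition in hand, the martingale CLT of \citet{HallHeyde1980} delivers $U\dto N\big(0,(\E[q_{i}^{2}])^{2}\big)$ as $N,T\to\infty$ jointly, completing the argument.
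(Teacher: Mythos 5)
Your proof is correct, but it runs the martingale CLT in the opposite direction from the paper. The paper builds the martingale difference array in the \emph{time} index: $U=\sum_{t=1}^{T}\xi_{t,N,T}$ with $\xi_{t,N,T}=\sqrt{2/(NT(N-1))}\sum_{i=2}^{N}\sum_{j<i}a_{i,t}a_{j,t}$ and the filtration generated by $\calC=\sigma(\vc_{1},\ldots,\vc_{N})$ and past $\xi$'s. The payoff of that choice is that $\E[\xi_{t,N,T}^{2}\mid\calF_{t-1,N,T}]=\E[\xi_{t,N,T}^{2}\mid\calC]$ is a function of the $q_{i}^{2}$ alone, so the conditional-variance step is just Kolmogorov's SLLN applied to the iid $q_{i}^{2}$ and $q_{i}^{4}$; the price is that the Lyapunov step ($\delta=2$) involves $T\,\E[\xi_{t,N,T}^{4}\mid\calC]$, i.e.\ eight-fold products $\E[a_{i_1,t}\cdots a_{i_8,t}\mid\calC]$, which is exactly where the eighth-moment assumption is used. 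Your decomposition $U\propto\sum_{j=2}^{N}D_{j}$, $D_{j}=\sum_{t}a_{j,t}\bigl(\sum_{i<j}a_{i,t}\bigr)$, is the Hoeffding-type projection in the \emph{cross-section} index: now the conditional variance $V_{N,T}\propto\sum_{j}q_{j}^{2}\sum_{t}\bigl(\sum_{i<j}a_{i,t}\bigr)^{2}$ is genuinely random and its stabilization is the hard step (your diagonal/off-diagonal split with variance bounds $\largeO(T^{2}N^{3})$ and $\largeO(TN^{4})$ is the right computation), but both that step and your Lyapunov bound $\E[D_{j}^{4}]=\largeO(T^{2}j^{2})$ only require fourth conditional moments. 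So, contrary to your remark that the eighth-moment assumption is ``consumed'' in the conditional-variance step, your route in fact never needs more than $\E[q_{i}^{4}]<\infty$; this is a genuine (small) strengthening relative to the paper's argument, which does use the higher moments. Two minor points you should make explicit: (i) Hall--Heyde requires the \emph{conditional} Lindeberg condition, so you need the standard bridge $\sum_{j}\E[\tilde{D}_{j}^{2}\mathbf{1}(|\tilde{D}_{j}|>\epsilon)\mid\calF_{j-1}]\leq\epsilon^{-2}\sum_{j}\E[\tilde{D}_{j}^{4}\mid\calF_{j-1}]$, whose expectation is your unconditional Lyapunov sum, followed by Markov's inequality; (ii) the martingale-difference property $\E[D_{j}\mid\calF_{j-1}]=0$ uses independence of the \emph{pairs} $(a_{j,\cdot},\vc_{j})$ across $j$ (so that conditioning on $\vc_{i}$, $i\neq j$, and on $a_{i,\cdot}$, $i<j$, does not perturb $\E[a_{j,t}\mid\vc_{j}]=0$), an assumption the lemma states only loosely but which the paper's own proof also relies on implicitly.
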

\begin{proof}[Proof of Lemma \ref{lem:U_stat}]\ \\
	The prove this lemma we use Theorem 3.2 of \citet{HallHeyde1980}. In particular, express $U$ as:
	\begin{equation}
		U=\sum_{t=1}^{T}\xi_{t,N,T},\quad \xi_{t,N,T}=\sqrt{\frac{2}{NT(N-1)}}\sum_{i=2}^{N}\sum_{j=1}^{i-1}a_{i,t}a_{j,t}.
	\end{equation}
	Denote by $\calC=\sigma(\{\vc_{j}\}_{j=1}^{N})$ the $\sigma$-algebra generated by $\{\vc_{j}\}_{j=1}^{N}$ and let $\calF_{t-1,N,T}=\sigma(\calC\vee \{\xi_{s,N,T}\}_{s=1}^{t-1})$ be the $\sigma$-algebra generated by $\calC$ and $\{\xi_{s,N,T}\}_{s=1}^{t-1}$. It is easy to see that $\{\xi_{t,N,T},\calF_{t-1,N,T}:t=1,\ldots,T\}$ is a Martingale Difference Array.
	
	At first we establish the limiting variance of this MDS array. From Corollary 3.1 in \citet{HallHeyde1980} the variance is determined from:
	\begin{equation}
		V_{T}=\sum_{t=1}^{T}\E[\xi_{t,N,T}^{2}|\calF_{t-1,N,T}]\pto \eta^{2}
	\end{equation}
	By conditional independence of $a_{i,t}$, the above result simplifies:
	\begin{align*}
		\sum_{t=1}^{T}\E[\xi_{t,N,T}^{2}|\calF_{t-1,N,T}]&=\sum_{t=1}^{T}\E[\xi_{t,N,T}^{2}|\calC]\\
		&=\frac{2}{N(N-1)}\E\left[\left(\sum_{i=2}^{N}\sum_{j=1}^{i-1}a_{i,t}a_{j,t}\right)^{2}|\calC\right]\\
		&=\frac{2}{N(N-1)}\E\left[\left(\sum_{i=2}^{N}a_{i,t}A_{i-1,t}\right)^{2}|\calC\right]\\
		&=\frac{2}{N(N-1)}\E\left[\sum_{i=2}^{N}a_{i,t}^{2}A_{i-1,t}^{2}|\calC\right]\\
		&=\frac{2}{N(N-1)}\sum_{i=2}^{N}q_{i}^{2}\sum_{j=1}^{i-1}q_{j}^{2},
	\end{align*}
	where in the third line we defined the ``integrated'' variable $A_{i-1,t}=\sum_{j=1}^{i-1}a_{j,t}$, the fourth line uses the fact that $a_{i,t}A_{i-1,t}$ is a Martingale Difference Sequence. The last line uses the definition of $q_{i}^{2}$. After re-arranging elements:
	\begin{align*}
		V_{T}&=\frac{2}{N(N-1)}\sum_{i=2}^{N}q_{i}^{2}\sum_{j=1}^{i-1}q_{j}^{2}\\
		&=\frac{1}{N(N-1)}\left(N^{2}\left(\frac{1}{N}\sum_{i=1}^{N}q_{i}^{2}\right)^{2}-N\left(\frac{1}{N}\sum_{i=1}^{N}q_{i}^{4}\right)\right)\\
		&=\left(\frac{1}{N}\sum_{i=1}^{N}q_{i}^{2}\right)^{2}+\largeO_{P}(N^{-1})\\
		&=\E[q_{i}^{2}]^{2}+\largeO_{P}(N^{-1}).
	\end{align*}
	Here the third and the fourth lines follow from the application of the Kolmogorov's SLLN to iid sequences $q_{i}^{2}$ and $q_{i}^{4}$. Thus we can expect that
	\begin{equation}
		U=\sqrt{\frac{2}{NT(N-1)}}\sum_{i=2}^{N}\sum_{j=1}^{i-1}\sum_{t=1}^{T}a_{i,t}a_{j,t}\dto N(0,\E[q_{i}^{2}]^{2}).
	\end{equation}
	It remains to prove that the sufficient condition for Corollary 3.1 in \citet{HallHeyde1980} is satisfied. In particular, it is sufficient to show that $\xi_{t,N,T}$ is a (conditionally) uniformly integrable sequence:
	\begin{equation}
		\sum_{t=1}^{T}\E[\xi_{t,N,T}^{2}I(|\xi_{t,N,T}|>\epsi)|\calF_{t-1,N,T}]\pto 0.
	\end{equation}
	Instead of proving uniform integrability, we borrow the idea from  \citet{kao2012asymptotics} and instead show that conditional Lyapunov's condition:
	\begin{equation}
		B_{T}=\sum_{t=1}^{T}\E[|\xi_{t,N,T}|^{2+\delta}|\calF_{t-1,N,T}]\pto 0,\quad \delta>0,
	\end{equation}
	is satisfied under the maintained assumptions. In what follows we prove that the above condition is satisfied for $\delta=2$. 
		Observe how:
	\begin{align*}
		B_{T}&=T\E[\xi_{t,N,T}^{4}|\calC]=T^{-1}\frac{N}{4(N-1)}\E\left[\left(N\overline{a}_{t}^{2}-\overline{a^{2}}_{t}\right)^{4}|\calC\right],
	\end{align*}
	where $\overline{a}_{t}=N^{-1}\sum_{i=1}^{N}a_{i,t}$ and similarly for $\overline{a^{2}}_{t}$. By Minkowski's inequality:
		\begin{align*}
		B_{T}&\leq T^{-1}\frac{N}{4(N-1)}\left(\sqrt[4]{\E\left[\left(N\overline{a}_{t}^{2}\right)^{4}|\calC\right]}+\sqrt[4]{
		\E\left[\left(\overline{a^{2}}_{t}\right)^{4}|\calC\right]}\right)^{4}.
	\end{align*}
As by assumption $\E[|a_{i,t}|^{8}|\vc_{i}]<M$ we can use the (conditional) Rosenthal's inequality to conclude that:
\begin{equation}
\E\left[\left(N\overline{a}_{t}^{2}\right)^{4}|\calC\right]=\largeO_{P}(1).
\end{equation}
Moreover, another application of Minkowski's inequality yields:
\begin{equation}
\sqrt[4]{\E\left[\left(\overline{a^{2}}_{t}\right)^{4}|\calC\right]}\leq \sqrt[4]{
		\E\left[\left(\frac{1}{N}\sum_{i=1}^{N}(a_{i,t}^{2}-q_{i}^{2})\right)^{4}|\calC\right]}+\sqrt[4]{
		\left(\frac{1}{N}\sum_{i=1}^{N}(q_{i}^{2})\right)^{4}}.
\end{equation}
Further application of the (conditional) Rosenthal's inequality yields:
\begin{equation}
\E\left[\left(\frac{1}{N}\sum_{i=1}^{N}(a_{i,t}^{2}-q_{i}^{2})\right)^{4}|\calC\right]=\largeO_{P}(N^{-2}),
\end{equation}
which together with the fact that $N^{-1}\sum_{i=1}^{N}(q_{i}^{2})=\largeO_{P}(1)$ implies that:
\begin{align*}
		B_{T}&\leq T^{-1}\frac{N}{4(N-1)}\largeO_{P}(1)=\smallO_{P}(1).
\end{align*}	
As required. This completes the proof.
\end{proof}

\bigskip

\begin{lemma}
	\label{lem:CCEbasicresults}
	Let $w_i$ denote any stochastic weights that are independent across $i$ and satisfy $\E\left[w_i^2\right] < M$ for all $i$. Furthermore, let $w_i$ be independent of $\vf_t$, $\epsi_{i,t}$ and $\ve_{i,t}$ for all $i$ and $t$. Under Assumptions \ref{ass:errors}-\ref{ass:rank},
	\begin{align}
		\left( T^{-1}\widehat{\mF}^{\prime }\widehat{\mF}\right) ^{-1} &= \largeO_{P}\left( 1\right) \label{eq:FhatFhat} \\
		\sqrt{N}T^{-1/2} \left\|N^{-1} \sum_{i=1}^N w_i \mU_i^{\prime } \right\| &= \largeO_{P}\left( 1 \right) \label{eq:UbarUbar} \\
		\sqrt{N}T^{-1/2} \left\| \left(N^{-1}\sum_{i=1}^Nw_i\mU_{i}^{\prime}\right)\mF \right\| & = \largeO_{P}\left( 1 \right) \label{eq:UbarF}	
	\end{align}
	Additionally, under Assumptions \ref{ass:errors}-\ref{ass:rank} and given weights $w_i$ that satisfy Assumption \ref{ass:weights}, we have
	\begin{align}
		\left\|(NT)^{-1} \sum_{i=1}^N \mU_i'\vepsi_{i} w_{i} \right\|	&= \largeO_{P}(N^{-1/2}) \label{eq:sumNT_uit_epsit_wi}	
	\end{align}
\end{lemma}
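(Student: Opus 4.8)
The plan is to obtain all four bounds from elementary second-moment (Markov) arguments, exploiting three structural features of the setup: $\mU_i = [\vepsi_i; \mE_i]\mB$ has mean zero and is independent across $i$; the factors, loadings, $\vepsi_i$, $\mE_i$ and the weights $w_i$ are mutually independent; and $\mB$ has bounded norm with $\mB^{-1}$ existing. Throughout I would reuse the order bounds $\E[\|\vepsi_i\|^2] = \largeO(T)$, $\E[\|\mE_i\|^2] = \largeO(T)$ (hence $\E[\|\mU_i\|^2] = \largeO(T)$) and $\E[\|\mF\|^2] = \tr\,\E[\mF'\mF] = \largeO(T)$, all immediate from Assumptions \ref{ass:errors}--\ref{ass:factors}, converting each expected-norm bound into the stated stochastic order via Markov's inequality.

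For \eqref{eq:FhatFhat} I would expand $T^{-1}\hat{\mF}'\hat{\mF}$ via $\hat{\mF} = \mF\overline{\mC} + \overline{\mU}$ into four terms. The leading term $\overline{\mC}'(T^{-1}\mF'\mF)\overline{\mC}$ converges in probability to a positive definite limit, since $T^{-1}\mF'\mF$ converges to $\E[\vf_t\vf_t']$ (LLN for a covariance-stationary sequence with absolutely summable autocovariances) and $\overline{\mC} = [\overline{\vlambda}, \overline{\mLambda}]\mB \pto [\vmu_{\vlambda}, \vmu_{\mLambda}]\mB$, the latter nonsingular by Assumption \ref{ass:rank} together with nonsingularity of $\mB$. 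The remaining three terms are $\smallO_P(1)$: a direct calculation gives $\|\overline{\mU}\| = \largeO_P(\sqrt{T/N})$ (because $\E[\overline{\epsi}_t^2] = \largeO(N^{-1})$, likewise for $\overline{\mE}$), so $T^{-1}\|\overline{\mU}\|^2 = \largeO_P(N^{-1})$ and $T^{-1}\|\mF\|\,\|\overline{\mU}\| = \largeO_P(N^{-1/2})$. Continuity of matrix inversion at a nonsingular point then delivers $(T^{-1}\hat{\mF}'\hat{\mF})^{-1} = \largeO_P(1)$.

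Results \eqref{eq:UbarUbar} and \eqref{eq:UbarF} I would handle together by bounding expected squared Frobenius norms. Expanding $\|N^{-1}\sum_i w_i \mU_i'\|^2 = N^{-2}\sum_{i,j} w_iw_j \tr(\mU_i'\mU_j)$, every off-diagonal term vanishes in expectation because $w_i$ is independent of the data and $\E[\mU_i]=\mZeros$, so $\E[w_iw_j\tr(\mU_i'\mU_j)] = \E[w_i]\E[w_j]\tr(\E[\mU_i]'\E[\mU_j]) = 0$ for $i\neq j$ (this needs only $\E[w_i^2]<M$, not zero mean). The diagonal gives $N^{-2}\sum_i \E[w_i^2]\E[\|\mU_i\|^2] = \largeO(T/N)$, hence $\|N^{-1}\sum_i w_i\mU_i'\| = \largeO_P(\sqrt{T/N})$ and \eqref{eq:UbarUbar} follows after the factor $\sqrt{N}T^{-1/2}$. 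For \eqref{eq:UbarF} I would condition on $\mF$: since $\E[\mU_i\mU_i']$ is a bounded multiple of $\mI_T$ (off-diagonals vanish by serial independence of $\epsi_{i,t}$ and $\ve_{i,t}$), the conditional second moment is $\largeO(N^{-1})\|\mF\|^2$, whose expectation is $\largeO(T/N)$, again giving the $\largeO_P(1)$ bound.

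The final bound \eqref{eq:sumNT_uit_epsit_wi} is where the zero-mean property of $w_i$ under Assumption \ref{ass:weights} becomes essential rather than merely convenient, and this is the one genuine subtlety. Here $\mU_i'\vepsi_i = \mB'[\vepsi_i'\vepsi_i;\ \mE_i'\vepsi_i]$ carries a mean of order $T$ through $\E[\vepsi_i'\vepsi_i] = T\E[\sigma_i^2]$, so without $\E[w_i]=0$ the object $(NT)^{-1}\sum_i w_i\mU_i'\vepsi_i$ would retain an $\largeO(1)$ bias. With Rademacher weights the cross terms again drop, leaving $\E[\|(NT)^{-1}\sum_i w_i\mU_i'\vepsi_i\|^2] = (NT)^{-2}\sum_i \E[w_i^2]\E[\|\mU_i'\vepsi_i\|^2]$. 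The main step is $\E[\|\mU_i'\vepsi_i\|^2] = \largeO(T^2)$, which is dominated by $\E[(\vepsi_i'\vepsi_i)^2] = \sum_{t,s}\E[\epsi_{i,t}^2\epsi_{i,s}^2] = \largeO(T^2)$, with the apparently analogous piece $\E[\|\mE_i'\vepsi_i\|^2] = \largeO(T)$ being of strictly lower order; the moment conditions in Assumption \ref{ass:errors} ensure these exist. Combining, $(NT)^{-2}\cdot N\cdot\largeO(T^2) = \largeO(N^{-1})$, which yields the claimed $\largeO_P(N^{-1/2})$. The only real obstacle is the index bookkeeping in these fourth-order moment sums and correctly identifying that the $\vepsi_i'\vepsi_i$ term, not the $\mE_i'\vepsi_i$ term, sets the rate.
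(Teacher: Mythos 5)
Your proposal is correct and follows essentially the same route as the paper's own proof: every bound is obtained by the same second-moment Markov/Chebyshev arguments on the same decompositions, including the two key structural observations that the cross-sectional cross-terms in \eqref{eq:UbarUbar} and \eqref{eq:UbarF} vanish using only $\E[\mU_i]=\mZeros$ and independence across $i$ (so no zero-mean weights are needed there), whereas \eqref{eq:sumNT_uit_epsit_wi} genuinely requires $\E[w_i]=0$ because $\E[\mU_i'\vepsi_i]$ is of order $T$, and that the dominant contribution to \eqref{eq:sumNT_uit_epsit_wi} is the $\epsi$-quartic term of order $T^2$ (giving $\largeO(N^{-1})$) while the $\mE_i'\vepsi_i$ term only contributes $\largeO((NT)^{-1})$. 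The only cosmetic differences are that for \eqref{eq:FhatFhat} the paper simply cites Pesaran's (2006) expansion of $T^{-1}\hat{\mF}'\hat{\mF}$ together with an existing invertibility result rather than re-deriving the limit from the LLN, the rank condition and continuity of inversion as you do, and that for \eqref{eq:UbarF} the paper performs explicit $(i,j,t,t')$ index bookkeeping where you condition on $\mF$ and use $\E[\mU_i\mU_i']\propto\mI_T$ --- both yield identical bounds.
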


\bigskip

\begin{proof}[Proof of Lemma \ref{lem:CCEbasicresults}]\ \\
	\hfill
	\begin{enumerate}
		\item \emph{Result \eqref{eq:FhatFhat}}: By \citet[][equation (36)]{ECTA:ECTA692} it holds that
		\begin{align*}
			T^{-1}\widehat{\mF}^{\prime }\widehat{\mF} = \overline{\mC}' \left( T^{-1}\mF' \mF \right) \overline{\mC} + \largeO_{P}\left[(NT)^{-1/2}\right] + \largeO_{P}(N^{-1}),
		\end{align*}
		where the function of true factors and true loadings itself converges to a positive definite matrix by Assumptions \ref{ass:factors} and \ref{ass:loadings}. Hence, Theorem 1 in \citet{KaraReeseWesterlund2015} applies and equation \eqref{eq:FhatFhat} follows.
		\item \emph{Result \eqref{eq:UbarUbar}}: Taking the square of the expression of interest, we can write
		\begin{align*}
			NT^{-1}\left\|N^{-1} \sum_{i=1}^N w_i \mU_i^{\prime } \right\|^2 \leq 	(NT)^{-1} \sum_{i,j}^N \sum_{t=1}^T w_i w_j \left( \epsi_{i,t}\epsi_{j,t} + \ve_{i,t}'\ve_{j,t} \right) \left\| \mB \right\|^2
		\end{align*}
		where $(NT)^{-1} \sum_{i,j}^N \sum_{t=1}^Tw_i w_j \epsi_{i,t}\epsi_{j,t} = \largeO_{P}\left( 1 \right)$ and $(NT)^{-1} \sum_{i,j}^N \sum_{t=1}^T w_i w_j \ve_{i,t}'\ve_{j,t} = \largeO_{P}(1)$ by Markov's inequality and zero correlation of idiosyncratic variation in both $\vy_i$ and $\mX_i$ across $i$. Result \eqref{eq:UbarUbar} follows accordingly.
		\item \emph{Result \eqref{eq:UbarF}}: Analogous to the proof of \eqref{eq:UbarUbar} we take the square of \eqref{eq:UbarF} and rearrange to arrive at
		\begin{align*}
			NT^{-1}\left\|N^{-1}\sum_{i=1}^Nw_i\mU_{i}^{\prime}\mF\right\|^2 \leq (NT)^{-1}\sum_{i,j}^N \sum_{t,t'}^T w_i w_j \vf_t'\vf_{t'} \left( \epsi_{i,t} \epsi_{j,t'} + \ve_{i,t}' \ve_{j,t'} \right) \left\| \mB \right\|^2.
		\end{align*}
		Now taking expectations of the non-negative expression $(NT)^{-1}\sum_{i,j}^N \sum_{t,t'}^T w_i w_j \vf_t'\vf_{t'} \epsi_{i,t} \epsi_{j,t'}$ and using uncorrelatedness of $\epsi_{i,t}$ across $i$ and $t$ as well as $\E \left(w_i^2\right) \leq M$, we obtain
		\begin{align*}
			\E \left[(NT)^{-1}\sum_{i,j}^N \sum_{t,t'}^T w_i w_j \vf_t'\vf_{t'} \epsi_{i,t} \epsi_{j,t'} \right] &\leq (NT)^{-1}\sum_{i=1}^N \sum_{t=1}^T M \: \tr( \mSigma_{\mF}) \E \left[\epsi_{i,t}^2\right] \\
			&= \largeO\left(1 \right)
		\end{align*}
		An identical result is given for the term involving $\ve_{i,t}' \ve_{j,t'} $. Result \eqref{eq:UbarF} is then obtained via Markov's inequality.		
		\item  \emph{Result \eqref{eq:sumNT_uit_epsit_wi}}: Given the definition of $\mU_i$, we can write
		\begin{align*}
			\E \left[ \left\| (NT)^{-1} \sum_{i=1}^N \mU_i'\vepsi_{i} w_{i} \right\|^2  \right] \leq (NT)^{-2} \sum_{i=1}^N \sum_{t,t'}^T \E \left[ \epsi_{i,t}\left( \epsi_{i,t} \epsi_{i,t'} + \ve_{i,t}'\ve_{i,t'}\right)\epsi_{i,t'} \right]  \var\left[ w_i \right] \left\| \mB \right\|^2
		\end{align*}
		where we directly use independence of $w_i$ across $i$ and $\E[w_{i}]=0$. Additionally,
		\begin{align*}i
			(NT)^{-2} \sum_{i=1}^N \sum_{t,t'}^T\E \left[ \epsi_{i,t} \epsi_{i,t} \epsi_{i,t'}\epsi_{i,t'} \right] &=  (NT)^{-2} \sum_{i=1}^N \sum_{t=1}^T \kappa_4 \left[ \epsi_{i,t} \right] + (NT)^{-2} \sum_{i=1}^N \sum_{t,t'}^T \E \left[ \epsi_{i,t}^2\right] \E\left[  \epsi_{i,t'}^2 \right] \\
			&= \largeO(N^{-1})
		\end{align*}
		and
		\begin{align*}
			(NT)^{-2} \sum_{i=1}^N \sum_{t,t'}^T\E \left[ \epsi_{i,t} \ve_{i,t}' \ve_{i,t'}\epsi_{i,t'} \right] &= (NT)^{-2} \sum_{i=1}^N \sum_{t=1}^T \E \left[ \epsi_{i,t}^2\right] \tr\left( \mSigma \right) \\
			&= \largeO\left[ \left( NT \right)^{-1} \right],
		\end{align*}
		from which it follows that
		\begin{align*}
			\left\| (NT)^{-1} \sum_{i=1}^N \mU_i'\vepsi_{i} w_{i} \right\|^2 = \largeO_{P}\left( N^{-1} \right)
		\end{align*}
		by Markov's inequality. Taking the square root leads to result \eqref{eq:sumNT_uit_epsit_wi}.
	\end{enumerate}
\end{proof}

\bigskip

\begin{lemma}
	\label{lem:I2} Under Assumptions \ref{ass:errors}-\ref{ass:rank} and $\E[|w_{i}|^{8}]<M$:
	
	\begin{align*}
		&\sqrt{\frac{1}{2TN(N-1)}}\sum_{i,j}^{N}\left(\vepsi_{i}-\overline{\mU}
		\left( \overline{\mC}^{-1}\right) ^{\prime }\vlambda_{i}\right)
		^{\prime }\mP_{\widehat{\mF}}\left(\vepsi_{j}-
		\overline{\mU}\left( \overline{\mC}^{-1}\right) ^{\prime }\vlambda_{j}\right) w_{i}w_{j}\\
		&=  \largeO_{P}\left( T^{-1/2}\right) + \largeO_{P}\left( N^{-1/2}\right) + \largeO_{P}\left( N^{-1}\sqrt{T}\right)
	\end{align*}	
\end{lemma}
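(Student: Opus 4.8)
The plan is to exploit that the double sum is a quadratic form in the projection $\mP_{\hat{\mF}}$. Writing $\va_i = \vepsi_i - \overline{\mU}(\overline{\mC}^{-1})'\vlambda_i$ and $\vg = \sum_{i=1}^N w_i\va_i$, the left-hand side equals $k_{N,T}\,\vg'\mP_{\hat{\mF}}\vg$ with $k_{N,T}=\sqrt{\frac{1}{2TN(N-1)}}=\largeO(N^{-1}T^{-1/2})$. Since $\mP_{\hat{\mF}}=\hat{\mF}(\hat{\mF}'\hat{\mF})^{-1}\hat{\mF}'$, I would write
\[
k_{N,T}\,\vg'\mP_{\hat{\mF}}\vg = k_{N,T}T^{-1}(\hat{\mF}'\vg)'(T^{-1}\hat{\mF}'\hat{\mF})^{-1}(\hat{\mF}'\vg),
\]
and invoke result \eqref{eq:FhatFhat} of Lemma \ref{lem:CCEbasicresults}, $(T^{-1}\hat{\mF}'\hat{\mF})^{-1}=\largeO_P(1)$, to reduce everything to controlling $k_{N,T}T^{-1}\|\hat{\mF}'\vg\|^2$. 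The point of the projection is that, although $\|\vg\|^2=\largeO_P(NT)$, the finite-dimensional vector $\hat{\mF}'\vg$ will turn out to have squared norm of the same order, so that the extra factor $T^{-1}$ supplied by $(\hat{\mF}'\hat{\mF})^{-1}$ buys exactly the required decay.

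Next I would make the dependence on the weights explicit. Using $\sum_i w_i\vlambda_i=N\overline{\vlambda}_{w}$ with $\overline{\vlambda}_{w}=N^{-1}\sum_i w_i\vlambda_i$, together with $\hat{\mF}=\mF\overline{\mC}+\overline{\mU}$, one obtains
\[
\hat{\mF}'\vg = \big(\overline{\mC}'\mF'+\overline{\mU}'\big)\Big(\sum_{i=1}^N w_i\vepsi_i\Big) - \big(\overline{\mC}'\mF'+\overline{\mU}'\big)\overline{\mU}(\overline{\mC}^{-1})'N\overline{\vlambda}_{w}.
\]
I would then bound the resulting blocks separately. The leading block is $\overline{\mC}'\mF'\sum_i w_i\vepsi_i$: a direct second-moment computation, using that $w_i$ is mean zero and independent across $i$ with $\E[w_i^2]=1$, that $\epsi_{i,t}$ is serially uncorrelated and independent of $\mF$, and $\E\|\vf_t\|^2<M$, gives $\|\mF'\sum_i w_i\vepsi_i\|=\largeO_P(\sqrt{NT})$, which combined with $\|\overline{\mC}\|=\largeO_P(1)$ contributes exactly $\largeO_P(T^{-1/2})$ to $k_{N,T}T^{-1}\|\hat{\mF}'\vg\|^2$. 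For the remaining blocks I would use the building-block estimates already available: $\|\overline{\mU}\|=\largeO_P(\sqrt{T/N})$ from \eqref{eq:UbarUbar} and $\|\mF'\overline{\mU}\|=\largeO_P(\sqrt{T/N})$ from \eqref{eq:UbarF} (both with $w_i\equiv 1$), $\|\overline{\vlambda}_{w}\|=\largeO_P(N^{-1/2})$ from \eqref{eq:sumN_wilami}, and, where needed, $\|(NT)^{-1}\sum_i\mU_i'\vepsi_i w_i\|=\largeO_P(N^{-1/2})$ from \eqref{eq:sumNT_uit_epsit_wi}.

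The genuinely delicate terms are the cross-products in which $\overline{\mU}$ meets $\sum_i w_i\vepsi_i$, for instance $\overline{\mU}'\sum_i w_i\vepsi_i = N^{-1}\sum_{i,j}w_i\mU_j'\vepsi_i$. Because $\overline{\mU}$ is itself an average over all units and $\mU_j=[\vepsi_j,\mE_j]\mB$ contains $\vepsi_j$, the same errors $\vepsi_i$ appear on both sides and the term is a genuine double sum with nontrivial dependence. I would control it by an explicit index count of $\E\|N^{-1}\sum_{i,j}w_i\mU_j'\vepsi_i\|^2$: the independence and the mean-zero, unit-variance property of the Rademacher weights of Assumption \ref{ass:weights} force $\E[w_iw_{i'}]=\delta_{ii'}$, collapsing the weight indices, after which only the diagonal blocks survive ($j=j'\neq i$, contributing $\largeO(T)$, and $i=j=j'$, contributing $\largeO(T^2/N)$), yielding $\|\overline{\mU}'\sum_i w_i\vepsi_i\|=\largeO_P(\sqrt{T})+\largeO_P(TN^{-1/2})$; an analogous count controls the $\overline{\mU}'\overline{\mU}$ and $\mF'\overline{\mU}$ pieces of the second block, where mean-zero weights also guarantee $\overline{\vlambda}_{w}=\largeO_P(N^{-1/2})$. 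Feeding all orders back into $k_{N,T}T^{-1}\|\hat{\mF}'\vg\|^2$ and collecting, every term is dominated by $\largeO_P(T^{-1/2})+\largeO_P(N^{-1/2})+\largeO_P(N^{-1}\sqrt{T})$. I expect this index-counting of the mixed $\overline{\mU}$–$\vepsi$ double sums to be the main obstacle, since it is precisely here that the Rademacher structure must be used to annihilate the otherwise non-negligible cross terms; the rest is bookkeeping with the estimates of Lemma \ref{lem:CCEbasicresults}.
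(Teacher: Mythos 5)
Your proposal is correct and follows the same skeleton as the paper's proof: bound the quadratic form by $\left\lVert (T^{-1}\hat{\mF}'\hat{\mF})^{-1}\right\rVert$ times $T^{-1}\lVert\hat{\mF}'\vg\rVert^2$ using \eqref{eq:FhatFhat}, split $\hat{\mF}'\vg$ into the $\vepsi$-block and the $\overline{\mU}(\overline{\mC}^{-1})'N\overline{\vlambda}_{w}$-block, and control the pieces with the building blocks of Lemma \ref{lem:CCEbasicresults}. The one substantive difference is how the mixed $\overline{\mU}$--$\vepsi$ terms and $\overline{\vlambda}_{w}$ are handled. You invoke the Rademacher structure of Assumption \ref{ass:weights} twice: to collapse the weight indices in the second-moment index count, and to get $\lVert\overline{\vlambda}_{w}\rVert=\largeO_P(N^{-1/2})$. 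The paper deliberately avoids the zero-mean property everywhere in this lemma: it bounds $\lVert\overline{\vlambda}_{w}\rVert^2=\largeO_P(1)$ by Cauchy--Schwarz, and it disposes of the delicate cross terms not by index counting but by the algebraic identity $\vepsi_i=\mU_i\mB^{-1}\left[1,\vzeros'\right]'$, which reduces $\sum_i\hat{\mF}'\vepsi_iw_i$ to $\sum_i\hat{\mF}'\mU_iw_i$ and lets results \eqref{eq:UbarUbar}--\eqref{eq:UbarF} (valid for any independent, square-integrable weights) do all the work. What each route buys: yours yields slightly sharper remainders (e.g.\ $\largeO_P(N^{-1})$ and $\largeO_P(N^{-2}\sqrt{T})$ in place of $\largeO_P(N^{-1/2})$ and $\largeO_P(N^{-1}\sqrt{T})$), while the paper's yields a lemma that is weight-agnostic --- which matters, because Lemma \ref{lem:I2} is reused inside the proof of Theorem \ref{theorem::CCE} with $w_i=\sigma_i^{-1}$, weights that are bounded away from zero and emphatically not mean zero. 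Your argument proves the stated lemma (whose hypotheses include Assumption \ref{ass:weights}) but could not be cited at that point of Theorem \ref{theorem::CCE} without redoing the cross-term count for non-centered weights; if you want your version to serve both purposes, replace the two uses of $\E[w_i]=0$ by the paper's $\largeO_P(1)$ bound on $\overline{\vlambda}_{w}$ and the $\vepsi_i=\mU_i\mB^{-1}\left[1,\vzeros'\right]'$ reduction, at the cost of the sharper rates.
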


\bigskip

\begin{proof}[Proof of Lemma \ref{lem:I2}]\ \\
	Let $  \overline{\vlambda}_{w} = \left( N^{-1}\sum_{\ell=1}^{N}\vlambda_{\ell }w _{\ell }\right) $ and note that $\sum_{i=1}^{N}\left(\vepsi_{i}-\overline{\mU}
	\left( \overline{\mC}^{-1}\right) ^{\prime }\vlambda_{i}\right)w _{i} = \sum_{i=1}^{N} \left(\vepsi_{i}w _{i}-\mU_{i}\left( \overline{\mC}^{-1}\right) \overline{\vlambda}_{w}\right)$. This allows us to write
	\begin{align*}
		& \sqrt{\frac{1}{ 2TN(N-1)}}\sum_{i,j}^{N}\left(\vepsi_{i}-\overline{\mU}
		\left( \overline{\mC}^{-1}\right) ^{\prime }\vlambda_{i}\right)
		^{\prime }\mP_{\widehat{\mF}}\left(\vepsi_{j}-
		\overline{\mU}\left( \overline{\mC}^{-1}\right) ^{\prime }\vlambda_{j}\right) w _{i}w _{j} \\
		&\leq  \sqrt{\frac{1}{ 2TN(N-1)}}  \left(\left\|\sum_{i=1}^{N}\widehat{\mF}'\vepsi_{i}w_{i} \right\|  + \left\|\sum_{i=1}^{N}\widehat{\mF}'\mU_{i} \right\| \left\| \overline{\mC}^{-1} \right\| \left\| \overline{\vlambda}_{w} \right\|\right)^2  \left\| \left( \widehat{\mF}^{\prime }\widehat{\mF}\right) ^{-1} \right\| .
	\end{align*}
	First, note that
	\begin{align}
		\label{eq:lambda_barW}	
		\left\| \overline{\vlambda}_{w} \right\|^2= N^{-2} \sum_{i,j}^N w_i w_j \vlambda_{i}'\vlambda_j = \largeO_{P}(1)
	\end{align}
	since $\E \left[ N^{-2} \sum_{i,j}^N w_i w_j \vlambda_{i}'\vlambda_j \right] \leq  \left(N^{-2} \sum_{i,j}^N  \E \left[w_i^2 w_j^2\right]\right)^{1/2} \left(N^{-2} \sum_{i,j}^N \E \left[ \|\vlambda_{i} \|^2 \|\vlambda_{j} \|^2\right]\right)^{1/2} =\largeO (1)$ by boundedness of the fourth moments of all stochastic components involved.		
	Next, recall that $\widehat{\mF}=\mF\overline{\mC}+\overline{\mU}$. Thus,
	\begin{align*}
		\left\|\sum_{i=1}^{N}\widehat{\mF}'\mU_{i} \right\| &\leq \sqrt{NT} \left\| \overline{\mC} \right\| \left\| \sqrt{N/T} \mF' \overline{\mU} \right\| + T\left\| \sqrt{N/T}\overline{\mU} \right\|^2 \\
		&= \largeO_{P}\left(\sqrt{NT} \right) + \largeO_{P}\left( T \right),
	\end{align*}
	where the last line is a consequence of results \eqref{eq:UbarUbar} and \eqref{eq:UbarF}. Furthermore, since $\vepsi_i = \mU_i \mB^{-1}\left[ 1, \vzeros' \right]'$, we can proceed analogously for $\left\|\sum_{i=1}^{N}\widehat{\mF}'\vepsi_{i}w_{i} \right\| \leq \left\| \sum_{i=1}^{N}\widehat{\mF}'\mU_{i}w_{i} \right\|  \left\| \mB^{-1} \right\|$ and obtain the same orders in probability as in the last equation above. Additionally using result \eqref{eq:FhatFhat}, we arrive at
	\begin{align*}
		&\sqrt{\frac{1}{ 2TN(N-1)}}\sum_{i,j}^{N}\left(\vepsi_{i}-\overline{\mU}
		\left( \overline{\mC}^{-1}\right) ^{\prime }\vlambda_{i}\right)
		^{\prime }\mP_{\widehat{\mF}}\left(\vepsi_{j}-
		\overline{\mU}\left( \overline{\mC}^{-1}\right) ^{\prime }\vlambda_{j}\right) w_{i} w_{j} \\
		&=  \sqrt{\frac{1}{ 2TN(N-1)}}\left[ \largeO_{P}\left( \sqrt{NT}\right) +\largeO_{P}\left( T\right) \right]
		^{2}\largeO_{P}\left( T^{-1}\right) \\
		&=\largeO_{P}\left( T^{-1/2}\right) + \largeO_{P}\left( N^{-1/2}\right) + \largeO_{P}\left( N^{-1}\sqrt{T}\right),
	\end{align*}
	which concludes this proof.
\end{proof}

\bigskip

\begin{lemma}
	\label{lem:II2} Under Assumptions \ref{ass:errors}-\ref{ass:rank} and $\E[|w_{i}|^{8}]<M$:
	\begin{align*}
		&\sqrt{\frac{1}{2TN(N-1)}}\sum_{i=1}^{N}\left(\vepsi_{i}-\overline{\mU}%
		\left( \overline{\mC}^{-1}\right) \vlambda_{i}\right) ^{\prime }%
		\mP_{\widehat{\mF}}\left(\vepsi_{i}-\mU	\left( \overline{\mC}^{-1}\right) \vlambda_{i}\right) w
		_{i}^{2} \\
		&= \largeO_{P}(N^{-2}\sqrt{T}) + \largeO_{P}(N^{-3/2}) + \largeO_{P}(T^{-1/2})
	\end{align*}
\end{lemma}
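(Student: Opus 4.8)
The plan is to bound $II_2$ pointwise by exploiting the low rank of the projection $\mP_{\hat{\mF}}$ and then to control the resulting cross-section sums term by term with the auxiliary orders collected in Lemma \ref{lem:CCEbasicresults}. Writing $k_{N,T}=\sqrt{1/(2TN(N-1))}$ and $\va_i=\vepsi_i-\overline{\mU}\left(\overline{\mC}^{-1}\right)\vlambda_i$ (reading the second factor in the statement as $\overline{\mU}$, consistent with the definition of $II_2$ in the proof of Theorem \ref{theorem::CCE}), the quantity of interest is $II_2=k_{N,T}\sum_{i=1}^N w_i^2\,\va_i'\mP_{\hat{\mF}}\va_i$. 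Since $\mP_{\hat{\mF}}=\hat{\mF}(\hat{\mF}'\hat{\mF})^{-1}\hat{\mF}'$, the deterministic inequality $\va_i'\mP_{\hat{\mF}}\va_i=(\hat{\mF}'\va_i)'(\hat{\mF}'\hat{\mF})^{-1}(\hat{\mF}'\va_i)\le\|(\hat{\mF}'\hat{\mF})^{-1}\|\,\|\hat{\mF}'\va_i\|^2$ gives
\begin{align*}
II_2 \le k_{N,T}\,\bigl\|(\hat{\mF}'\hat{\mF})^{-1}\bigr\|\sum_{i=1}^N w_i^2\,\|\hat{\mF}'\va_i\|^2,
\end{align*}
where result \eqref{eq:FhatFhat} already yields $\|(\hat{\mF}'\hat{\mF})^{-1}\|=\largeO_P(T^{-1})$. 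Under Assumption \ref{ass:weights} one even has $w_i^2\equiv1$, so the weights merely drop out; in general $\E[w_i^2]<M$ suffices.

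Next I would substitute $\hat{\mF}=\mF\overline{\mC}+\overline{\mU}$ to decompose
\begin{align*}
\hat{\mF}'\va_i = \overline{\mC}'\mF'\vepsi_i + \overline{\mU}'\vepsi_i - \overline{\mC}'\mF'\overline{\mU}\left(\overline{\mC}^{-1}\right)\vlambda_i - \overline{\mU}'\overline{\mU}\left(\overline{\mC}^{-1}\right)\vlambda_i,
\end{align*}
label the four summands $A_i,B_i,C_i,D_i$, and use $\sum_i w_i^2\|\hat{\mF}'\va_i\|^2 \le 4\sum_i w_i^2(\|A_i\|^2+\|B_i\|^2+\|C_i\|^2+\|D_i\|^2)$. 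Each cross-section sum is then controlled by Markov's inequality together with Lemma \ref{lem:CCEbasicresults}: with $\|\overline{\mC}\|,\|\overline{\mC}^{-1}\|=\largeO_P(1)$ and $\E\|\mF'\vepsi_i\|^2=\largeO(T)$ one gets $\sum_i w_i^2\|A_i\|^2=\largeO_P(NT)$; with $\|\mF'\overline{\mU}\|=\largeO_P(\sqrt{T/N})$ from \eqref{eq:UbarF}, $\|\overline{\mU}'\overline{\mU}\|=\largeO_P(T/N)$ from \eqref{eq:UbarUbar}, and $\sum_i w_i^2\|\vlambda_i\|^2=\largeO_P(N)$, one gets $\sum_i w_i^2\|C_i\|^2=\largeO_P(T)$ and $\sum_i w_i^2\|D_i\|^2=\largeO_P(T^2/N)$. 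Feeding the total $\sum_i w_i^2\|\hat{\mF}'\va_i\|^2=\largeO_P(NT)+\largeO_P(T^2/N)$ into the display above with $k_{N,T}=\largeO(N^{-1}T^{-1/2})$ produces $\largeO_P(T^{-1/2})$ from the $NT$ piece and $\largeO_P(N^{-2}\sqrt{T})$ from the $T^2/N$ piece; since adding the slack term $\largeO_P(N^{-3/2})$ only weakens an upper bound, this is contained in (and is slightly sharper than) the stated conclusion.

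The main obstacle is the term $\sum_i w_i^2\|B_i\|^2=\sum_i w_i^2\|\overline{\mU}'\vepsi_i\|^2$, because $\overline{\mU}=N^{-1}\sum_j\mU_j$ shares the idiosyncratic component $\vepsi_i$ through its $j=i$ summand, so $\overline{\mU}'\vepsi_i$ fails to be mean zero and a crude independence argument is unavailable. Here I would expand $\|\overline{\mU}'\vepsi_i\|^2=N^{-2}\sum_{j,k}\vepsi_i'\mU_j\mU_k'\vepsi_i$ and take expectations directly: all $j\neq k$ contributions vanish by independence and the zero mean of $\mU_j$, the $N-1$ off-diagonal $j=k\neq i$ terms contribute $\largeO(T)$ in aggregate, while the single self-interaction $j=k=i$ gives $\E\|\mU_i'\vepsi_i\|^2=\largeO(T^2)$, driven by $\sum_t\epsi_{i,t}^2=\largeO_P(T)$. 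After the $N^{-2}$ scaling this self-term delivers the $\largeO_P(T^2/N)$ piece, which is exactly what propagates into the $N^{-2}\sqrt{T}$ rate; keeping the diagonal $j=i$ bookkeeping separate from the off-diagonal part is therefore the delicate step. Everything else reduces to routine Markov and Cauchy--Schwarz estimates combined with the established orders of Lemma \ref{lem:CCEbasicresults}, and the resulting bound is $\smallO_P(1)$ under the maintained restriction $N^{-1}\sqrt{T}\to0$ with $T\to\infty$, as required by the proof of Theorem \ref{theorem::CCE}.
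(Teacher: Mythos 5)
Your proof is correct and takes essentially the same route as the paper's: both exploit the projection structure through $\|(T^{-1}\hat{\mF}'\hat{\mF})^{-1}\| = \largeO_P(1)$, decompose via $\hat{\mF}=\mF\overline{\mC}+\overline{\mU}$, dispatch the routine terms with Lemma \ref{lem:CCEbasicresults} and Markov's inequality, and isolate the self-interaction in $\overline{\mU}'\vepsi_{i}$ as the source of the $\largeO_P(N^{-2}\sqrt{T})$ rate (the paper's surviving configuration $i'=i''=i$ in its triple-index sum is exactly your $j=k=i$ diagonal term). The only cosmetic difference is the order of splitting — the paper first separates the $\vepsi_{i}$ and $\overline{\mU}\left(\overline{\mC}^{-1}\right)\vlambda_{i}$ parts and then substitutes for $\hat{\mF}$, whereas you substitute first and bound four terms at once — which is also why your bound avoids the cross term that generates the paper's (harmless, slack) $\largeO_P(N^{-3/2})$ contribution.
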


\begin{proof}[Proof of Lemma \ref{lem:II2}]\ \\
	Given that $\| \mA_1 - \mA_2\|^2 \leq 3\| \mA_1 \|^2 + 3\| \mA_2\|^2$ for two arbitrary $m \times n$ matrices  $\mA_1$ and $\mA_2$, we have
	\begin{align}
		\label{eq:lem4decomp}
		&\sqrt{\frac{1}{2TN(N-1)}}\sum_{i=1}^{N}\left(\vepsi_{i}-\overline{\mU}%
		\left( \overline{\mC}^{-1}\right) \vlambda_{i}\right) ^{\prime }%
		\mP_{\widehat{\mF}}\left(\vepsi_{i}-\mU\left( \overline{\mC}^{-1}\right) \vlambda_{i}\right) w
		_{i}^{2} \notag \\
		&\leq  3\sqrt{\frac{T}{2N(N-1)}} \sum_{i=1}^{N}\left\| T^{-1} w_i \vepsi_{i}^{\prime} \widehat{\mF} \right\|^2 \left\| \left(T^{-1} \widehat{\mF}' \widehat{\mF} \right)^{-1} \right\| \notag \\
		&+ 3\sqrt{\frac{T}{2N(N-1)}}\sum_{i=1}^N \left\| T^{-1} w_i  \vlambda_{i}^{\prime }\left( \overline{\mC}^{-1}\right)^{\prime }\overline{\mU}' \widehat{\mF}\right\|^2  \left\| \left(T^{-1} \widehat{\mF}' \widehat{\mF} \right)^{-1} \right\|.
	\end{align}
	Here,
	\begin{align*}
		&\sqrt{\frac{T}{2N(N-1)}} \sum_{i=1}^{N}\left\| T^{-1} w_i \vepsi_{i}^{\prime} \widehat{\mF} \right\|^2 \\
		&= 3\sqrt{\frac{T}{2N(N-1)}} \sum_{i=1}^{N}\left\| T^{-1} w_i \vepsi_{i}^{\prime} \mF  \right\|^2 \left\| \overline{\mC} \right\|^2 + 3\sqrt{\frac{T}{2N(N-1)}} \sum_{i=1}^{N}\left\| T^{-1} w_i \vepsi_{i}^{\prime} \overline{\mU} \right\|^2.
	\end{align*}
	Concerning the first term on the right-hand side above, we can use Chebyshev's inequality together with
	\begin{align*}
		\E\left[ \sqrt{\frac{T}{2N(N-1)}} \sum_{i=1}^{N}T^{-2} w_i^2  \vepsi_{i}^{\prime} \mF \mF' \vepsi_{i} \right] &=  \sqrt{\frac{T}{2N(N-1)}} \sum_{i=1}^{N} \sum_{t,t'}^T T^{-2} \E \left[w_i^2  \right] \E\left[\E \left[\epsi_{i,t} \epsi_{i,t'} \vert \sigma_{i} \right]  \right] \E\left[ \vf_t' \vf_{t'} \right] \\
		&= \sqrt{\frac{1}{2TN(N-1)}} \sum_{i=1}^{N} \E \left[\sigma_{i}^2\right] \E \left[w_i^2  \right] \tr(\mSigma_{\mF}) \\
		&= \largeO(T^{-1/2})
	\end{align*}
	to arrive at $\sqrt{\frac{T}{2N(N-1)}} \sum_{i=1}^{N}\left\| T^{-1} w_i \vepsi_{i}^{\prime} \mF  \right\|^2 \left\| \overline{\mC} \right\|^2 = \largeO_{P}(T^{-1/2})$. Likewise, recalling that $\vu_{i,t} = \mB' \left[ \epsi_{i,t}, \ve_{i,t}' \right]'$, we have
	\begin{align*}
		&\E\left[ \sqrt{\frac{T}{2N(N-1)}} \sum_{i=1}^{N}\left\| T^{-1} w_i \vepsi_{i}^{\prime} \overline{\mU} \right\|^2 \right] \\
		&\leq \sqrt{\frac{T}{2N(N-1)}} \sum_{i,i',i''}^{N} \sum_{t,t'}^T (NT)^{-2}\E \left[w_i^2 \right] \E\left[\epsi_{i,t}\left(\epsi_{i',t}\epsi_{i'',t'} + \ve_{i',t}'\ve_{i'',t'} \right) \epsi_{i,t'}\right] \left\| \mB \right\| \\
		&= \sqrt{\frac{T}{2N(N-1)}} \left( \largeO(N^{-1}) + \largeO(T^{-1})  \right) \largeO(1) \\
		&= \largeO(N^{-2}\sqrt{T}) + \largeO(N^{-1}T^{-1/2}).
	\end{align*}
	This result is obtained by noting that
	\begin{align*}
		(NT)^{-2} \sum_{i,i',i''}^N \sum_{t,t'}^T \E \left[\epsi_{i,t} \epsi_{i',t} \epsi_{i'',t'} \epsi_{i,t'} \right] &= (NT)^{-2} \sum_{i=1}^N \sum_{t=1}^T \kappa_4\left[  \epsi_{i,t} \right]+(NT)^{-2}  \sum_{i=1}^N \sum_{t,t'}^T \E \left[\epsi_{i,t}^2\right] \E \left[ \epsi_{i,t'}^2 \right] \\
		&+ (NT)^{-2} \sum_{i,i'}^N \sum_{t=1}^T \E \left[\epsi_{i',t}^2\right] \E\left[ \epsi_{i,t}^2 \right] \\
		&=\largeO(N^{-1})+\largeO(T^{-1})
	\end{align*}
	and
	\begin{align*}
		(NT)^{-2} \sum_{i,i',i''}^N \sum_{t,t'}^T \E \left[ \epsi_{i,t} \ve_{i',t}'\ve_{i'',t'} \epsi_{i,t'} \right] &= (NT)^{-2} \sum_{i,i'}^N \sum_{t=1}^T \E \left[ \epsi_{i,t}^2 \right]   \E\left[\ve_{i',t}'\ve_{i',t} \right] \\
		&= \largeO(T^{-1}).
	\end{align*}
	Combining the results obtained up to this point and additionally using \eqref{eq:FhatFhat}, we obtain
	\begin{align}
		\label{eq:norm_epsFhat}
		\sqrt{\frac{T}{2N(N-1)}} \sum_{i=1}^{N}\left\| T^{-1}  w_i \vepsi_{i}^{\prime} \widehat{\mF} \right\|^2 \left\| \left(T^{-1} \widehat{\mF}' \widehat{\mF} \right)^{-1} \right\| = \largeO_{P}(N^{-2}\sqrt{T}) + \largeO_{P}(T^{-1/2}).
	\end{align}
	Next, consider
	\begin{align*}
		\sqrt{\frac{T}{2N(N-1)}}\sum_{i=1}^N \left\| T^{-1} w_i \vlambda_{i}^{\prime }\left( \overline{\mC}^{-1}\right)^{\prime }\overline{\mU} \widehat{\mF}\right\|^2 &= \sqrt{\frac{NT}{2(N-1)}} \left(N^{-1} \sum_{i=1}^N w_i^2 \vlambda_{i}'\vlambda_{i} \right) \left\| \overline{\mC}^{-1}\right\|^2 \left\| T^{-1}\overline{\mU} \widehat{\mF}\right\|^2
	\end{align*}	
	where $N^{-1} \sum_{i=1}^N w_i^2 \vlambda_{i}'\vlambda_{i} = \largeO_{P}(1)$ since $\E \left[ N^{-1} \sum_{i=1}^N w_i^2 \vlambda_{i}'\vlambda_{i} \right] \leq  \left(N^{-1} \sum_{i=1}^N  \E \left[w_i^4\right]\right)^{1/2} \left(N^{-1} \sum_{i=1}^N \E \left[ \|\vlambda_{i} \|^4 \right]\right)^{1/2}$. Additionally, we can write
	\begin{align*}
		\left\| T^{-1}\overline{\mU} \widehat{\mF}\right\|^2 &\leq   \left( \left\| T^{-1} \mF' \overline{\mU} \right\| + \left\| T^{-1}  \overline{\mU}' \overline{\mU} \right\| \right)^2\\
		&= \largeO_{P}(N^{-2}) +  \largeO_{P}(N^{-3/2}T^{-1/2}) +  \largeO_{P}((NT)^{-1})
	\end{align*}
	with the last step following from \eqref{eq:UbarUbar} and \eqref{eq:UbarF}. Consequently, it holds that
	\begin{align*}
		&\sqrt{\frac{T}{2N(N-1)}}\sum_{i=1}^N \left\| T^{-1} w_i \vlambda_{i}^{\prime }\left( \overline{\mC}^{-1}\right)^{\prime }\overline{\mU} \widehat{\mF}\right\|^2 \left\| \left(T^{-1} \widehat{\mF}' \widehat{\mF} \right)^{-1} \right\| \\
		&= \largeO_{P}(N^{-2}\sqrt{T}) +  \largeO_{P}(N^{-3/2}) +  \largeO_{P}(N^{-1}T^{-1/2}).
	\end{align*}
	Summarizing the results derived so far, we have
	\begin{align*}
		&\sqrt{\frac{1}{2TN(N-1)}}\sum_{i=1}^{N}\left(\vepsi_{i}-\overline{\mU}%
		\left( \overline{\mC}^{-1}\right) \vlambda_{i}\right)^{\prime }%
		\mP_{\widehat{\mF}}\left(\vepsi_{i}-\mU\left( \overline{\mC}^{-1}\right) \vlambda_{i}\right) w
		_{i}^{2} \\
		&= \largeO_{P}(N^{-2}\sqrt{T})+ \largeO_{P}(N^{-3/2}) + \largeO_{P}(T^{-1/2})
	\end{align*}
	which concludes the proof.
\end{proof}

\newpage

\subsection{Additive model. The effect of estimating fixed effects and error variances}
\label{sec:var_estim_2WFE}
\paragraph{Estimating fixed effects}
We start at the decomposition of $CD_{\sigma}$ in Eq. \eqref{eq:CD_FEdecomp_2WFE} and focus on its second term 
\begin{align*}
	{CD}_{(\widehat{\mu} - \mu)} &= Tk_{N,T} \left(\sum_{i=1}^N \frac{\overline{\epsi}_{i}-\overline{\epsi}}{\sigma_i}\right)^2 - Tk_{N,T}\sum_{i=1}^N \left(\frac{\overline{\epsi}_{i}-\overline{\epsi}}{\sigma_i}\right)^2
\end{align*}
Consider now the first term on the right-hand side above.
\begin{align*}
	Tk_{N,T} \left(\sum_{i=1}^N \frac{\overline{\epsi}_{i}-\overline{\epsi}}{\sigma_i}\right)^2 &= Nk_{N,T}\left[ \left(NT\right)^{-1/2} \sum_{i=1}^N\sum_{t=1}^T \frac{\epsi_{i,t}}{\sigma_i} - \left(N^{-1} \sum_{i=1}^N \sigma_{i}^{-1} \right)\left(NT\right)^{-1/2} \sum_{i=1}^N\sum_{t=1}^T\epsi_{i,t} \right]^{2} \\
	&= \sqrt{\frac{N}{2T\left(N-1\right)}} \largeO_{P}\left(1\right) \\
	&= \largeO_{P}\left(T^{-1/2}\right). 
\end{align*}
Here, the second line line follows from Theorem 3 in \citet{phillipsMoon1999} applied to sequences $\epsi_{i,t}$ and $\epsi_{i,t}/\sigma_i$ as well as the lower bound of $\sigma_i$  at a value above zero. 
As for the second component above, note that
\begin{align*}
	&Tk_{N,T} \sum_{i=1}^N \left(\frac{\overline{\epsi}_{i}-\overline{\epsi}}{\sigma_i}\right)^2 \\
	&= k_{N,T} \sum_{i=1}^N \left( T^{-1/2} \sum_{t=1}^T \frac{\epsi_{i,t}}{\sigma_i} \right)^2 - 2k_{N,T} \left( (NT)^{-1/2} \sum_{i=1}^N  \sum_{t=1}^T \frac{\epsi_{i,t}}{\sigma_i} \right)\left( (NT)^{-1/2} \sum_{i=1}^N  \sum_{t=1}^T \epsi_{i,t} \right) \\
	&+k_{N,t}\left(N^{-1} \sum_{i=1}^N \sigma_{i}^{-1} \right) \left(  \sqrt{NT} \overline{\epsi}  \right) ^{2}\\
	&=  k_{N,T} \sum_{i=1}^N \left( T^{-1/2} \sum_{t=1}^T \frac{\epsi_{i,t}}{\sigma_i} \right)^2  + \largeO_{P} \left( N^{-1} T^{-1/2} \right)
\end{align*}
Again, we use Theorem 3 in \citet{phillipsMoon1999} to arrive at the last line. In order to derive a result for the remaining explicit term, we note that 
\begin{align*}
	k_{N,T} \sum_{i=1}^N \left( T^{-1/2} \sum_{t=1}^T \frac{\epsi_{i,t}}{\sigma_i} \right)^{2} &= Nk_{N,T} + \sqrt{N}k_{N,T} N^{-1/2}\sum_{i=1}^N \left[ \left( T^{-1/2} \sum_{t=1}^T \frac{\epsi_{i,t}}{\sigma_i}\right)^{2}-1\right] \\
	&= \largeO \left(T^{-1/2}\right) + \largeO_{P} \left[ \left(NT\right)^{-1/2}\right].
\end{align*}
From this result and the previous one it follows that 
\begin{align*}
	{CD}_{(\widehat{\mu} - \mu)} = \largeO_{P} \left(T^{-1/2} \right).
\end{align*}

\bigskip

\paragraph{Estimating error variances}
We proceed from equation \eqref{eq:CD_Taylordecomp_2WFE} in the proof of Theorem \ref{theorem::additiveHetero}, and show below that 
\begin{align*}
{CD}_{(\widehat{\sigma}-\sigma)} &=k_{N,T}\sum_{i=1}^{N}\sum_{j\neq i}^{N}\frac{\widehat{\vepsi}_{i}'\widehat{\vepsi}_{j}}{\left(\varsigma_{i}^{2}\right)^{3/2}\left(\sigma_{j}^{2}\right)^{1/2}}\left(\widehat{\sigma}_{i}^{2}-\sigma_{i}^{2}\right) \\
&= \largeO_{P}\left(N^{-1}\sqrt{T}\right)+\largeO_{P}\left(N^{-1/2}\right)+\largeO_{P}\left(T^{-1/2}\right)+\largeO_{P}\left(T^{-1}\sqrt{N}\right).
\end{align*}
 The expression in the first line above depends on an unknown
component $\varsigma_{i}^{2}$ which is bounded by $\sigma_{i}^{2}$ and $\widehat{\sigma}_{i}^{2}$. A simple expansion allows us to write
\begin{align}
	k_{N,T}\sum_{i=1}^{N}\sum_{j\neq i}^{N}\frac{\widehat{\vepsi}_{i}'\widehat{\vepsi}_{j}}{\left(\varsigma_{i}^{2}\right)^{3/2}\left(\sigma_{j}^{2}\right)^{1/2}}\left(\widehat{\sigma}_{i}^{2}-\sigma_{i}^{2}\right) & = k_{N,T}\sum_{i=1}^{N}\sum_{j\neq i}^{N}\frac{\widehat{\vepsi}_{i}'\widehat{\vepsi}_{j}}{\left(\sigma_{i}^{2}\right)^{3/2}\left(\sigma_{j}^{2}\right)^{1/2}}\left(\widehat{\sigma}_{i}^{2}-\sigma_{i}^{2}\right)\nonumber \\
	& +k_{N,T}\sum_{i=1}^{N}\sum_{j\neq i}^{N}\frac{\widehat{\vepsi}_{i}'\widehat{\vepsi}_{j}}{\left(\sigma_{j}^{2}\right)^{1/2}}\left(\widehat{\sigma}_{i}^{2}-\sigma_{i}^{2}\right)\left(\frac{1}{\left(\varsigma_{i}^{2}\right)^{3/2}}-\frac{1}{\left(\sigma_{i}^{2}\right)^{3/2}}\right)\nonumber \\
	& =a_{1}+a_{2}.\label{eq:Taylorterm}
\end{align}
The two expressions above are: 1.) A first-order expansion\emph{ at } the true error variances and 2.) the approximation error of this first-order approximation.

Consider now the first term on the right-hand side above, which is the first-order expansion \emph{at }the true error variances. It can be written 
\begin{align}
	a_{1} & =k_{N,T}\sum_{i,j}^{N}\frac{\widehat{\vepsi}_{i}'\widehat{\vepsi}_{j}}{\left(\sigma_{i}^{2}\right)^{3/2}\left(\sigma_{j}^{2}\right)^{1/2}}\left(\widehat{\sigma}_{i}^{2}-\sigma_{i}^{2}\right)-k_{N,T}\sum_{i=1}^{N}\frac{\widehat{\vepsi}_{i}'\widehat{\vepsi}_{i}}{\left(\sigma_{i}^{2}\right)^{2}}\left(\widehat{\sigma}_{i}^{2}-\sigma_{i}^{2}\right)\label{eq:Taylor_a1}\\
	& =a_{11}-a_{12}\nonumber 
\end{align}
The order of the two expressions above will be derived by expanding 
\begin{align}
	\widehat{\sigma}_{i}^{2} &  =\frac{\vepsi_{i}'\vepsi_{i}}{T}+\frac{\overline{\vepsi}'\overline{\vepsi}}{T}+\overline{\epsi}_{i}^{2}+\overline{\epsi}^{2}-2\frac{\overline{\vepsi}'\vepsi_{i}}{T}-2\overline{\epsi}_{i}\overline{\epsi} \label{eq:sigi2hat_2WFE}
\end{align}
and
\begin{equation}
	\widehat{\vepsi}_{i}'\widehat{\vepsi}_{j} = \vepsi_{i}'\vepsi_{j} + \overline{\vepsi}'\overline{\vepsi} + T\overline{\epsi}_{i}\overline{\epsi}_{j} + T\overline{\epsi}^{2} - \overline{\vepsi}'\vepsi_{i} - \overline{\vepsi}'\vepsi_{j} - T\overline{\epsi}_{i}\overline{\epsi} - T\overline{\epsi}_{j}\overline{\epsi}. \label{eq:epsi_epsj_2WFE}
\end{equation}
and by looking at the resulting terms. 

Consider the first expression, $a_{11}$, which we write
\begin{align}
	a_{11} & =k_{N,T}\sum_{i,j}^{N}\frac{\widehat{\vepsi}_{i}'\widehat{\vepsi}_{j}}{\left(\sigma_{i}^{2}\right)^{3/2}\left(\sigma_{j}^{2}\right)^{1/2}}\left(\widehat{\sigma}_{i}^{2}-\sigma_{i}^{2}\right)\nonumber \\
	& =k_{N,T}\sum_{i,j}^{N}\frac{\vepsi_{i}'\vepsi_{j}}{\left(\sigma_{i}^{2}\right)^{3/2}\left(\sigma_{j}^{2}\right)^{1/2}}\left(\widehat{\sigma}_{i}^{2}-\sigma_{i}^{2}\right)+k_{N,T}\sum_{i,j}^{N}\vepsi_{i}'\vepsi_{j}\left[N^{-1}\sum_{\ell\text{=1}}^{N}\frac{\left(\widehat{\sigma}_{\ell}^{2}-\sigma_{\ell}^{2}\right)}{\left(\sigma_{\ell}^{2}\right)^{3/2}}\right]\left[N^{-1}\sum_{\ell=1}^{N}\sigma_{\ell}^{-1}\right]\nonumber \\
	& -k_{N,T}\sum_{i,j}^{N}\frac{\vepsi_{i}'\vepsi_{j}}{\left(\sigma_{j}^{2}\right)^{1/2}}\left[N^{-1}\sum_{\ell\text{=1}}^{N}\frac{\left(\widehat{\sigma}_{\ell}^{2}-\sigma_{\ell}^{2}\right)}{\left(\sigma_{\ell}^{2}\right)^{3/2}}\right]-k_{N,T}\sum_{i,j}^{N}\frac{\vepsi_{i}'\vepsi_{j}\left(\widehat{\sigma}_{i}^{2}-\sigma_{i}^{2}\right)}{\left(\sigma_{i}^{2}\right)^{3/2}}\left[N^{-1}\sum_{\ell=1}^{N}\sigma_{\ell}^{-1}\right]\nonumber \\
	& +Tk_{N,T}\sum_{i,j}^{N}\frac{\overline{\epsi}_{i}\overline{\epsi}_{j}}{\left(\sigma_{i}^{2}\right)^{3/2}\left(\sigma_{j}^{2}\right)^{1/2}}\left(\widehat{\sigma}_{i}^{2}-\sigma_{i}^{2}\right)+Tk_{N,T}\sum_{i,j}^{N}\overline{\epsi}_{i}\overline{\epsi}_{j}\left[N^{-1}\sum_{\ell\text{=1}}^{N}\frac{\left(\widehat{\sigma}_{\ell}^{2}-\sigma_{\ell}^{2}\right)}{\left(\sigma_{\ell}^{2}\right)^{3/2}}\right]\left[N^{-1}\sum_{\ell=1}^{N}\sigma_{\ell}^{-1}\right]\nonumber \\
	& -Tk_{N,T}\sum_{i,j}^{N}\frac{\overline{\epsi}_{i}\overline{\epsi}_{j}}{\left(\sigma_{j}^{2}\right)^{1/2}}\left[N^{-1}\sum_{\ell\text{=1}}^{N}\frac{\left(\widehat{\sigma}_{\ell}^{2}-\sigma_{\ell}^{2}\right)}{\left(\sigma_{\ell}^{2}\right)^{3/2}}\right]-Tk_{N,T}\sum_{i,j}^{N}\frac{\overline{\epsi}_{i}\overline{\epsi}_{j}\left(\widehat{\sigma}_{i}^{2}-\sigma_{i}^{2}\right)}{\left(\sigma_{i}^{2}\right)^{3/2}}\left[N^{-1}\sum_{\ell=1}^{N}\sigma_{\ell}^{-1}\right]\nonumber \\
	& =a_{111}+a_{112}-a_{113}-a_{114}+a_{115}+a_{116}-a_{117}-a_{118}.\label{eq:Taylor_a11}
\end{align}
The order in probability of all terms above will now be established
by implicit application of either Markov's or Chebyshev's inequality.
Handling $\sigma_{\ell}^{k}$ for $k<0$ is straightforward
since $\sigma_{\ell}$ is bounded from below at a value above zero by
Assumption \ref{ass:errorsAddHetero}in the main paper. For this reason, and in order to
limit the notational burden, we disregard from powers of $\sigma_{j}^{2}$
that appear in the denominator of fractions involving $\epsi_{i,t}$.
For the same reason, we disregard from the average $N^{-1}\sum_{\ell=1}^{N}\sigma_{\ell}^{-1}$.
The second and third terms in \eqref{eq:Taylor_a11}, $a_{112}$ and
$a_{113}$, are differently weighted versions of 
\begin{align*}
	& 2k_{N,T}\sum_{i=2}^{N}\sum_{j=1}^{i-1}\vepsi_{i}'\vepsi_{j}\left[N^{-1}\sum_{\ell\text{=1}}^{N}\left(\widehat{\sigma}_{\ell}^{2}-\sigma_{\ell}^{2}\right)\right]+k_{N,T}\sum_{i=1}^{N}\vepsi_{i}'\vepsi_{i}\left[N^{-1}\sum_{\ell\text{=1}}^{N}\left(\widehat{\sigma}_{\ell}^{2}-\sigma_{\ell}^{2}\right)\right]\\
	= & a_{1121}+a_{1122}.
\end{align*}
The order of the middle part appearing in both terms is $\largeO_{P}\left[\left(NT\right)^{-1/2}\right]+\largeO_{P}\left(N^{-1}\right)$,
as given by result \eqref{eq:sumi_sig2hat_min_sig2}. Accordingly,
we get 
\begin{align*}
	a_{1121} & =\largeO_{P}\left[\left(NT\right)^{-1/2}\right]+\largeO_{P}\left(N^{-1}\right)\\
	a_{1122} & =\largeO_{P}\left(N^{-1/2}\right)+\largeO_{P}\left(N^{-1}\sqrt{T}\right),
\end{align*}
since the first term in $a_{1121}$ is $\largeO_{P}\left(1\right)$, whereas
the corresponding term in $a_{1122}$ is $\largeO_{P}\left(\sqrt{T}\right)$.
It follows that $a_{112}=\largeO_{P}\left(N^{-1/2}\right)+\largeO_{P}\left(N^{-1}\sqrt{T}\right)$
and $a_{113}=\largeO_{P}\left(N^{-1/2}\right)+\largeO_{P}\left(N^{-1}\sqrt{T}\right)$.
We continue with $a_{111}$ and $a_{114}$ which both are differently
weighted versions of $k_{N,T}\sum_{i,j}^{N}\vepsi_{i}'\vepsi_{j}\left(\widehat{\sigma}_{i}^{2}-\sigma_{i}^{2}\right)$.
Hence, Lemma \ref{lem:sumij_epsi_epsj_sigi2_diff} implies 
\begin{align*}
	a_{111} & =\largeO_{P}\left(T^{-1}\sqrt{N}\right)+\largeO_{P}\left(T^{-1/2}\right)+\largeO_{P}\left(N^{-1/2}\right)+\largeO_{P}\left(N^{-1}\sqrt{T}\right),
\end{align*}
with an identical result for $a_{114}.$ Concerning $a_{111}$, we
additionally decompose

\begin{align}
	a_{111} & =k_{N,T}\sum_{i,j}^{N}\vepsi_{i}'\vepsi_{j}\left(\widehat{\sigma}_{i}^{2}-\sigma_{i}^{2}\right)\nonumber \\
	& =k_{N,T} \sum_{i=1}^N \sum_{j\neq i}^{N}\vepsi_{i}'\vepsi_{j}\left(\widehat{\sigma}_{i}^{2}-\sigma_{i}^{2}\right)+k_{N,T}\sum_{i=1}^{N}\vepsi_{i}'\vepsi_{i}\left(\widehat{\sigma}_{i}^{2}-\sigma_{i}^{2}\right).\label{eq:Taylor_a111}
\end{align}
where the bound that we stated above holds for both expressions on
the right-hand side above. We emphasize this point because the second
term in the decomposition of $a_{111}$ cancels out with an identical
term in the decomposition of $a_{12}$ in \eqref{eq:Taylor_a1}. To
appreciate this point, see the discussion of expression \eqref{eq:a12}
below.

Terms $a_{115}$ and $a_{118}$ are almost identical to $a_{111}$
and $a_{114}$, being weighted versions of are functions of $\left(N\overline{\epsi}\right)Tk_{N,T}\sum_{i=1}^{N}\overline{\epsi}_{i}\left(\widehat{\sigma}_{i}^{2}-\sigma_{i}^{2}\right)$.
Noting that $\overline{\epsi}=\largeO_{P}\left[\left(NT\right)^{-1/2}\right]$,
we can use Lemma \ref{lem:sumij_epsbari_epsbarj_sig2_diff} to establish
\begin{align*}
	a_{115} & =\largeO_{P}\left(\sqrt{N}T^{-1/2}\right)\left[\largeO_{P}\left(T^{-1/2}\right)+\largeO_{P}\left(N^{-1/2}T^{-1}\right)+\largeO_{P}\left(N^{-3/2}\right)\right]\\
	& =\largeO_{P}\left(T^{-1}\sqrt{N}\right)+\largeO_{P}\left(T^{-3/2}\right)+\largeO_{P}\left(N^{-1}T^{-1/2}\right),
\end{align*}
with $a_{118}$ being of the same order. We emphasize the link between
$a_{115}$ and $a_{118}$ as well as $a_{111}$ and $a_{114}$ since
one part of $a_{115}$ cancels out with an identical expression in
$a_{12}$ below, precisely as a component of $a_{111}$ cancels out.\footnote{This point continues to hold even if we explicitly acknowledge the
	scaling with functions of $\sigma_{i}$ and $\sigma_{j}$ which we
	largely disregard from in this proof.} To be more precise, the second term in 
\[
a_{115}=\frac{Tk_{N,T}}{2} \sum_{i=1}^N \sum_{j\neq i}^{N}\frac{\overline{\epsi}_{i}\overline{\epsi}_{j}}{\left(\sigma_{i}^{2}\right)^{3/2}\left(\sigma_{j}^{2}\right)^{1/2}}\left(\widehat{\sigma}_{i}^{2}-\sigma_{i}^{2}\right)+\frac{Tk_{N,T}}{2}\sum_{i=1}^{N}\frac{\overline{\epsi}_{i}^{2}}{\left(\sigma_{i}^{2}\right)^{2}}\left(\widehat{\sigma}_{i}^{2}-\sigma_{i}^{2}\right)
\]
disappears if we combine all terms in $a_{11}$ and $a_{12}.$ The
last two terms in our decomposition of $a_{11},$ $a_{116}$ and $a_{117}$,
both depend on the term $Tk_{N,T} \left(N\overline{\epsi}\right)^2\left[N^{-1}\sum_{\ell\text{=1}}^{N}\left(\widehat{\sigma}_{\ell}^{2}-\sigma_{\ell}^{2}\right)\right]$.
Using result \eqref{eq:sumi_sig2hat_min_sig2} here, we can conclude
that 
\begin{align*}
	a_{116} & =k_{N,T}N^{2}T\largeO_{P}\left[\left(NT\right)^{-1}\right]\left\{ \largeO_{P}\left[\left(NT\right)^{-1/2}\right]+\largeO_{P}\left(N^{-1}\right)\right\} \\
	& =N\sqrt{T}\left\{ \largeO_{P}\left[\left(NT\right)^{-3/2}\right]+\largeO_{P}\left(N^{-2}T^{-1}\right)\right\} \\
	& =\largeO_{P}\left(N^{-1/2}T^{-1}\right)+\largeO_{P}\left(N^{-1}T^{-1/2}\right),
\end{align*}
and that the same order holds for $a_{117}$. 

Taking together all eight terms of $a_{11}$, as defined in expression
\eqref{eq:Taylor_a11}, we can conclude that 
\begin{equation}
	a_{11}=\largeO_{P}\left(N^{-1}\sqrt{T}\right)+\largeO_{P}\left(N^{-1/2}\right)+\largeO_{P}\left(T^{-1/2}\right)+\largeO_{P}\left(T^{-1}\sqrt{N}\right).\label{eq:Taylor_a11order}
\end{equation}

Next, we write out the second term in our expansion \eqref{eq:Taylor_a1},
which is given by
\begin{align}
	a_{12} & =k_{N,T}\sum_{t=1}^{T}\sum_{i=1}^{N}\widehat{\epsi}_{i,t}^{2}\left(\widehat{\sigma}_{i}^{2}-\sigma_{i}^{2}\right)\nonumber \\
	& k_{N,T}\sum_{t=1}^{T}\sum_{i=1}^{N}\epsi_{i,t}^{2}\left(\widehat{\sigma}_{i}^{2}-\sigma_{i}^{2}\right) 
	+ Tk_{N,T}\sum_{i=1}^{N}\left(\widehat{\sigma}_{i}^{2} 
	-\sigma_{i}^{2}\right)\left(T^{-1}\overline{\vepsi}'\overline{\vepsi}\right) 
	+ Tk_{N,T}\sum_{i=1}^{N}\overline{\epsi}_{i}^{2} \left(\widehat{\sigma}_{i}^{2}-\sigma_{i}^{2}\right)\nonumber \\
	& + \overline{\epsi}^{2}Tk_{N,T}\sum_{i=1}^{N} \left(\widehat{\sigma}_{i}^{2}-\sigma_{i}^{2}\right) 
	- 2k_{N,T}\sum_{t=1}^{T}\sum_{i=1}^{N} \epsi_{i,t}\overline{\epsi}_{t}\left(\widehat{\sigma}_{i}^{2}-\sigma_{i}^{2}\right) 
	- \overline{\epsi}\: 2Tk_{N,T}\sum_{i=1}^{N}\overline{\epsi}_{i} \left(\widehat{\sigma}_{i}^{2}-\sigma_{i}^{2}\right)\nonumber \\
	& =a_{121}+a_{122}+a_{123}+a_{124}-a_{125}-a_{126}.\label{eq:a12}
\end{align}

A term identical to $a_{121}$ is given by the second term of expression
\eqref{eq:Taylor_a111}, a component of the term $a_{11}$. Since
$a_{1}=a_{11}-a_{21}$, term $a_{121}$ and its equivalent in expression
\eqref{eq:Taylor_a111} cancel out. The same holds for $a_{123}$
where an identical term is contained in expression $a_{115}$. Among
the four remaining terms, 
\begin{align*}
	a_{122} & =\largeO_{P}\left(N^{-3/2}\right)+\largeO_{P}\left(N^{-2}\sqrt{T}\right).
\end{align*}
and analogously
\[
a_{124}=\largeO_{P}\left(N^{-3/2}T^{-1}\right)+\largeO_{P}\left(N^{-2}T^{-1/2}\right)
\]
are a consequence of result \eqref{eq:sumi_sig2hat_min_sig2} together
with $T^{-1}\overline{\vepsi}'\overline{\vepsi}=\largeO_{P}\left(N^{-1}\right)$.
Next, we consider $a_{125}$ which, up to a different weighting, corresponds
to either of expressions $a_{111}$ or $a_{114}$ times an additional
factor of $N^{-1}$. Hence, by results \eqref{eq:sumi_epsvar_epsvardev}
and \eqref{eq:sumij_epsvar_epsvardev}, we arrive at 
\[
a_{125}=\largeO_{P}\left(N^{-2}\sqrt{T}\right)+\largeO_{P}\left(N^{-3/2}\right)+\largeO_{P}\left(N^{-1}T^{-1/2}\right)+\largeO_{P}\left(N^{-1/2}T^{-1}\right).
\]
The same observation as for term $a_{125}$ can be made for $a_{126}$which
is equivalent to $a_{115}$ or $a_{118}$ times an additional factor
of $N^{-1}$. It follows that 
\[
a_{126}=\largeO_{P}\left(N^{-1/2}T^{-3}\right)+\largeO_{P}\left(N^{-1}T^{-5/2}\right)+\largeO_{P}\left(N^{-3/2}T^{-2}\right)+\largeO_{P}\left(N^{-2}T^{-3/2}\right)+\largeO_{P}\left(N^{-3}\sqrt{T}\right).
\]
This specifies the properties of all six terms in the decomposition
of $a_{12}.$ Combining the resulting orders, we arrive at 
\[
a_{12}=\largeO_{P}\left(N^{-2}\sqrt{T}\right)+\largeO_{P}\left(N^{-3/2}\right)+\largeO_{P}\left[\left(NT\right)^{-1/2}\right]
\]
where we disregard from terms $a_{121}$ and $a_{123}$ which are
eliminated completely in the difference $a_{11}-a_{12}$. Using the
results obtained for both components of this difference, we conclude
that 
\begin{align}
	a_{1} & =2k_{N,T} \sum_{i=2}^{N}\sum_{j=1}^{i-1}\frac{\widehat{\vepsi}_{i}'\widehat{\vepsi}_{j}}{\left(\sigma_{i}^{2}\right)^{3/2}\left(\sigma_{j}^{2}\right)^{1/2}}\left(\widehat{\sigma}_{i}^{2}-\sigma_{i}^{2}\right)\nonumber \\
	& =\largeO_{P}\left(N^{-1}\sqrt{T}\right)+\largeO_{P}\left(N^{-1/2}\right)+\largeO_{P}\left(T^{-1/2}\right)+\largeO_{P}\left(T^{-1}\sqrt{N}\right),\label{eq:Taylor_a1order}
\end{align}
where we include the bounded random variables $\left(\sigma_{i}^{2}\right)^{-3/2}$
and $\left(\sigma_{j}^{2}\right)^{-1/2}$ for the sake of completeness.\medskip{}

Up to this point, we have investigated the first-order approximation
in equation \eqref{eq:Taylorterm} \emph{at }the true error variances.
We proceed with results on term $a_{2}$, the error of our linear
approximation. We take absolute values and use the Cauchy-Schwarz inequality to arrive
at
\begin{align}
	\left|a_{2}\right|= & \left| k_{N,T}\sum_{i=1}^{N}\sum_{j\neq i}^{N}\frac{\widehat{\vepsi}_{i}'\widehat{\vepsi}_{j}}{\left(\sigma_{j}^{2}\right)^{1/2}}\left(\widehat{\sigma}_{i}^{2}-\sigma_{i}^{2}\right)\left[\frac{1}{\left(\varsigma_{i}^{2}\right)^{3/2}}-\frac{1}{\left(\sigma_{i}^{2}\right)^{3/2}}\right]\right|\nonumber \\
	& \leq k_{N,T}\left( \sum_{i=1}^{N}\left[\sum_{j\neq i}^{N}\frac{\widehat{\vepsi}_{i}'\widehat{\vepsi}_{j}}{\left(\sigma_{j}^{2}\right)^{1/2}}\left(\widehat{\sigma}_{i}^{2}-\sigma_{i}^{2}\right)\right]^{2}\right) ^{1/2}\left( \sum_{i=1}^{N}\left[\frac{1}{\left(\varsigma_{i}^{2}\right)^{3/2}}-\frac{1}{\left(\sigma_{i}^{2}\right)^{3/2}}\right]^{2}\right) ^{1/2}\nonumber \\
	& \leq k_{N,T}\left( \sum_{i=1}^{N}\left[\sum_{j\neq i}^{N}\frac{\widehat{\vepsi}_{i}'\widehat{\vepsi}_{j}}{\left(\sigma_{j}^{2}\right)^{1/2}}\left(\widehat{\sigma}_{i}^{2}-\sigma_{i}^{2}\right)\right]^{2}\right) ^{1/2}\left( \sum_{i=1}^{N}\left[\frac{1}{\left(\widehat{\sigma}_{i}^{2}\right)^{3/2}}-\frac{1}{\left(\sigma_{i}^{2}\right)^{3/2}}\right]^{2}\right) ^{1/2}\nonumber \\
	& = k_{N,T}\sqrt{a_{21}a_{22}}.\label{eq:Taylor_a2}
\end{align}
A further first-order Taylor approximation of $\left(\widehat{\sigma}_{i}^{2}\right)^{-3/2}$
around $\left(\sigma_{i}^{2}\right)^{-3/2}$ results in
\[
\frac{1}{\left(\widehat{\sigma}_{i}^{2}\right)^{3/2}}-\frac{1}{\left(\sigma_{i}^{2}\right)^{3/2}}=-3/2\frac{1}{\left(\varsigma_{i}^{2}\right)^{5/2}}\left(\widehat{\sigma}_{i}^{2}-\sigma_{i}^{2}\right)
\]
where we take into account a slight misuse of notation by letting
$\varsigma_{i}^{2}$ be a point in between $\widehat{\sigma}_{i}^{2}$
and $\sigma_{i}^{2}$. Substituting this expression into $a_{22}$
leads to
\begin{align}
	a_{22} & =\sum_{i=1}^{N}\left[-3/2\frac{1}{\left(\varsigma_{i}^{2}\right)^{5/2}}\left(\widehat{\sigma}_{i}^{2}-\sigma_{i}^{2}\right)\right]^{2}\nonumber \\
	& \leq9/4\left(\sup_{i}\frac{1}{\left(\varsigma_{i}^{2}\right)^{5/2}}\right)\sum_{i=1}^{N}\left(\widehat{\sigma}_{i}^{2}-\sigma_{i}^{2}\right)^{2}\nonumber \\
	& =9/4\left(\frac{1}{\left(\inf_{i}\varsigma_{i}^{2}\right)^{5/2}}\right)\left[\largeO_{P}\left(N/T\right)+\largeO_{P}\left(N^{-1}\right)+\largeO_{P}\left(T^{-1}\right)\right]\nonumber \\
	& =\largeO_{P}\left(N/T\right)+\largeO_{P}\left(N^{-1}\right)+\largeO_{P}\left(T^{-1}\right).\label{eq:Taylor_a22order}
\end{align}
where Lemma \ref{lem:inf_sighat2_2WFE} is used to ensure that the
last line holds as long as $T^{-1}\sqrt{N}\to0$ is satisfied. To arrive
this result, we use result \eqref{eq:sumi_sighat2_min_sig2_squared},
as well as Lemma \ref{lem:inf_sighat2_2WFE} to establish conditions
on the relative expansion rate of $N$ and $T$ under which $\inf_{i}\varsigma_{i}^{2}$
is stochastically bounded. The lemma is merely a generalization of
Lemma 5 in \citet{Moon2007416} and the resulting assumptions on the
divergence rates of $N$ and $T$ can be relaxed even further if one
is willing to assume the existence of higher-order moments of $\epsi_{i,t}$
beyond $\E\left[\epsi_{i,t}^{8}\right].$ In addition
to result \eqref{eq:Taylor_a22order}, which bounds $a_{22}$, the
order in probability of $a_{2}$ is determined by that of $a_{21}.$
Lemma \ref{lem:Taylor_a21order} establishes the corresponding result
on $a_{21}$ so that the order in probability of \eqref{eq:Taylor_a2}
is given by
\begin{align}
	a_{2}= & \left\{ \frac{2}{TN\left(N-1\right)}\left[\largeO_{P}\left(N^{2}\right)+\largeO_{P}\left(NT\right)+\largeO_{P}\left(N^{-1}T^{2}\right)\right]\left[\largeO_{P}\left(N/T\right)+\largeO_{P}\left(N^{-1}\right)\right]\right\} ^{1/2}\nonumber \\
	= & \largeO_{P}\left(T^{-1}\sqrt{N}\right)+\largeO_{P}\left(T^{-1/2}\right)+\largeO_{P}\left(N^{-3/2}\sqrt{T}\right).\label{eq:Taylor_a2order}
\end{align}

This result marks the last one needed to characterize the impact of
estimating error variances rather than knowing their true value. We
return to expression \eqref{eq:Taylorterm} and use our results \eqref{eq:Taylor_a1order}
and \eqref{eq:Taylor_a2order} on its two components $a_{1}$ and
$a_{2}$ to arrive at 
\begin{equation}
{CD}_{(\widehat{\sigma}-\sigma)} = \largeO_{P}\left(N^{-1}\sqrt{T}\right)+\largeO_{P}\left(N^{-1/2}\right)+\largeO_{P}\left(T^{-1/2}\right)+\largeO_{P}\left(T^{-1}\sqrt{N}\right).
\end{equation}

\subsection{Multifactor model. The effect of error variance estimation}
\label{sec:var_estim_CCE}
Similarly to Section \ref{sec:var_estim_2WFE}, we proceed from equation \eqref{eq:CD_Taylordecomp_CCE} in the proof of Theorem \ref{theorem::CCE}, and show below that 
\begin{align*}
	{CD}_{(\widehat{\sigma}-\sigma)} &=k_{N,T}\sum_{i=1}^{N}\sum_{j\neq i}^{N}\frac{\widehat{\vepsi}_{i}'\widehat{\vepsi}_{j}}{\left(\varsigma_{i}^{2}\right)^{3/2}\left(\sigma_{j}^{2}\right)^{1/2}}\left(\widehat{\sigma}_{i}^{2}-\sigma_{i}^{2}\right) \\
	&= \largeO_{P}\left(N^{-1}\sqrt{T}\right)+\largeO_{P}\left(N^{-1/2}\right)+\largeO_{P}\left(T^{-1/2}\right)+\largeO_{P}\left(T^{-1}\sqrt{N}\right).
\end{align*}

The expression depends on an unknown component $\varsigma_{i}^{2}$ which is bounded by $\sigma_{i}^{2}$
and $\widehat{\sigma}_{i}^{2}$. A simple expansion allows us to write
\begin{align}
	k_{N,T}\sum_{i=1}^{N}\sum_{j\neq i}^{N}\frac{\widehat{\vepsi}_{i}'\widehat{\vepsi}_{j}}{\left(\varsigma_{i}^{2}\right)^{3/2}\left(\sigma_{j}^{2}\right)^{1/2}}\left(\widehat{\sigma}_{i}^{2}-\sigma_{i}^{2}\right) & =k_{N,T}\sum_{i=1}^{N}\sum_{j\neq i}^{N}\frac{\widehat{\vepsi}_{i}'\widehat{\vepsi}_{j}}{\left(\sigma_{i}^{2}\right)^{3/2}\left(\sigma_{j}^{2}\right)^{1/2}}\left(\widehat{\sigma}_{i}^{2}-\sigma_{i}^{2}\right)\nonumber \\
	& +k_{N,T}\sum_{i=1}^{N}\sum_{j\neq i}^{N}\frac{\widehat{\vepsi}_{i}'\widehat{\vepsi}_{j}}{\left(\sigma_{j}^{2}\right)^{1/2}}\left(\widehat{\sigma}_{i}^{2}-\sigma_{i}^{2}\right)\left(\frac{1}{\left(\varsigma_{i}^{2}\right)^{3/2}}-\frac{1}{\left(\sigma_{i}^{2}\right)^{3/2}}\right)\nonumber \\
	& =a_{1}^{*}+a_{2}^{*}.\label{eq:Taylorterm_CCE}
\end{align}
The two expressions above are: 1.) A first-order expansion\emph{ at
}the true error variances and 2.) the approximation error of this
first-order approximation. 

Consider now the first term on the right-hand side above, which is
the first-order expansion \emph{at }the true error variances. It can
be written 
\begin{align}
	a_{1}^{*} & =k_{N,T}\sum_{i,j}^{N}\frac{\widehat{\vepsi}_{i}'\widehat{\vepsi}_{j}}{\left(\sigma_{i}^{2}\right)^{3/2}\left(\sigma_{j}^{2}\right)^{1/2}}\left(\widehat{\sigma}_{i}^{2}-\sigma_{i}^{2}\right)-k_{N,T}\sum_{i=1}^{N}\frac{\widehat{\vepsi}_{i}'\widehat{\vepsi}_{i}}{\left(\sigma_{i}^{2}\right)^{1/2}}\left(\widehat{\sigma}_{i}^{2}-\sigma_{i}^{2}\right)\label{eq:Taylor_a1*}\\
	& =a_{11}^{*}-a_{12}^{*}\nonumber 
\end{align}
Note that the scalings $\left[\left(\sigma_{i}^{2}\right)^{3/2}\left(\sigma_{j}^{2}\right)^{1/2}\right]^{-1}$
and $\left(\sigma_{i}^{2}\right)^{-1/2}$ in both expressions are
immaterial since all orders in probability of subterms arising from
either $a_{11}^{*}$ or $a_{12}^{*}$ will be derived via either Markov's
or Chebyshev's inequality. Since error variances have a lower bound
above zero, we can always bound their inverse from above. For this
reason, and to simplify notation in the following proofs, we disregard
from the denominators in the fractions that constitute $a_{11}^{*}$
and $a_{12}^{*}$. 
The order of the two expressions above will further be derived by expanding 
\begin{align}
\widehat{\sigma}_{i}^{2} &  =\frac{\vepsi_{i}'\vepsi_{i}}{T}+\frac{\vlambda_{i}'\left(\overline{\mC}^{-1}\right)'\overline{\mU}'\mM_{\widehat{F}}\overline{\mU}\left(\overline{\mC}^{-1}\right)\vlambda_{i}}{T}-\frac{\vepsi_{i}'\mP_{\widehat{\mF}}\vepsi_{i}}{T}-2\frac{\vlambda_{i}'\left(\overline{\mC}^{-1}\right)'\overline{\mU}'\mM_{\widehat{\mF}}\vepsi_{i}}{T}.\label{eq:sigi2hat_CCE}
\end{align}
and
\begin{equation}
\widehat{\vepsi}_{i}'\widehat{\vepsi}_{j} = \vepsi_{i}'\vepsi_{j}+\vlambda_{i}'\left(\overline{\mC}^{-1}\right)'\overline{\mU}'\mM_{\widehat{\mF}}\overline{\mU}\left(\overline{\mC}^{-1}\right)\vlambda_{j}-\vepsi_{i}'\mP_{\widehat{\mF}}\vepsi_{j}-\vlambda_{i}'\left(\overline{\mC}^{-1}\right)'\overline{\mU}'\mM_{\widehat{\mF}}\vepsi_{j}-\vepsi_{i}'\mM_{\widehat{\mF}}\overline{\mU}\left(\overline{\mC}^{-1}\right)\vlambda_{j}
 \label{eq:epsi_epsj_CCE}
\end{equation}
and by looking at the resulting terms. 

Consider the first expression, $a_{11}^{*}$, where we use \eqref{eq:epsi_epsj_CCE} to write
\begin{align}
	a_{11}^{*}= & k_{N,T}\sum_{i,j}^{N}\widehat{\vepsi}_{i}'\widehat{\vepsi}_{j}\left(\widehat{\sigma}_{i}^{2}-\sigma_{i}^{2}\right)\nonumber \\
	= & k_{N,T}\sum_{i,j}^{N}\vepsi_{i}'\vepsi_{j}\left(\widehat{\sigma}_{i}^{2}-\sigma_{i}^{2}\right)+k_{N,T}\sum_{i,j}^{N}\vlambda_{i}'\left(\overline{\mC}^{-1}\right)'\overline{\mU}'\mM_{\widehat{\mF}}\overline{\mU}\left(\overline{\mC}^{-1}\right)\vlambda_{j}\left(\widehat{\sigma}_{i}^{2}-\sigma_{i}^{2}\right)\nonumber \\
	& -k_{N,T}\sum_{i,j}^{N}\vepsi_{i}'\mP_{\widehat{\mF}}\vepsi_{j}\left(\widehat{\sigma}_{i}^{2}-\sigma_{i}^{2}\right)-k_{N,T}\sum_{i,j}^{N}\vlambda_{i}'\left(\overline{\mC}^{-1}\right)'\overline{\mU}'\mM_{\widehat{\mF}}\vepsi_{j}\left(\widehat{\sigma}_{i}^{2}-\sigma_{i}^{2}\right)\nonumber \\
	& -k_{N,T}\sum_{i,j}^{N}\vepsi_{i}'\mM_{\widehat{\mF}}\overline{\mU}\left(\overline{\mC}^{-1}\right)\vlambda_{j}\left(\widehat{\sigma}_{i}^{2}-\sigma_{i}^{2}\right)\nonumber \\
	& =a_{111}^{*}+a_{112}^{*}-a_{113}^{*}-a_{114}^{*}-a_{115}^{*}. \label{eq:Taylor_a11*}
\end{align}
Concerning the second term in \eqref{eq:Taylor_a11*}, $a_{112}^{*}$,
we can write 
\begin{align}
	\left|a_{112}^{*}\right| & =\left|k_{N,T}\sum_{i,j}^{N}\vlambda_{i}'\left(\overline{\mC}^{-1}\right)'\overline{\mU}'\mM_{\widehat{\mF}}\overline{\mU}\left(\overline{\mC}^{-1}\right)\vlambda_{j}\left(\widehat{\sigma}_{i}^{2}-\sigma_{i}^{2}\right)\right|\nonumber \\
	& \leq Tk_{N,T}\left[ \sum_{i=1}^{N}\left\Vert \left(\widehat{\sigma}_{i}^{2}-\sigma_{i}^{2}\right)\vlambda_{i}'\right\Vert^2 \right]^{1/2} \left\Vert T^{-1}\overline{\mU}'\mM_{\widehat{\mF}}\overline{\mU}\right\Vert \left\Vert \left(\overline{\mC}^{-1}\right)\right\Vert ^{2}\left\Vert \sum_{j=1}^{N}\vlambda_{j}\right\Vert \nonumber \\
	& =\sqrt{\frac{T}{2N\left(N-1\right)}}\left[\largeO_{P}\left(\sqrt{N}T^{-1/2}\right)+\largeO_{P}\left(1\right)\right] \nonumber  \left\{ \largeO_{P}\left(N^{-1}\right)+\largeO_{P}\left[\left(NT\right)^{-1/2}\right]\right\} \largeO_{P}\left(N\right)\nonumber \\
	& =\largeO_{P}\left(T^{-1/2}\right)+\largeO_{P}\left(N^{-1/2}\right)+\largeO_{P}\left(N^{-1}\sqrt{T}\right).\label{eq:a112*_order}
\end{align}
Here, the third line results from Lemma \ref{lem:sumi_sig2diff_sq_CCE}
as well as result \eqref{eq:Ubar_MF_Ubar}. The third term in \eqref{eq:Taylor_a11*},
$a_{113}^{*}$, can be bounded from above in absolute value by 
\begin{align}
	\left|a_{113}^{*}\right| & =\left|k_{N,T}\sum_{i,j}^{N}\vepsi_{i}'\mP_{\widehat{\mF}}\vepsi_{j}\left(\widehat{\sigma}_{i}^{2}-\sigma_{i}^{2}\right)\right|\nonumber \\
	& \leq Tk_{N,T}\left\Vert T^{-1}\sum_{i=1}^{N}\vepsi_{i}'\widehat{\mF}\left(\widehat{\sigma}_{i}^{2}-\sigma_{i}^{2}\right)\right\Vert \left\Vert T^{-1}\sum_{j=1}^{N}\widehat{\mF}'\vepsi_{j}\right\Vert \left\Vert \left(T^{-1}\widehat{\mF}'\widehat{\mF}\right)^{-1}\right\Vert \nonumber \\
	& =Tk_{N,T}\left[\largeO_{P}\left(NT^{-3/2}\right)+\largeO_{P}\left(T^{-1}\sqrt{N}\right)+\largeO_{P}\left(T^{-1/2}\right)+\largeO_{P}\left(N^{-1}\right)\right] \nonumber \\
	&\times \left[\largeO_{P}\left(N^{-1/2}\right)+\largeO_{P}\left(\sqrt{N}T^{-1/2}\right)\right]\nonumber \\
	&= \sqrt{\frac{T}{2N(N-1)}}\Big[ \largeO_{P}\left(N^{-3/2}\right)+\largeO_{P}\left[\left(NT\right)^{-1/2}\right]+\largeO_{P}\left(N^{3/2}T^{-2}\right) \nonumber\\
	&+\largeO_{P}\left(NT^{-3/2}\right)+\largeO_{P}\left(T^{-1}\sqrt{N}\right)\Big] \nonumber \\
	& =\largeO_{P}\left(\sqrt{T}N^{-5/2}\right)+\largeO_{P}\left(N^{-3/2}\right)+\largeO_{P}\left(\sqrt{N}T^{-3/2}\right)+\largeO_{P}\left(T^{-1}\right)+\largeO_{P}\left[\left(NT\right)^{-1/2}\right].\label{eq:a113*_order}
\end{align}
Arriving at the order in probability stated here requires us to use
Lemma \ref{lem:sumi_epsi_Fhat_sigi2_diff} as well as results \eqref{eq:sig2hat_min_sig2}
and \eqref{eq:sumi_eps_Fhat}. We proceed to the fourth term, where
we write 
\begin{align}
	\left|a_{114}^{*}\right| & \leq Tk_{N,T}\left[\sum_{i=1}^{N}\left\Vert\left(\widehat{\sigma}_{i}^{2}-\sigma_{i}^{2}\right) \vlambda_{i}'  \right\Vert^2 \right]^{1/2} \left\Vert \overline{\mC}^{-1}\right\Vert \left\Vert T^{-1}\sum_{j=1}^{N}\overline{\mU}'\mM_{\widehat{\mF}}\vepsi_{j}\right\Vert \nonumber \\
	& =\sqrt{\frac{T}{2N(N-1)}}\left[\largeO_{P}\left(\sqrt{N}T^{-1/2}\right)+\largeO_{P}\left(1\right)\right]\left[\largeO_{P}\left(1\right)+\largeO_{P}\left(\sqrt{N}T^{-1/2}\right)\right]\nonumber \\
	& =\largeO_{P}\left(T^{-1/2}\right) + \largeO_{P}\left(N^{-1/2}\right) + \largeO_{P}\left(N^{-1}\sqrt{T}\right).\label{eq:a114*_order}
\end{align}
In addition to Lemma \ref{lem:sumi_sig2diff_sq_CCE}, we can use
result \eqref{eq:Ubar_MF_Ubar} to derive the upper bound for $\left|a_{114}^{*}\right|$
stated above. The relevance of result \eqref{eq:Ubar_MF_Ubar} can
be seen by noting that $\left\Vert T^{-1}\sum_{j=1}^{N}\overline{\mU}'\mM_{\widehat{\mF}}\vepsi_{j}\right\Vert =N\left\Vert T^{-1}\overline{\mU}'\mM_{\widehat{\mF}}\overline{\vepsi}\right\Vert $
and by recalling that $\left\Vert T^{-1}\overline{\mU}'\mM_{\widehat{\mF}}\overline{\mU}\right\Vert $
includes all terms in $\left\Vert T^{-1}\overline{\mU}'\mM_{\widehat{\mF}}\overline{\vepsi}\right\Vert $. 

We proceed with the fifth term in \eqref{eq:Taylor_a11*}, $a_{115}^{*}$.
After having taken absolute values, we can write
\begin{align*}
	\left|a_{115}^{*}\right| & \leq Tk_{N,T}\left\Vert T^{-1}\sum_{i=1}^{N}\left(\widehat{\sigma}_{i}^{2}-\sigma_{i}^{2}\right)\vepsi_{i}'\overline{\mU}\right\Vert \left\Vert \overline{\mC}^{-1}\right\Vert \left\Vert \sum_{j=1}^{N}\vlambda_{j}\right\Vert \\
	& +Tk_{N,T}\left\Vert T^{-1}\sum_{i=1}^{N}\left(\widehat{\sigma}_{i}^{2}-\sigma_{i}^{2}\right)\vepsi_{i}'\mP_{\widehat{\mF}}\overline{\mU}\right\Vert \left\Vert \overline{\mC}^{-1}\right\Vert \left\Vert \sum_{j=1}^{N}\vlambda_{j}\right\Vert \\
	& =\left(a_{1151}^{*}+a_{1152}^{*}\right)\left\Vert \overline{\mC}^{-1}\right\Vert \left\Vert \sum_{j=1}^{N}\vlambda_{j}\right\Vert 
\end{align*}
Consider now $a_{1152}^{*}$. The order in probability of this expression
is bounded by Lemma \ref{lem:sumi_epsi_Fhat_sigi2_diff} as well as
results \eqref{eq:sig2hat_min_sig2} and \eqref{eq:normsq_Ubar_Fhat}. More specifically,
we can write 
\begin{align*}
	a_{1152}^{*} & \leq Tk_{N,T}\left\Vert T^{-1}\sum_{i=1}^{N}\left(\widehat{\sigma}_{i}^{2}-\sigma_{i}^{2}\right)\vepsi_{i}'\widehat{\mF}\right\Vert \left\Vert \left(T^{-1}\widehat{\mF}'\widehat{\mF}\right)^{-1}\right\Vert \left\Vert T^{-1}\widehat{\mF}'\overline{\mU}\right\Vert \\
	& =\sqrt{\frac{T}{2N(N-1)}}\left[\largeO_{P}\left(NT^{-3/2}\right)+\largeO_{P}\left(T^{-1}\sqrt{N}\right)+\largeO_{P}\left(N^{-1}\right)+\largeO_{P}\left(T^{-1/2}\right)\right] \nonumber \\
	&\times \left\{ \largeO_{P}\left[\left(NT\right)^{-1/2}\right]+\largeO_{P}\left(N^{-1}\right)\right\} \\
	& =\largeO_{P}\left(N^{-1/2}T^{-3/2}\right)+\largeO_{P}\left[\left(NT\right)^{-1}\right]+\largeO_{P}\left(N^{-3/2}T^{-1/2}\right)+\largeO_{P}\left(N^{-2}\right)+\largeO_{P}\left(N^{-3}\sqrt{T}\right).
\end{align*}
The a statement for the remaining term $a_{1151}^{*}$ can be made
by invoking Lemma \ref{lem:sumi_epsi_Fhat_sigi2_diff}. This is possible
since $a_{1151}^{*}$ is completely contained in the expression $T^{-1}\sum_{i=1}^{N}\left(\widehat{\sigma}_{i}^{2}-\sigma_{i}^{2}\right)\vepsi_{i}'\widehat{\mF}$,
which is the expression handled by this lemma. Accordingly, we have
\[
a_{1151}^{*}=\largeO_{P}\left(T^{-1}\right)+\largeO_{P}\left[\left(NT\right)^{-1/2}\right]+\largeO_{P}\left(N^{-1}\right)+\largeO_{P}\left(\sqrt{T}N^{-2}\right).
\]
Using this result on $a_{1151}^{*}$ together with the previous one
on $a_{1152}^{*}$ we arrive at
\begin{equation}
	a_{115}^{*}=\largeO_{P}\left(T^{-1}\right)+\largeO_{P}\left[\left(NT\right)^{-1/2}\right]+\largeO_{P}\left(N^{-1}\right)+\largeO_{P}\left(\sqrt{T}N^{-2}\right).\label{eq:a115*_order}
\end{equation}
Lastly, we consider term $a_{111}^{*}$ which is decomposed into 
\begin{align*}
	a_{111}^{*} & =k_{N,T} \sum_{i=1}^N \sum_{j\neq i}^{N}\vepsi_{i}'\vepsi_{j}\left(\widehat{\sigma}_{i}^{2}-\sigma_{i}^{2}\right) 
	+ k_{N,T}\sum_{i=1}^{N}\vepsi_{i}'\vepsi_{i} \left(\widehat{\sigma}_{i}^{2}-\sigma_{i}^{2}\right)\\
	& =a_{1111}^{*}+a_{1112}^{*}.
\end{align*}
By Lemma \ref{lem:sumij_epsi_epsj-sigi2diff}, it holds that 
\[
a_{1111}^{*}=\largeO_p\left(\sqrt{N}T^{-1}\right)+\largeO_{P}\left(T^{-1/2}\right)+\largeO_{P}\left(N^{-1/2}\right).
\]
As for expression $a_{1112}^{*}$, we will emphasize below that an
identical expression exists in the decomposition of $a_{12}^{*}$
which eliminates this term. However, for now, we simply conclude from
results \eqref{eq:a112*_order}-\eqref{eq:a115*_order} and our discussion
here that 
\begin{equation}
	a_{11}^{*}=k_{N,T} \sum_{i=1}^{N}\vepsi_{i}'\vepsi_{i}\left(\widehat{\sigma}_{i}^{2}-\sigma_{i}^{2}\right)+\largeO_{P}\left(\sqrt{N}T^{-1}\right)+\largeO_{P}\left(T^{-1/2}\right)+\largeO_{P}\left(N^{-1/2}\right)+\largeO_{P}\left(N^{-1}\sqrt{T}\right)\label{eq:a11*_order}
\end{equation}

Having investigated $a_{11}^{*}$ we proceed with the second major
component of the linear approximation term at the true error variances.
We can decompose this second term as 
\begin{align*}
	a_{12}^{*} & = k_{N,T} \sum_{i=1}^{N}\vepsi_{i}'\vepsi_{i}\left(\widehat{\sigma}_{i}^{2}-\sigma_{i}^{2}\right)
	+ k_{N,T}\sum_{i=1}^{N}\vlambda_{i}'\left(\overline{\mC}^{-1}\right)'\overline{\mU}'\mM_{\widehat{\mF}}\overline{\mU}\left(\overline{\mC}^{-1}\right)\vlambda_{i}\left(\widehat{\sigma}_{i}^{2}-\sigma_{i}^{2}\right)\\
	& - k_{N,T} \sum_{i=1}^{N}\vepsi_{i}'\mP_{\widehat{\mF}}\vepsi_{i}\left(\widehat{\sigma}_{i}^{2}-\sigma_{i}^{2}\right)- 2k_{N,T}\sum_{i=1}^{N}\vlambda_{i}'\left(\overline{\mC}^{-1}\right)'\overline{\mU}'\mM_{\widehat{\mF}}\vepsi_{i}\left(\widehat{\sigma}_{i}^{2}-\sigma_{i}^{2}\right)\\
	& =a_{121}^{*}+a_{122}^{*}-a_{123}^{*}-a_{124}^{*}
\end{align*}
Term $a_{121}^{*}$ is identical to term $a_{1112}^{*}$ in $a_{11}^{*}$
and hence cancels out. Concerning $a_{122}^{*}$, we write 
\begin{align*}
	& \left|k_{N,T}\sum_{i=1}^{N}\vlambda_{i}'\left(\overline{\mC}^{-1}\right)'\overline{\mU}'\mM_{\widehat{\mF}}\overline{\mU}\left(\overline{\mC}^{-1}\right)\vlambda_{i}\left(\widehat{\sigma}_{i}^{2}-\sigma_{i}^{2}\right)\right|\\
	\leq & Tk_{N,T}\left( \sum_{i=1}^{N}\left\Vert \vlambda_{i}\right\Vert ^{2}\right) ^{1/2}\left\Vert \overline{\mC}^{-1}\right\Vert ^{2}\left\Vert T^{-1}\overline{\mU}'\mM_{\widehat{\mF}}\overline{\mU}\right\Vert \left[\sum_{i=1}^{N}\left(\widehat{\sigma}_{i}^{2}-\sigma_{i}^{2}\right)^{2}\right]^{1/2}\\
	= & \sqrt{\frac{T}{2N(N-1)}}\sqrt{N}\left\{ \largeO_{P}\left(N^{-1}\right)+\largeO_{P}\left[\left(NT\right)^{-1/2}\right]\right\} \left[\largeO_{P}\left(\sqrt{N}T^{-1/2}\right)+\largeO_{P}\left(N^{-1/2}\right)\right]\\
	= & \largeO_{P}\left[\left(NT\right)^{-1/2}\right]+\largeO_{P}\left(N^{-1}\right)+\largeO_{P}\left(N^{-2}\sqrt{T}\right)
\end{align*}
where we use result \eqref{eq:Ubar_MF_Ubar} and Lemma \ref{lem:sumi_sig2diff_sq_CCE}.
The same lemma, together with result \eqref{eq:sumi_eps_PF_eps} is
employed to show that the next term, $a_{123}^{*}$, is 
\begin{align*}
	a_{123}^{*} & \leq Tk_{N,T}\left[T^{-1}\sum_{i=1}^{N}\left(\vepsi_{i}'\mP_{\widehat{\mF}}\vepsi_{i}\right)^{2}\right]^{1/2}\left[\sum_{i=1}^{N}\left(\widehat{\sigma}_{i}^{2}-\sigma_{i}^{2}\right)^{2}\right]^{1/2}\\
	& =\sqrt{\frac{T}{2N(N-1)}}\left[\largeO_{P}\left(T^{-1}\sqrt{N}\right)+\largeO_{P}\left(N^{-1}\right)\right]\left[\largeO_{P}\left(\sqrt{N}T^{-1/2}\right)+\largeO_{P}\left(N^{-1/2}\right)\right]\\
	& =\largeO_{P}\left(T^{-1}\right)+\largeO_{P}\left(N^{-1}T^{-1/2}\right)+\largeO_{P}\left(N^{-3/2}\right)+\largeO_{P}\left(N^{-5/2}\sqrt{T}\right)
\end{align*}
Lastly, we use result \eqref{eq:sumi_lami_Ubar_MF_epsi} to obtain
\begin{align*}
	\left|a_{124}^{*}\right| & \leq 2Tk_{N,T}\left[\sum_{i=1}^{N}\left(T^{-1}\vlambda_{i}'\left(\overline{\mC}^{-1}\right)'\overline{\mU}'\mM_{\widehat{\mF}}\vepsi_{i}\right)^{2}\right]^{1/2}\left[\sum_{i=1}^{N}\left(\widehat{\sigma}_{i}^{2}-\sigma_{i}^{2}\right)^{2}\right]^{1/2}\\
	& =\sqrt{\frac{2T}{N(N-1)}}\left[\largeO_{P}\left(N^{-1/2}\right)+\largeO_{P}\left(T^{-1/2}\right)\right]\left[\largeO_{P}\left(\sqrt{N}T^{-1/2}\right)+\largeO_{P}\left(N^{-1/2}\right)\right]\\
	& =\largeO_{P}\left[\left(NT\right)^{-1/2}\right]+\largeO_{P}\left(N^{-1}\right)+\largeO_{P}\left(N^{-2}\sqrt{T}\right).
\end{align*}
We can now summarize the four last intermediary results by concluding
that 
\[
a_{12}^{*} = k_{N,T} \sum_{i=1}^{N}\frac{\vepsi_{i}'\vepsi_{i}}{\left(\sigma_{i}^{2}\right)^{2}}\left(\widehat{\sigma}_{i}^{2}-\sigma_{i}^{2}\right) 
+ \largeO_{P}\left(T^{-1}\right) + \largeO_{P}\left[\left(NT\right)^{-1/2}\right] + \largeO_{P}\left(N^{-1}\right) + \largeO_{P}\left(N^{-2}\sqrt{T}\right),
\]
which together with \eqref{eq:a11*_order} leads to 
\begin{align}
	a_{1}^{*} & =a_{11}^{*}-a_{12}^{*}\nonumber \\
	& =\largeO_{P}\left(T^{-1}\sqrt{N}\right)+\largeO_{P}\left(T^{-1/2}\right)+\largeO_{P}\left(N^{-1/2}\right)+\largeO_{P}\left(N^{-1}\sqrt{T}\right).\label{eq:Taylor_a1*order}
\end{align}
\medskip{}

So far we, our focus has been on a linear approximation of estimated
error variances, evaluated at their true values. In the remainder
of this proof, we focus on its approximation error which we denoted
$a_{2}^{*}$ in expression \eqref{eq:Taylorterm_CCE}. We take absolute
values and use the Cauchy-Schwarz inequality to arrive at
\begin{align}
	\left|a_{2}^{*}\right|= & \left|k_{N,T}\sum_{i=1}^{N}\sum_{j\neq i}^{N}\frac{\widehat{\vepsi}_{i}'\widehat{\vepsi}_{j}}{\left(\sigma_{j}^{2}\right)^{1/2}}\left(\widehat{\sigma}_{i}^{2}-\sigma_{i}^{2}\right)\left[\frac{1}{\left(\varsigma_{i}^{2}\right)^{3/2}}-\frac{1}{\left(\sigma_{i}^{2}\right)^{3/2}}\right]\right|\nonumber \\
	& \leq Tk_{N,T}\left(\sum_{i=1}^{N}\left[T^{-1}\sum_{j\neq i}^{N}\frac{\widehat{\vepsi}_{i}'\widehat{\vepsi}_{j}}{\left(\sigma_{j}^{2}\right)^{1/2}}\left(\widehat{\sigma}_{i}^{2}-\sigma_{i}^{2}\right)\right]^{2}\right)^{1/2}\left(\sum_{i=1}^{N}\left[\frac{1}{\left(\varsigma_{i}^{2}\right)^{3/2}}-\frac{1}{\left(\sigma_{i}^{2}\right)^{3/2}}\right]^{2}\right)^{1/2}\nonumber \\
	& \leq Tk_{N,T}\left(\sum_{i=1}^{N}\left[T^{-1}\sum_{j\neq i}^{N}\frac{\widehat{\vepsi}_{i}'\widehat{\vepsi}_{j}}{\left(\sigma_{j}^{2}\right)^{1/2}}\left(\widehat{\sigma}_{i}^{2}-\sigma_{i}^{2}\right)\right]^{2}\right)^{1/2}\left(\sum_{i=1}^{N}\left[\frac{1}{\left(\widehat{\sigma}_{i}^{2}\right)^{3/2}}-\frac{1}{\left(\sigma_{i}^{2}\right)^{3/2}}\right]^{2}\right)^{1/2}\nonumber \\
	& = \sqrt{\frac{T}{2N(N-1)}} \sqrt{a_{21}^{*}a_{22}^{*}.}\label{eq:a2*}
\end{align}
A further first-order Taylor approximation of $\left(\widehat{\sigma}_{i}^{2}\right)^{-3/2}$
around $\left(\sigma_{i}^{2}\right)^{-3/2}$ results in
\begin{equation*}
\frac{1}{\left(\widehat{\sigma}_{i}^{2}\right)^{3/2}}-\frac{1}{\left(\sigma_{i}^{2}\right)^{3/2}}=-3/2\frac{1}{\left(\varsigma_{i}^{2}\right)^{5/2}}\left(\widehat{\sigma}_{i}^{2}-\sigma_{i}^{2}\right)
\end{equation*}
Substituting this expression into $a_{22}^{*}$
leads to
\begin{align}
	a_{22}^{*} & =\sum_{i=1}^{N}\left[-3/2\frac{1}{\left(\varsigma_{i}^{2}\right)^{5/2}}\left(\widehat{\sigma}_{i}^{2}-\sigma_{i}^{2}\right)\right]^{2}\nonumber \\
	& \leq9/4\left(\sup_{i}\frac{1}{\left(\varsigma_{i}^{2}\right)^{5/2}}\right)\sum_{i=1}^{N}\left(\widehat{\sigma}_{i}^{2}-\sigma_{i}^{2}\right)^{2}\nonumber \\
	& =9/4\left(\frac{1}{\left(\inf_{i}\varsigma_{i}^{2}\right)^{5/2}}\right)\left[\largeO_{P}\left(N/T\right)+\largeO_{P}\left(N^{-1}\right)+\largeO_{P}\left(T^{-1}\right)\right]\nonumber \\
	& =\largeO_{P}\left(NT^{-1}\right)+\largeO_{P}\left(N^{-1}\right).\label{eq:Taylor_a22*order}
\end{align}
where result \eqref{lem:sumi_sig2diff_sq_CCE} is used in line three
and Lemma \ref{lem:inf_sighat2_2WFE} in the last line to ensure that
its result holds as long as $T^{-1}\sqrt{N}\to 0$ is satisfied. In addition
to result \eqref{eq:Taylor_a22*order}, which bounds $a_{22}^{*}$,
the order in probability of $a_{2}^{*}$ is determined by that of
$a_{21}^{*}.$ Lemma \ref{lem:Taylor_a21*order} establishes the corresponding
result on $a_{21}^{*}.$

Accordingly, we have 
\begin{align}
	\left|a_{2}^{*}\right| & =\left[\largeO_{P}\left(T^{-1/2}\right)+\largeO_{P}\left(N^{-1/2}\right)+\largeO_{P}\left(N^{-1}\sqrt{T}\right)\right]\left[\largeO_{P}\left(\sqrt{N}T^{-1/2}\right)+\largeO_{P}\left(N^{-1/2}\right)\right]\nonumber \\
	& =\largeO_{P}\left(T^{-1}\sqrt{N}\right)+\largeO_{P}\left(T^{-1/2}\right)+\largeO_{P}\left(N^{-1/2}\right)+\largeO_{P}\left(N^{-3/2}\sqrt{T}\right).\label{eq:Taylor_a2*order}
\end{align}
This result marks the last one needed to characterize the impact of
estimating error variances rather than knowing their true value. We
return to expression \eqref{eq:Taylorterm_CCE} and use our results
\eqref{eq:Taylor_a1*order} and \eqref{eq:Taylor_a2*order} on its
two components $a_{1}^{*}$ and $a_{2}^{*}$ to arrive at 
\[
k_{N,T}\sum_{i=1}^{N}\sum_{j\neq i}^{N}\frac{\widehat{\vepsi}_{i}'\widehat{\vepsi}_{i}}{\left(\varsigma_{i}^{2}\right)^{3/2}\left(\sigma_{j}^{2}\right)^{1/2}}\left(\widehat{\sigma}_{i}^{2}-\sigma_{i}^{2}\right)=\largeO_{P}\left(N^{-1}\sqrt{T}\right)+\largeO_{P}\left(N^{-1/2}\right)+\largeO_{P}\left(T^{-1/2}\right)+\largeO_{P}\left(T^{-1}\sqrt{N}\right).
\]

\subsection{Auxiliary lemmas for Section \ref{sec:var_estim_2WFE}}
\begin{lemma}
	\label{lem:sigmahat}
	Suppose that Assumption \ref{ass:errorsAddHetero} holds. Then,	
	\begin{align}
		\widehat{\sigma}_{i}^{2}-\sigma_{i}^{2} & =\largeO_{P}\left(T^{-1/2}\right)+\largeO_{P}\left(N^{-1}\right)\label{eq:sig2hat_min_sig2}\\
		N^{-1}\sum_{i\text{=1}}^{N}\left(\widehat{\sigma}_{i}^{2}-\sigma_{i}^{2}\right) & =\largeO_{P}\left[\left(NT\right)^{-1/2}\right]+\largeO_{P}\left(N^{-1}\right)\label{eq:sumi_sig2hat_min_sig2}\\
		\sum_{i=1}^{N}\left(\widehat{\sigma}_{i}^{2}-\sigma_{i}^{2}\right)^{2} & =\largeO_{P}\left(N/T\right)+\largeO_{P}\left(T^{-1}\right)+\largeO_{P}\left(N^{-1}\right)\label{eq:sumi_sighat2_min_sig2_squared}
	\end{align}
\end{lemma}
\begin{proof}[Proof of Lemma \ref{lem:sigmahat}]
	All orders in probability in this proof are derived via application
	of Markov's or Chebyshev's inequality. Details are left out for the
	sake of brevity. For the first result, we can explicitly write out
	all sums in equation \eqref{eq:sigi2hat_2WFE} in order to arrive
	at
	\begin{align*}
		\widehat{\sigma}_{i}^{2}-\sigma_{i}^{2} & =T^{-1}\sum_{t=1}^{T}\left(\epsi_{i,t}^{2}-\sigma_{i}^{2}\right)+T^{-1}\sum_{t=1}^{T}\left(\overline{\epsi}_{t}\right)^{2}+\left(T^{-1}\sum_{t=1}^{T}\epsi_{i,t}\right)^{2}+\left[\left(NT\right)^{-1}\sum_{j=1}^{N}\sum_{t=1}^{T}\epsi_{j,t}\right]^{2}\\
		& -2\left(NT\right)^{-1}\sum_{j=1}^{N}\sum_{t=1}^{T}\epsi_{i,t}\epsi_{j,t}-2\left(T^{-1}\sum_{t=1}^{T}\epsi_{i,t}\right)\left[\left(NT\right)^{-1}\sum_{j=1}^{N}\sum_{t=1}^{T}\epsi_{j,t}\right]\\
		& =\largeO_{P}\left(T^{-1/2}\right)+\largeO_{P}\left(N^{-1}\right)+\largeO_{P}\left[\left(NT\right)^{-1/2}\right].
	\end{align*}
	For the second result, note that summation over $N$ simplifies the
	decomposition in equation \eqref{eq:sigi2hat_2WFE} to 
	\begin{align*}
		N^{-1}\sum_{i\text{=1}}^{N}\left(\widehat{\sigma}_{i}^{2}-\sigma_{i}^{2}\right) &= \left(NT\right)^{-1}\sum_{i\text{=1}}^{N}\sum_{t=1}^{T}\left(\epsi_{i,t}^{2}-\sigma_{i}^{2}\right)-N^{-2}T^{-1}\sum_{i,j}^{N}\sum_{t=1}^{T}\epsi_{i,t}\epsi_{j,t}+N^{-1}T^{-2}\sum_{i=1}^{N}\sum_{t=1}^{T}\epsi_{i,t}\epsi_{i,t'}-\overline{\epsi}^{2}\\
		& =\largeO_{P}\left[\left(NT\right)^{-1/2}\right]+\largeO_{P}\left(N^{-1}\right)+\largeO_{P}\left(T^{-1}\right).
	\end{align*}
	For the third result, we borrow from the explicit expression of $\widehat{\sigma}_{i}^{2}-\sigma_{i}^{2}$
	from the first result above and describe an upper bound as 
	\begin{align*}
		\sum_{i=1}^{N}\left(\widehat{\sigma}_{i}^{2}-\sigma_{i}^{2}\right)^{2} & \leq M\sum_{i=1}^{N}\left[T^{-1}\sum_{t=1}^{T}\left(\epsi_{i,t}^{2}-\sigma_{i}^{2}\right)\right]^{2}+MN\left[T^{-1}\sum_{t=1}^{T}\left(\overline{\epsi}_{t}\right)^{2}\right]^{2}+M\sum_{i=1}^{N}\left(T^{-1}\sum_{t=1}^{T}\epsi_{i,t}\right)^{4}\\
		& +MN\overline{\epsi}^{4}+M\sum_{i=1}^{N}\left[\left(NT\right)^{-1}\sum_{j=1}^{N}\sum_{t=1}^{T}\epsi_{i,t}\epsi_{j,t}\right]^{2}+2\sum_{i=1}^{N}\left(T^{-1}\sum_{t=1}^{T}\epsi_{i,t}\right)^{2}\overline{\epsi}^{2}\\
		& =\frac{M}{T^{2}}\sum_{i=1}^{N}\sum_{t,t'}^{T}\left(\epsi_{i,t}^{2}-\sigma_{i}^{2}\right)\left(\epsi_{i,t'}^{2}-\sigma_{i}^{2}\right)+\frac{MN}{N^{4}T^{2}}\sum_{i,i',i'',i'''}^{N}\sum_{t,t'}^{T}\epsi_{i,t}\epsi_{i',t}\epsi_{i'',t'}\epsi_{i''',t'}\\ &+\frac{MN}{T^{4}}\sum_{i=1}^{N}\sum_{t,t',t'',t'''}^{T}\epsi_{i,t}\epsi_{i,t'}\epsi_{i,t''}\epsi_{i,t'''} +MN\overline{\epsi}^{4}\\
		& +\frac{M}{\left(NT\right)^{2}}\sum_{i,j,j'}^{N}\sum_{t,t'}^{T}\epsi_{i,t}\epsi_{j,t}\epsi_{j',t'}\epsi_{i,t'}+\overline{\epsi}^{2}\frac{M}{T^{2}}\sum_{i=1}^{N}\sum_{t,t'}^{T}\epsi_{i,t}\epsi_{i,t'}\\
		& =\largeO_{P}\left(N/T\right)+\largeO_{P}\left(T^{-1}\right)+\largeO_{P}\left(N^{-1}\right).
	\end{align*}
\end{proof}

\bigskip

\begin{lemma}
	\label{lem:buildingblocks_2WFE}
		Suppose that Assumption \ref{ass:errorsAddHetero} holds. Then,	
	\begin{align}
		\sum_{i=1}^{N}\left(T^{-1}\sum_{t=1}^{T}\epsi_{i,t}\overline{\epsi}_{t}\right)^{2} & =\largeO_{P}\left(N^{-1}\right)+\largeO_{P}\left(T^{-1}\right)\label{eq:sumit_eps_epsbar_sq}\\
		k_{N,T}\sum_{i=1}^{N}\vepsi_{i}'\vepsi_{i}\left(\frac{\vepsi_{i}'\vepsi_{i}}{T}-\sigma_{i}^{2}\right) & =\largeO_{P}\left(N^{-1/2}\right)+\largeO_{P}\left(T^{-1/2}\right)\label{eq:sumi_epsvar_epsvardev}\\
		k_{N,T} \sum_{i=1}^N \sum_{j\neq i}^{N}\vepsi_{i}'\vepsi_{j}\left(\frac{\vepsi_{i}'\vepsi_{i}}{T}-\sigma_{i}^{2}\right) & =\largeO_{P}\left(T^{-1/2}\right)+\largeO_{P}\left(T^{-1}\sqrt{N}\right)\label{eq:sumij_epsvar_epsvardev}\\
		\sum_{i=1}^{N}\left(\sum_{j\neq i}^{N}\vepsi_{i}'\vepsi_{j}\right)^{2}\left(\frac{\vepsi_{i}'\vepsi_{i}}{T}-\sigma_{i}^{2}\right)^{2} & =\largeO_{P}\left(N^{2}\right)\label{eq:sumij_espi_espj_epsisq-diff}\\
		\sum_{i=1}^{N}\left(\sum_{j\neq i}^{N}\vepsi_{i}'\vepsi_{j}\right)^{2}\left(\frac{\vepsi_{i}'\overline{\vepsi}}{T}\right)^{2} & =\largeO_{P}\left(N\right)+\largeO_{P}\left(T\right) \label{eq:sumij_epsi_epsj_epsi_epsbar}\\
		\sum_{i=1}^{N}\overline{\epsi}_{i}\left(\frac{\vepsi_{i}'\vepsi_{i}}{T}-\sigma_{i}^{2}\right) & =\largeO_{P}\left(NT^{-1}\right)\label{eq:sumi_epsbari_epsivardev}\\
		\sum_{i=1}^{N}\left(\sum_{j\neq i}^{N}T\overline{\epsi}_{i}\overline{\epsi}_{j}\right)^{2}\left(\frac{\vepsi_{i}'\vepsi_{i}}{T}-\sigma_{i}^{2}\right)^{2} & =\largeO_{P}\left(N^{2}T^{-1}\right)\label{eq:sumij_epsbari_epsbarj_epsisq-diff}\\
		\sum_{i=1}^{N}\left(\sum_{j\neq i}^{N}T\overline{\epsi}_{i}\overline{\epsi}_{j}\right)^{2}\left(\frac{\vepsi_{i}'\overline{\vepsi}}{T}\right)^{2} & =\largeO_{P}\left(NT^{-1}\right)+\largeO_{P}\left(1\right)\label{eq:sumij_epsbari_epsbarj_epsi_epsbar}
	\end{align}
\end{lemma}
\begin{proof}[Proof of Lemma \ref{lem:buildingblocks_2WFE}]
	Non-negativity of the first expression allows us to look at
	\begin{align*}
		\E\left[T^{-2}\sum_{i=1}^{N}\sum_{t,t'}^{T}\epsi_{i,t}\overline{\epsi}_{t}\overline{\epsi}_{t'}\epsi_{i,t'}\right] & =\left(NT\right)^{-2}\sum_{i,i',i''}^{N}\sum_{t,t'}^{T}\E\left[\epsi_{i,t}\epsi_{i',t}\epsi_{i'',t'}\epsi_{i,t'}\right]\\
		& =\largeO\left(N^{-1}\right)+\largeO\left(T^{-1}\right),
	\end{align*}
	where the last line results from eliminating terms where indices cannot
	be matched to arrive at a non-zero expectation term.
	
	Consider the second result now. We can write
	\begin{align*}
	k_{N,T}\sum_{i=1}^{N}\vepsi_{i}'\vepsi_{i}\left(\frac{\vepsi_{i}'\vepsi_{i}}{T}-\sigma_{i}^{2}\right) 
	& = Tk_{N,T} \sum_{i=1}^{N}\left(\frac{\vepsi_{i}'\vepsi_{i}}{T}-\sigma_{i}^{2}\right)^{2} 
	+ Tk_{N,T} \sum_{i=1}^{N}\sigma_{i}^{2}\left(\frac{\vepsi_{i}'\vepsi_{i}}{T}-\sigma_{i}^{2}\right).\\
	& = N^{-1}\sqrt{T} \left[\largeO_{P}\left(NT^{-1}\right)+\largeO_{P}\left(\sqrt{NT}\right)\right]\\
	& = \largeO_{P}\left(T^{-1/2}\right) + \largeO_{P}\left(N^{-1/2}\right).
	\end{align*}
	The first order term in the second line above is given by results
	in the proof of \eqref{eq:sumi_sighat2_min_sig2_squared}. For the
	second term, we use Chebyshev's inequality and boundedness of $\sigma_{i}^{2}$.
	
	We proceed with the third term. For notational simplicity we will abbreviate $\sum_{i=1}^N \sum_{j\neq i}^{N}$ by $\sum_{i \neq j}^{N}$ in this proof. Given that $i\neq j$, the expression
	\[
	\frac{k_{N,T}}{T}\sum_{i\neq j}^{N}\sum_{t,t'}^{T}\epsi_{j,t}\epsi_{i,t}\left(\epsi_{i,t'}^{2}-\sigma_{i}^{2}\right)
	\]
	has a zero expected value, we derive its order in probability from
	its variance. In the expression
	\[
	E\left\{ \left[\frac{k_{N,T}}{T}\sum_{i\neq j}^{N}\sum_{t,t'}^{T}\epsi_{j,t}\epsi_{i,t}\left(\epsi_{i,t'}^{2}-\sigma_{i}^{2}\right)\right]^{2}\right\} =\frac{k_{N,T}^{2}}{T^{2}}\sum_{i\neq j}^{N}\sum_{i'\neq j'}^{N}\sum_{t,t',t'',t'''}^{T}\epsi_{j,t}\epsi_{i,t}\left(\epsi_{it'}^{2}-\sigma_{i}^{2}\right)\epsi_{j',t''}\epsi_{i',t''}\left(\epsi_{i',t'''}^{2}-\sigma_{i'}^{2}\right)
	\]
	a number of terms cancels out because independence of errors across
	cross-sections and across time leads to a number of terms that depend
	on the zero expected value of $\epsi_{it}$. While a detailed
	account of all individual cases is omitted here for the sake of brevity,
	we can state that the variance of $k_{N,T}T^{-1}\sum_{i\neq j}^{N}\sum_{t,t'}^{T}\epsi_{j,t}\epsi_{i,t}\left(\epsi_{i,t'}^{2}-\sigma_{i}^{2}\right)$
	reduces to
	\begin{align*}
		&E\left\{ \left[\frac{k_{N,T}}{T}\sum_{i\neq j}^{N}\sum_{t,t'}^{T}\epsi_{j,t}\epsi_{i,t}\left(\epsi_{i,t'}^{2}-\sigma_{i}^{2}\right)\right]^{2}\right\}  \\
		& =\frac{1}{4N^2T^{3}}\sum_{i\neq j}^{N}\sum_{t,t'}^{T}\E\left(\epsi_{j,t}^{2}\right)\E\left(\epsi_{i,t}^{2}\right)\E\left[\left(\epsi_{i,t'}^{2}-\sigma_{i}^{2}\right)^{2}\right]\\
		& +\frac{1}{4N^2T^{3}}\sum_{i\neq i'\neq j}^{N}\sum_{t=1}^{T}\E\left(\epsi_{j,t}^{2}\right)\E\left(\epsi_{i,t}^{3}\right)\E\left(\epsi_{i',t}^{3}\right)+\smallO\left(T^{-1}\right)+\smallO\left(T^{-2}N\right)\\
		& =\largeO\left(T^{-1}\right)+\largeO\left(T^{-2}N\right).
	\end{align*}
	Result \eqref{eq:sumij_epsvar_epsvardev} follows from the equation
	above by the application of Chebyshev's inequality.
	
	Before applying Markov's inequality to establish result number four,
	note that all elements in the sum 
	\[
	\sum_{i=1}^{N}\sum_{j\neq i}^{N}\sum_{j'\neq i}^{N}\sum_{t,t',t'',t'''}^{T}\epsi_{i,t}\epsi_{j,t}\epsi_{j',t'}\epsi_{i,t'}\left(\frac{\epsi_{i,t''}^{2}}{T}-\sigma_{i}^{2}\right)\left(\frac{\epsi_{i,t'''}^{2}}{T}-\sigma_{i}^{2}\right)
	\]
	whose indexes violate either $j=j'$ or $t=t'$ will have an expected
	value of zero. This is a direct consequence of the fact that $i\neq j$
	and $i\neq j'$ generally holds. If $j=j'$ and $t=t'$ are satisfied
	we also need the index restriction $t''=t'''$ to ensure a zero expected
	value. Hence, 
	\begin{align*}
		\E\left[\sum_{i=1}^{N}\left(\sum_{j\neq i}^{N}\vepsi_{i}'\vepsi_{j}\right)^{2}\left(\frac{\vepsi_{i}'\vepsi_{i}}{T}-\sigma_{i}^{2}\right)^{2}\right]= & T^{-2}\sum_{i=1}^{N}\sum_{j\neq i}^{N}\sum_{t,t''}^{T}\E\left[\epsi_{i,t}^{2}\epsi_{j,t}^{2}\left(\epsi_{i,t''}^{2}-\sigma_{i}^{2}\right)^{2}\right]+\smallO\left(N^{2}\right)\\
		= & \largeO\left(N^{2}\right),
	\end{align*}
	which leads to result \eqref{eq:sumij_espi_espj_epsisq-diff}. 
	
	Moving to result \eqref{eq:sumij_epsi_epsj_epsi_epsbar}, we continue
	to look for restrictions on the indexes which would result in nonzero
	expected values. However, since the term under consideration involves
	more sums over cross-section indexes, we limit our attention to restrictions
	that would lead to the maximal number of sums rather than documenting
	all possible index restrictions which result in nonzero expectations.
	Taking expectations, we have 
	\begin{align*}
		\E\left[\sum_{i=1}^{N}\left[\sum_{j\neq i}^{N}\vepsi_{i}'\vepsi_{j}\right]^{2}\left(\frac{\vepsi_{i}'\overline{\vepsi}}{T}\right)^{2}\right] & =\left(NT\right)^{-2}\sum_{i=1}^{N}\sum_{j\neq i}^{N}\sum_{t,t'''}^{T}\E\left[\epsi_{i,t}\epsi_{j,t}\epsi_{j',t'}\epsi_{i,t'}\epsi_{i,t'}\epsi_{i',t''}\epsi_{i'',t'''}\epsi_{i,t'''}\right].
	\end{align*}
	The set of index restrictions maximizing the number of sums over cross-sections
	is $j=j'$, $i'=i''$, $t=t'$ and $t''=t'''$. This leads to three
	sums over cross-sections as well as two sums over time, resulting
	in a term of order $\largeO\left(N\right)$. The alternative set of index
	restrictions $j=j',i=i'=i''$ and $t=t'$ maximizes number of sums
	over time and leads to a term of order $\largeO\left(T\right)$. Summarizing
	these two results, we can state that application of Markov's inequality
	establishes that 
	\[
	\sum_{i=1}^{N}\left[\sum_{j\neq i}^{N}\vepsi_{i}'\vepsi_{j}\right]^{2}\left(\frac{\vepsi_{i}'\overline{\vepsi}}{T}\right)^{2}=\largeO_{P}\left(N\right)+\largeO_{P}\left(T\right).
	\]
	
	The last three results in Lemma \ref{lem:buildingblocks_2WFE} are
	very similar to the previous three expressions targeted above. For
	this reason, the following proofs will leave out unnecessary details.
	For result \eqref{eq:sumi_epsbari_epsivardev}, we write out the expression
	of interest as $T^{-2}\sum_{i=1}^{N}\sum_{t,t'}^{T}\epsi_{i,t}\left(\epsi_{i,t'}^{2}-\sigma_{i}^{2}\right).$
	Taking first squares and then expectations, we get 
	\begin{align*}
		& E\left\{ \left[T^{-2}\sum_{i=1}^{N}\sum_{t,t'}^{T}\epsi_{i,t}\left(\epsi_{i,t'}^{2}-\sigma_{i}^{2}\right)\right]^{2}\right\} \\
		& =T^{-4}\sum_{i,j}^{N}\sum_{t,t',t'',t'''}^{T}\E\left[\epsi_{i,t}\epsi_{j,t'}\left(\epsi_{i,t''}^{2}-\sigma_{i}^{2}\right)\left(\epsi_{j,t'''}^{2}-\sigma_{i}^{2}\right)\right]\\
		& =T^{-4}\sum_{i,j}^{N}\sum_{t,t'}^{T}\E\left(\epsi_{i,t}^{3}\right)\E\left(\epsi_{i',t''}^{3}\right)+\smallO\left(N^{2}T^{-2}\right)\\
		& =\largeO\left(N^{2}T^{-2}\right),
	\end{align*}
	from which the desired result follows.
	
	Concerning result \eqref{eq:sumij_epsbari_epsbarj_epsisq-diff}, we
	write 
	\begin{align*}
		& E\left\{ \sum_{i=1}^{N}\left[\sum_{j\neq i}^{N}T\overline{\epsi}_{i}\overline{\epsi}_{j}\right]^{2}\left(\frac{\vepsi_{i}'\vepsi_{i}}{T}-\sigma_{i}^{2}\right)^{2}\right\} \\
		= & T^{-4}\sum_{i=1}^{N}\sum_{j\neq i}^{N}\sum_{t,t',t''}^{T}\E\left(\epsi_{i,t}^{2}\right)\E\left(\epsi_{j,t'}^{2}\right)\E\left[\left(\epsi_{i,t''}^{2}-\sigma_{i}^{2}\right)^{2}\right]+T^{-4}\sum_{i=1}^{N}\sum_{j\neq i}^{N}\sum_{t,t',t''}^{T}\E\left(\epsi_{i,t}^{3}\right)\E\left(\epsi_{j,t'}^{2}\right)\E\left(\epsi_{i,t''}^{3}\right)\\
		+ & \smallO\left(N^{2}T^{-1}\right)\\
		= & \largeO\left(N^{2}T^{-1}\right).
	\end{align*}
	which leads to result \eqref{eq:sumij_epsbari_epsbarj_epsisq-diff}. 
	
	Moving to result \eqref{eq:sumij_epsbari_epsbarj_epsi_epsbar}, we
	can state that 
	\begin{align*}
		& E\left\{ \sum_{i=1}^{N}\left[\sum_{j\neq i}^{N}T\overline{\epsi}_{i}\overline{\epsi}_{j}\right]^{2}\left(\frac{\vepsi_{i}'\overline{\vepsi}}{T}\right)^{2}\right\} \\
		= & N^{-2}T^{-4}\sum_{i=1}^{N}\sum_{j\neq i}^{N}\sum_{t,t',t'',t'''}^{T}\E\left(\epsi_{i,t}^{2}\right)\E\left(\epsi_{j,t'}^{2}\right)\E\left(\epsi_{i,t''}^{2}\right)\E\left(\epsi_{i,t'''}^{2}\right)\\
		+ & N^{-2}T^{-4}\sum_{i,i'}^{N}\sum_{j\neq i}^{N}\sum_{t,t',t'''}^{T}\E\left(\epsi_{i,t}^{2}\right)\E\left(\epsi_{j,t'}^{2}\right)\E\left(\epsi_{i,t''}^{2}\right)\E\left(\epsi_{i',t''}^{2}\right)+\smallO\left(1\right)+\smallO\left(NT^{-1}\right)\\
		= & \largeO\left(1\right)+\largeO\left(NT^{-1}\right),
	\end{align*}
	so that we arrive at result \eqref{eq:sumij_epsbari_epsbarj_epsi_epsbar}.
\end{proof}

\bigskip

\begin{lemma}
	\label{lem:sumij_epsi_epsj_sigi2_diff}Suppose that Assumption \ref{ass:errorsAddHetero} holds.
	Then, 
	\[
	k_{N,T}\sum_{i,j}^{N}\vepsi_{i}'\vepsi_{j}\left(\widehat{\sigma}_{i}^{2}-\sigma_{i}^{2}\right)=\mathbf{1}\left[\E\left(\epsi_{it}^{3}\right)\neq0\right]\largeO_{P}\left(T^{-1}\sqrt{N}\right)+\largeO_{P}\left(T^{-1/2}\right)+\largeO_{P}\left(N^{-1/2}\right)+\largeO_{P}\left(N^{-1}\sqrt{T}\right)
	\]
\end{lemma}
\begin{proof}[Proof of Lemma \ref{lem:sumij_epsi_epsj_sigi2_diff}]
	We use the decomposition of $\widehat{\sigma}_{i}^{2}$ in equation \eqref{eq:sigi2hat_2WFE}
	to write 
	\begin{align*}
		k_{N,T}\sum_{i,j}^{N}\vepsi_{i}'\vepsi_{j}\left(\widehat{\sigma}_{i}^{2}-\sigma_{i}^{2}\right) & =k_{N,T}\sum_{i,j}^{N}\vepsi_{i}'\vepsi_{j}\left(\frac{\vepsi_{i}'\vepsi_{i}}{T}-\sigma_{i}^{2}\right)+k_{N,T}\frac{\overline{\vepsi}'\overline{\vepsi}}{T}\left(\sum_{i,j}^{N}\vepsi_{i}'\vepsi_{j}\right)+k_{N,T}\sum_{i,j}^{N}\vepsi_{i}'\vepsi_{j}\overline{\epsi}_{i}^{2}\\
		& +k_{N,T}\overline{\epsi}^{2}\sum_{i,j}^{N}\vepsi_{i}'\vepsi_{j}-2k_{N,T}\sum_{i,j}^{N}\vepsi_{i}'\vepsi_{j}\frac{\overline{\vepsi}'\vepsi_{i}}{T}-2k_{N,T}\sum_{i,j}^{N}\vepsi_{i}'\vepsi_{j}\overline{\epsi}_{i}\overline{\epsi}\\
		& =a_{1111}+a_{1112}+a_{1113}+a_{1114}-a_{1115}-a_{1116}.
	\end{align*}
	The leading term in the decomposition above is $a_{1111}$ for which
	results \eqref{eq:sumij_epsvar_epsvardev} and \eqref{eq:sumi_epsvar_epsvardev}
	can be used to arrive at 
	\begin{align*}
		a_{1111} & =k_{N,T}\sum_{i=1}^N \sum_{j\neq i}^{N}\vepsi_{i}'\vepsi_{j}\left(T^{-1}\vepsi_{i}'\vepsi_{i}-\sigma_{i}^{2}\right)+k_{N,T}\sum_{i=1}^{N}\vepsi_{i}'\vepsi_{i}\left(T^{-1}\vepsi_{i}'\vepsi_{i}-\sigma_{i}^{2}\right)\\
		& =\largeO_{P}\left(T^{-1}\sqrt{N}\right)+\largeO_{P}\left(T^{-1/2}\right)+\largeO_{P}\left(N^{-1/2}\right).
	\end{align*}
	Next, terms $a_{1112}$ and $a_{1114}$ can be combined and rewritten
	in order to yield
	\begin{align*}
		a_{1112}+a_{1114} & =k_{N,T}\left(\frac{\overline{\vepsi}'\overline{\vepsi}}{T}+\overline{\epsi}^{2}\right)N^{2}T\left(\frac{\overline{\vepsi}'\overline{\vepsi}}{T}\right)\\
		& =k_{N,T}\left\{ \largeO_{P}\left(N^{-1}\right)+\largeO_{P}\left[\left(NT\right)^{-1}\right]\right\} \largeO_{P}\left(NT\right)\\
		& =\largeO_{P}\left(N^{-1}\sqrt{T}\right)
	\end{align*}
	Expression $a_{1113}$ is non-negative since $\sum_{i,j}^{N}\vepsi_{i}'\vepsi_{j}$
	is and because $\overline{\epsi}_{i}^{2}$ does not change the
	sign of $\vepsi_{i}$. We can hence directly apply
	expectations without taking squares in beforehand. Hence, we get
	\begin{align*}
		\E\left(a_{1113}\right) & =k_{N,T}T^{-2}\E\left(\sum_{i,j}^{N}\sum_{t,t',t''}^{T}\epsi_{i,t}\epsi_{j,t}\epsi_{i,t'}\epsi_{i,t''}\right)\\
		& =k_{N,T}T^{-2}\sum_{i=1}^{N}\sum_{t,t'}^{T}\E\left(\epsi_{i,t}\right)\E\left(\epsi_{i,t'}\right)+\smallO\left(T^{-1/2}\right)\\
		& =\largeO\left(T^{-1/2}\right),
	\end{align*}
	which leads to $a_{1113}=\largeO_{P}\left(T^{-1/2}\right)$. The non-negativity
	holds as well for $a_{1115}=2k_{N,T}NT^{-1}\sum_{i=1}^{N}\overline{\vepsi}'\vepsi_{i}\vepsi_{i}'\overline{\vepsi}$,
	so that 
	\begin{align*}
		\E\left(a_{1115}\right) & =\frac{k_{N,T}}{NT}\sum_{i,j}^{N}\sum_{t=1}^{T}\E\left(\epsi_{i,t}^{2}\right)\E\left(\epsi_{j,t}^{2}\right)+\frac{k_{N,T}}{NT}\sum_{i=1}^{N}\sum_{t,t'}^{T}\E\left(\epsi_{i,t}^{2}\right)\E\left(\epsi_{j,t}^{2}\right)\\
		& +\smallO\left(T^{-1/2}\right)+\smallO\left(N^{-1}\sqrt{T}\right)\\
		& =\largeO\left(T^{-1/2}\right)+\largeO\left(N^{-1}\sqrt{T}\right)
	\end{align*}
	implies $a_{1115}=\largeO_{P}\left(T^{-1/2}\right)+\largeO_{P}\left(N^{-1}\sqrt{T}\right)$.
	Only in the case of $a_{1116}$ we have to take squares before applying
	expectations. Here, 
	\begin{align*}
		\E\left[\left(k_{N,T}\sum_{i,j}^{N}\vepsi_{i}'\vepsi_{j}\overline{\epsi}_{i}\right)^{2}\right] & =\frac{k_{N,T}^{2}}{N^{2}}\sum_{i,i',j}^{N}\sum_{t=1}^{T}\E\left(\epsi_{i,t}^{2}\right)\E\left(\epsi_{j,t}^{2}\right)\E\left(\epsi_{i',t}^{2}\right)+\frac{k_{N,T}^{2}}{N^{2}}\sum_{i=1}^{N}\sum_{t,t',t''}^{T}\E\left(\epsi_{i,t}^{2}\right)\E\left(\epsi_{i,t'}^{2}\right)\E\left(\epsi_{i,t''}^{2}\right)\\
		& +2\frac{k_{N,T}^{2}}{N^{2}}\sum_{i,i'}^{N}\sum_{t,t'}^{T}\E\left(\epsi_{i,t}^{2}\right)\E\left(\epsi_{i,t'}^{2}\right)\E\left(\epsi_{i',t'}^{2}\right)+\smallO\left(N^{-1}\right)+\smallO\left(N^{-3}T^{2}\right)+\smallO\left(N^{-2}T\right)\\
		& =\largeO\left(N^{-1}\right)+\largeO\left(N^{-3}T^{2}\right)+\largeO\left(N^{-2}T\right),
	\end{align*}
	so that we arrive at 
	\begin{align*}
		a_{1116} & =\largeO_{P}\left[\left(NT\right)^{-1/2}\right]\left[\largeO_{P}\left(N^{-1/2}\right)+\largeO_{P}\left(N^{-3/2}T\right)+\largeO_{P}\left(N^{-1}\sqrt{T}\right)\right]\\
		& =\largeO_{P}\left(N^{-1}T^{-1/2}\right)+\largeO_{P}\left(N^{-3/2}\right)+\largeO_{P}\left(N^{-2}\sqrt{T}\right).
	\end{align*}
	Combining our results on terms $a_{1111}$ to $a_{1116}$, we arrive
	at 
	\[
	k_{N,T}\sum_{i,j}^{N}\vepsi_{i}'\vepsi_{j}\left(\widehat{\sigma}_{i}^{2}-\sigma_{i}^{2}\right)=\mathbf{1}\left[\E\left(\epsi_{it}^{3}\right)\neq0\right]\largeO_{P}\left(T^{-1}\sqrt{N}\right)+\largeO_{P}\left(T^{-1/2}\right)+\largeO_{P}\left(N^{-1/2}\right)+\largeO_{P}\left(N^{-1}\sqrt{T}\right).
	\]
\end{proof}

\bigskip

\begin{lemma}
	\label{lem:sumij_epsbari_epsbarj_sig2_diff}Suppose that Assumption
	\ref{ass:errorsAddHetero}  holds. Then, 
	\[
	Tk_{N,T}\sum_{i=1}^{N}\overline{\epsi}_{i}\left(\widehat{\sigma}_{i}^{2}-\sigma_{i}^{2}\right)=\largeO_{P}\left(T^{-1/2}\right)+\largeO_{P}\left(N^{-1/2}T^{-1}\right)+\largeO_{P}\left(N^{-3/2}\right)
	\]
\end{lemma}
\begin{proof}[Proof of Lemma \ref{lem:sumij_epsbari_epsbarj_sig2_diff}]
	We use the decomposition of $\widehat{\sigma}_{i}^{2}$ in equation \eqref{eq:sigi2hat_2WFE}
	to write 
	\begin{align*}
		Tk_{N,T}\sum_{i=1}^{N}\overline{\epsi}_{i}\left(\widehat{\sigma}_{i}^{2}-\sigma_{i}^{2}\right) & =Tk_{N,T}\sum_{i=1}^{N}\overline{\epsi}_{i}\left(\frac{\vepsi_{i}'\vepsi_{i}}{T}-\sigma_{i}^{2}\right)+Tk_{N,T}\left(\frac{\overline{\vepsi}'\overline{\vepsi}}{T}+\overline{\epsi}^{2}\right)\sum_{i=1}^{N}\overline{\epsi}_{i}+Tk_{N,T}\sum_{i=1}^{N}\overline{\epsi}_{i}^{3}\\
		& -2Tk_{N,T}\sum_{i=1}^{N}\overline{\epsi}_{i}\frac{\overline{\vepsi}'\vepsi_{i}}{T}-2Tk_{N,T}\sum_{i=1}^{N}\overline{\epsi}_{i}^{2}\overline{\epsi}\\
		& =a_{1151}+a_{1152}+a_{1153}+a_{1154}-a_{1155}.
	\end{align*}
	Here, using result \eqref{eq:sumi_epsbari_epsivardev}, we directly
	get
	\begin{align*}
		a_{1151} & =Tk_{N,T}\sum_{i=1}^{N}\overline{\epsi}_{i}\left(\frac{\vepsi_{i}'\vepsi_{i}}{T}-\sigma_{i}^{2}\right)\\
		& = N^{-1}\sqrt{T}\largeO_{P}\left(NT^{-1}\right)\\
		& =\largeO_{P}\left(T^{-1/2}\right)
	\end{align*}
	Furthermore, since $T^{-1}\overline{\vepsi}'\overline{\vepsi}=\largeO_{P}\left(N^{-1}\right)$
	and $\overline{\epsi}=\largeO_{P}\left[\left(NT\right)^{-1/2}\right]$,
	we have 
	\begin{align*}
		a_{1152} & =Tk_{N,T}\left(\frac{\overline{\vepsi}'\overline{\vepsi}}{T}+\overline{\epsi}^{2}\right)\left(N\overline{\epsi}\right)\\
		& =N^{-1}\sqrt{T}\largeO_{P}\left(N^{-1}\right)\largeO_{P}\left(\sqrt{N}T^{-1/2}\right)\\
		& =\largeO_{P}\left(N^{-3/2}\right)
	\end{align*}
	For $a_{1153}$, we acknowledge that
	\begin{align*}
		\E\left[\left(\sum_{i=1}^{N}\overline{\epsi}_{i}^{3}\right)^{2}\right] & =T^{-6}\left[\sum_{i=1}^{N}\sum_{t=1}^{T}\E\left(\epsi_{i,t}^{3}\right)\right]^{2}+T^{-6}\sum_{i=1}^{N}\left[\sum_{t=1}^{T}\E\left(\epsi_{i,t}^{2}\right)\right]^{3}+\smallO\left(N^{2}T^{-4}\right)+\smallO\left(NT^{-3}\right)\\
		& =\largeO\left(N^{2}T^{-4}\right)+\largeO\left(NT^{-3}\right)
	\end{align*}
	in order to arrive at 
	\begin{align*}
		a_{1153} & =Tk_{N,T}\left[\largeO_{P}\left(NT^{-2}\right)+\largeO_{P}\left(\sqrt{N}T^{-3/2}\right)\right]\\
		& =\largeO_{P}\left(T^{-3/2}\right)+\largeO_{P}\left(N^{-1/2}T^{-1}\right).
	\end{align*}
	Next, we move on to $a_{1154}$ which we express as
	\begin{align*}
		a_{1155} & =2k_{N,T}\left(NT\right)^{-1}\sum_{i,i'}^{N}\sum_{t,t'}^{T}\epsi_{i,t}\epsi_{i',t'}\epsi_{i,t'}.
	\end{align*}
	Taking squares of the double sum over cross-sections and time and
	applying expectations, we get
	\begin{align*}
		\E\left[\left(\sum_{i,i'}^{N}\sum_{t,t'}^{T}\epsi_{i,t}\epsi_{i',t'}\epsi_{i,t'}\right)^{2}\right] & =\sum_{i,i',i''}^{N}\sum_{t=1}^{T}\E\left(\epsi_{i,t}^{2}\right)\E\left(\epsi_{i',t}^{2}\right)\E\left(\epsi_{i'',t}^{2}\right)+\sum_{i=1}^{N}\sum_{t,t',t''}^{T}\E\left(\epsi_{i,t}^{2}\right)\E\left(\epsi_{i,t'}^{2}\right)\E\left(\epsi_{i,t''}^{2}\right)\\
		& +\sum_{i,i'}^{N}\sum_{t,t'}^{T}\E\left(\epsi_{i,t}^{3}\right)\E\left(\epsi_{i',t'}^{3}\right)+\smallO\left(N^{3}T\right)+\smallO\left(NT^{3}\right)+\smallO\left[\left(NT\right)^{2}\right]\\
		& =\largeO\left(N^{3}T\right)+\largeO\left(NT^{3}\right)+\largeO\left[\left(NT\right)^{2}\right],
	\end{align*}
	implying that 
	\begin{align*}
		a_{1154} & =k_{N,T}\left(NT\right)^{-1}\left\{ \largeO_{P}\left(N^{3/2}\sqrt{T}\right)+\largeO_{P}\left(\sqrt{N}T^{3/2}\right)+\largeO_{P}\left(NT\right)\right\} .\\
		& =k_{N,T}\left\{ \largeO_{P}\left(\sqrt{N}T^{-1/2}\right)+\largeO_{P}\left(N^{-1/2}\sqrt{T}\right)+\largeO_{P}\left(1\right)\right\} \\
		& =\largeO_{P}\left(N^{-1/2}T^{-1}\right)+\largeO_{P}\left(N^{-3/2}\right)+\largeO_{P}\left(N^{-1}T^{-1/2}\right)
	\end{align*}
	Lastly, we have $a_{1155}=Tk_{N,T}\overline{\epsi}\sum_{i=1}^{N}\overline{\epsi}_{i}^{2}$
	where $\sum_{i=1}^{N}\overline{\epsi}_{i}^{2}=\largeO_{P}\left(NT^{-1}\right)$ by Markov's and Rosenthal's inequalities, leading us to $a_{1155}=\largeO_{P}\left(N^{-1/2}T^{-1}\right)$. Combining
	our results on terms $a_{1151}$ to $a_{1155}$, we can state that
	\begin{align*}
		Tk_{N,T}\sum_{i=1}^{N}\overline{\epsi}_{i}\left(\widehat{\sigma}_{i}^{2}-\sigma_{i}^{2}\right) & =\largeO_{P}\left(T^{-1/2}\right)+\largeO_{P}\left(N^{-1/2}T^{-1}\right)+\largeO_{P}\left(N^{-3/2}\right)
	\end{align*}
\end{proof}

\bigskip

\begin{lemma}
	\label{lem:inf_sighat2_2WFE}Suppose that Assumption \ref{ass:errorsAddHetero}  holds and assume
	that $N T^{-2}\to0$. Then there exists a constant $M>0$ such that
	$\inf_{i}\varsigma_{i}^{2}\geq M$ with probability approaching one.
\end{lemma}
\begin{proof}[Proof of Lemma \ref{lem:inf_sighat2_2WFE}]
	First, note that $\inf_{i}\varsigma_{i}^{2}\geq\inf_{i}\sigma_{i}^{2}-\sup_{i}\left|\widehat{\sigma}_{i}^{2}-\sigma_{i}^{2}\right|.$
	Here, $\sigma_{i}^{2}$ is bounded from below by Assumption \ref{ass:errorsAddHetero} . Concerning
	$\sup_{i}\left|\widehat{\sigma}_{i}^{2}-\sigma_{i}^{2}\right|$, we consider
	the tail probability
	\begin{align*}
		\Pr\left(\sup_{i}\left|\widehat{\sigma}_{i}^{2}-\sigma_{i}^{2}\right|>\epsilon\right) & \leq\sum_{i=1}^{N}\Pr\left(\left|\frac{\widehat{\vepsi}_{i}'\widehat{\vepsi}_{i}}{T}-\sigma_{i}^{2}\right|>\epsilon\right)\\
		& \leq\sum_{i=1}^{N}\Pr\left(\left|\frac{\vepsi_{i}'\vepsi_{i}}{T}-\sigma_{i}^{2}\right|>\frac{\epsilon}{6}\right)+\sum_{i=1}^{N}\Pr\left(\frac{\overline{\vepsi}'\overline{\vepsi}}{T}>\frac{\epsilon}{6}\right)+\sum_{i=1}^{N}\Pr\left(\overline{\epsi}_{i}^{2}>\frac{\epsilon}{6}\right)\\
		& +\sum_{i=1}^{N}\Pr\left(\overline{\epsi}^{2}>\frac{\epsilon}{6}\right)+\sum_{i=1}^{N}\Pr\left(\left|\frac{\vepsi_{i}'\overline{\vepsi}}{T}\right|>\frac{\epsilon}{6}\right)+\sum_{i=1}^{N}\Pr\left(\left|\overline{\epsi}_{i}\overline{\epsi}\right|>\frac{\epsilon}{6}\right)
	\end{align*}
	Sums over probabilities involving $\frac{\overline{\vepsi}'\overline{\vepsi}}{T}$,
	$\left|\frac{\vepsi_{i}'\overline{\vepsi}}{T}\right|$,$\overline{\epsi}^{2},\overline{\epsi}_{i}^{2}$
	and $\left|\overline{\epsi}_{i}\overline{\epsi}\right|$ can be shown
	to converge to zero via application of Chebyshev's inequality. For
	the remaining probability in terms of $\left|\frac{\vepsi_{i}'\vepsi_{i}}{T}-\sigma_{i}^{2}\right|$,
	we choose the power of four instead. Given that the random variable
	$\left(\epsi_{i,t}^{2}-\sigma_{i}^{2}\right)$ has zero expected
	value and is independent across cross-sections and (conditionally) time, by Rosenthals' inequality:
	\begin{equation*}
\E\left[T^{-1}\sum_{t=1}^{T}\left(\epsi_{i,t}^{2}-\sigma_{i}^{2}\right)\right]^{4}=\E(\sigma_{i}^{8})\E\left[T^{-1}\sum_{t=1}^{T}\left(\eta_{i,t}^{2}-1\right)\right]^{4}=\largeO(T^{-2})
	\end{equation*}
so that it holds that $\sum_{i=1}^{N}\left\{ \E\left[T^{-1}\sum_{t=1}^{T}\left(\epsi_{i,t}^{2}-\sigma_{i}^{2}\right)\right]^{4}\right\} =\largeO\left(NT^{-2}\right)$.
	Accordingly,
	\begin{align*}
		\sum_{i=1}^{N}\Pr\left(\left|\frac{\vepsi_{i}'\vepsi_{i}}{T}-\sigma_{i}^{2}\right|>\epsilon\right) & \leq\sum_{i=1}^{N}\frac{\E\left[\left(T^{-1}\vepsi_{i}'\vepsi_{i}-\sigma_{i}^{2}\right)^{4}\right]}{\epsilon^{4}}\\
		& =\frac{N}{T^{2}}\frac{1}{\epsilon^{4}}\frac{T^{2}}{N}\sum_{i=1}^{N}E\left\{ \left[T^{-1}\sum_{t=1}^{T}\left(\epsi_{i,t}^{2}-\sigma_{i}^{2}\right)\right]^{4}\right\} \\
		& =\largeO_{P}\left(N T^{-2}\right).
	\end{align*}
	Hence under the assumption that $N T^{-2}\to 0$, the above expression is $\smallO_{P}(1)$.  This result implies that the statement made in Lemma \ref{lem:inf_sighat2_2WFE} holds.
\end{proof}

\bigskip

\begin{lemma}
	\label{lem:Taylor_a21order}Under Assumption \ref{ass:errorsAddHetero} , 
	\[
	\sum_{i=1}^{N}\left(\sum_{j\neq i}^{N}\frac{\widehat{\vepsi}_{i}'\widehat{\vepsi}_{j}}{
	\sigma_{j}}\right)^{2}\left(\widehat{\sigma}_{i}^{2}-\sigma_{i}^{2}\right)^{2}=\largeO_{P}\left(N^{2}\right)+\largeO_{P}\left(NT\right)+\largeO_{P}\left(N^{-1}T^{2}\right)
	\]
\end{lemma}
\begin{proof}[Proof of Lemma \ref{lem:Taylor_a21order}]
	To begin with, note that $\sigma_{j}$ is bounded from below
	at a value greater than zero by Assumption \ref{ass:errorsAddHetero} . Hence, $\sigma_{j}^{-1}$
	is bounded from above by some finite number $M$. Accordingly, we
	can write the term
	\begin{align}
		& \sum_{i=1}^{N}\left(\sum_{j\neq i}^{N}\frac{\widehat{\vepsi}_{i}'\widehat{\vepsi}_{j}}{\sigma_{j}}\right)^{2}\left(\widehat{\sigma}_{i}^{2}-\sigma_{i}^{2}\right)^{2}\nonumber \\
		\leq & M\sum_{i=1}^{N}\left(\sum_{j\neq i}^{N}\vepsi_{i}'\vepsi_{j}\right)^{2}\left(\widehat{\sigma}_{i}^{2}-\sigma_{i}^{2}\right)^{2}+M\sum_{i=1}^{N}\left[\left(N-1\right)\left(\overline{\vepsi}'\overline{\vepsi}+T\overline{\epsi}^{2}\right)\right]^{2}\left(\widehat{\sigma}_{i}^{2}-\sigma_{i}^{2}\right)^{2} \nonumber \\
		+&M\sum_{i=1}^{N}\left(\sum_{j\neq i}^{N}T\overline{\epsi}_{i}\overline{\epsi}_{j}\right)^{2}\left(\widehat{\sigma}_{i}^{2}-\sigma_{i}^{2}\right)^{2} +M\sum_{i=1}^{N}\left[\left(N-1\right)\vepsi_{i}'\overline{\vepsi}\right]^{2}\left(\widehat{\sigma}_{i}^{2}-\sigma_{i}^{2}\right)^{2} \nonumber \\
		+& M\sum_{i=1}^{N}\left(\sum_{j\neq i}^{N}\overline{\vepsi}'\vepsi_{j}\right)^{2}\left(\widehat{\sigma}_{i}^{2}-\sigma_{i}^{2}\right)^{2}+M\sum_{i=1}^{N}\left[T\left(N-1\right)\overline{\epsi}_{i}\overline{\epsi}\right]^{2}\left(\widehat{\sigma}_{i}^{2}-\sigma_{i}^{2}\right)^{2}+M\sum_{i=1}^{N}\left(T\sum_{j\neq i}^{N}\overline{\epsi}_{j}\overline{\epsi}\right)^{2}\left(\widehat{\sigma}_{i}^{2}-\sigma_{i}^{2}\right)^{2}\nonumber \\
		=& a_{211}+a_{212}+a_{213}+a_{214}+a_{215}+a_{216}+a_{217}.\label{eq:a21}
	\end{align}
	The first out of seven terms in \eqref{eq:a21} can be decomposed
	into
	\begin{align}
		a_{211} & \leq M\sum_{i=1}^{N}\left(\sum_{j\neq i}^{N}\vepsi_{i}'\vepsi_{j}\right)^{2}\left(\frac{\vepsi_{i}'\vepsi_{i}}{T}-\sigma_{i}^{2}\right)^{2}+\sum_{i=1}^{N}\left(\sum_{j\neq i}^{N}\vepsi_{i}'\vepsi_{j}\right)^{2}\left(\frac{\overline{\vepsi}'\overline{\vepsi}}{T}+\overline{\epsi}^{2}\right)^{2}+\sum_{i=1}^{N}\left(\sum_{j\neq i}^{N}\vepsi_{i}'\vepsi_{j}\right)^{2}\overline{\epsi}_{i}^{4}\nonumber \\
		& +M\sum_{i=1}^{N}\left(\sum_{j\neq i}^{N}\vepsi_{i}'\vepsi_{j}\right)^{2}\left(\frac{\vepsi_{i}'\overline{\vepsi}}{T}\right)^{2}+M\sum_{i=1}^{N}\left(\sum_{j\neq i}^{N}\vepsi_{i}'\vepsi_{j}\right)^{2}\left(\overline{\epsi}_{i}\overline{\epsi}\right)^{2}\label{eq:a211}
	\end{align}
	For the first and fourth expression on the right-hand side above we
	use results \eqref{eq:sumij_espi_espj_epsisq-diff} and \eqref{eq:sumij_epsi_epsj_epsi_epsbar}.
	Concerning the second expression, we only need to find an upper bound
	for the non-negative term $\sum_{i=1}^{N}\left(\sum_{j\neq i}^{N}\vepsi_{i}'\vepsi_{j}\right)^{2}$.
	Taking expectations here, we get 
	\begin{align*}
		\E\left[\sum_{i=1}^{N}\left(\sum_{j\neq i}^{N}\vepsi_{i}'\vepsi_{j}\right)^{2}\right] & =\sum_{i=1}^{N}\sum_{j\neq i}^{N}\sum_{t=1}^{T}\E\left(\epsi_{i,t}^{2}\epsi_{j,t}^{2}\right)\\
		& =\largeO\left(N^{2}T\right).
	\end{align*}
	Given that $\frac{\overline{\vepsi}'\overline{\vepsi}}{T}=\largeO_{P}\left(N^{-1}\right)$
	and $\overline{\epsi}^{2}=\largeO_{P}\left[\left(NT\right)^{-1}\right]$,
	we get $\sum_{i=1}^{N}\left(\sum_{j\neq i}^{N}\vepsi_{i}'\vepsi_{j}\right)^{2}\left(\frac{\overline{\vepsi}'\overline{\vepsi}}{T}\right)^{2}=\largeO_{P}\left(T\right)$.
	Next, we consider 
	\[
	\sum_{i=1}^{N}\left(\sum_{j\neq i}^{N}\vepsi_{i}'\vepsi_{j}\right)^{2}\overline{\epsi}_{i}^{4}=T^{-4}\sum_{i=1}^{N}\sum_{j\neq i}^{N}\sum_{j'\neq i}^{N}\sum_{t,t',t'',t''',t'''',t''''''}^{T}\epsi_{i,t}\epsi_{j,t}\epsi_{j',t'}\epsi_{i,t'}\epsi_{i,t''}\epsi_{i,t'''}\epsi_{i,t''''}\epsi_{i,t'''''},
	\]
	whose expected value is 
	\begin{align*}
		\E\left[\sum_{i=1}^{N}\left(\sum_{j\neq i}^{N}\vepsi_{i}'\vepsi_{j}\right)^{2}\overline{\epsi}_{i}^{4}\right] & =T^{-4}\sum_{i=1}^{N}\sum_{j\neq i}^{N}\sum_{t,t'',t'''}^{T}\E\left(\epsi_{i,t}^{2}\right)\E\left(\epsi_{j,t}^{2}\right)\E\left(\epsi_{i,t''}^{2}\right)\E\left(\epsi_{i,t'''}^{2}\right)+\smallO\left(N^{2}T^{-1}\right)\\
		& =\largeO\left(N^{2}T^{-1}\right),
	\end{align*}
	which implies that $\sum_{i=1}^{N}\left(\sum_{j\neq i}^{N}\vepsi_{i}'\vepsi_{j}\right)^{2}\overline{\epsi}_{i}^{4}=\largeO_{P}\left(N^{2}T^{-1}\right)$.
	Lastly, the fifth term in \eqref{eq:a211} requires us to determine
	the order of 
	\[
	\sum_{i=1}^{N}\left(\sum_{j\neq i}^{N}\vepsi_{i}'\vepsi_{j}\right)^{2}\overline{\epsi}_{i}^{2}=T^{-2}\sum_{i=1}^{N}\sum_{j\neq i}^{N}\sum_{j'\neq i}^{N}\sum_{t,t',t'',t'''}^{T}\epsi_{i,t}\epsi_{j,t}\epsi_{j',t'}\epsi_{i,t'}\epsi_{i,t''}\epsi_{i,t'''}.
	\]
	Taking expectations, we get 
	\begin{align*}
		\E\left[\sum_{i=1}^{N}\left(\sum_{j\neq i}^{N}\vepsi_{i}'\vepsi_{j}\right)^{2}\overline{\epsi}_{i}^{2}\right] & =T^{-2}\sum_{i=1}^{N}\sum_{j\neq i}^{N}\sum_{t,t''}^{T}\E\left(\epsi_{i,t}^{2}\right)\E\left(\epsi_{j,t}^{2}\right)\E\left(\epsi_{i,t''}^{2}\right)+\smallO\left(N^{2}\right)\\
		& =\largeO\left(N^{2}\right).
	\end{align*}
	Additionally, recalling that $\overline{\epsi}^{2}=\largeO_{P}\left[\left(NT\right)^{-1}\right]$
	leads to $\sum_{i=1}^{N}\left(\sum_{j\neq i}^{N}\vepsi_{i}'\vepsi_{j}\right)^{2}\left(\overline{\epsi}_{i}\overline{\epsi}\right)^{2}=\largeO_{P}\left(NT^{-1}\right)$.
	
	Summarizing our results, we arrive at
	\begin{equation}
		a_{211}=\largeO_{P}\left(N^{2}\right)+\largeO_{P}\left(T\right)\label{eq:a211_order}
	\end{equation}
	
	The second out of seven terms in \eqref{eq:a21} is bounded from above
	by application of result \eqref{eq:sumi_sighat2_min_sig2_squared}.
	This allows us to write the term as
	\begin{align}
		a_{212} & \leq MN^{2}\left(\overline{\vepsi}'\overline{\vepsi}+T\overline{\epsi}^{2}\right)^{2}\sum_{i=1}^{N}\left(\widehat{\sigma}_{i}^{2}-\sigma_{i}^{2}\right)^{2}\nonumber \\
		& =\largeO_{P}\left(NT\right)+\largeO_{P}\left(N^{-1}T^{2}\right).\label{eq:a212_order}
	\end{align}
	
	We continue with the third out of seven terms in \eqref{eq:a21} which
	is written
	\begin{align*}
		a_{213} & \leq M\sum_{i=1}^{N}\left(\sum_{j\neq i}^{N}T\overline{\epsi}_{i}\overline{\epsi}_{j}\right)^{2}\left(\frac{\vepsi_{i}'\vepsi_{i}}{T}-\sigma_{i}^{2}\right)^{2}+\sum_{i=1}^{N}\left(\sum_{j\neq i}^{N}T\overline{\epsi}_{i}\overline{\epsi}_{j}\right)^{2}\left(\frac{\overline{\vepsi}'\overline{\vepsi}}{T}+\overline{\epsi}^{2}\right)^{2}+\sum_{i=1}^{N}\left(\sum_{j\neq i}^{N}T\overline{\epsi}_{i}\overline{\epsi}_{j}\right)^{2}\overline{\epsi}_{i}^{4}\\
		& +M\sum_{i=1}^{N}\left(\sum_{j\neq i}^{N}T\overline{\epsi}_{i}\overline{\epsi}_{j}\right)^{2}\left(\frac{\vepsi_{i}'\overline{\vepsi}}{T}\right)^{2}+M\sum_{i=1}^{N}\left(\sum_{j\neq i}^{N}T\overline{\epsi}_{i}\overline{\epsi}_{j}\right)^{2}\left(\overline{\epsi}_{i}\overline{\epsi}\right)^{2}\\
		& =a_{2131}+a_{2132}+a_{2133}+a_{2134}+a_{2135}.
	\end{align*}
	Application of results \eqref{eq:sumij_epsbari_epsbarj_epsisq-diff}
	and \eqref{eq:sumij_epsbari_epsbarj_epsi_epsbar} leads to $a_{2131}=\largeO_{P}\left(N^{2}T^{-1}\right)$
	and $a_{2134}=\largeO_{P}\left(NT^{-1}\right)+\largeO_{P}\left(1\right)$. In order to derive orders for the remaining terms in $a_{213}$, we implicitly use the fact that, for $r>2$, $\overline{\epsi}_i^r = \largeO\left( T^{-r/2} \right)$ by Rosenthal's inequality for independent random variables. For
	term $a_{2132}$, we note that independence of errors across cross-sections implies 
	\begin{align*}
		\E \left[\sum_{i=1}^{N}\left(\sum_{j\neq i}^{N}T\overline{\epsi}_{i}\overline{\epsi}_{j}\right)^{2}\right] &=   \sum_{i=1}^{N} \E \left(\overline{\epsi}_{i}^2 \right) \E\left[\left(\sum_{j\neq i}^{N} \sum_{t''=1}^T\epsi_{j,t''}\right)^{2}\right] \\
		&=\largeO_P \left(NT^{-1}\right)\largeO_P \left(NT\right) \\
		&=\largeO_{P}\left(N^2\right),
	\end{align*}
from which it follows that 
	\begin{align*}
		a_{2132} & =\largeO_{P}\left(N^2\right)\left\{ \largeO_{P}\left(N^{-1}\right)+\largeO_{P}\left[\left(NT\right)^{-1}\right]\right\} \\
		& =\largeO_{P}\left(N\right).
	\end{align*}
	By the same reasoning, we have 
	\begin{align*}
	\E\left(a_{2133}\right) &= \sum_{i=1}^{N}\E\left(\overline{\epsi}_{i}^{6}\right) \E\left[\left(\sum_{j\neq i}^{N} \sum_{t=1}^T\epsi_{j,t}\right)^{2}\right] \\
	&=\largeO \left(NT^{-3}\right)\largeO \left(NT\right) \\
	&=\largeO \left(N^2 T^{-2}\right),
	\end{align*}
	which leads to $a_{2133}=\largeO_{P}\left(N^{2}T^{-2}\right)$.
	Finally, concerning term $a_{2135}$ we can write  
	\begin{align*}
	\E \left(\sum_{i=1}^{N}\sum_{j\neq i}^{N} \overline{\epsi}_{i}^{4}\overline{\epsi}_{j}^{2}\right) &= T^2 \sum_{i=1}^{N}\E \left(\overline{\epsi}_{i}^{4}\right) \E \left[ \left(\sum_{j\neq i}^{N} \sum_{t=1}^{T}\epsi_{j,t}\right)^{2} \right] \\
	&= T^2 \largeO\left(NT^{-2}\right) \largeO\left(NT\right) \\
	&=  \largeO\left(N^2T^{-3}\right), 	
	\end{align*}
so that 
	\begin{align*}
	a_{2135} &= M T^2 \overline{\epsi}^2 \sum_{i=1}^{N} \overline{\epsi}_i^4 \left(\sum_{j\neq i}^{N}\overline{\epsi}_{j}^{2} \right)\\
		& =T^2 \largeO_{P}\left[\left( NT \right)^{-1}\right] \largeO_p\left(N^2T^{-3}\right) \\ 
		&=\largeO_{P}\left(NT^{-2}\right).
	\end{align*}
This result, together
	with the four previous ones, establishes 
	\[
	a_{213}=\largeO_{P}\left(N^{2}T^{-1}\right)+\largeO_{P}\left(N\right)
	\]
	\bigskip{}
	Concerning the fourth out of seven terms in \eqref{eq:a21}, we write 
	\begin{align*}
		a_{214}\leq & MN^{2}T^{-2}\sum_{i=1}^{N}\left(\vepsi_{i}'\overline{\vepsi}\right)^{2}\left(\vepsi_{i}'\vepsi_{i}-\sigma_{i}^{2}\right)^{2}+MN^{2}\sum_{i=1}^{N}\left(\vepsi_{i}'\overline{\vepsi}\right)^{2}\left(\frac{\overline{\vepsi}'\overline{\vepsi}}{T}+\overline{\epsi}^{2}\right)^{2}\\
		+ & MN^{2}\sum_{i=1}^{N}\left(\vepsi_{i}'\overline{\vepsi}\right)^{2}\overline{\epsi}_{i}^{4}+MN^{2}T^{-2}\sum_{i=1}^{N}\left(\vepsi_{i}'\overline{\vepsi}\right)^{4}+N^{2}\sum_{i=1}^{N}\left(\vepsi_{i}'\overline{\vepsi}\right)^{2}\overline{\epsi}_{i}^{2}\overline{\epsi}^{2}\\
		= & a_{2141}+a_{2142}+a_{2143}+a_{2144}+a_{2145}.
	\end{align*}
	Concerning $a_{2142}$, we use result \eqref{eq:sumit_eps_epsbar_sq}
	which allows us to write
	\begin{align*}
		a_{2142} & =N^{2}\left[\largeO_{P}\left(N^{-1}T^{2}\right)+\largeO_{P}\left(T\right)\right]\largeO_{P}\left(N^{-2}\right)\\
		& =\largeO_{P}\left(N^{-1}T^{2}\right)+\largeO_{P}\left(T\right).
	\end{align*}
	With regards to 
	\[
	a_{2141}=MT^{-2}\sum_{i,i',i''}^{N}\sum_{t,t',t'',t'''}^{T}\epsi_{i,t}\epsi_{i',t}\epsi_{i'',t'}\epsi_{i,t'}\left(\epsi_{i,t''}^{2}-\sigma_{i}^{2}\right)\left(\epsi_{i,t'''}^{2}-\sigma_{i}^{2}\right),
	\]
	we proceed by looking for index restrictions that result in the maximum
	number of sums over cross-sections or time, respectively. The restrictions
	$i'=i'',t=t'$ and $t''=t'''$ maximize the number of sums over $N$,
	leading to an expression of order $T^{-2}\largeO_{P}\left[\left(NT\right)^{2}\right].$
	Alternatively, the restrictions $i=i'=i''$ and $t''=t'''$ maximize
	the number of sums over $T$ and result in a term of order $T^{-2}\largeO_{P}\left(NT^{3}\right)$.
	Overall, we can hence state that. 
	\begin{align*}
		a_{2141}= & \largeO_{P}\left(N^{2}\right)+\largeO_{P}\left(NT\right).
	\end{align*}
	Next, a statement concerning 
	\[
	a_{2144}=M\left(NT\right)^{-2}\sum_{i,i',i'',i''',i'''''}^{N}\sum_{t,t',t'',t'''}^{T}\epsi_{i,t}\epsi_{i',t}\epsi_{i'',t'}\epsi_{i,t'}\epsi_{i,t''}\epsi_{i''',t''}\epsi_{i,t'''}\epsi_{i'''',t''}
	\]
	can be made by noting that the index restrictions $i'=i''$, $i'''=i''''$,
	$t=t'$ and $t''=t'''$ lead to one component with non-zero expected
	value consisting of three sums over cross-sections and two sums over
	time. The corresponding component is hence $\left(NT\right)^{-2}\largeO_{P}\left(N^{3}T^{2}\right)$.
	Additionally, the index restrictions $i=i'=i''=i'''=i''''$ lead to
	another component of order $\left(NT\right)^{-2}\largeO_{P}\left(NT^{4}\right).$
	Moreover, a third component of order $\left(NT\right)^{-2}\largeO_{P}\left(N^{2}T^{3}\right)$
	results from the index restrictions $i=i'''=i''''$, $i'=i''$ and
	$t=t'$. Hence, we can state that 
	\begin{align*}
		a_{2144}= & \largeO_{P}\left(N\right)+\largeO_{P}\left(T\right)+\largeO_{P}\left(N^{-1}T^{2}\right).
	\end{align*}
	We continue with the similar expression
	\[
	a_{2143}=MT^{-4}\sum_{i,i',i''}^{N}\sum_{t,t',t'',t''',t'''',t'''''}^{T}\epsi_{i,t}\epsi_{i',t}\epsi_{i'',t'}\epsi_{i,t'}\epsi_{i,t''}\epsi_{i,t'''}\epsi_{i,t''''}\epsi_{i,t'''''}.
	\]
	Applying expectations and looking for index restrictions which imply
	nonzero expected values, we obtain
	\begin{align*}
		\E\left(a_{2143}\right) & =MT^{-4}\sum_{i,i'}^{N}\sum_{t,t'',t'''}^{T}\E\left(\epsi_{i,t}^{2}\right)\E\left(\epsi_{i',t}^{2}\right)\E\left(\epsi_{i,t''}^{2}\right)\E\left(\epsi_{i,t'''}^{2}\right)+T^{-4}\sum_{i=1}^{N}\left[\sum_{t=1}^{T}\E\left(\epsi_{i,t}^{2}\right)\right]^{4}\\
		& +\smallO\left(N^{2}T^{-1}\right)+\smallO\left(N\right)\\
		& =\largeO\left(N^{2}T^{-1}\right)+\largeO\left(N\right),
	\end{align*}
	from which it follows that $a_{2143}=\largeO_{P}\left(N^{2}T^{-1}\right)+\largeO_{P}\left(N\right)$.
	Lastly, we focus on
	\[
	a_{2145}=\overline{\epsi}^{2}T^{-2}\sum_{i,i',i''}^{N}\sum_{t,t',t'',t'''}^{T}\epsi_{i,t}\epsi_{i',t}\epsi_{i'',t'}\epsi_{i,t'}\epsi_{i,t''}\epsi_{i,t'''}
	\]
	where 
	\begin{align*}
		&\E\left(T^{-2}\sum_{i,i',i''}^{N}\sum_{t,t',t'',t'''}^{T}\epsi_{i,t}\epsi_{i',t}\epsi_{i'',t'}\epsi_{i,t'}\epsi_{i,t''}\epsi_{i,t'''}\right)  \\ 
		 = &T^{-2}\sum_{i,i'}^{N}\sum_{t,t''}^{T}\E\left(\epsi_{i,t}^{2}\right)\E\left(\epsi_{i',t}^{2}\right)\E\left(\epsi_{i,t''}^{2}\right)\\
		 + &T^{-2}\sum_{i=1}^{N}\sum_{t,t',t''}^{T}\E\left(\epsi_{i,t}^{2}\right)\E\left(\epsi_{i,t'}^{2}\right)\E\left(\epsi_{i,t''}^{2}\right)+\smallO\left(N^{2}\right)+\smallO\left(NT\right)\\
		 = &\largeO\left(N^{2}\right)+\largeO\left(NT\right).
	\end{align*}
	Accordingly, it holds that
	\begin{align*}
		a_{2145} & =\largeO_{P}\left[\left(NT\right)^{-1}\right]\left[\largeO_{P}\left(N^{2}\right)+\largeO_{P}\left(NT\right)\right]\\
		& =\largeO_{P}\left(NT^{-1}\right)+\largeO_{P}\left(1\right).
	\end{align*}
	Using this result, we can now state that the fourth out of seven terms
	in \eqref{eq:a21} is 
	\begin{equation}
		a_{214}=\largeO_{P}\left(N^{2}\right)+\largeO_{P}\left(NT\right)+\largeO_{P}\left(N^{-1}T^{2}\right)\label{eq:Taylorapprox_term1_4order-1}
	\end{equation}
	
	We continue with the fifth out of seven terms in \eqref{eq:a21} which
	we bound from above by extending the sums over $j$ and $j'$ with
	the cases $j=i$ and $j'=i$. Accordingly, we can write
	\[
	a_{215}\leq M\left(NT\right)^{2}\left(\frac{\overline{\vepsi}'\overline{\vepsi}}{T}\right)^{2}\sum_{i=1}^{N}\left(\widehat{\sigma}_{i}^{2}-\sigma_{i}^{2}\right)^{2}
	\]
	Using result \eqref{eq:sumi_sighat2_min_sig2_squared} on the term
	$\sum_{i=1}^{N}\left(\widehat{\sigma}_{i}^{2}-\sigma_{i}^{2}\right)^{2}$,
	we get 
	\begin{align}
		a_{215}= & \left(NT\right)^{2}\largeO_{P}\left(N^{-2}\right)\left[\largeO_{P}\left(N/T\right)+\largeO_{P}\left(N^{-1}\right)\right]\nonumber \\
		= & \largeO_{P}\left(NT\right)+\largeO_{P}\left(N^{-1}T^{2}\right).\label{eq:a215_order}
	\end{align}
	The sixth out of seven terms in \eqref{eq:a21} is quite similar to
	the fourth term seen above. We can write 
	\begin{align*}
		a_{216} & \leq M\left(NT\right)^{2}\overline{\epsi}^{2}\sum_{i=1}^{N}\overline{\epsi}_{i}^{2}\left(\frac{\vepsi_{i}'\vepsi_{i}}{T}-\sigma_{i}^{2}\right)^{2}+M\left(NT\right)^{2}\left[\overline{\epsi}^{2}\left(\frac{\overline{\vepsi}'\overline{\vepsi}}{T}\right)^{2}+\overline{\epsi}^{6}\right]\sum_{i=1}^{N}\overline{\epsi}_{i}^{2}\\
		& +M\left(NT\right)^{2}\overline{\epsi}^{2}\sum_{i=1}^{N}\overline{\epsi}_{i}^{2}\left(\frac{\overline{\vepsi}'\vepsi_{i}}{T}\right)^{2}+M\left(NT\right)^{2}\overline{\epsi}^{2}\sum_{i=1}^{N}\overline{\epsi}_{i}^{6}\\
		& +M\left(NT\right)^{2}\overline{\epsi}^{4}\sum_{i=1}^{N}\overline{\epsi}_{i}^{4}\\
		& =a_{2161}+a_{2162}+a_{2163}+a_{2164}+a_{2165}.
	\end{align*}
	Concerning $a_{2161}$, we need to have a look at the expected value
	of $\sum_{i=1}^{N}\overline{\epsi}_{i}^{2}\left(\frac{\vepsi_{i}'\vepsi_{i}}{T}-\sigma_{i}^{2}\right)^{2}$.
	Here, 
	\begin{align*}
		\E\left[\sum_{i=1}^{N}\overline{\epsi}_{i}^{2}\left(\frac{\vepsi_{i}'\vepsi_{i}}{T}-\sigma_{i}^{2}\right)^{2}\right] & =2T^{-4}\sum_{i=1}^{N}\sum_{t,t'}^{T}\epsi_{i,t}^{2}\left(\epsi_{i,t'}^{2}-\sigma_{i}^{2}\right)^{2}+2T^{-4}\sum_{i=1}^{N}\sum_{t,t'}^{T}\epsi_{i,t}^{3}\epsi_{i,t'}^{3}+\largeO\left(NT^{-2}\right)\\
		& =\largeO\left(NT^{-2}\right),
	\end{align*}
	which implies $a_{2161}=\left(NT\right)^{2}\largeO_{P}\left[\left(NT\right)^{-1}\right]\largeO_{P}\left(NT^{-2}\right)=\largeO_{P}\left(N^{2}T^{-1}\right)$.
	For the next term, knowledge of $\sum_{i=1}^{N}\overline{\epsi}_{i}^{2}=\largeO_{P}\left(NT^{-1}\right)$
	allows us to arrive at $a_{2162}=\largeO_{P}\left(1\right).$ For term $a_{2163}$,
	we look at $\sum_{i=1}^{N}\overline{\epsi}_{i}^{2}\left(\frac{\overline{\vepsi}'\vepsi_{i}}{T}\right)^{2}$,
	which has expected value 
	\begin{align*}
		\E\left[\sum_{i=1}^{N}\overline{\epsi}_{i}^{2}\left(\frac{\overline{\vepsi}'\vepsi_{i}}{T}\right)^{2}\right] & =N^{-2}T^{-4}\sum_{i=1}^{N}\sum_{t,t',t''}^{T}\E\left(\epsi_{i,t}^{2}\right)\E\left(\epsi_{i,t'}^{2}\right)\E\left(\epsi_{i,t''}^{2}\right)+N^{-2}T^{-4}\sum_{i,i'}^{N}\sum_{t,t'}^{T}\E\left(\epsi_{i,t}^{2}\right)\E\left(\epsi_{i,t'}^{2}\right)\E\left(\epsi_{i',t'}^{2}\right)\\
		& +\smallO\left[\left(NT\right)^{-1}\right]+\smallO\left(T^{-2}\right)\\
		& =\largeO\left[\left(NT\right)^{-1}\right]+\largeO\left(T^{-2}\right).
	\end{align*}
	Accordingly, we have 
	\begin{align*}
		a_{2163} & =\left(NT\right)^{2}\largeO_{P}\left[\left(NT\right)^{-1}\right]\left\{ \largeO_{P}\left[\left(NT\right)^{-1}\right]+\largeO_{P}\left(T^{-2}\right)\right\} \\
		& =\largeO_{P}\left(1\right)+\largeO_{P}\left(NT^{-1}\right).
	\end{align*}
	The last two terms are very similar. Using Markov's inequality, it is straight forward to show that $\sum_{i=1}^{N}\overline{\epsi}_{i}^{4}=\largeO_{P}\left(NT^{-2}\right)$,
	and using the same reasoning we arrive at $\sum_{i=1}^{N}\overline{\epsi}_{i}^{6}=\largeO_{P}\left(NT^{-3}\right).$
	It follows that $a_{2164}=\largeO_{P}\left(N^{2}T^{-2}\right)$ and $a_{2165}=\largeO_{P}\left(NT^{-2}\right).$
	For the entire term $a_{216}$, we can hence state that 
	\[
	a_{216}=\largeO_{P}\left(N^{2}T^{-1}\right)+\largeO_{P}\left(1\right).
	\]
	The last building block needed for the central result of this lemma
	is term $a_{217}$. Similar to our treatment of term $a_{215},$we
	write 
	\begin{align}
		a_{217} & \leq\left(NT\right)^{2}M\overline{\epsi}^{4}\sum_{i=1}^{N}\left(\widehat{\sigma}_{i}^{2}-\sigma_{i}^{2}\right)^{2}\nonumber \\
		& =\largeO_{P}\left(N/T\right)+\largeO_{P}\left(N^{-1}\right),\label{eq:a217_order}
	\end{align}
	where the last line follows from result \eqref{eq:sumi_sighat2_min_sig2_squared}.
	Combining the results on terms \eqref{eq:a211_order} to \eqref{eq:a217_order},
	we arrive at the final result of this lemma, namely
	\[
	\sum_{i=1}^{N}\left(\sum_{j\neq i}^{N}\frac{\widehat{\vepsi}_{i}'\widehat{\vepsi}_{j}}{\sigma_{j}}\right)^{2}\left(\widehat{\sigma}_{i}^{2}-\sigma_{i}^{2}\right)^{2}=\largeO_{P}\left(N^{2}\right)+\largeO_{P}\left(NT\right)+\largeO_{P}\left(N^{-1}T^{2}\right).
	\]
\end{proof}

\subsection{Auxiliary lemmas for Section \ref{sec:var_estim_CCE}}
\begin{lemma}
	\label{lem:CCEterms_simple}Suppose that Assumptions \ref{ass:errors}-\ref{ass:rank} hold. Then,
	\begin{align}
		\left\Vert T^{-1}\overline{\mU}'\mM_{\widehat{\mF}}\overline{\mU}\right\Vert  & =\largeO_{P}\left(N^{-1}\right)+\largeO_{P}\left[\left(NT\right)^{-1/2}\right]\label{eq:Ubar_MF_Ubar}\\
		\sum_{i=1}^{N}\left\Vert T^{-1}\vlambda_{i}\vepsi_{i}'\overline{\mU}\right\Vert ^{2} & =\largeO_{P}\left(N^{-1}\right)+\largeO_{P}\left(T^{-1}\right)\label{eq:sumi_eps_ubar_lam}\\
		\sum_{i=1}^{N}\left\Vert T^{-1}\vepsi_{i}'\overline{\mU}\right\Vert ^{2} & =\largeO_{P}\left(N^{-1}\right)+\largeO_{P}\left(T^{-1}\right)\label{eq:sumi_eps_ubar}\\
		\sum_{i=1}^{N}\left\Vert T^{-1}\vlambda_{i}\vepsi_{i}'\widehat{\mF}\right\Vert ^{2} & =\largeO_{P}\left(NT^{-1}\right)+\largeO_{P}\left(N^{-1}\right)\label{eq:sumi_lam_eps_Fhat}\\
		\sum_{i=1}^{N}\left\Vert T^{-1}\vepsi_{i}'\widehat{\mF}\right\Vert ^{2} & =\largeO_{P}\left(NT^{-1}\right)+\largeO_{P}\left(N^{-1}\right)\label{eq:sumi_eps_Fhat}\\
		\sum_{i=1}^{N}\left\Vert T^{-1}\vepsi_{i}'\mP_{\widehat{\mF}}\vepsi_{i}\right\Vert ^{2} & =\largeO_{P}\left(NT^{-2}\right)+\largeO_{P}\left(N^{-2}\right)\label{eq:sumi_eps_PF_eps}\\
		\sum_{i=1}^{N}\left\Vert T^{-1}\overline{\mU}'\mM_{\widehat{\mF}}\vepsi_{i}\right\Vert ^{2} & =\largeO_{P}\left(N^{-1}\right)+\largeO_{P}\left(T^{-1}\right)\label{eq:sumi_Ubar_MF_epsi}\\
		\sum_{i=1}^{N}\left(T^{-1}\vlambda_{i}'\left(\overline{\mC}^{-1}\right)'\overline{\mU}'\mM_{\widehat{\mF}}\vepsi_{i}\right)^{2} & =\largeO_{P}\left(N^{-1}\right)+\largeO_{P}\left(T^{-1}\right)\label{eq:sumi_lami_Ubar_MF_epsi}\\
		T^{-2}\sum_{i,j}^{N}\left(\vlambda_{j}'\vlambda_{i}\right)^{2}\vepsi_{j}\overline{\mU}\overline{\mU}'\vepsi_{i}. & =\largeO_{P}\left(1\right)\label{eq:sq_sumi_epsi_Ubar}\\
		T^{-2}\sum_{i,j}^{N}\left(\vlambda_{j}'\vlambda_{i}\right)^{2}\vepsi'_{j}\widehat{\mF}\widehat{\mF}'\vepsi_{i} & =\largeO_{P}\left(NT^{-1}\right)+\largeO_{P}\left(1\right)\label{eq:sq_sumi_lami_epsi_Fhat}\\
		\left\Vert T^{-1}\overline{\mU}'\widehat{\mF}\right\Vert ^{2} & =\largeO_{P}\left[\left(NT\right)^{-1}\right]+\largeO_{P}\left(N^{-2}\right)\label{eq:normsq_Ubar_Fhat}\\
		\sum_{i=1}^{N}\left(\left\Vert T^{-1}\sum_{j\neq i}^{N}\vepsi_{i}'\vepsi_{j}\vepsi_{i}'\mF\right\Vert ^{2}\right) & =\largeO_{P}\left(N^{2}\right)\label{eq:sumi _sq_sumj_epsi_epsj_epsi_F}\\
		\sum_{i=1}^{N}\left(\left\Vert T^{-1}\sum_{j\neq i}^{N}\vepsi_{i}'\vepsi_{j}\vepsi_{i}'\overline{\mU}\right\Vert ^{2}\right) & =\largeO_{P}\left(N\right)+\largeO_{P}\left(T\right)\label{eq:sumi_sq_sumj_epsi_epsj_epsi_Ubar}
	\end{align}
\end{lemma}
\begin{proof}[Proof of Lemma \ref{lem:CCEterms_simple}]
	For the first result, note that 
	\begin{align*}
		\left\Vert T^{-1}\overline{\mU}'\mM_{\widehat{\mF}}\overline{\mU}\right\Vert  & \leq\left\Vert T^{-1}\overline{\mU}'\overline{\mU}\right\Vert +\left\Vert T^{-1}\overline{\mU}'\widehat{\mF}\right\Vert ^{2}\left\Vert \left(T^{-1}\widehat{\mF}'\widehat{\mF}\right)^{-1}\right\Vert \\
		& =\largeO_{P}\left(N^{-1}\right)+\largeO_{P}\left[\left(NT\right)^{-1/2}\right]
	\end{align*}
	by the results in Lemma \ref{lem:CCEbasicresults}. 
	
	In order to arrive at \eqref{eq:sumi_eps_ubar_lam}, we write
	\begin{align*}
		\sum_{i=1}^{N}\left\Vert T^{-1}\vlambda_{i}\vepsi_{i}'\overline{\mU}\right\Vert ^{2} & \leq T^{-2}\sum_{i=1}^{N}\vepsi_{i}'\overline{\mU}\overline{\mU}'\vepsi_{i}\left\Vert \vlambda_{i}\right\Vert ^{2}
	\end{align*}
	Taking expectations of this non-negative term, we obtain
	\begin{align*}
		\E\left(T^{-2}\sum_{i=1}^{N}\vepsi_{i}'\overline{\mU}\overline{\mU}'\vepsi_{i}\left\Vert \vlambda_{i}\right\Vert ^{2}\right) & =\left(NT\right)^{-2}\sum_{i,i'}^{N}\sum_{t=1}^{T}\E\left(\epsi_{i,t}^{2}\right)\E\left(\epsi_{i',t}^{2}\right)\E\left(\left\Vert \vlambda_{i}\right\Vert ^{2}\right) \\ &+\left(NT\right)^{-2}\sum_{i=1}^{N}\sum_{t,t'}^{T}\E\left(\epsi_{i,t}^{2}\right)\E\left(\epsi_{i,t'}^{2}\right)\E\left(\left\Vert \vlambda_{i}\right\Vert ^{2}\right)\\
		& +\left(NT\right)^{-2}\sum_{i,i'}^{N}\sum_{t=1}^{T}\E\left(\epsi_{i,t}^{2}\right)\E\left(\ve_{i',t}'\ve_{i',t}\right)\E\left(\left\Vert \vlambda_{i}\right\Vert ^{2}\right)+\largeO\left[\left(NT\right)^{-1}\right]\\
		& =\largeO\left(N^{-1}\right)+\largeO\left(T^{-1}\right),
	\end{align*}
	so that the required result follows by the Markov's inequality. The proof
	of \eqref{eq:sumi_eps_ubar} uses the same reasoning and is therefore
	omitted.
	
	For result \eqref{eq:sumi_lam_eps_Fhat}, we can expand the expression
	into
	\begin{align*}
		\sum_{i=1}^{N}\left\Vert T^{-1}\vlambda_{i}\vepsi_{i}'\widehat{\mF}\right\Vert ^{2} & \leq3\sum_{i=1}^{N}\left\Vert T^{-1}\frac{\vlambda_{i}\vepsi_{i}'\mF}{\left(\sigma_{i}^{2}\right)^{3/2}}\right\Vert ^{2}\left\Vert \overline{\mC}\right\Vert ^{2}+3\sum_{i=1}^{N}\left\Vert T^{-1}\frac{\vlambda_{i}\vepsi_{i}'\overline{\mU}}{\left(\sigma_{i}^{2}\right)^{3/2}}\right\Vert ^{2}.
	\end{align*}
	If we take expectations of the first part on the right-hand side above,
	we get
	\begin{align*}
		\E\left(\sum_{i=1}^{N}\left\Vert T^{-1}\frac{\vlambda_{i}\vepsi_{i}'\mF}{\left(\sigma_{i}^{2}\right)^{3/2}}\right\Vert ^{2}\right) & \leq T^{-2}\sum_{i=1}^{N}\sum_{t=1}^{T}\E\left(\left\Vert \vlambda_{i}\right\Vert ^{2}\right)\E\left(\epsi_{i,t}^{2}\right)\E\left(\mF_{t}'\mF_{t}\right)\\
		& =\largeO\left(NT^{-1}\right).
	\end{align*}
	Using this result together with Markov's inequality and result \eqref{eq:sumi_eps_ubar_lam},
	we arrive at 
	\[
	\sum_{i=1}^{N}\left\Vert T^{-1}\frac{\vlambda_{i}\vepsi_{i}'\widehat{\mF}}{\left(\sigma_{i}^{2}\right)^{3/2}}\right\Vert ^{2}=\largeO_{P}\left(NT^{-1}\right)+\largeO_{P}\left(N^{-1}\right).
	\]
	The proof of result \eqref{eq:sumi_eps_Fhat} is almost identical
	to that of \eqref{eq:sumi_lam_eps_Fhat} and therefore omitted. 
	
	We proceed with result \eqref{eq:sumi_eps_PF_eps}, where we write
	
	\[
	\sum_{i=1}^{N}\left\Vert T^{-1}\vepsi_{i}'\mP_{\widehat{\mF}}\vepsi_{i}\right\Vert ^{2}\leq\sum_{i=1}^{N}\left\Vert T^{-2}\vepsi_{i}'\widehat{\mF}\widehat{\mF}'\vepsi_{i}\right\Vert ^{2}\left\Vert \left(T^{-1}\widehat{\mF}'\widehat{\mF}\right)^{-1}\right\Vert .
	\]
	Furthermore, from Lemma \ref{lem:CCEbasicresults} $\left\Vert \left(\widehat{\mF}'\widehat{\mF}\right)^{-1}\right\Vert =\largeO_{P}(1)$. It remains to consider
	
	\begin{align*}
		\sum_{i=1}^{N}\left\Vert T^{-2}\vepsi_{i}'\widehat{\mF}\widehat{\mF}'\vepsi_{i}\right\Vert ^{2} & \leq3\sum_{i=1}^{N}\left\Vert T^{-2}\vepsi_{i}'\mF\mF'\vepsi_{i}\right\Vert ^{2}\left\Vert \overline{\mC}\right\Vert ^{4}+3\sum_{i=1}^{N}\left\Vert T^{-2}\vepsi_{i}'\overline{\mU}\overline{\mU}'\vepsi_{i}\right\Vert ^{2}.
	\end{align*}
	Here,
	\begin{align*}
		\sum_{i=1}^{N}\left\Vert T^{-2}\vepsi_{i}'\mF\mF'\vepsi_{i}\right\Vert ^{2} & =T^{-4}\sum_{i=1}^{N}\sum_{t,t',t'',t'''}^{T}\epsi_{i,t}\epsi_{i,t'}\epsi_{i,t''}\epsi_{i,t'''}\vf_{t}'\vf_{t'}\vf_{t''}'\vf_{t'''}\\
		& =\largeO_{P}\left(NT^{-2}\right)
	\end{align*}
	and 
	\begin{align*}
		\sum_{i=1}^{N}\left\Vert T^{-2}\vepsi_{i}'\overline{\mU}\overline{\mU}'\vepsi_{i}\right\Vert ^{2} & \leq\sum_{i=1}^{N}\left\Vert T^{-1/2}\vepsi_{i}\right\Vert ^{4}\left\Vert T^{-1/2}\overline{\mU}\right\Vert ^{4}\\
		& =T^{-4}\sum_{i=1}^{N}\sum_{t,t',t'',t'''}^{T}\epsi_{i,t}^{2}\epsi_{i,t'}^{2}\epsi_{i,t''}^{2}\epsi_{i,t'''}^{2}\left\Vert T^{-1/2}\overline{\mU}\right\Vert ^{4}\\
		& =\largeO_{P}\left(N^{-2}\right)
	\end{align*}
	follow from application of Markov's inequality and the elimination
	of terms with zero expected value. Accordingly, we have 
	\[
	\sum_{i=1}^{N}\left\Vert T^{-2}\vepsi_{i}'\widehat{\mF}\widehat{\mF}'\vepsi_{i}\right\Vert ^{2}=\largeO_{P}\left(NT^{-2}\right)+\largeO_{P}\left(N^{-2}\right),
	\]
	which establishes result \eqref{eq:sumi_eps_PF_eps}.
	
	Next, consider result \eqref{eq:sumi_Ubar_MF_epsi}. Here we use results
	\eqref{eq:sig2hat_min_sig2}, \eqref{eq:sumi_eps_ubar}, \eqref{eq:sumi_eps_Fhat} and \eqref{eq:normsq_Ubar_Fhat}
	to arrive at
	\begin{align*}
		\sum_{i=1}^{N}\left\Vert T^{-1}\overline{\mU}'\mM_{\widehat{\mF}}\vepsi_{i}\right\Vert ^{2} & \leq3\sum_{i=1}^{N}\left\Vert T^{-1}\overline{\mU}'\vepsi_{i}\right\Vert ^{2}+3\left\Vert \left(T^{-1}\widehat{\mF}'\widehat{\mF}\right)^{-1}\right\Vert ^{2}\left\Vert T^{-1}\overline{\mU}'\widehat{\mF}\right\Vert ^{2}\sum_{i=1}^{N}\left\Vert T^{-1}\widehat{\mF}'\vepsi_{i}\right\Vert ^{2}\\
		& =\largeO_{P}\left(N^{-1}\right)+\largeO_{P}\left(T^{-1}\right)\\
		&+\left\{ \largeO_{P}\left[\left(NT\right)^{-1}\right]+\largeO_{P}\left(N^{-2}\right)\right\} \left[\largeO_{P}\left(NT^{-1}\right)+\largeO_{P}\left(N^{-1}\right)\right]\\
		& =\largeO_{P}\left(N^{-1}\right)+\largeO_{P}\left(T^{-1}\right).
	\end{align*}
	The proof of result \eqref{eq:sumi_lami_Ubar_MF_epsi} is identical
	to that seen immediately above except for the use of results \eqref{eq:sumi_eps_ubar_lam}
	and \eqref{eq:sumi_lam_eps_Fhat} instead of \eqref{eq:sumi_eps_ubar}
	and \eqref{eq:sumi_eps_Fhat}.
	
	We move on with results that involve a double sum over cross-sections.
	To derive result \eqref{eq:sq_sumi_epsi_Ubar}, we take expectations
	in order to obtain 	
	\begin{align*}
		&T^{-2}\sum_{i,j}^{N}\E\left[\left(\vlambda_{j}'\vlambda_{i}\right)^{2}\right]\E\left[\vepsi_{j}'\overline{\mU}\overline{\mU}'\vepsi_{i}\right] \\
		 \leq & M T^{-2}\sum_{i,j}^{N}\E\left[\vepsi_{j}'\overline{\mU}\overline{\mU}'\vepsi_{i}\right]\\
		=&\left(NT\right)^{-2}\sum_{i,i',j,j'}^{N}\sum_{t,t'}^{T}\E\left(\epsi_{j,t}\epsi_{j',t}\epsi_{i',t'}\epsi_{i,t'}\right) \|\mB\|^2+\left(NT\right)^{-2}\sum_{i,i',j,j'}^{N}\sum_{t,t'}^{T}\E\left(\epsi_{j,t}\epsi_{i,t'}\right)\E\left(\ve'_{j',t}\ve_{i',t'}\right) \|\mB\|^2\\
		=&\left(NT\right)^{-2}\sum_{i,j}^{N}\sum_{t,t'}^{T}\E\left(\epsi_{j,t}^{2}\epsi_{i,t'}^{2}\right)\|\mB\|^2+\left(NT\right)^{-2}\sum_{i,i'}^{N}\sum_{t=1}^{T}\E\left(\epsi_{i,t}^{2}\right)\E\left(\ve'_{i',t}\ve_{i',t}\right)\|\mB\|^2,\\
		=&\largeO\left(1\right)+\largeO\left(T^{-1}\right),
	\end{align*}
	from which result \eqref{eq:sq_sumi_epsi_Ubar} directly follows.
	The result also constitutes one half of the required steps to prove
	result \eqref{eq:sq_sumi_lami_epsi_Fhat}. The corresponding expression
	can be written 
	\begin{align*}
		T^{-2}\sum_{i,j}^{N}\left(\vlambda_{j}'\vlambda_{i}\right)^{2}\vepsi'_{j}\widehat{\mF}\widehat{\mF}'\vepsi_{i} & \leq\frac{3}{T^{2}}\sum_{i,j}^{N}\left(\vlambda_{j}'\vlambda_{i}\right)^{2}\vepsi'_{j}\mF\mF'\vepsi_{i}\left\Vert \overline{\mC}\right\Vert ^{2}+\frac{3}{T^{2}}\sum_{i,j}^{N}\left(\vlambda_{j}'\vlambda_{i}\right)^{2}\vepsi'_{j}\overline{\mU}\overline{\mU}'\vepsi_{i},
	\end{align*}
	where we have already shown that the second term on the right-hand
	side is stochastically bounded. Taking expectations of the first term
	on the right-hand side above results in 
	\begin{align*}
		\E\left[T^{-2}\sum_{i,j}^{N}\sum_{t,t'}^{T}\left(\vlambda_{j}'\vlambda_{i}\right)^{2}\epsi_{j,t}\mF_{t}'\mF_{t'}\epsi_{i,t'}\right] & =T^{-2}\sum_{i=1}^{N}\sum_{t=1}^{T}\E\left(\left\Vert \vlambda_{i}\right\Vert ^{4}\right)\E\left(\epsi_{i,t}^{2}\right)\E\left(\mF_{t}'\mF_{t'}\right)\\
		& =\largeO\left(NT^{-1}\right),
	\end{align*}
	resulting in a corresponding order in probability for the term without
	expectations. Combining both result allows us to arrive at \eqref{eq:sq_sumi_lami_epsi_Fhat}.
	
	Result \eqref{eq:normsq_Ubar_Fhat} is proven in the same way as result
	\eqref{eq:sq_sumi_lami_epsi_Fhat}and a detailed proof is therefore
	omitted. To see the link between both terms, note that 
	\[
	\left\Vert T^{-1}\overline{\mU}'\widehat{\mF}\right\Vert ^{2}=\left(NT\right)^{-2}\sum_{i,j}^{N}\tr\left(\vu_{i}\widehat{\mF}\widehat{\mF}'\vu_{j}\right),
	\]
	where $\vu_{i}=\mB'[\epsi_{i,t},\ve_{i,t}]'$.
	
	Concerning result \eqref{eq:sumi _sq_sumj_epsi_epsj_epsi_F}, we write
	out the expectation of the norm as 
	\begin{align*}
		\E\left(\sum_{i=1}^{N}\left\Vert T^{-1}\sum_{j\neq i}^{N}\vepsi_{i}'\vepsi_{j}\vepsi_{i}'\mF\right\Vert ^{2}\right) & \leq MT^{-2}\sum_{i=1}^{N}\sum_{j\neq i}^{N}\sum_{j'\neq i}^{N}\sum_{t,t',t'',t'''}^{T}\E\left(\epsi_{i,t}\epsi_{j,t}\epsi_{i,t'}\epsi_{j',t'}\epsi_{i,t''}\epsi_{i,t'''}\right)\E\left(\vf_{t''}'\vf_{t'''}\right)\\`f
		& =MT^{-2}\sum_{i=1}^{N}\sum_{j\neq i}^{N}\sum_{t,t''}^{T}\E\left(\epsi_{i,t}^{2}\right)\E\left(\epsi_{j,t}^{2}\right)\E\left(\epsi_{i,t''}^{2}\right)\E\left(\left\Vert \vf_{t''}\right\Vert ^{2}\right)+\smallO\left(N^{2}\right)\\
		& =\largeO\left(N^{2}\right),
	\end{align*}
	from which result \eqref{eq:sumi _sq_sumj_epsi_epsj_epsi_F} directly
	follows.
	
	We proceed with result \eqref{eq:sumi_sq_sumj_epsi_epsj_epsi_Ubar},
	where we can expend the term under investigation into
	\[
	\sum_{i=1}^{N}\left\Vert T^{-1}\sum_{j\neq i}^{N}\vepsi_{i}'\vepsi_{j}\vepsi_{i}'\overline{\mU}\right\Vert ^{2} \leq M \sum_{i=1}^{N}\left\Vert T^{-1}\sum_{j\neq i}^{N}\vepsi_{i}'\vepsi_{j}\vepsi_{i}'\overline{\vepsi}\right\Vert ^{2}+ M \sum_{i=1}^{N}\left\Vert T^{-1}\sum_{j\neq i}^{N}\vepsi_{i}'\vepsi_{j}\vepsi_{i}'\overline{\ve}\right\Vert ^{2}.
	\]
	Here the order of the first result is determined by result \eqref{eq:sumij_epsi_epsj_epsi_epsbar}.
	Concerning the second result, an proof analogous to that of \eqref{eq:sumi _sq_sumj_epsi_epsj_epsi_F}
	allows us to show that $\sum_{i=1}^{N}\left\Vert T^{-1}\sum_{j\neq i}^{N}\left(\sigma_{j}^{2}\right)^{1/2}\vepsi_{i}'\vepsi_{j}\vepsi_{i}'\overline{\ve}\right\Vert ^{2}=\largeO_{P}\left(N\right)$.
	The combination of these two results establishes \eqref{eq:sumi_sq_sumj_epsi_epsj_epsi_Ubar}.
\end{proof}

\bigskip

\begin{lemma}
	\label{lem:sumi_epsi_Fhat_sigi2_diff}Suppose that Assumptions \ref{ass:errors}-\ref{ass:rank}
	hold. Then,
	\[
	\left\Vert T^{-1}\sum_{i=1}^{N}\vepsi_{i}'\widehat{\mF}\left(\widehat{\sigma}_{i}^{2}-\sigma_{i}^{2}\right)\right\Vert =\largeO_{P}\left(NT^{-3/2}\right)+\largeO_{P}\left(T^{-1}\sqrt{N}\right)+\largeO_{P}\left(N^{-1}\right)+\largeO_{P}\left(T^{-1/2}\right)
	\]
\end{lemma}
\begin{proof}[Proof of Lemma \ref{lem:sumi_epsi_Fhat_sigi2_diff}]
	We decompose the term under consideration into
	\begin{align*}
		\left\Vert T^{-1}\sum_{i=1}^{N}\left(\widehat{\sigma}_{i}^{2}-\sigma_{i}^{2}\right)\vepsi_{i}'\widehat{\mF}\right\Vert  & \leq\left\Vert T^{-2}\sum_{i=1}^{N}\left(\vepsi_{i}'\vepsi_{i}-\sigma_{i}^{2}\right)\vepsi_{i}'\widehat{\mF}\right\Vert +\left\Vert T^{-2}\sum_{i=1}^{N}\vlambda_{i}'\left(\overline{\mC}^{-1}\right)'\overline{\mU}'\mM_{\widehat{\mF}}\overline{\mU}\left(\overline{\mC}^{-1}\right)\vlambda_{i}\vepsi_{i}'\widehat{\mF}\right\Vert \\
		& +\left\Vert T^{-2}\sum_{i=1}^{N}\vepsi_{i}'\mP_{\widehat{\mF}}\vepsi_{i}\vepsi_{i}'\widehat{\mF}\right\Vert +\left\Vert 2T^{-2}\sum_{i=1}^{N}\vlambda_{i}'\left(\overline{\mC}^{-1}\right)'\overline{\mU}'\mM_{\widehat{\mF}}\vepsi_{i}\vepsi_{i}'\widehat{\mF}\right\Vert \\
		& =g_{21}+g_{22}+g_{23}+g_{24}.
	\end{align*}
	Consider now 
	\begin{align*}
		g_{21} & \leq\left\Vert T^{-2}\sum_{i=1}^{N}\left(\vepsi_{i}'\vepsi_{i}-\sigma_{i}^{2}\right)\vepsi_{i}'\mF\right\Vert \left\Vert \overline{\mC}\right\Vert +\left\Vert T^{-2}\sum_{i=1}^{N}\left(\vepsi_{i}'\vepsi_{i}-\sigma_{i}^{2}\right)\vepsi_{i}'\overline{\mU}\right\Vert \\
		& =g_{211}\left\Vert \overline{\mC}\right\Vert +g_{212}.
	\end{align*}
	We take squares of the first term in this decomposition and take expectations.
	This allows us to get 
	\begin{align*}
		\E\left(g_{211}^{2}\right) & \leq T^{-4}\sum_{i,j}^{N}\sum_{t,t',t'',t'''}^{T}\E\left[\left(\epsi_{i,t}^{2}-\sigma_{i}^{2}\right)\epsi_{i,t'}\epsi_{j,t''}\left(\epsi_{j,t'''}^{2}-\sigma_{j}^{2}\right)\right]\E\left(\vf_{t'}'\vf_{t''}\right),\\
		& =T^{-4}\sum_{i,j}^{N}\sum_{t,t''}^{T}\E\left[\left(\epsi_{i,t}^{2}-\sigma_{i}^{2}\right)\epsi_{i,t}\right]\E\left[\epsi_{j,t''}\left(\epsi_{j,t''}^{2}-\sigma_{j}^{2}\right)\right]\E\left(\vf_{t}'\vf_{t''}\right)\\
		& +T^{-4}\sum_{i=1}^{N}\sum_{t,t'}^{T}\E\left[\left(\epsi_{i,t}^{2}-\sigma_{i}^{2}\right)\left(\epsi_{j,t}^{2}-\sigma_{j}^{2}\right)\right]\E\left[\epsi_{i,t'}^{2}\right]\E\left(\vf_{t'}'\vf_{t''}\right)+\largeO\left(NT^{-2}\right)\\
		& =\largeO\left(NT^{-2}\right).
	\end{align*}
	Accordingly, $g_{211}=\largeO_{P}\left(T^{-1}\sqrt{N}\right)$ results from
	application of Chebyshev's inequality. When dealing with $g_{212},$we
	first split the entire expression using the Cauchy-Schwarz inequality and establish
	somewhat more conservative upper bounds on the resulting terms:
	\begin{align*}
		g_{212} & \leq\left(\sum_{i=1}^{N}\left\Vert T^{-1}\vepsi_{i}'\overline{\mU}\right\Vert ^{2}\right)^{1/2}\left(\sum_{i=1}^{N}\left\Vert T^{-1}\left(\vepsi_{i}'\vepsi_{i}-\sigma_{i}^{2}\right)\right\Vert ^{2}\right)^{1/2}\\
		& =\left[\largeO_{P}\left(N^{-1/2}\right)+\largeO_{P}\left(T^{-1/2}\right)\right]\largeO_{P}\left(\sqrt{N}T^{-1/2}\right)\\
		& =\largeO_{P}\left(T^{-1}\sqrt{N}\right)+\largeO_{P}\left(T^{-1/2}\right).
	\end{align*}
	Here, the second line result from \eqref{eq:sumi_eps_ubar} as well
	as a straightforward application of Markov's inequality. Given this
	result, we can state that 
	\[
	g_{21}=\largeO_{P}\left(T^{-1}\sqrt{N}\right)+\largeO_{P}\left(T^{-1/2}\right).
	\]
	We continue with the use of the Cauchy-Schwarz inequality to split up lengthy terms
	when handling the remaining terms $g_{22}$, $g_{23}$ and $g_{24}$.
	First, we write 
	\begin{align*}
		g_{22} & \leq\left(\sum_{i=1}^{N}\left\Vert T^{-1}\sum_{i=1}^{N}\vlambda_{i}\vepsi_{i}'\widehat{\mF}\right\Vert ^{2}\right)^{1/2}\left(\sum_{i=1}^{N}\left\Vert \vlambda_{i}\right\Vert ^{2}\right)^{1/2}\left\Vert \left(\overline{\mC}^{-1}\right)\right\Vert ^{2}\left\Vert T^{-1}\overline{\mU}'\mM_{\widehat{\mF}}\overline{\mU}\right\Vert \\
		& =\left[\largeO_{P}\left(\sqrt{N}T^{-1/2}\right)+\largeO_{P}\left(N^{-1/2}\right)\right]\largeO_{P}\left(\sqrt{N}\right)\left\{ \largeO_{P}\left(N^{-1}\right)+\largeO_{P}\left[\left(NT\right)^{-1/2}\right]\right\} \\
		& =\largeO_{P}\left(T^{-1}\sqrt{N}\right)+\largeO_{P}\left(T^{-1/2}\right)+\largeO_{P}\left(N^{-1}\right),
	\end{align*}
	which follows from results \eqref{eq:Ubar_MF_Ubar} and \eqref{eq:sumi_lam_eps_Fhat}.
	Next, we have
	\begin{align*}
		g_{23} & \leq\left(\sum_{i=1}^{N}\left\Vert T^{-1}\vepsi_{i}'\widehat{\mF}\right\Vert ^{2}\right)^{1/2}\left(\sum_{i=1}^{N}\left\Vert T^{-1}\vepsi_{i}'\mP_{\widehat{\mF}}\vepsi_{i}\right\Vert ^{2}\right)^{1/2}\\
		& =\left[\largeO_{P}\left(\sqrt{N}T^{-1/2}\right)+\largeO_{P}\left(N^{-1/2}\right)\right]\left[\largeO_{P}\left(T^{-1}\sqrt{N}\right)+\largeO_{P}\left(N^{-1}\right)\right]\\
		& =\largeO_{P}\left(NT^{-3/2}\right)+\largeO_{P}\left(T^{-1}\right)+\largeO_{P}\left[\left(NT\right)^{-1/2}\right]+\largeO_{P}\left(N^{-3/2}\right),
	\end{align*}
	which relies on results \eqref{eq:sumi_eps_Fhat} and \eqref{eq:sumi_eps_PF_eps}.
	Lastly, we consider 
	\begin{align*}
		g_{24} & =\left(\sum_{i=1}^{N}\left\Vert T^{-1}\vlambda_{i}\vepsi_{i}'\widehat{\mF}\right\Vert ^{2}\right)^{1/2}\left\Vert \overline{\mC}^{-1}\right\Vert \left(\sum_{i=1}^{N}\left\Vert T^{-1}\overline{\mU}'\mM_{\widehat{\mF}}\vepsi_{i}\right\Vert ^{2}\right)^{1/2}.
	\end{align*}
	The first term on the right-hand side above is $\largeO_{P}\left(\sqrt{N}T^{-1/2}\right)+\largeO_{P}\left(N^{-1/2}\right)$
	by result \eqref{eq:sumi_lam_eps_Fhat}. For the last term above we
	use result \eqref{eq:sumi_Ubar_MF_epsi} to show that its it is $\largeO_{P}\left(N^{-1/2}\right)+\largeO_{P}\left(T^{-1/2}\right).$
	Given these two intermediary results, we can conclude that 
	\begin{align*}
		g_{24} & =\left[\largeO_{P}\left(\sqrt{N}T^{-1/2}\right)+\largeO_{P}\left(N^{-1/2}\right)\right]\left[\largeO_{P}\left(N^{-1/2}\right)+\largeO_{P}\left(T^{-1/2}\right)\right]\\
		& =\largeO_{P}\left(T^{-1}\sqrt{N}\right)+\largeO_{P}\left(T^{-1/2}\right)+\largeO_{P}\left(N^{-1}\right),
	\end{align*}
	which is the last building block that we need to establish the main
	result of this lemma. Combining it with our results on $g_{21},g_{22}$,
	and $g_{23}$, we arrive at
	\[
	\left\Vert T^{-1}\sum_{i=1}^{N}\vepsi_{i}'\widehat{\mF}\left(\widehat{\sigma}_{i}^{2}-\sigma_{i}^{2}\right)\right\Vert =\largeO_{P}\left(NT^{-3/2}\right)+\largeO_{P}\left(T^{-1}\sqrt{N}\right)+\largeO_{P}\left(N^{-1}\right)+\largeO_{P}\left(T^{-1/2}\right)
	\]
\end{proof}
\begin{lemma}
	\label{lem:sumij_epsi_epsj-sigi2diff}Suppose that Assumptions \ref{ass:errors}-\ref{ass:rank}
	hold. Then,
	\[
	k_{N,T}\sum_{i=1}^N \sum_{j\neq i}^{N}\vepsi_{i}'\vepsi_{j}\left(\widehat{\sigma}_{i}^{2}-\sigma_{i}^{2}\right)=\largeO_{P}\left(\sqrt{N}T^{-1}\right)+\largeO_{P}\left(T^{-1/2}\right)+\largeO_{P}\left(N^{-1/2}\right)
	\]
\end{lemma}
\begin{proof}[Proof of Lemma \ref{lem:sumij_epsi_epsj-sigi2diff}]
	For notational simplicity, we abbreviate $\sum_{i=1}^N \sum_{j\neq i}^{N}$ by $\sum_{i \neq j}^{N}$ in this proof. We can expand the expression into 
	\begin{align*}
		& \left\Vert k_{N,T}\sum_{i\neq j}^{N}\vepsi_{i}'\vepsi_{j}\left(\widehat{\sigma}_{i}^{2}-\sigma_{i}^{2}\right)\right\Vert \\
		\leq & \left\Vert \frac{k_{N,T}}{T}\sum_{i\neq j}^{N}\vepsi_{i}'\vepsi_{j}\left(\vepsi_{i}'\vepsi_{i}-\sigma_{i}^{2}\right)\right\Vert + \left\Vert \frac{k_{N,T}}{T}\sum_{i\neq j}^{N}\vepsi_{i}'\vepsi_{j}\vlambda_{i}'\left(\overline{\mC}^{-1}\right)'\overline{\mU}'\mM_{\widehat{\mF}}\overline{\mU}\left(\overline{\mC}^{-1}\right)\vlambda_{i}\right\Vert .\\
		+ & \left\Vert \frac{k_{N,T}}{T}\sum_{i\neq j}^{N}\vepsi_{i}'\vepsi_{j}\vepsi_{i}'\mP_{\widehat{\mF}}\vepsi_{i}\right\Vert +\left\Vert \frac{2k_{N,T}}{T}\sum_{i\neq j}^{N}\vepsi_{i}'\vepsi_{j}\vlambda_{i}'\left(\overline{\mC}^{-1}\right)'\overline{\mU}'\mM_{\widehat{\mF}}\vepsi_{i}\right\Vert \\
		= & g_{31}+g_{32}+g_{33}+g_{34}
	\end{align*}
	In order to determine the order in probability of $g_{31}$ we can
	use result \eqref{eq:sumij_epsvar_epsvardev} in Lemma \ref{lem:buildingblocks_2WFE} which was established for the two-way fixed
	effects model but continues to apply. Hence
	\[
	g_{31}=\largeO_{P}\left(T^{-1/2}\right) + \largeO_{P}\left(T^{-1}\sqrt{N}\right).
	\]
	We continue with $g_{32}$ where we isolate the inner term $\left(\overline{\mC}^{-1}\right)'\overline{\mU}'\mM_{\widehat{\mF}}\overline{\mU}\left(\overline{\mC}^{-1}\right)$
	by the use of selection vectors for element $\ell$ out of $r$ element, formally denoted $\vs_{\ell_r}$:
	\begin{align*}
		\left|g_{32}\right| & \leq\frac{k_{N,T}}{T}\left(\sum_{\ell_{r},\ell'_{r}}^{r}\left\Vert \sum_{i\neq j}^{N}\vepsi_{i}'\vepsi_{j}\vlambda_{i}'\vs_{\ell_{r}}\vs_{\ell'_{r}}'\vlambda_{i}\right\Vert ^{2}\right)^{1/2}\left(\sum_{\ell_{r},\ell'_{r}}^{r}\left\Vert \vs_{\ell_{r}}'\left(\overline{\mC}^{-1}\right)'\overline{\mU}'\mM_{\widehat{\mF}}\overline{\mU}\left(\overline{\mC}^{-1}\right)\vs_{\ell'_{r}}\right\Vert ^{2}\right)^{1/2}\\
		& \leq k_{N,T} \left|\sum_{i\neq j}^{N}\vepsi_{i}'\vepsi_{j}\vlambda_{i}'\vlambda_{i}\right|\left\Vert \left(\overline{\mC}^{-1}\right)\right\Vert ^{2}\left\Vert T^{-1}\overline{\mU}'\mM_{\widehat{\mF}}\overline{\mU}\right\Vert .
	\end{align*}
	Here we square the term in absolute values and take expectations,
	resulting in
	\begin{align*}
		\E\left(\left|\sum_{i\neq j}^{N}\vepsi_{i}'\vepsi_{j}\vlambda_{i}'\vlambda_{i}\right|^{2}\right) & \leq\sum_{i,i'}^{N}\sum_{j\neq i}^{N}\sum_{j'\neq i'}^{N}\sum_{t,t'}^{T}\E\left(\epsi_{i,t}\epsi_{j,t}\epsi_{i't'}\epsi_{j',t'}\right)\E\left(\left\Vert \vlambda_{i}\right\Vert ^{4}\right)\\
		& =\sum_{i=1}^{N}\sum_{j\neq i}^{N}\sum_{t=1}^{T}\E\left(\epsi_{i,t}^{2}\right)\E\left(\epsi_{j,t}^{2}\right)\E\left(\left\Vert \vlambda_{i}\right\Vert ^{4}\right)\\
		& =\largeO\left(N^{2}T\right),
	\end{align*}
	so that $k_{N,T} \left|\sum_{i\neq j}^{N}\vepsi_{i}'\vepsi_{j}\vlambda_{i}'\vlambda_{i}\right|$ is
	stochastically bounded. Further using result \eqref{eq:Ubar_MF_Ubar},
	we arrive at 
	\[
	g_{32}=\largeO_{P}\left(N^{-1}\right)+\largeO_{P}\left[\left(NT\right)^{-1/2}\right].
	\]
	Next, consider 
	\[
	\left|g_{33}\right|\leq T k_{N,T} \left(\sum_{i=1}^{N}\left\Vert \sum_{j\neq i}^{N}T^{-1}\vepsi_{i}'\vepsi_{j}\right\Vert ^{2}\right)^{1/2}\left(\sum_{i=1}^{N}\left\Vert T^{-1}\vepsi_{i}'\mP_{\widehat{\mF}}\vepsi_{i}\right\Vert ^{2}\right)^{1/2}.
	\]
	Here we can apply result \eqref{eq:sumi_eps_PF_eps} to state that
	$\left(\sum_{i=1}^{N}\left\Vert T^{-1}\vepsi_{i}'\mP_{\widehat{\mF}}\vepsi_{i}\right\Vert ^{2}\right)^{1/2}=\largeO_{P}\left(T^{-1}\sqrt{N}\right)+\largeO_{P}\left(N^{-1}\right).$
	Additionally, we have 
	\begin{align*}
		\E\left[\sum_{i=1}^{N}\left\Vert \sum_{j\neq i}^{N}T^{-1}\vepsi_{i}'\vepsi_{j}\right\Vert ^{2}\right] & \leq T^{-2}\sum_{i=1}^{N}\sum_{j\neq i}^{N}\sum_{t=1}^{T}\E\left(\epsi_{i,t}^{2}\right)\E\left(\epsi_{j,t}^{2}\right)+0\\
		& =\largeO\left(N^{2}T^{-1}\right).
	\end{align*}
	This result allows us to arrive at 
	\begin{align*}
		\left|g_{33}\right| & =N^{-1}\sqrt{T} \largeO_{P}\left(NT^{-1/2}\right)\left[\largeO_{P}\left(T^{-1}\sqrt{N}\right)+\largeO_{P}\left(N^{-1}\right)\right]\\
		& =\largeO_{P}\left(T^{-1}\sqrt{N}\right)+\largeO_{P}\left(N^{-1}\right).
	\end{align*}
	
	Lastly, we have the term
	\[
	g_{34}\leq 2Tk_{N,T}\left[\sum_{i=1}^{N}\left\Vert T^{-1}\sum_{j\neq i}^{N}\vepsi_{i}'\vepsi_{j}\vlambda_{i}'\right\Vert ^{2}\right]^{1/2}\left\Vert \overline{\mC}^{-1}\right\Vert \left(\sum_{i=1}^{N}\left\Vert T^{-1}\overline{\mU}'\mM_{\widehat{\mF}}\vepsi_{i}\right\Vert ^{2}\right)^{1/2}.
	\]
	Concerning the right-hand side of the expression above, $\sum_{i=1}^{N}\left\Vert \sum_{j\neq i}^{N}T^{-1}\vepsi_{i}'\vepsi_{j}\vlambda_{i}'\right\Vert ^{2}=\largeO_{P}\left(N^{2}T^{-1}\right)$
	can be shown exactly as in the case of $g_{33}$ via application of
	Markov's inequality. The only extra requirement is $\E\left(\left\Vert \vlambda_{i}\right\Vert ^{2}\right)<\infty$,
	which is ensured by Assumption 4. Additionally, we use result \eqref{eq:sumi_Ubar_MF_epsi}
	to make a statement about the sum involving $\left\Vert T^{-1}\overline{\mU}'\mM_{\widehat{\mF}}\vepsi_{i}\right\Vert ^{2}$.
	Together with our previous result and the factor $k_{N,T}$ we can
	now state that
	\begin{align*}
		g_{34} & =N^{-1}\sqrt{T}\largeO_{P}\left(NT^{-1/2}\right)\left[\largeO_{P}\left(N^{-1/2}\right)+\largeO_{P}\left(T^{-1/2}\right)\right].\\
		& =\largeO_{P}\left(N^{-1/2}\right)+\largeO_{P}\left(T^{-1/2}\right)
	\end{align*}
	The result on $g_{34}$ is the last one that we require to establish
	an order in probability for the expression at the center of this lemma.
	Using the upper bounds derived for $g_{31},g_{32},g_{33}$ and $g_{34}$
	we conclude that 
	\[
	k_{N,T}\sum_{i\neq j}^{N}\frac{\vepsi_{i}'\vepsi_{j}}{\left(\sigma_{i}^{2}\right)^{3/2}\left(\sigma_{j}^{2}\right)^{1/2}}\left(\widehat{\sigma}_{i}^{2}-\sigma_{i}^{2}\right)=\largeO_{P}\left(T^{-1}\sqrt{N}\right)+\largeO_{P}\left(T^{-1/2}\right)+\largeO_{P}\left(N^{-1/2}\right)
	\]
	which was to be shown.
\end{proof}
\begin{lemma}
	\label{lem:sumi_sig2diff_sq_CCE}Suppose that Assumptions \ref{ass:errors}-\ref{ass:rank}.
	Then,
	\begin{align*}
		\sum_{i=1}^{N}\left(\widehat{\sigma}_{i}^{2}-\sigma_{i}^{2}\right)^{2} & =\largeO_{P}\left(N/T\right)+\largeO_{P}\left(N^{-1}\right)+\largeO_{P}\left(T^{-1}\right).\\
		\sum_{i=1}^{N}\left\Vert \left(\widehat{\sigma}_{i}^{2}-\sigma_{i}^{2}\right)\vlambda_{i}\right\Vert ^{2} & =\largeO_{P}\left(N/T\right)+\largeO_{P}\left(1\right)
	\end{align*}
\end{lemma}
\begin{proof}[Proof of Lemma \ref{lem:sumi_sig2diff_sq_CCE}]
	Using expression \eqref{eq:sigi2hat_CCE}, the left-hand side of the
	first result can be decomposed into 
	\begin{align*}
		\sum_{i=1}^{N}\left(\widehat{\sigma}_{i}^{2}-\sigma_{i}^{2}\right)^{2} & \leq5\sum_{i=1}^{N}\left(\frac{\vepsi_{i}'\vepsi_{i}}{T}-\sigma_{i}^{2}\right)^{2}+5T^{-2}\sum_{i=1}^{N}\left(\vlambda_{i}'\left(\overline{\mC}^{-1}\right)'\overline{\mU}'\mM_{\widehat{\mF}}\overline{\mU}\left(\overline{\mC}^{-1}\right)\vlambda_{i}\right)^{2}\\
		& +5T^{-2}\sum_{i=1}^{N}\left(\vepsi_{i}'\mP_{\widehat{\mF}}\vepsi_{i}\right)^{2}+10T^{-2}\sum_{i=1}^{N}\left(\vlambda_{i}'\left(\overline{\mC}^{-1}\right)'\overline{\mU}'\mM_{\widehat{\mF}}\vepsi_{i}\right)^{2}\\
		& =g_{41}+g_{42}+g_{43}+g_{44}.
	\end{align*}
	We begin with noting that $g_{41}=\largeO_{P}\left(NT^{-1}\right)+\largeO_{P}\left(N^{-1}\right)+\largeO_{P}\left(T^{-1}\right)$
	by Lemma \ref{lem:sigmahat} for the two-way fixed effects model. Next,
	we write term $g_{42}$ as 
	\begin{align*}
		g_{42} & \leq\left\Vert \left(\overline{\mC}^{-1}\right)\right\Vert ^{2}\left\Vert T^{-1}\overline{\mU}'\mM_{\widehat{\mF}}\overline{\mU}\right\Vert ^{2}\sum_{i=1}^{N}\left\Vert \vlambda_{i}\right\Vert ^{4}\\
		& =\largeO_{P}\left(N^{-1}\right)+\largeO_{P}\left(T^{-1}\right),
	\end{align*}
	where the order in probability holds due to result \eqref{eq:Ubar_MF_Ubar}
	and Assumption \ref{ass:loadings}. Lastly, results for the terms $g_{43}$ and $g_{44}$ are
	given by \eqref{eq:sumi_eps_PF_eps} and \eqref{eq:sumi_lami_Ubar_MF_epsi},
	respectively. Combining the results on $g_{41},g_{42},g_{43}$ and
	$g_{44}$, we arrive at the first result of this lemma. 
	
	Concerning the second result, we write
	\begin{align*}
		\sum_{i=1}^{N}\left\Vert \left(\widehat{\sigma}_{i}^{2}-\sigma_{i}^{2}\right)\vlambda_{i}\right\Vert ^{2} & \leq5\sum_{i=1}^{N}\left(\frac{\vepsi_{i}'\vepsi_{i}}{T}-\sigma_{i}^{2}\right)^{2}\left\Vert \vlambda_{i}\right\Vert ^{2}+5T^{-2}\sum_{i=1}^{N}\left(\vlambda_{i}'\left(\overline{\mC}^{-1}\right)'\overline{\mU}'\mM_{\widehat{\mF}}\overline{\mU}\left(\overline{\mC}^{-1}\right)\vlambda_{i}\right)^{2}\left\Vert \vlambda_{i}\right\Vert ^{2}\\
		& +5T^{-2}\sum_{i=1}^{N}\left(\vepsi_{i}'\mP_{\widehat{\mF}}\vepsi_{i}\right)^{2}\left\Vert \vlambda_{i}\right\Vert ^{2}+10T^{-2}\sum_{i=1}^{N}\left(\vlambda_{i}'\left(\overline{\mC}^{-1}\right)'\overline{\mU}'\mM_{\widehat{\mF}}\vepsi_{i}\right)^{2}\left\Vert \vlambda_{i}\right\Vert ^{2}\\
		& =g_{45}+g_{46}+g_{47}+g_{48}.
	\end{align*}
	The order of $g_{45}$ is the same as that of $g_{41}$. This can
	be seen by noting that \[\E\left(g_{45}\right)=\sum_{i=1}^{N}\E\left[\left(\frac{\vepsi_{i}'\vepsi_{i}}{T}-\sigma_{i}^{2}\right)^{2}\right]\left(\E\left\Vert \vlambda_{i}\right\Vert ^{2}\right),\]
	which extends the expectation that is used in the proof of Lemma \ref{lem:sigmahat} for the two-way fixed effects model simply by $\E[\left\Vert \vlambda_{i}\right\Vert ^{2}]$.
	Since the second moments of factor loadings are assumed to be finite
	by assumption, their presence does not affect the order in probability.
	Hence, $g_{45}=\largeO_{P}\left(NT^{-1}\right)+\largeO_{P}\left(N^{-1}\right)+\largeO_{P}\left(T^{-1}\right)$
	holds. The same conclusion can be drawn for terms $g_{47}$ and $g_{48}$,
	where the proofs of terms \eqref{eq:sumi_eps_PF_eps} and \eqref{eq:sumi_lami_Ubar_MF_epsi}
	are merely extended with an additional term $\vlambda_{i}$
	in the sum over cross-sections.  For term $g_{46},$proceeding in
	the same way would require us to assume finiteness of $\E\left(\left\Vert \vlambda_{i}\right\Vert ^{6}\right)$,
	so we choose a more conservative bound that is valid with finite fourth
	moments of $\vlambda_{i}$. Since $\sum_{i=1}^{N}\left\Vert \vlambda_{i}\right\Vert ^{4}\left\Vert \vlambda_{i}\right\Vert ^{2}\leq\sum_{i=1}^{N}\left\Vert \vlambda_{i}\right\Vert ^{4}\sup_{i}\left\Vert \vlambda_{i}\right\Vert ^{2}\leq\left(\sum_{i=1}^{N}\left\Vert \vlambda_{i}\right\Vert ^{4}\right)\left(\sum_{i=1}^{N}\left\Vert \vlambda_{i}\right\Vert ^{2}\right)$,
	we get 
	\begin{align*}
		g_{46} & \leq\left\Vert \left(\overline{\mC}^{-1}\right)\right\Vert \left\Vert T^{-1}\overline{\mU}'\mM_{\widehat{\mF}}\overline{\mU}\right\Vert ^{2}\sum_{i=1}^{N}\left\Vert \vlambda_{i}\right\Vert ^{4}\sum_{i=1}^{N}\left\Vert \vlambda_{i}\right\Vert ^{2}\\
		& =\largeO_{P}\left(1\right)+\largeO_{P}\left(NT^{-1}\right).
	\end{align*}
	Combining this result with the previous three leads to 
	\[
	\sum_{i=1}^{N}\left\Vert \left(\widehat{\sigma}_{i}^{2}-\sigma_{i}^{2}\right)\vlambda_{i}\right\Vert ^{2}=\largeO_{P}\left(1\right)+\largeO_{P}\left(NT^{-1}\right).
	\]
\end{proof}
\begin{lemma}
	\label{lem:sumi_sigi2diff_epsi_Ubar}Suppose that Assumptions \ref{ass:errors}-\ref{ass:rank}
	hold. Then,
	\[
	\sum_{i=1}^{N}\left\Vert T^{-1}\left(\widehat{\sigma}_{i}^{2}-\sigma_{i}^{2}\right)\vepsi_{i}'\overline{\mU}\right\Vert ^{2}=\largeO_{P}\left(NT^{-3}\right)+\largeO_{P}\left(T^{-2}\right)+\largeO_{P}\left[\left(NT\right)^{-1}\right]+\largeO_{P}\left(N^{-2}\right)
	\]
\end{lemma}
\begin{proof}[Proof of Lemma \ref{lem:sumi_sigi2diff_epsi_Ubar}]
	This term is given by 
	\begin{align*}
		\sum_{i=1}^{N}\left\Vert T^{-1}\left(\widehat{\sigma}_{i}^{2}-\sigma_{i}^{2}\right)\vepsi_{i}'\overline{\mU}\right\Vert ^{2} & \leq\sum_{i=1}^{N}\left\Vert T^{-1}\left(\frac{\vepsi_{i}'\vepsi_{i}}{T}-\sigma_{i}^{2}\right)\vepsi_{i}'\overline{\mU}\right\Vert ^{2}+\sum_{i=1}^{N}\left\Vert T^{-2}\vlambda_{i}'\left(\overline{\mC}^{-1}\right)'\overline{\mU}'\mM_{\widehat{\mF}}\overline{\mU}\left(\overline{\mC}^{-1}\right)\vlambda_{i}\vepsi_{i}'\overline{\mU}\right\Vert ^{2}\\
		& +4\sum_{i=1}^{N}\left\Vert T^{-2}\vepsi_{i}'\mP_{\widehat{\mF}}\vepsi_{i}\vepsi_{i}'\overline{\mU}\right\Vert ^{2}+\sum_{i=1}^{N}\left\Vert T^{-2}\vlambda_{i}'\left(\overline{\mC}^{-1}\right)'\overline{\mU}'\mM_{\widehat{\mF}}\vepsi_{i}\vepsi_{i}'\overline{\mU}\right\Vert ^{2}\\
		& =g_{51}+g_{52}+g_{53}+g_{54}
	\end{align*}
	Consider the first term which we can decompose into
	\begin{align*}
		g_{51} & =N^{-2}T^{-4}\sum_{i,j,j'}^{N}\sum_{t,t',t'',t'''}^{T}\left(\epsi_{i,t}^{2}-\sigma_{i}^{2}\right)\epsi_{i,t'}\epsi_{j,t'}\epsi_{j,t''}\epsi_{i,t''}\left(\epsi_{i,t'''}^{2}-\sigma_{i}^{2}\right)\\
		& +N^{-2}T^{-4}\sum_{i,j,j'}^{N}\sum_{t,t',t'',t'''}^{T}\left(\epsi_{i,t}^{2}-\sigma_{i}^{2}\right)\epsi_{i,t'}\epsi_{i,t''}\left(\epsi_{i,t'''}^{2}-\sigma_{i}^{2}\right)\ve_{j,t'}'\ve_{j',t''}\\
		& =g_{511}+g_{512}.
	\end{align*}
	Taking expectations of $g_{511}$, we can characterize it as follows
	in terms of its leading components:
	\begin{align*}
		\E\left(g_{511}\right) & = \E\left[N^{-2}T^{-4}\sum_{i,j,j'}^{N}\sum_{t,t',t'',t'''}^{T}\left(\epsi_{i,t}^{2}-\sigma_{i}^{2}\right)\epsi_{i,t'}\epsi_{j,t'}\epsi_{j',t''}\epsi_{i,t''}\left(\epsi_{i,t'''}^{2}-\sigma_{i}^{2}\right)\right]\\
		& = N^{-2}T^{-4}\sum_{i=1}^{N}\sum_{t,t',t''}^{T}\E\left[\left(\epsi_{i,t}^{2}-\sigma_{i}^{2}\right)^{2}\right]\E\left(\epsi_{i,t'}^{2}\right)\E\left(\epsi_{i,t''}^{2}\right)  \\
		& + N^{-2}T^{-4}\sum_{i,j}^{N}\sum_{t,t'}^{T}\E\left[\left(\epsi_{i,t}^{2}-\sigma_{i}^{2}\right)^{2}\right] \E\left(\epsi_{j,t'}^{2}\right)\E\left(\epsi_{i,t'}^{2}\right)+ \smallO\left[\left(NT\right)^{-1}\right]+\smallO\left(T^{-2}\right)\\
		& = \largeO\left[\left(NT\right)^{-1}\right]+\largeO\left(T^{-2}\right),
	\end{align*}
	so that $g_{511}=\largeO_{P}\left[\left(NT\right)^{-1}\right]+\largeO_{P}\left(T^{-2}\right)$.
	Using an analogous argument, it can be shown that $g_{512}=\largeO_{P}\left(T^{-2}\right)$.
	Accordingly, it holds that 
	\[
	g_{51}=\largeO_{P}\left[\left(NT\right)^{-1}\right]+\largeO_{P}\left(T^{-2}\right).
	\]
	
	We continue with $g_{52}$ to which we apply results \eqref{eq:Ubar_MF_Ubar}
	and \eqref{eq:sumi_eps_ubar_lam} to arrive at
	\begin{align*}
		g_{52} & \leq\left\Vert \overline{\mC}^{-1}\right\Vert ^{4}\left\Vert T^{-1}\overline{\mU}'\mM_{\widehat{\mF}}\overline{\mU}\right\Vert ^{2}\left(\sum_{i=1}^{N}\left\Vert \vlambda_{i}'\right\Vert ^{2}\right)\left(T^{-2}\sum_{i=1}^{N}\left\Vert \vlambda_{i}\vepsi_{i}'\overline{\mU}\right\Vert ^{2}\right)\\
		& =\left\{ \largeO_{P}\left(N^{-2}\right)+\largeO_{P}\left[\left(NT\right)^{-1}\right]\right\} \largeO_{P}\left(N\right)\left[\largeO_{P}\left(N^{-1}\right)+\largeO_{P}\left(T^{-1}\right)\right]\\
		& =\largeO_{P}\left(N^{-2}\right)+\largeO_{P}\left[\left(NT\right)^{-1}\right]+\largeO_{P}\left(T^{-2}\right).
	\end{align*}
	Concerning term $g_{53}$, we write 
	\begin{align*}
		g_{53} & \leq4\left(\sum_{i=1}^{N}\left\Vert T^{-1}\vepsi_{i}'\mP_{\widehat{\mF}}\vepsi_{i}\right\Vert ^{2}\right)\left(T^{-2}\sum_{i=1}^{N}\left\Vert \vepsi_{i}'\overline{\mU}\right\Vert ^{2}\right)\\
		& =\left[\largeO_{P}\left(NT^{-2}\right)+\largeO_{P}\left(N^{-2}\right)\right]\left[\largeO_{P}\left(N^{-1}\right)+\largeO_{P}\left(T^{-1}\right)\right]\\
		& =\largeO_{P}\left(NT^{-3}\right)+\largeO_{P}\left(T^{-2}\right)+\largeO_{P}\left(N^{-2}T^{-1}\right)+\largeO_{P}\left(N^{-3}\right),
	\end{align*}
	where results \eqref{eq:sumi_eps_ubar} and \eqref{eq:sumi_eps_PF_eps}
	are used to arrive at the second line. Lastly, via results \eqref{eq:sumi_eps_ubar_lam}
	and \eqref{eq:sumi_Ubar_MF_epsi} we arrive at
	\begin{align*}
		g_{54} & =\left\Vert \overline{\mC}^{-1}\right\Vert ^{2}\left(T^{-2}\sum_{i=1}^{N}\left\Vert \vlambda_{i}\vepsi_{i}'\overline{\mU}\right\Vert ^{2}\right)\left(\sum_{i=1}^{N}\left\Vert T^{-1}\overline{\mU}'\mM_{\widehat{\mF}}\vepsi_{i}\right\Vert ^{2}\right)\\
		& =\left[\largeO_{P}\left(N^{-1}\right)+\largeO_{P}\left(T^{-1}\right)\right]\left[\largeO_{P}\left(N^{-1}\right)+\largeO_{P}\left(T^{-1}\right)\right]\\
		& =\largeO_{P}\left(N^{-2}\right)+\largeO_{P}\left[\left(NT\right)^{-1}\right]+\largeO_{P}\left(T^{-2}\right).
	\end{align*}
	Combining the last four intermediary results, we hence arrive at 
	\[
	\sum_{i=1}^{N}\left\Vert T^{-1}\left(\widehat{\sigma}_{i}^{2}-\sigma_{i}^{2}\right)\vepsi_{i}'\overline{\mU}\right\Vert ^{2}=\largeO_{P}\left(NT^{-3}\right)+\largeO_{P}\left(T^{-2}\right)+\largeO_{P}\left[\left(NT\right)^{-1}\right]+\largeO_{P}\left(N^{-2}\right).
	\]
\end{proof}
\begin{lemma}
	\label{lem:sumi_sigi2diff_epsi_Fhat}Suppose that Assumptions \ref{ass:errors}-\ref{ass:rank}
	hold. Then,
	\[
	\sum_{i=1}^{N}\left\Vert T^{-1}\left(\widehat{\sigma}_{i}^{2}-\sigma_{i}^{2}\right)\vepsi_{i}'\widehat{\mF}\right\Vert ^{2}=\largeO_{P}\left(NT^{-2}\right)+\largeO_{P}\left(T^{-1}\right)+\largeO_{P}\left(N^{-2}\right).
	\]
\end{lemma}
\begin{proof}[Proof of Lemma \ref{lem:sumi_sigi2diff_epsi_Fhat}]
	The expression of interest can bounded from above by 
	\[
	\sum_{i=1}^{N}\left\Vert T^{-1}\left(\widehat{\sigma}_{i}^{2}-\sigma_{i}^{2}\right)\vepsi_{i}'\widehat{\mF}\right\Vert ^{2}\leq\sum_{i=1}^{N}\left\Vert T^{-1}\left(\widehat{\sigma}_{i}^{2}-\sigma_{i}^{2}\right)\vepsi_{i}'\mF\right\Vert ^{2}\left\Vert \overline{\mC}\right\Vert ^{2}+\sum_{i=1}^{N}\left\Vert T^{-1}\left(\widehat{\sigma}_{i}^{2}-\sigma_{i}^{2}\right)\vepsi_{i}'\overline{\mU}\right\Vert ^{2},
	\]
	where we can use Lemma \ref{lem:sumi_sigi2diff_epsi_Ubar} for the
	second term on the right-hand side above. It remains to establish
	a corresponding order in probability for
	\begin{align*}
		&\sum_{i=1}^{N}\left\Vert T^{-1}\left(\widehat{\sigma}_{i}^{2}-\sigma_{i}^{2}\right)\vepsi_{i}'\mF\right\Vert ^{2} \\
		\leq & T^{-2}\sum_{i=1}^{N}\left\Vert \left(\frac{\vepsi_{i}'\vepsi_{i}}{T}-\sigma_{i}^{2}\right)\vepsi_{i}'\mF\right\Vert ^{2}+T^{-2}\sum_{i=1}^{N}\left\Vert T^{-1}\vlambda_{i}'\left(\overline{\mC}^{-1}\right)'\overline{\mU}'\mM_{\widehat{\mF}}\overline{\mU}\left(\overline{\mC}^{-1}\right)\vlambda_{i}\vepsi_{i}'\mF\right\Vert ^{2}\\
		+ & 4T^{-2}\sum_{i=1}^{N}\left\Vert T^{-1}\vepsi_{i}'\mP_{\widehat{\mF}}\vepsi_{i}\vepsi_{i}'\mF\right\Vert ^{2}+T^{-2}\sum_{i=1}^{N}\left\Vert T^{-1}\vlambda_{i}'\left(\overline{\mC}^{-1}\right)'\overline{\mU}'\mM_{\widehat{\mF}}\vepsi_{i}\vepsi_{i}'\mF\right\Vert ^{2}\\
		= & g_{55}+g_{56}+g_{57}+g_{58}.
	\end{align*}
	Consider now $g_{55}$, whose expected value is given by
	\begin{align*}
		\E\left(g_{55}\right) & =T^{-4}\sum_{i=1}^{N}\sum_{t,t''}^{T}\E\left[\left(\epsi_{i,t}^{2}-\sigma_{i}^{2}\right)^{2}\right]\E\left(\epsi_{i,t''}^{2}\right)\E\left(\vf_{t''}'\vf_{t''}\right)\\
		& +T^{-4}\sum_{i=1}^{N}\sum_{t,t'}^{T}\E\left(\epsi_{i,t}^{3}\right)\E\left(\epsi_{i,t'}^{3}\right)\E\left(\vf_{t}'\vf_{t'}\right)+\largeO\left(NT^{-2}\right)\\
		& =\largeO\left(NT^{-2}\right),
	\end{align*}
	implying $g_{55}=\largeO_{P}\left(NT^{-2}\right)$. We continue with $g_{56}$
	where we can write
	\begin{align*}
		g_{56} & \leq\left\Vert \overline{\mC}^{-1}\right\Vert ^{4}\left\Vert T^{-1}\overline{\mU}'\mM_{\widehat{\mF}}\overline{\mU}\right\Vert ^{2}\left(\sum_{i=1}^{N}\left\Vert \vlambda_{i}'\right\Vert ^{2}\right)\left(\sum_{i=1}^{N}\left\Vert \frac{\vlambda_{i}\vepsi_{i}'\mF}{T}\right\Vert ^{2}\right)\\
		& =\left\{ \largeO_{P}\left(N^{-2}\right)+\largeO_{P}\left[\left(NT\right)^{-1}\right]\right\} \largeO_{P}\left(N\right)\left[\largeO_{P}\left(NT^{-1}\right)+\largeO_{P}\left(N^{-1}\right)\right]\\
		& =\largeO_{P}\left(NT^{-2}\right)+\largeO_{P}\left(T^{-1}\right)+\largeO_{P}\left(N^{-2}\right).
	\end{align*}
	The second line above is the consequence of applying results\eqref{eq:sumi_lam_eps_Fhat}
	and \eqref{eq:Ubar_MF_Ubar}. It is valid to use the former in order
	to arrive at an order in probability for $\sum_{i=1}^{N}\left\Vert \vlambda_{i}\vepsi_{i}'\mF\right\Vert ^{2}$
	since $\widehat{\mF}=\mF\overline{\mC}+\overline{\mU}$.
	We continue with the third term, $g_{57}$, which is written out as
	\begin{align*}
		g_{57} & \leq\sum_{i=1}^{N}\left\lVert T^{-1}\vepsi_{i}'\widehat{\mF}\right\rVert ^{4}\left\lVert \left(T^{-1}\widehat{\mF}'\widehat{\mF}\right)\right\rVert ^{2}\left\lVert T^{-1}\mF'\vepsi_{i}\right\rVert ^{2}\\
		& \leq M\left\lVert \left(T^{-1}\widehat{\mF}'\widehat{\mF}\right)\right\rVert ^{2}\sum_{i=1}^{N}\left\lVert T^{-1}\vepsi_{i}'\mF\right\rVert ^{6}+M\left\lVert \left(T^{-1}\widehat{\mF}'\widehat{\mF}\right)\right\rVert ^{2}\left(\sum_{i=1}^{N}\left\lVert T^{-1}\vepsi_{i}'\overline{\mU}\right\rVert ^{4}\right)\left(\sum_{i=1}^{N}\left\lVert T^{-1}\mF'\vepsi_{i}\right\rVert ^{2}\right)\\
		& =g_{571}+g_{572}.
	\end{align*}
	Here, note that 
	\[
	\sum_{i=1}^{N}\left\lVert T^{-1}\vepsi_{i}'\mF\right\rVert ^{6}=\sum_{i=1}^{N}\sum_{t,t',t'',t''',t'''',t'''''}^{T}\epsi_{i,t}\vf_{t}'\vf_{t'}\epsi_{i,t'}\epsi_{i,t''}\vf_{t''}'\vf_{t'''}\epsi_{i,t'''}\epsi_{i,t''''}\vf_{t''''}'\vf_{t'''''}\epsi_{i,t'''''}.
	\]
	The expectation of this term is given by
	\begin{align*}
		\E\left(\sum_{i=1}^{N}\left\lVert T^{-1}\vepsi_{i}'\mF\right\rVert ^{6}\right) & =T^{-6}\sum_{i=1}^{N}\left[\sum_{t=1}^{T}\E\left(\epsi_{i,t}^{2}\right)\E\left(\left\lVert \vf_{t}\right\rVert ^{2}\right)\right]^{3}+\smallO\left(NT^{-3}\right)\\
		& =\largeO\left(NT^{-3}\right),
	\end{align*}
	from which $g_{571}=\largeO_{P}\left(NT^{-3}\right)$ follows. In order
	to get a corresponding result for $g_{572}$, we need to investigate
	the expression
	\begin{align*}
		\sum_{i=1}^{N}\left\lVert T^{-1}\vepsi_{i}'\overline{\mU}\right\rVert ^{4} & \leq3T^{-4}\sum_{i=1}^{N}\left(\vepsi_{i}'\overline{\vepsi}\overline{\vepsi}'\vepsi_{i}\right)^{2}+3T^{-4}\sum_{i=1}^{N}\left(\vepsi_{i}'\overline{\ve}\overline{\ve}'\vepsi_{i}\right)^{2}\\
		& =\left(NT\right)^{-4}\sum_{i,i',i'',i''',i''''}^{N}\sum_{t,t',t'',t'''}^{T}\epsi_{i,t}\epsi_{i',t}\epsi_{i'',t'}\epsi_{i,t'}\epsi_{i,t''}\epsi_{i''',t''}\epsi_{i'''',t'''}\epsi_{i,t'''}\\
		& +\left(NT\right)^{-4}\sum_{i,i',i'',i''',i''''}^{N}\sum_{t,t',t'',t'''}^{T}\epsi_{i,t}\ve'_{i',t}\ve_{i'',t'}\epsi_{i,t'}\epsi_{i,t''}\ve'_{i''',t''}\ve_{i'''',t'''}\epsi_{i,t'''}.
	\end{align*}
	The first term in the decomposition above has expected value
	\begin{align*}
		& \E\left(T^{-4}\sum_{i=1}^{N}\left(\vepsi_{i}'\overline{\vepsi}\overline{\vepsi}'\vepsi_{i}\right)^{2}\right)\\
		= & \left(NT\right)^{-4}\sum_{i=1}^{N}\sum_{t,t',t'',t'''}^{T}\E\left(\epsi_{i,t}^{2}\right)\E\left(\epsi_{i,t'}^{2}\right)\E\left(\epsi_{i,t''}^{2}\right)\E\left(\epsi_{i,t'''}^{2}\right)+3\left(NT\right)^{-4}\sum_{i,i',i'''}^{N}\sum_{t,t''}^{T}\E\left(\epsi_{i,t}^{2}\right)\E\left(\epsi_{i',t}^{2}\right)\E\left(\epsi_{i,t''}^{2}\right)\E\left(\epsi_{i''',t''}^{2}\right)\\
		+ & 2\left(NT\right)^{-4}\sum_{i,i'''}^{N}\sum_{t,t',t''}^{T}\E\left(\epsi_{i,t}^{2}\right)\E\left(\epsi_{i,t'}^{2}\right)\E\left(\epsi_{i,t''}^{2}\right)\E\left(\epsi_{i''',t''}^{2}\right)+\smallO\left(N^{-3}\right)+\smallO\left(N^{-1}T^{-2}\right)+\smallO\left(N^{-2}T^{-1}\right)\\
		= & \largeO\left(N^{-1}T^{-2}\right)+\largeO\left(N^{-3}\right)+\largeO\left(N^{-2}T^{-1}\right).
	\end{align*}
	Analogously, we can write the expectation of the second term as 
	\begin{align*}
		& \E\left(T^{-4}\sum_{i=1}^{N}\left(\vepsi_{i}'\overline{\ve}\overline{\ve}'\vepsi_{i}\right)^{2}\right) \\  
		= & N^{-2}T^{-4}\sum_{i,i',i''}^{N}\sum_{t,t'}^{T}\E\left(\epsi_{i,t}^{2}\right)\E\left(\epsi_{i,t'}^{2}\right)\E\left(\left\lVert \ve_{i',t}\right\rVert ^{2}\right)\E\left(\left\lVert \ve'_{i'',t'}\right\rVert ^{2}\right)+\smallO\left(N^{-1}T^{-2}\right)\\
		= &\largeO\left(N^{-1}T^{-2}\right).
	\end{align*}
	Consequently, $\sum_{i=1}^{N}\left\lVert T^{-1}\vepsi_{i}'\overline{\mU}\right\rVert ^{4}=\largeO_{P}\left(N^{-1}T^{-2}\right)+\largeO_{P}\left(N^{-2}T^{-1}\right)+\largeO_{P}\left(N^{-3}\right)$.
	In order to arrive at an order in probability for $g_{572}$, we also
	need a result for $\sum_{i=1}^{N}\left\lVert \mF'\vepsi_{i}\right\rVert ^{2}$,
	a term that is completely contained in $\sum_{i=1}^{N}\left\lVert \widehat{\mF}'\vepsi_{i}\right\rVert ^{2}$.
	Accordingly, we can use result \eqref{eq:sumi_eps_Fhat} to arrive
	at the upper bound $\sum_{i=1}^{N}\left\lVert T^{-1}\mF'\vepsi_{i}\right\rVert ^{2}=\largeO_{P}\left(NT^{-1}\right)+\largeO_{P}\left(N^{-1}\right)$.
	Using this result as well as the former, we can state that
	\begin{align*}
		g_{572} & =\left[\largeO_{P}\left(N^{-1}T^{-2}\right)+\largeO_{P}\left(N^{-2}T^{-1}\right)+\largeO_{P}\left(N^{-3}\right)\right]\left[\largeO_{P}\left(NT^{-1}\right)+\largeO_{P}\left(N^{-1}\right)\right]\\
		& =\largeO_{P}\left(T^{-3}\right)+\largeO_{P}\left(N^{-2}T^{-1}\right)+\largeO_{P}\left(N^{-1}T^{-2}\right)+\largeO_{P}\left(N^{-4}\right).
	\end{align*}
	If we now add our result on $g_{571}$, we arrive at 
	\[
	g_{57}=\largeO_{P}\left(NT^{-3}\right)+\largeO_{P}\left(N^{-1}T^{-2}\right)+\largeO_{P}\left(N^{-2}T^{-1}\right)+\largeO_{P}\left(N^{-4}\right)
	\]
	
	Lastly, application of results \eqref{eq:sumi_lam_eps_Fhat} and \eqref{eq:sumi_Ubar_MF_epsi}
	leads to
	\begin{align*}
		g_{58} & =\left\Vert \overline{\mC}^{-1}\right\Vert ^{2}\left(\sum_{i=1}^{N}\left\Vert T^{-1}\vlambda_{i}\vepsi_{i}'\mF\right\Vert ^{2}\right)\left(\sum_{i=1}^{N}\left\Vert T^{-1}\overline{\mU}'\mM_{\widehat{\mF}}\vepsi_{i}\right\Vert ^{2}\right)\\
		& =\left[\largeO_{P}\left(NT^{-1}\right)+\largeO_{P}\left(N^{-1}\right)\right]\left[\largeO_{P}\left(N^{-1}\right)+\largeO_{P}\left(T^{-1}\right)\right]\\
		& =\largeO_{P}\left(NT^{-2}\right)+\largeO_{P}\left(T^{-1}\right)+\largeO_{P}\left(N^{-2}\right).
	\end{align*}
	Combining our results on terms $g_{55}$, $g_{56}$, $g_{57}$ and
	$g_{58}$, we arrive at 
	\[
	\sum_{i=1}^{N}\left\Vert T^{-1}\left(\widehat{\sigma}_{i}^{2}-\sigma_{i}^{2}\right)\vepsi_{i}'\mF\right\Vert ^{2}=\largeO_{P}\left(NT^{-2}\right)+\largeO_{P}\left(T^{-1}\right)+\largeO_{P}\left(N^{-2}\right).
	\]
	Together with Lemma \ref{lem:sumi_sigi2diff_epsi_Ubar}, this result
	allows us to conclude that
	\[
	\sum_{i=1}^{N}\left\Vert T^{-1}\left(\widehat{\sigma}_{i}^{2}-\sigma_{i}^{2}\right)\vepsi_{i}'\widehat{\mF}\right\Vert ^{2}=\largeO_{P}\left(NT^{-2}\right)+\largeO_{P}\left(T^{-1}\right)+\largeO_{P}\left(N^{-2}\right).
	\]
\end{proof}
\begin{lemma}
	\label{lem:Taylor_a21*order}Suppose that Assumptions \ref{ass:errors}-\ref{ass:rank} hold. Then,
	\[
	\sum_{i=1}^{N}\left[T^{-1}\sum_{j\neq i}^{N}\frac{\widehat{\vepsi}_{i}'\widehat{\vepsi}_{j}}{\sigma_{j}}\left(\widehat{\sigma}_{i}^{2}-\sigma_{i}^{2}\right)\right]^{2}=\largeO_{P}\left(N^{2}T^{-2}\right)+\largeO_{P}\left(NT^{-1}\right)+\largeO_{P}\left(1\right)+\largeO_{P}\left(N^{-1}T\right)+\largeO_{P}\left(N^{-2}T^{2}\right)
	\]
\end{lemma}
\begin{proof}
	Analogous to our treatment of term $a_{1}^{*}$ we disregard from
	the scaling $\sigma_{i}^{-1}$ since its values
	have a lower bound above zero. When using a form of Markov's inequality
	to establish orders in probability, we can find an upper bound for
	any component of the expression of interest for this lemma. 
	
	We decompose the expression of interest into
	\begin{align*}
		& \sum_{i=1}^{N}\left[T^{-1}\sum_{j\neq i}^{N}\widehat{\vepsi}_{i}'\widehat{\vepsi}_{j}\left(\widehat{\sigma}_{i}^{2}-\sigma_{i}^{2}\right)\right]^{2}\\
		= & T^{-2}\sum_{i=1}^{N}\left[\sum_{j\neq i}^{N}\vepsi_{i}'\vepsi_{j}\left(\widehat{\sigma}_{i}^{2}-\sigma_{i}^{2}\right)\right]^{2}+T^{-2}\sum_{i=1}^{N}\left[\sum_{j\neq i}^{N}\vlambda_{i}'\left(\overline{\mC}^{-1}\right)'\overline{\mU}'\mM_{\widehat{\mF}}\overline{\mU}\left(\overline{\mC}^{-1}\right)\vlambda_{j}\left(\widehat{\sigma}_{i}^{2}-\sigma_{i}^{2}\right)\right]^{2}\\
		+ & T^{-2}\sum_{i=1}^{N}\left[\sum_{j\neq i}^{N}\vepsi_{i}'\mP_{\widehat{\mF}}\vepsi_{j}\left(\widehat{\sigma}_{i}^{2}-\sigma_{i}^{2}\right)\right]^{2}+T^{-2}\sum_{i=1}^{N}\left[\sum_{j\neq i}^{N}\vlambda_{i}'\left(\overline{\mC}^{-1}\right)'\overline{\mU}'\mM_{\widehat{\mF}}\vepsi_{j}\left(\widehat{\sigma}_{i}^{2}-\sigma_{i}^{2}\right)\right]^{2}\\
		+T^{-2} & \sum_{i=1}^{N}\left[\sum_{j\neq i}^{N}\vepsi_{i}'\mM_{\widehat{\mF}}\overline{\mU}\left(\overline{\mC}^{-1}\right)\vlambda_{j}\left(\widehat{\sigma}_{i}^{2}-\sigma_{i}^{2}\right)\right]^{2}\\
		= & g_{61}+g_{62}+g_{63}+g_{64}+g_{65}.
	\end{align*}
	Consider expression $g_{62}$ which can be written 
	\[
	g_{62}\leq\sum_{i=1}^{N} \left( \left\Vert \left(\widehat{\sigma}_{i}^{2}-\sigma_{i}^{2}\right)\vlambda_{i}'\right\Vert ^{2} \left\Vert \sum_{j\neq i}^{N}\vlambda_{j}\right\Vert ^{2} \right)  \left\Vert T^{-1}\overline{\mU}'\mM_{\widehat{\mF}}\overline{\mU}\right\Vert^{2} \left\Vert \overline{\mC}^{-1}\right\Vert^{4}.
	\]
The order in probability of  $\left\Vert T^{-1}\overline{\mU}'\mM_{\widehat{\mF}}\overline{\mU}\right\Vert^{2}$ is determined by result \eqref{eq:Ubar_MF_Ubar}. For the sum over $i$, we can use Lemma  \ref{lem:sumi_sig2diff_sq_CCE} despite the presence of the additional term $\left\Vert \sum_{j\neq i}^{N}\vlambda_{j}\right\Vert ^{2}$. To appreciate that this additional term does not change the results of Lemma \ref{lem:sumi_sig2diff_sq_CCE}, note that the proof of Lemma \eqref{eq:Ubar_MF_Ubar} mainly relies on Markov's inequality. Given independence of errors and loadings as well as independence of factor loadings across cross-sections, the expectation of $\left\Vert \sum_{j\neq i}^{N}\vlambda_{j}\right\Vert ^{2}$ can hence always be separated from the expectation of terms that are indexed by $i$. For this reason, we can conclude that 
	\begin{align}
		g_{62} & =\left[\largeO_{P}\left(1\right)+\largeO_{P}\left(NT^{-1}\right)\right]\left\{ \largeO_{P}\left(N^{-2}\right)+\largeO_{P}\left[\left(NT\right)^{-1}\right]\right\} \largeO_{P}\left(N\right)\nonumber \\
		& =\largeO_{P}\left(N^{-1}\right)+\largeO_{P}\left(T^{-1}\right)+\largeO_{P}\left(NT^{-2}\right).\label{eq:g62_order}
	\end{align}
	Next, we focus on expression $g_{64}$ which can be bounded in a very
	similar way. We write
	\begin{align}
		g_{64} & \leq\sum_{i=1}^{N}\left\Vert \left(\widehat{\sigma}_{i}^{2}-\sigma_{i}^{2}\right)\vlambda_{i}'\right\Vert ^{2}\left\Vert \overline{\mC}^{-1}\right\Vert ^{2}\left\Vert T^{-1}\sum_{j\neq i}^{N}\overline{\mU}'\mM_{\widehat{\mF}}\vepsi_{j}\right\Vert ^{2}\nonumber \\
		& =\left[\largeO_{P}\left(1\right)+\largeO_{P}\left(NT^{-1}\right)\right]\left[\largeO_{P}\left(1\right)+\largeO_{P}\left(NT^{-1}\right)\right]\nonumber \\
		& =\largeO_{P}\left(1\right)+\largeO_{P}\left(NT^{-1}\right)+\largeO_{P}\left(N^{2}T^{-2}\right),\label{eq:g64_order}
	\end{align}
	where the result $\left\Vert T^{-1}\sum_{j\neq i}^{N}\overline{\mU}'\mM_{\widehat{\mF}}\vepsi_{j}\right\Vert ^{2}=\largeO_{P}\left(1\right)+\largeO_{P}\left(NT^{-1}\right)$
	follows from \eqref{eq:Ubar_MF_Ubar} since the former term is entirely
	contained in $N^{2}\left\Vert T^{-1}\overline{\mU}'\mM_{\widehat{\mF}}\overline{\mU}\right\Vert ^{2}$
	and since the error variances $\sigma_{j}^{2}$ have lower bound above
	zero.
	
	We continue with expression 
	\begin{align*}
		g_{61} & \leq5T^{-2}\sum_{i=1}^{N}\left[\sum_{j\neq i}^{N}\vepsi_{i}'\vepsi_{j}\left(\frac{\vepsi_{i}'\vepsi_{i}}{T}-\sigma_{i}^{2}\right)\right]^{2}+5T^{-2}\sum_{i=1}^{N}\left[\sum_{j\neq i}^{N}\vepsi_{i}'\vepsi_{j}\vlambda_{i}'\left(\overline{\mC}^{-1}\right)'\left(T^{-1}\overline{\mU}'\mM_{\widehat{\mF}}\overline{\mU}\right)\left(\overline{\mC}^{-1}\right)\vlambda_{i}\right]^{2}\\
		& 5T^{-2}\sum_{i=1}^{N}\left[\sum_{j\neq i}^{N}\vepsi_{i}'\vepsi_{j}\left(T^{-1}\vepsi_{i}'\mP_{\widehat{\mF}}\vepsi_{i}\right)\right]^{2}+5T^{-2}\sum_{i=1}^{N}\left[\sum_{j\neq i}^{N}\vepsi_{i}'\vepsi_{j}\vlambda_{i}'\left(\overline{\mC}^{-1}\right)'\left(T^{-1}\overline{\mU}'\mM_{\widehat{\mF}}\vepsi_{i}\right)\right]^{2}\\
		& =g_{611}+g_{612}+g_{613}+g_{614}.
	\end{align*}
	Here, $g_{611}=\largeO_{P}\left(N^{2}T^{-2}\right)$ follows from result
	\eqref{eq:sumij_espi_espj_epsisq-diff} in the time fixed effects
	case. When considering $g_{612}$, we first note that 
	\begin{align*}
		\E\left[\sum_{i=1}^{N}\left(\sum_{j\neq i}^{N}\vepsi_{i}'\vepsi_{j}\vlambda_{i}'\vlambda_{i}\right)^{2}\right] & \leq M\sum_{i=1}^{N}\sum_{t,t'}^{T}\E\left(\epsi_{i,t}^{2}\right)\E\left(\epsi_{i,t'}^{2}\right)\\
		& +M\sum_{i=1}^{N}\sum_{j\neq i}^{N}\sum_{t=1}^{T}\E\left(\epsi_{i,t}^{2}\right)\E\left(\epsi_{j,t}^{2}\right)+\smallO\left(NT^{2}\right)+\smallO\left(N^{2}T\right)\\
		& =\largeO\left(NT^{2}\right)+\largeO\left(N^{2}T\right),
	\end{align*}
	from which it follows that
	\begin{equation}
		T^{-2}\sum_{i=1}^{N}\left(\sum_{j\neq i}^{N}\vepsi_{i}'\vepsi_{j}\vlambda_{i}'\vlambda_{i}\right)^{2}=\largeO_{P}\left(N\right)+\largeO_{P}\left(N^{2}T^{-1}\right).\label{eq:sumij_epsi_epsj_lami_sq}
	\end{equation}
	This result, together with result \eqref{eq:Ubar_MF_Ubar}, is then
	used to arrive at
	\begin{align*}
		g_{612} & \leq MT^{-2}\sum_{i=1}^{N}\left[\sum_{j\neq i}^{N}\vepsi_{i}'\vepsi_{j}\left(\vlambda_{i}'\vlambda_{i}\right)\right]^{2}\left\Vert \overline{\mC}^{-1}\right\Vert ^{4}\left\Vert T^{-1}\overline{\mU}'\mM_{\widehat{\mF}}\overline{\mU}\right\Vert ^{2}\\
		& =\left[\largeO_{P}\left(N\right)+\largeO_{P}\left(N^{2}T^{-1}\right)\right]\left\{ \largeO_{P}\left(N^{-2}\right)+\largeO_{P}\left[\left(NT\right)^{-1}\right]\right\} \\
		& =\largeO_{P}\left(N^{-1}\right)+\largeO_{P}\left(T^{-1}\right)+\largeO_{P}\left(NT^{-2}\right)
	\end{align*}
	We continue with
	\begin{align*}
		g_{613} & \leq T^{-2}\left\Vert T^{-1}\left(\widehat{\mF}'\widehat{\mF}\right)^{-1}\right\Vert ^{2}\left\Vert \overline{\mC}\right\Vert ^{4}\sum_{i=1}^{N}\left(\left\Vert T^{-1}\sum_{j\neq i}^{N}\vepsi_{i}'\vepsi_{j}\vepsi_{i}'\mF\right\Vert ^{2}\left\Vert T^{-1}\mF'\vepsi_{i}\right\Vert ^{2}\right)\\
		& +T^{-2}\left\Vert T^{-1}\left(\widehat{\mF}'\widehat{\mF}\right)^{-1}\right\Vert ^{2}\sum_{i=1}^{N}\left(\left\Vert T^{-1}\sum_{j\neq i}^{N}\vepsi_{i}'\vepsi_{j}\vepsi_{i}'\overline{\mU}\right\Vert ^{2}\left\Vert T^{-1}\widehat{\mF}'\vepsi_{i}\right\Vert ^{2}\right)\\
		& +T^{-2}\left\Vert T^{-1}\left(\widehat{\mF}'\widehat{\mF}\right)^{-1}\right\Vert ^{2}\left\Vert \overline{\mC}\right\Vert ^{2}\sum_{i=1}^{N}\left(\left[\left\Vert T^{-1}\sum_{j\neq i}^{N}\vepsi_{i}'\vepsi_{j}\vepsi_{i}'\mF\right\Vert ^{2}\right]\left\Vert T^{-1}\overline{\mU}'\vepsi_{i}\right\Vert ^{2}\right)\\
		& g_{6131}+g_{6132}+g_{6133}
	\end{align*}
	Here,
	\begin{align*}
		& \E\left(T^{-2}\sum_{i=1}^{N}\left\Vert T^{-1}\sum_{j\neq i}^{N}\vepsi_{i}'\vepsi_{j}\vepsi_{i}'\mF\right\Vert ^{2}\left\Vert T^{-1}\mF'\vepsi_{i}\right\Vert ^{2}\right)\\
		& =T^{-6}\sum_{i=1}^{N}\sum_{j\neq i}^{N}\sum_{j'\neq i}^{N}\sum_{t,t',t'',t''',t'''',t'''''}^{T}\E\left[\epsi_{i,t}\epsi_{j,t}\epsi_{i,t'}\epsi_{j',t'}\epsi_{i,t''}\epsi_{i,t'''}\epsi_{i,t''''}\epsi_{i,t'''''}\right]\E\left[\vf_{t''}'\vf_{t'''}\vf_{t''''}'\vf_{t'''''}\right]\\
		& =T^{-6}\sum_{i=1}^{N}\sum_{j\neq i}^{N}\sum_{t,t'',t''''}^{T}\E\left[\epsi_{i,t}^{2}\right]\E\left[\epsi_{j,t}^{2}\right]\E\left[\epsi_{i,t''}^{2}\right]\E\left[\epsi_{i,t''''}^{2}\right]\E\left[\left\Vert \vf_{t''}\right\Vert ^{2}\left\Vert \vf_{t''''}\right\Vert ^{2}\right]+\smallO\left(N^{2}T^{-3}\right)\\
		& =\largeO\left(N^{2}T^{-3}\right),
	\end{align*}
	which implies that $g_{6131}=\largeO_{P}\left(N^{2}T^{-3}\right).$ Concerning
	the next term, $g_{6132}$, we note that 
	\begin{align*}
		T^{-2}\sum_{i=1}^{N}\left(\left\Vert T^{-1}\sum_{j\neq i}^{N}\vepsi_{i}'\vepsi_{j}\vepsi_{i}'\overline{\mU}\right\Vert ^{2}\left\Vert T^{-1}\widehat{\mF}'\vepsi_{i}\right\Vert ^{2}\right) & \leq T^{-2}\sum_{i=1}^{N}\left\Vert T^{-1}\sum_{j\neq i}^{N}\vepsi_{i}'\vepsi_{j}\vepsi_{i}'\overline{\mU}\right\Vert ^{2}\sum_{i=1}^{N}\left\Vert T^{-1}\widehat{\mF}'\vepsi_{i}\right\Vert ^{2}\\
		& =T^{-2}\left[\largeO_{P}\left(N\right)+\largeO_{P}\left(T\right)\right]\left[\largeO_{P}\left(NT^{-1}\right)+\largeO_{P}\left(N^{-1}\right)\right]\\
		& =\largeO_{P}\left(N^{2}T^{-3}\right)+\largeO_{P}\left(NT^{-2}\right)+\largeO_{P}\left[\left(NT\right)^{-1}\right],
	\end{align*}
	which follows from results \eqref{eq:sumi_eps_Fhat} and \eqref{eq:sumi_sq_sumj_epsi_epsj_epsi_Ubar}.
	Accordingly, $g_{6132}=\largeO_{P}\left(N^{2}T^{-3}\right)+\largeO_{P}\left(NT^{-2}\right)+\largeO_{P}\left[\left(NT\right)^{-1}\right]$.
	In the last component of $g_{613},$term $g_{6133}$, we can use results
	\eqref{eq:sumi_eps_ubar} and \eqref{eq:sumi _sq_sumj_epsi_epsj_epsi_F}
	to write
	\begin{align*}
		T^{-2}\sum_{i=1}^{N}\left(\left[\left\Vert T^{-1}\sum_{j\neq i}^{N}\vepsi_{i}'\vepsi_{j}\vepsi_{i}'\mF\right\Vert ^{2}\right]\left\Vert T^{-1}\overline{\mU}'\vepsi_{i}\right\Vert ^{2}\right) & \leq T^{-2}\sum_{i=1}^{N}\left[\left\Vert T^{-1}\sum_{j\neq i}^{N}\vepsi_{i}'\vepsi_{j}\vepsi_{i}'\mF\right\Vert ^{2}\right]\sum_{i=1}^{N}\left\Vert T^{-1}\overline{\mU}'\vepsi_{i}\right\Vert ^{2}\\
		& =T^{-2}\largeO_{P}\left(N^{2}\right)\left[\largeO_{P}\left(N^{-1}\right)+\largeO_{P}\left(T^{-1}\right)\right]\\
		& =\largeO_{P}\left(NT^{-2}\right)+\largeO_{P}\left(N^{2}T^{-3}\right),
	\end{align*}
	which also determines the overall order in probability of $g_{6133}$.
	Combining our results on terms $g_{6131},g_{6132}$ and $g_{6133}$,
	we arrive at 
	\[
	g_{613}=\largeO_{P}\left(N^{2}T^{-3}\right)+\largeO_{P}\left(NT^{-2}\right)+\largeO_{P}\left[\left(NT\right)^{-1}\right].
	\]
	
	Lastly, we investigate 
	\begin{align*}
		g_{614} & \leq5\sum_{i=1}^{N}\left(\left\lVert T^{-1}\sum_{j\neq i}^{N}\vepsi_{i}'\vepsi_{j}\vlambda_{i}'\right\rVert \left\lVert T^{-1}\overline{\mU}'\vepsi_{i}\right\rVert \right)^{2}\left\Vert \overline{\mC}^{-1}\right\Vert ^{2}\\
		& +5\left\lVert \overline{\mC}^{-1}\right\rVert ^{2}\sum_{i=1}^{N}\left\lVert T^{-1}\sum_{j\neq i}^{N}\vepsi_{i}'\vepsi_{j}\vlambda_{i}'\right\rVert ^{2}\left\lVert T^{-1}\overline{\mU}'\widehat{\mF}\right\rVert ^{2}\left\lVert \left(T^{-1}\widehat{\mF}'\widehat{\mF}\right)^{-1}\right\rVert ^{2}\sum_{i=1}^{N}\left\lVert T^{-1}\widehat{\mF}'\vepsi_{i}\right\rVert ^{2}\\
		& =g_{6141}+g_{6142}
	\end{align*}
	Here, the order of $g_{6141}$ is determined by the expression 
	\begin{align*}
		& \sum_{i=1}^{N}\left[T^{-1}\left\lVert \sum_{j\neq i}^{N}\vepsi_{i}'\vepsi_{j}\vlambda_{i}'\right\rVert \left\lVert T^{-1}\overline{\mU}'\vepsi_{i}\right\rVert \right]^{2}\\
		= & N^{-2}T^{-4}\sum_{i,i',i''}^{N}\sum_{j\neq i}^{N}\sum_{j'\neq i}^{N}\sum_{t,t',t'',t'''}^{T}\epsi_{i,t}\epsi_{j,t}\epsi_{j',t'}\epsi_{i,t'}\epsi_{i,t''}\epsi_{i',t''}\epsi_{i'',t'''}\epsi_{i,t'''}\vlambda_{i}'\vlambda_{i}\\
		+ & N^{-2}T^{-4}\sum_{i,i',i''}^{N}\sum_{j\neq i}^{N}\sum_{j'\neq i}^{N}\sum_{t,t',t'',t'''}^{T}\epsi_{i,t}\epsi_{j,t}\epsi_{j',t'}\epsi_{i,t'}\epsi_{i,t''}\ve'_{i',t''}\ve_{i'',t'''}\epsi_{i,t'''}\vlambda_{i}'\vlambda_{i}.
	\end{align*}
	Now take the first term in the decomposition above. Its expected value
	can be expressed as 
	\begin{align*}
		& \E\left(N^{-2}T^{-4}\sum_{i,i',i'',j,j'}^{N}\sum_{t,t',t'',t'''}^{T}\epsi_{i,t}\epsi_{j,t}\epsi_{j',t'}\epsi_{i,t'}\epsi_{i,t''}\epsi_{i',t''}\epsi_{i'',t'''}\epsi_{i,t'''}\vlambda_{i}'\vlambda_{i}\right)\\
		\leq & \frac{M}{N^{2}T^{4}}\sum_{i,i',j}^{N}\sum_{t,t''}^{T}\E\left(\epsi_{i,t}^{2}\right)\E\left(\epsi_{j,t}^{2}\right)\E\left(\epsi_{i,t''}^{2}\right)\E\left(\epsi_{i',t''}^{2}\right)\\
		& +\frac{M}{N^{2}T^{4}}\sum_{i,i'}^{N}\sum_{t,t',t''}^{T}\E\left(\epsi_{i,t}^{2}\right)\E\left(\epsi_{i,t'}^{2}\right)\E\left(\epsi_{i,t''}^{2}\right)\E\left(\epsi_{i',t''}^{2}\right)+\smallO\left(NT^{-2}\right)+\smallO\left(T^{-1}\right)\\
		& =\largeO\left(NT^{-2}\right)+\largeO\left(T^{-1}\right).
	\end{align*}
	Analogously, the expectation of the second term is given by
	\begin{align*}
		& \E\left(N^{-2}T^{-4}\sum_{i,i',i''}^{N}\sum_{j\neq i}^{N}\sum_{j'\neq i}^{N}\sum_{t,t',t'',t'''}^{T}\epsi_{i,t}\epsi_{j,t}\epsi_{j',t'}\epsi_{i,t'}\epsi_{i,t''}\ve'_{i',t''}\ve_{i'',t'''}\epsi_{i,t'''}\vlambda_{i}'\vlambda_{i}\right)\\
		\leq & \frac{M}{N^{2}T^{4}}\sum_{i,i''}^{N}\sum_{j\neq i}^{N}\sum_{t',t''}^{T}\E\left(\epsi_{i,t}^{2}\right)\E\left(\epsi_{j,t}^{2}\right)\E\left(\epsi_{i,t''}^{2}\right)\E\left(\ve'_{i',t''}\ve_{i',t''}\right)+\smallO\left(NT^{-2}\right)\\
		= & \largeO\left(NT^{-2}\right).
	\end{align*}
	These results suggest that $g_{6141}=\largeO_{P}\left(NT^{-2}\right)+\largeO_{P}\left(T^{-1}\right)$.
	We continue with $g_{6142}$ where results \eqref{eq:sumi_eps_Fhat},
	\eqref{eq:normsq_Ubar_Fhat} and \eqref{eq:sumij_epsi_epsj_lami_sq}
	lead to 
	\begin{align*}
		g_{6142} & =5\left\lVert \overline{\mC}^{-1}\right\rVert ^{2}\sum_{i=1}^{N}\left\lVert \sum_{j\neq i}^{N}\vepsi_{i}'\vepsi_{j}\vlambda_{i}'\right\rVert ^{2}\left\lVert T^{-1}\overline{\mU}'\widehat{\mF}\right\rVert ^{2}\left\lVert \left(T^{-1}\widehat{\mF}'\widehat{\mF}\right)^{-1}\right\rVert ^{2}\sum_{i=1}^{N}\left\lVert T^{-1}\widehat{\mF}'\vepsi_{i}\right\rVert ^{2}\\
		& =\left[\largeO_{P}\left(N\right)+\largeO_{P}\left(N^{2}T^{-1}\right)\right]\left\{ \largeO_{P}\left(N^{-2}\right)+\largeO_{P}\left[\left(NT\right)^{-1}\right]\right\} \left[\largeO_{P}\left(NT^{-1}\right)+\largeO_{P}\left(N^{-1}\right)\right]\\
		& =\left[\largeO_{P}\left(N^{-1}\right)+\largeO_{P}\left(T^{-1}\right)+\largeO_{P}\left(NT^{-2}\right)\right]\left[\largeO_{P}\left(NT^{-1}\right)+\largeO_{P}\left(N^{-1}\right)\right]\\
		& =\largeO_{P}\left(N^{2}T^{-3}\right)+\largeO_{P}\left(NT^{-2}\right)+\largeO_{P}\left(T^{-1}\right)+\largeO_{P}\left(N^{-2}\right).
	\end{align*}
	Combining our results on $g_{6141}$ and $g_{6142}$, we get 
	\[
	g_{614}=\largeO_{P}\left(N^{2}T^{-3}\right)+\largeO_{P}\left(NT^{-2}\right)+\largeO_{P}\left(T^{-1}\right)+\largeO_{P}\left(N^{-2}\right)
	\]
	
	Our result on $g_{614}$ is the last building block required to make
	a statement about $g_{61}$. Additionally using our results on $g_{611},g_{612}$
	and $g_{613}$, we arrive at 
	\begin{align}
		g_{61} & =\largeO_{P}\left(N^{2}T^{-2}\right)+\largeO_{P}\left(N^{-1}\right)+\largeO_{P}\left(T^{-1}\right).\label{eq:g61_order}
	\end{align}
	Now consider the term 
	\begin{align*}
		g_{65} & \leq\left\Vert \overline{\mC}^{-1}\right\Vert ^{2}\left\Vert \sum_{j\neq i}^{N}\vlambda_{j}\right\Vert ^{2}\sum_{i=1}^{N}\left\Vert T^{-1}\left(\widehat{\sigma}_{i}^{2}-\sigma_{i}^{2}\right)\vepsi_{i}'\mM_{\widehat{\mF}}\overline{\mU}\right\Vert ^{2}\\
		& \leq3\left\Vert \overline{\mC}^{-1}\right\Vert ^{2}\left\Vert \sum_{j\neq i}^{N}\vlambda_{j}\right\Vert ^{2}\sum_{i=1}^{N}\left\Vert T^{-1}\left(\widehat{\sigma}_{i}^{2}-\sigma_{i}^{2}\right)\vepsi_{i}'\overline{\mU}\right\Vert ^{2}\\
		& +3\left\Vert \overline{\mC}^{-1}\right\Vert ^{2}\left\Vert \sum_{j\neq i}^{N}\vlambda_{j}\right\Vert ^{2}\sum_{i=1}^{N}\left\Vert T^{-1}\left(\widehat{\sigma}_{i}^{2}-\sigma_{i}^{2}\right)\vepsi_{i}'\widehat{\mF}\right\Vert ^{2}\left\Vert \left(T^{-1}\widehat{\mF}'\widehat{\mF}\right)^{-1}\right\Vert ^{2}+\largeO_{P}\left(NT^{-1}\right)\left\Vert T^{-1}\widehat{\mF}\overline{\mU}\right\Vert ^{2}\\
		& =3\left\Vert \overline{\mC}^{-1}\right\Vert ^{2}\left\Vert \sum_{j\neq i}^{N}\vlambda_{j}\right\Vert ^{2}\left(g_{651}+g_{652}\right).
	\end{align*}
	The two resulting expressions $g_{651}$ and $g_{652}$ require a
	further investigation of the terms $\sum_{i=1}^{N}\left\Vert T^{-1}\left(\widehat{\sigma}_{i}^{2}-\sigma_{i}^{2}\right)\vepsi_{i}'\overline{\mU}\right\Vert ^{2}$
	and $\sum_{i=1}^{N}\left\Vert T^{-1}\left(\widehat{\sigma}_{i}^{2}-\sigma_{i}^{2}\right)\vepsi_{i}'\widehat{\mF}\right\Vert ^{2}$,
	a task that is carried out in Lemmas \ref{lem:sumi_sigi2diff_epsi_Ubar}
	and \ref{lem:sumi_sigi2diff_epsi_Fhat} as well as their respective
	proofs. Using the results of these two lemmas, as well as result \eqref{eq:normsq_Ubar_Fhat},
	we arrive at 
	\begin{align*}
		g_{651} & =\largeO_{P}\left(NT^{-3}\right)+\largeO_{P}\left(T^{-2}\right)+\largeO_{P}\left[\left(NT\right)^{-1}\right]+\largeO_{P}\left(N^{-2}\right)\\
		g_{652} & =\left[\largeO_{P}\left(NT^{-2}\right)+\largeO_{P}\left(T^{-1}\right)+\largeO_{P}\left(N^{-2}\right)\right]\left[\largeO_{P}\left[\left(NT\right)^{-1}\right]+\largeO_{P}\left(N^{-2}\right)\right]\\
		& =\largeO_{P}\left(T^{-3}\right)+\largeO_{P}\left(N^{-1}T^{-2}\right)+\largeO_{P}\left(N^{-2}T^{-1}\right)+\largeO_{P}\left(N^{-4}\right),
	\end{align*}
	from which it follows that
	\begin{equation}
		g_{65}=\largeO_{P}\left(NT^{-3}\right)+\largeO_{P}\left(T^{-2}\right)+\largeO_{P}\left[\left(NT\right)^{-1}\right]+\largeO_{P}\left(N^{-2}\right).\label{eq:g65_order}
	\end{equation}
	The last term to consider is $g_{53}$ where we write
	\begin{align}
		g_{63} & \leq\sum_{i=1}^{N}\left\Vert T^{-1}\left(\widehat{\sigma}_{i}^{2}-\sigma_{i}^{2}\right)\vepsi_{i}'\widehat{\mF}\right\Vert ^{2}\left\Vert \left(T^{-1}\widehat{\mF}'\widehat{\mF}\right)^{-1}\right\Vert ^{2}\left\Vert T^{-1}\widehat{\mF}'\left(\sum_{j\neq i}^{N}\vepsi_{j}\right)\right\Vert ^{2}\nonumber \\
		& =\left[\largeO_{P}\left(NT^{-2}\right)+\largeO_{P}\left(T^{-1}\right)+\largeO_{P}\left(N^{-2}\right)\right]\left[\largeO_{P}\left(NT^{-1}\right)+\largeO_{P}\left(1\right)\right]\nonumber \\
		& =\largeO_{P}\left(N^{2}T^{-3}\right)+\largeO_{P}\left(NT^{-2}\right)+\largeO_{P}\left(T^{-1}\right)+\largeO_{P}\left(N^{-2}\right)\label{eq:g63_order}
	\end{align}
	In order to arrive at the second line above, we use Lemma \ref{lem:sumi_sigi2diff_epsi_Fhat}
	together with result \eqref{eq:normsq_Ubar_Fhat}. Application of
	the second result is valid since $\sigma_{j}^{2}$ has a lower bound
	above zero and since $\left\Vert T^{-1}\widehat{\mF}'\sum_{j\neq i}^{N}\vepsi_{j}\right\Vert ^{2}$
	is completely contained in $N^{2}\left\Vert T^{-1}\widehat{\mF}\overline{\mU}\right\Vert ^{2}$. 
	
	Results \eqref{eq:g62_order}--\eqref{eq:g63_order} address all
	five required terms to arrive at the central result of this lemma:
	\[
	T^{-2}\sum_{i=1}^{N}\left[\sum_{j\neq i}^{N}\frac{\widehat{\vepsi}_{i}'\widehat{\vepsi}_{j}}{\left(\sigma_{j}^{2}\right)^{1/2}}\left(\widehat{\sigma}_{i}^{2}-\sigma_{i}^{2}\right)\right]^{2}=\largeO_{P}\left(N^{2}T^{-2}\right)+\largeO_{P}\left(NT^{-1}\right)+\largeO_{P}\left(1\right)
	\]
	
\end{proof}

\section{Proofs for additional theoretical contributions}
\subsection{Proofs of Propositions  \ref{prop:CD_2WFE_H1} and  \ref{prop:CD_CCE_H1}}
\label{section::proofs_power}
\begin{proof}[\textbf{Proof of Proposition \ref{prop:CD_2WFE_H1}}]\ \\
	The same steps as in the proof of Theorem \ref{theorem::additiveHetero} allow us to arrive at the decomposition
	\begin{align*}
		CD_{\mathbb{H}_{1}} &= \sqrt{\frac{1}{2TN\left( N-1\right) }} \sum_{i=2}^{N}\sum_{j=1}^{i-1} \left(\varsigma_{\nu,i}^{-1} - \E\left[\varsigma_{\nu,i}^{-1}\right]\right) \vnu_{i}^{\prime}\vnu_{j}\left( \varsigma_{\nu,j}^{-1} - \E\left[ \varsigma_{\nu,j}^{-1}\right]\right) +\sqrt{T}\Xi_{\mathbb{H}_{1}}  \\
		&+ \frac{\sqrt{T}}{N}\sqrt{\frac{N}{2\left( N-1\right) }}\left( \left(NT\right)^{-1} \sum_{i,j}^{N}\frac{\vnu_{i}^{\prime}\vnu_{j}}{\varsigma_{\nu,i}^{2}} - \overline{\varsigma_{\nu}^{-2}} \left( NT\right)^{-1} \sum_{i,j}^{N} \vnu_{i}^{\prime}\vnu_{j}\right) ,
	\end{align*}%
	where%
	\begin{align*}
		\Xi_{\mathbb{H}_1} &= \sqrt{\frac{N}{2\left(N-1\right)}} \left(NT\right)^{-1} \sum_{i=1}^N \vnu_{i}' \vnu_{i} \left[\left(\overline{\varsigma_{\nu}^{-1}}\right)^2 - 2\overline{\varsigma_{\nu}^{-1}}\varsigma_{\nu,i} \right].
	\end{align*}%
	The terms in parentheses in the second line are stochastically bounded,
	entailing that the entire second line is $\largeO_{P}\left( N^{-1}\sqrt{T}\right) $. Next, consider the leading stochastic component of $CD_{\mathbb{H}_{1}}$, namely $\sqrt{\frac{1}{2TN\left(N-1\right) }}\sum_{i=2}^{N}\sum_{j=1}^{i-1} \left( \varsigma_{\nu,i}^{-1}-\E\left[ \varsigma_{\nu,i}^{-1}\right]\right) \vnu_{i}^{\prime}\vnu_{j}\left( \varsigma_{\nu,j}^{-1}-\E\left[ \varsigma_{\nu,j}^{-1}\right] \right) $. Here, we have
	\begin{align*}
		& \E\left[ \left( \varsigma _{\nu,i}^{-1}-\E\left[ \varsigma_{\nu,i}^{-1}\right] \right) \nu_{i,t}\nu_{j,t}\left( \varsigma_{\nu,j}^{-1}-\E\left[ \varsigma _{\nu,j}^{-1}\right] \right) \right]  \\
		&=\E\left[ \left( \varsigma_{\nu,i}^{-1}-\E\left[ \varsigma_{\nu,i}^{-1}\right] \right) \left( {\vlambda }_{i}-\E\left[	\vlambda_{i}\right] \right)^{\prime }\right] \mSigma_{F} \E\left[ \left( \vlambda_{j}-\E%
		\left[ \vlambda_{j}\right] \right) \left( \varsigma_{\nu,j}^{-1}-%
		\E\left[ \varsigma_{\nu,j}^{-1}\right] \right) \right] +0 \\
		&=\cov\left[ \varsigma_{\nu,i}^{-1},\vlambda%
		_{i}^{\prime }\right] \mSigma_{F}\cov\left[
		\vlambda_{j},\varsigma_{\nu,j}^{-1}\right]  \\
		&=\cov\left[ \varsigma_{\nu,1}^{-1},\vlambda%
		_{1}^{\prime }\right] \mSigma_{F}\cov\left[
		\vlambda_{1},\varsigma_{\nu,1}^{-1}\right] ,
	\end{align*}%
	where we use $i.i.d.$-ness of $\vlambda_{i}$ and $\sigma
	_{i}^{2}$ to conclude that the covariance $\cov\left[
	\varsigma_{\nu,i}^{-1},\vlambda_{i}^{\prime }\right] $ is the
	same across cross-sections.
	
	Lastly, consider the variance of $\sqrt{\frac{1}{2TN\left( N-1\right) }}\sum_{i=2}^{N}\sum_{j=1}^{i-1}\left( \varsigma_{\nu,i}^{-1}-\E\left[\varsigma_{v}^{-1}\right] \right) \vnu_{i}^{\prime }\vnu_{j}\left( \varsigma_{\nu,j}^{-1}-\E\left[ \varsigma_{v}^{-1}\right]
	\right) $ in the case where $\cov\left[ \varsigma_{\nu,i}^{-1},\vlambda_{i}^{\prime }\right] =0$. We have
	\begin{align*}
		&\var\left[ \sum_{i=2}^{N}\sum_{j=1}^{i-1}\left( \varsigma_{\nu,i}^{-1}-\E\left[ \varsigma_{\nu,i}^{-1}\right] \right) \vnu_{i}^{\prime}\vnu_{j}\left( \varsigma_{\nu,j}^{-1}-\E\left[\varsigma_{\nu,j}^{-1}\right] \right) \right] \\
		&=\E\left\{ \left[\sum_{i=2}^{N}\sum_{j=1}^{i-1}\left( \varsigma_{\nu,i}^{-1}-\E\left[\varsigma_{\nu,i}^{-1}\right] \right) \vnu_{i}^{\prime }\vnu_{j}\left( \varsigma_{\nu,j}^{-1}-\E\left[ \varsigma_{\nu,j}^{-1}\right] \right) \right] ^{2}\right\} -0 \\
		&=\sum_{i=2}^{N}\sum_{j=1}^{i-1}\E\left\{ \left[ \left( \varsigma_{\nu,i}^{-1}-\E\left[ \varsigma_{\nu,i}^{-1}\right] \right) \vnu_{i}^{\prime}\vnu_{j}\left( \varsigma_{\nu,j}^{-1}-\E\left[\varsigma_{\nu,j}^{-1}\right] \right) \right] ^{2}\right\} .
	\end{align*}%
	Here, $\left( \vnu_{i}^{\prime }\vnu_{j}\right) ^{2}=\left[\left( \vlambda_{i}^{\prime }\mF^{\prime } + {\vepsi}_{i}^{\prime }\right) \left(\mF \vlambda_{j} + \vepsi_{j}\right) \right]^{2}$, where we only consider the leading term $\tr\left( \vlambda_{i}\vlambda_{i}^{\prime}\mF^{\prime}\mF \vlambda_{j}\vlambda_{j}^{\prime} \mF^{\prime}\mF\right)$ further. Here, it is possible to show that
	\begin{align*}
		\E \left[ tr\left( \vlambda_{i}\vlambda_{i}' \mF'\mF \vlambda_{j}\vlambda_{j}' \mF'\mF \right) \: | \: \vlambda_{i}, \vlambda_j, \sigma_{i}^2, \sigma_j^2 \right] \leq T^2 M \: \tr\left( \vlambda_{i}\vlambda_{i}' \vlambda_{j}\vlambda_{j}' \right),
	\end{align*}
	since the fourth-order moments of $\vf_{t}$ are bounded. By application of the Law of Iterated expectations (LIE), it then follows that
	\begin{align*}	
		&\sum_{i=2}^{N}\sum_{j=1}^{i-1}\E\left\{ \left( \varsigma_{\nu,i}^{-1} - \E\left[ \varsigma_{\nu,i}^{-1}\right] \right)^{2} tr\left(\vlambda_{i}\vlambda_{i}^{\prime}\mF^{\prime}\mF\vlambda_{j}\vlambda_{j}^{\prime }\mF^{\prime}\mF\right) \left( \varsigma_{\nu,j}^{-1}-\E\left[ \varsigma_{\nu,j}^{-1}\right] \right)^{2}\right\}  \\
		&=M \:T^{2}\sum_{i=2}^{N}\sum_{j=1}^{i-1}\E\left\{ \left( \varsigma_{\nu,i}^{-1}-\E\left[ \varsigma_{\nu,i}^{-1}\right] \right)^{2}tr\left(\vlambda_{i}\vlambda_{i}^{\prime}\vlambda_{j}\vlambda_{j}^{\prime }\right) \left( \varsigma_{\nu,j}^{-1}-\E\left[ \varsigma_{\nu,j}^{-1}\right] \right) ^{2}\right\}  \\
		&=\largeO\left( N^{2}T^{2}\right) ,
	\end{align*}%
	assuming that the higher-order moment $\E\left\{ \left( \varsigma_{\nu,i}^{-1} - \E\left[ \varsigma_{\nu,i}^{-1}\right] \right)^{2}tr\left(\vlambda_{i}\vlambda_{i}^{\prime}\right) \right\} $ exists. Involving the square of the scaling $\sqrt{\frac{1}{2TN\left( N-1\right)}}$, we can conclude that the variance of
	\begin{equation}
		\sqrt{\frac{1}{2TN\left( N-1\right) }}\sum_{i=2}^{N}\sum_{j=1}^{i-1}\left(\varsigma_{\nu,i}^{-1}-\E\left[\varsigma_{v}^{-1}\right] \right)\vnu_{i}^{\prime }\vnu_{j}\left(\varsigma_{\nu,j}^{-1}-\E\left[ \varsigma_{v}^{-1}\right] \right)
	\end{equation}
	diverges at rate $T$.
\end{proof}
\bigskip
\begin{proof}[\textbf{Proof of Proposition \ref{prop:CD_CCE_H1}}] \ \\
	Let $k_{N,T}=\sqrt{\frac{1}{2TN\left( N-1\right) }}$ and $\vvarpi_{i}=\left[ \varsigma_{\nu,i}^{-1}\vnu_{i}-\mD_{i}\left(
	\overline{\mC}^{\left( 1\right) }\right) ^{-1}\left( N^{-1}\sum_{\ell
		=1}^{N}\vlambda_{\ell }^{\left( 1\right) }\varsigma _{\nu,\ell}^{-1}\right) \right]$, where $\vnu_{i}$ and $\varsigma_{\nu,i}^2$ are defined in the discussion following equation \eqref{eq:CCE_resH1} and where
	\begin{align*}
		\mD_i &= \mF \begin{bmatrix}
			\vlambda_i^{(2)}, & \mLambda_i^{(2)}
		\end{bmatrix}
		+ \begin{bmatrix}
			\vepsi_{i}, & \ve_i
		\end{bmatrix}.
	\end{align*}
	Using the same steps as in the Proof of Theorem
	\ref{theorem::CCE}, we can decompose the CD test statistic as
	\begin{equation*}
		CD_{\mathbb{H}_{1}}=I-II
	\end{equation*}%
	where
	\begin{eqnarray*}
		I &=&2k_{N,T}\sum_{i=2}^{N}\sum_{j=1}^{i-1}\vvarpi_{i}^{\prime }%
		\vvarpi_{j}+k_{N,T}\sum_{i=1}^{N}\vvarpi%
		_{i}^{\prime }\vvarpi_{i}-k_{N,T}\sum_{i,j}^{N}\vvarpi_{i}^{\prime }\mP_{\widehat{\mF}}\vvarpi_{j}
		\\
		II &=&k_{N,T}\sum_{i=1}^{N}\varsigma _{\nu,i}^{-2}\left( \vnu	_{i}^{\prime }-\vlambda_{i}^{\left( 1\right) \prime }\left(
		\overline{\mC}^{\left( 1\right) \prime }\right) ^{-1}\overline{\mD}%
		^{\prime }\right) \mM_{\widehat{\mF}}\left( \vnu_{i}-%
		\overline{\mD}\left( \overline{\mC}^{\left( 1\right) }\right) ^{-1}%
		\vlambda_{i}^{\left( 1\right) }\right) .
	\end{eqnarray*}%
	As shown in Lemmas \ref{lem:sumij_varpii_PF_varpij} and \ref{lem:II_power}, it holds that
	\begin{align*}
		k_{N,T}\sum_{i,j}^{N}\vvarpi_{i}^{\prime }\mP_{\widehat{\mF}}\vvarpi_{j} &= \largeO_{P}\left[\max\left(N^{-1},T^{-1}\right)\left(k_{N,T}\sum_{i,j}^{N}\vvarpi_{i}^{\prime}\vvarpi_{j}\right)\right]
	\end{align*}
	and
	\begin{align*}
		II &=  k_{N,T}\sum_{i=1}^{N}\varsigma_{\nu}^{-2}\vnu_{i}^{\prime}\vnu_{i}+\largeO_{P}\left(N^{-1/2}\sqrt{T}\right)+\largeO_{P}\left(T^{-1/2}\right).
	\end{align*}
	Furthermore, the expression $k_{N,T}\sum_{i=1}^{N}\vvarpi%
	_{i}^{\prime }\vvarpi_{i}$ is written out
	\begin{eqnarray*}
		k_{N,T}\sum_{i=1}^{N}\vvarpi_{i}^{\prime }\vvarpi%
		_{i}, &=&k_{N,T}\sum_{i=1}^{N}\varsigma _{\nu,i}^{-2}\vnu%
		_{i}^{\prime }\vnu_{i}\\
		&&+\sqrt{\frac{TN}{\left( N-1\right) }}%
		\left( N^{-1}\sum_{\ell =1}^{N}\varsigma _{\nu,\ell}^{-1}\vlambda%
		_{\ell }^{\left( 1\right) \prime }\right) \left( \overline{\mC}^{\left(
			1\right) \prime }\right) ^{-1}\left( \left( NT\right) ^{-1}\sum_{i=1}^{N}%
		\mD_{i}^{\prime }\mD_{i}\right)\\
		&&\times\left( \overline{\mC}%
		^{\left( 1\right) }\right) ^{-1}\left( N^{-1}\sum_{\ell =1}^{N}\vlambda_{\ell }^{\left( 1\right) }\varsigma _{\nu,\ell}^{-1}\right) \\
		&&-2\sqrt{\frac{TN}{\left( N-1\right) }}\left( NT\right)
		^{-1}\sum_{i=1}^{N}\varsigma _{\nu,i}^{-1}\vnu_{i}^{\prime }%
		\mD_{j}\left( \overline{\mC}^{\left( 1\right) }\right) ^{-1}\left(
		N^{-1}\sum_{\ell =1}^{N}\vlambda_{\ell }^{\left( 1\right)
		}\varsigma _{\nu,\ell}^{-1}\right) .
	\end{eqnarray*}%
	The first term on the right-hand side above is cancelled out by the leading term in $II$. The two remaining terms constitute expressions $\Phi _{1,\mathbb{H}_{1}}$ and $\Phi _{2,\mathbb{H}_{1}}$, respectively. Next,
	consider the properties of $2k_{N,T}\sum_{i=2}^{N}\sum_{j=1}^{i-1}%
	\vvarpi_{i}^{\prime }\vvarpi_{j}$, which we write	as
	\begin{align}
		\label{eq:lead_stoch_IFE}
		& 2k_{N,T}\sum_{i=2}^{N}\sum_{j=1}^{i-1}\vvarpi_{i}^{\prime }%
		\vvarpi_{j} \notag \\
		=&2k_{N,T}\sum_{i=2}^{N}\sum_{j=1}^{i-1}\varsigma
		_{\nu,i}^{-1}\vnu_{i}^{\prime }\vnu_{j}\varsigma
		_{\nu,j}^{-1}-4k_{N,T}\sum_{i=2}^{N}\sum_{j=1}^{i-1}\varsigma _{\nu,i}^{-1}%
		\vnu_{i}^{\prime }\mD_{j}\left( \overline{\mC}^{\left(
			1\right) }\right) ^{-1}\left( N^{-1}\sum_{\ell =1}^{N}\vlambda%
		_{\ell }^{\left( 1\right) }\varsigma _{\nu,\ell}^{-1}\right) \notag
		 \\
		+&\left( N^{-1}\sum_{\ell =1}^{N}\varsigma _{\nu,\ell}^{-1}\vlambda_{\ell }^{\left( 1\right) \prime }\right) \left( \overline{\mC}%
		^{\left( 1\right) \prime }\right) ^{-1}\left(
		2k_{N,T}\sum_{i=2}^{N}\sum_{j=1}^{i-1}\mD_{i}^{\prime }\mD%
		_{j}\right) \left( \overline{\mC}^{\left( 1\right) }\right) ^{-1}\left(
		N^{-1}\sum_{\ell =1}^{N}\vlambda_{\ell }^{\left( 1\right)
		}\varsigma _{\nu,\ell}^{-1}\right) .  
	\end{align}%
	The order in probability of all three terms above is determined by the
	expected values of the elements within the double sum $\sum_{i=2}^{N}%
	\sum_{j=1}^{i-1}$. Conditioning on the values taken by factor loadings and
	error variances, we have
	\begin{eqnarray*}
		\E\left[ \nu _{i,t}\nu _{j,t}~|~\vlambda_{i},%
		\vlambda_{j},\sigma _{i}^{2},\sigma _{j}^{2}\right] &=&%
		\vlambda_{i}^{\left( 2\right) \prime }\mSigma_{F,22}\vlambda_{j}^{\left( 2\right) }, \\
		\E\left[ \mD_{i,t}\mD_{j,t}^{\prime }~|~\vlambda_{i},\vlambda_{j},\mLambda_{i},\mLambda_{j},\sigma _{i}^{2},\sigma _{j}^{2}%
		\right] &=&\left[
		\begin{array}{c}
			\vlambda_{i}^{\left( 2\right) \prime } \\
			\mLambda_{i}^{\left( 2\right) \prime }%
		\end{array}%
		\right] \mSigma_{F,22}\left[ \vlambda_{j}^{\left(
			2\right) },~\mLambda_{j}^{\left( 2\right) }\right] , \\
		\E\left[ \nu _{i,t}\mD_{jt}^{\prime }|~\vlambda%
		_{i},\vlambda_{j},\mLambda_{j},\sigma _{i}^{2},\sigma _{j}^{2}\right] &=&%
		\vlambda_{i}^{\left( 2\right) \prime }\mSigma_{F,22}\left[ \vlambda_{j}^{\left( 2\right) },~\mLambda_{j}^{\left( 2\right) }\right] ,
	\end{eqnarray*}%
	for $i\neq j$. The zero mean property of $\vlambda_{i}^{\left(
		2\right) }$ and $\mLambda_{i}^{\left( 2\right) }$ implies that $%
	\E\left[ \mD_{i,t}\mD_{j,t}^{\prime }\right] =\mZeros$. This result carries over to $\E\left[ \varsigma _{\nu,i}^{-1}%
	\vnu_{i}^{\prime }\mD_{j}\right] $ since independence of
	$\vlambda_{i}$ and $\sigma _{i}^{2}$ over cross-sections
	implies that we can write $\E\left[ \varsigma _{\nu,i}^{-1}\vnu_{i}^{\prime }\mD_{j}\right] =$ $\E\left[ \varsigma
	_{\nu,i}^{-1}\vlambda_{i}^{\left( 2\right) \prime }\right]
	\mSigma_{F,22}\E\left\{ \left[ \vlambda%
	_{j}^{\left( 2\right) },~\mLambda_{j}^{\left( 2\right) }\right]
	\right\} =\vzeros$. Consequently, the last two terms on the right-hand
	side of \eqref{eq:lead_stoch_IFE} are asymptotically centered around zero
	since their components $\overline{\mC}^{\left( 1\right) \prime }$ and $%
	N^{-1}\sum_{\ell =1}^{N}\varsigma _{\nu,\ell}^{-1}\vlambda_{\ell
	}^{\left( 1\right) \prime }$ converge to nonstochastic limiting expressions.
	Lastly, by the LIE, we have%
	\begin{equation*}
		\E\left[ \varsigma _{\nu,i}^{-1}\nu _{i,t}\nu _{j,t}\varsigma
		_{\nu,j}^{-1}\right] =\cov\left[ \varsigma _{\nu,i}^{-1},~%
		\vlambda_{i}^{\left( 2\right) \prime }\right] \mSigma_{F,22}\cov\left[ \vlambda_{j}^{\left(
			2\right) },~\varsigma _{\nu,j}^{-1}\right] ,
	\end{equation*}%
	and consequently
	\begin{equation*}
		\E\left[ 2k_{N,T}\sum_{i=2}^{N}\sum_{j=1}^{i-1}\varsigma _{\nu,i}^{-1}%
		\vnu_{i}^{\prime }\vnu_{j}\varsigma _{\nu,j}^{-1}%
		\right] =N\sqrt{T}\cov\left[ \varsigma _{\nu,1}^{-1},~%
		\vlambda_{1}^{\left( 2\right) \prime }\right] \mSigma_{F,22}\cov\left[ \vlambda_{1}^{\left(
			2\right) },~\varsigma _{\nu,1}^{-1}\right] ,
	\end{equation*}%
	which parallels results obtained in the proof of Proposition \ref{prop:CD_2WFE_H1}. It will be shown below that variation around this expected
	value is of lower order in probability. Assuming this for now, we state that%
	\begin{equation*}
		\plim_{N,T\rightarrow \infty }N^{-1}T^{-1/2}\sqrt{\frac{2}{TN\left(
				N-1\right) }}\sum_{i=2}^{N}\sum_{j=1}^{i-1}\vvarpi_{i}^{\prime }%
		\vvarpi_{j}=\cov\left[ \varsigma _{\nu,i}^{-1},~%
		\vlambda_{i}^{\left( 2\right) \prime }\right] \mSigma_{F,22}\cov\left[ \vlambda_{j}^{\left(
			2\right) },~\varsigma _{\nu,j}^{-1}\right] .
	\end{equation*}%
	A consequence of this result, as well as those on the second and third terms
	in Eq. \eqref{eq:lead_stoch_IFE} is that $2k_{N,T}\sum_{i=2}^{N}%
	\sum_{j=1}^{i-1}\vvarpi_{i}^{\prime }\vvarpi_{j}$
	is asymptotically centered around $0$ if $\varsigma _{\nu,i}^{-1}$\ and $%
	\vlambda_{i}^{\left( 2\right) \prime }$ are uncorrelated. In
	this special case, it is required to investigate the rate at which the
	variance of $2k_{N,T}\sum_{i=2}^{N}\sum_{j=1}^{i-1}\vvarpi%
	_{i}^{\prime }\vvarpi_{j}$ diverges. First, consider
	\begin{equation*}
		\var\left[ 2k_{N,T}\sum_{i=2}^{N}\sum_{j=1}^{i-1}\varsigma
		_{\nu,i}^{-1}\vnu_{i}^{\prime }\vnu_{j}\varsigma
		_{\nu,j}^{-1}\right] =\E\left[ 4k_{N,T}^{2}\left(
		\sum_{i=2}^{N}\sum_{j=1}^{i-1}\varsigma _{\nu,i}^{-1}\vnu%
		_{i}^{\prime }\vnu_{j}\varsigma _{\nu,j}^{-1}\right) ^{2}\right] .
	\end{equation*}%
	Apart from a superscript $^{\left( 2\right) }$ on factors and factor
	loadings as well as the absence of the expected value of $\varsigma
	_{\nu,i}^{-1}$, the term $\varsigma _{\nu,i}^{-1}\vnu_{i}^{\prime }%
	\vnu_{j}\varsigma _{\nu,j}^{-1}$ is identical to the one occurred
	in the proof of Proposition \ref{prop:CD_2WFE_H1}. Hence, we can follow the same reasoning to
	conclude that
	\begin{equation*}
		\var\left[ \sqrt{\frac{2}{TN\left( N-1\right) }}%
		\sum_{i=2}^{N}\sum_{j=1}^{i-1}\varsigma _{\nu,i}^{-1}\vnu%
		_{i}^{\prime }\vnu_{j}\varsigma _{\nu,j}^{-1}\right] =\largeO\left(
		T\right) .
	\end{equation*}%
	Next, consider the middle term of the third expression in Eq. \eqref{eq:lead_stoch_IFE}. After vectorizing this matrix, we can write
	\begin{eqnarray*}
		\var\left[ \vec\left( \sum_{i=2}^{N}\sum_{j=1}^{i-1}\mD_{i}^{\prime }\mD_{j}\right) \right] &=&\E\left[ \vec\left(
		\sum_{i=2}^{N}\sum_{j=1}^{i-1}\mD_{i}^{\prime }\mD_{j}\right)
		\vec\left( \sum_{i^{\prime }=2}^{N}\sum_{j^{\prime }=1}^{i^{\prime }-1}%
		\mD_{i^{\prime }}^{\prime }\mD_{j^{\prime }}\right) ^{\prime }%
		\right] \\
		&=&\sum_{i=2}^{N}\sum_{j=1}^{i-1}\E\left[ \vec\left( \mD%
		_{i}^{\prime }\mD_{j}\right) \vec\left( \mD_{i}^{\prime }%
		\mD_{j}\right) ^{\prime }\right] .
	\end{eqnarray*}%
	Recall here that
	\begin{eqnarray*}
		\mD_{i}^{\prime }\mD_{j} &=&\left[
		\begin{array}{c}
			\vlambda_{i}^{\left( 2\right) } \\
			\mLambda_{i}^{\left( 2\right) \prime }%
		\end{array}%
		\right] \mF^{\left( 2\right) ^{\prime }}\mF^{\left( 2\right) }%
		\left[ \vlambda_{j}^{\left( 2\right) },\mLambda%
		_{j}^{\left( 2\right) }\right] +\left[
		\begin{array}{c}
			\vepsi_{i}^{\prime } \\
			\mE_{i}^{\prime }%
		\end{array}%
		\right] \left[ \vepsi_{j},~\mE_{j}\right] \\
		&&+\left[
		\begin{array}{c}
			\vlambda_{i}^{\left( 2\right) } \\
			\mLambda_{i}^{\left( 2\right) \prime }%
		\end{array}%
		\right] \mF^{\left( 2\right) ^{\prime }}\left[ \vepsi_{j},~\mE_{j}\right] +\left[
		\begin{array}{c}
			\vepsi_{i}^{\prime } \\
			\mE_{i}^{\prime }%
		\end{array}%
		\right] \mF^{\left( 2\right) }\left[ \vlambda%
		_{j}^{\left( 2\right) },~\mLambda_{j}^{\left( 2\right) }\right]
		,
	\end{eqnarray*}%
	whose leading term is the first on the right-hand side of the first line and
	where we recall that $\mC_{i}^{\left( 2\right) }=\left[ \vlambda_{j}^{\left( 2\right) },~\mLambda_{j}^{\left( 2\right)
	}\right] $. Hence, the leading term of $\vec\left( \mD_{i}^{\prime }%
	\mD_{j}\right) \vec\left( \mD_{i}^{\prime }\mD%
	_{j}\right) ^{\prime }$ is written%
	\begin{equation*}
		\vec\left( \mD_{i}^{\prime }\mD_{j}\right) \vec\left( \mD%
		_{i}^{\prime }\mD_{j}\right) ^{\prime }=\left( \mC_{i}^{\left(
			2\right) }\otimes \mC_{j}^{\left( 2\right) }\right)^{\prime}
		\vec\left( \mF^{\left( 2\right) ^{\prime }}\mF^{\left( 2\right)
		}\right) \vec\left( \mF^{\left( 2\right) ^{\prime }}\mF^{\left(
			2\right) }\right) \left( \mC_{i}^{\left( 2\right) }\otimes \mC%
		_{j}^{\left( 2\right) }\right) ,
	\end{equation*}%
	which is nothing more than a multivariate version of the leading term in $%
	\varsigma _{\nu,i}^{-1}\vnu_{i}^{\prime }\vnu%
	_{j}\varsigma _{\nu,j}^{-1}$. The current case is even simpler due to the
	absence of the inverse error standard deviation $\varsigma _{\nu,i}^{-1}$.
	Consequently, finite fourth moments for factors and finite second moments
	for loadings together with the reasoning applied for the variance of $%
	2k_{N,T}\sum_{i=2}^{N}\sum_{j=1}^{i-1}\varsigma _{\nu,i}^{-1}\vnu%
	_{i}^{\prime }\vnu_{j}\varsigma _{\nu,j}^{-1}$ allow us to
	conclude that%
	\begin{equation*}
		\var\left[ \sqrt{\frac{2}{TN\left( N-1\right) }}\vec\left(
		\sum_{i=2}^{N}\sum_{j=1}^{i-1}\mD_{i}^{\prime }\mD_{j}\right) %
		\right] =\largeO_{P}\left( T\right) .
	\end{equation*}%
	Since the outer terms $\left( N^{-1}\sum_{\ell =1}^{N}\varsigma _{\ell
		,v}^{-1}\vlambda_{\ell }^{\left( 1\right) \prime }\right) $ and
	$\left( \overline{\mC}^{\left( 1\right) \prime }\right) ^{-1}$ converge to
	finite, non-stochastic components, this rate of divergence carries over to
	the variance of the entire term $\left( N^{-1}\sum_{\ell =1}^{N}\varsigma
	_{\nu,\ell}^{-1}\vlambda_{\ell }^{\left( 1\right) \prime
	}\right) \left( \overline{\mC}^{\left( 1\right) \prime }\right)
	^{-1}\left( 2k_{N,T}\sum_{i=2}^{N}\sum_{j=1}^{i-1}\mD_{i}^{\prime }%
	\mD_{j}\right) \left( \overline{\mC}^{\left( 1\right) }\right)
	^{-1}\left( N^{-1}\sum_{\ell =1}^{N}\vlambda_{\ell }^{\left(
		1\right) }\varsigma _{\nu,\ell}^{-1}\right) $.
	
	Lastly, consider the second term in \eqref{eq:lead_stoch_IFE}, whose middle
	term is of particular interest. The variance of this term is given by%
	\begin{eqnarray*}
		\var\left[ \sum_{i=2}^{N}\sum_{j=1}^{i-1}\varsigma
		_{\nu,i}^{-1}\vnu_{i}^{\prime }\mD_{j}\right] &=&\E%
		\left[ \sum_{i=2}^{N}\sum_{j=1}^{i-1}\sum_{i^{\prime }=2}^{N}\sum_{j^{\prime
			}=1}^{i^{\prime }-1}\mD_{j}^{\prime }\vnu_{i}
		\vnu_{i^{\prime }}^{\prime }\mD_{j^{\prime }}\varsigma
		_{\nu,i}^{-1}\varsigma _{i^{\prime },v}^{-1}\right] \\
		&=&\sum_{i=2}^{N}\sum_{j=1}^{i-1}\sum_{t,t^{\prime }}^{T}\E\left[
		\mD_{j,t}\nu _{i,t}\nu _{i,t^{\prime }}\mD_{j,t^{\prime
		}}^{\prime }\varsigma _{\nu,i}^{-2}\right] .
	\end{eqnarray*}%
	Similar to the previous two terms, the leading expression in the term above
	is given by
	\begin{equation*}
		\E\left[ \mC_{j}^{\left( 2\right) \prime }\mF_{t}^{\left( 2\right)}
		\mF_{t}^{\left( 2\right)\prime }\vlambda_{i}^{\left( 2\right)}\vlambda{i}^{\left( 2\right)\prime}\mF_{t}^{\left( 2\right)}\mF_{t}^{\left( 2\right)\prime }\mC_{j}^{\left(
			2\right) }\varsigma _{\nu,i}^{-2}\right] \leq M~\E\left[ \mC%
		_{j}^{\left( 2\right) \prime }\vlambda_{i}^{\left( 2\right)}\varsigma _{\nu,i}^{-2}%
		\vlambda_{i}^{\left( 2\right)\prime}\mC_{j}^{\left( 2\right) }\right]
	\end{equation*}%
	where the upper bound on the right-hand side above is established by the
	LIE\ and boundedness of the fourth-order moments of $\mF_{t}$.
	Additionally assuming that $\E\left[ \vlambda_{i}^{\left( 2\right)}\varsigma _{\nu,i}^{-2}\vlambda_{i}^{\left( 2\right)\prime }\right] $ is
	bounded and conditioning on $\mC_{j}^{\left( 2\right) }$, we arrive at
	
	\begin{equation*}
		\sum_{i=2}^{N}\sum_{j=1}^{i-1}\sum_{t,t^{\prime }}^{T}\E\left[
		\mD_{j,t}\nu _{i,t}\nu _{i,t^{\prime }}\mD_{j,t^{\prime
		}}^{\prime }\varsigma _{\nu,i}^{-2}\right] =\largeO\left[ \left( NT\right)
		^{2}\right] ,
	\end{equation*}%
	implying that%
	\begin{equation*}
		\var\left[ \sqrt{\frac{2}{TN\left( N-1\right) }}%
		\sum_{i=2}^{N}\sum_{j=1}^{i-1}\varsigma _{\nu,i}^{-1}\vnu%
		_{i}^{\prime }\mD_{j}\right] =\largeO\left( T\right) ,
	\end{equation*}%
	which carries over to the variance of $k_{N,T}\sum_{i=2}^{N}\sum_{j=1}^{i-1}%
	\varsigma _{\nu,i}^{-1}\vnu_{i}^{\prime }\mD_{j}\left(
	\overline{\mC}^{\left( 1\right) }\right) ^{-1}\left( N^{-1}\sum_{\ell
		=1}^{N}\vlambda_{\ell }^{\left( 1\right) }\varsigma _{\nu,\ell}^{-1}\right) $.
\end{proof}

\clearpage

\subsection{Auxiliary lemmas for Section \ref{section::proofs_power}}

\begin{lemma}
	\label{lem:facpartition}
	Suppose that the true model is \eqref{eq:themodelFactor} and that Assumptions \ref{ass:errors}--\ref{ass:indep} and \ref{ass:failrankcond} hold. Also, assume that the factor estimator $\widehat{\mF}$ is given by  $\widehat{\mF} = \left[ \overline{\vy}, \overline{\mX} \right]$. Then we can without loss of generality assume that the true factors and loadings can be partitioned into
	\begin{align*}
		\vf_t &= \begin{bmatrix}
			\vf_t^{(1)} \\ \vf_t^{(2)}
		\end{bmatrix}& \text{ and } & & \vlambda_{i} = \begin{bmatrix}
			\vlambda_{i}^{(1)} \\ \vlambda_{i}^{(2)}
		\end{bmatrix} & & \text{ and } & & \mLambda_i = \begin{bmatrix}
			\mLambda_i^{(1)} \\ \mLambda_i^{(2)}
		\end{bmatrix}
	\end{align*}
	such that $\E\left[\vf_t^{(1)} \vf_t^{(2)\prime} \right] = \mZeros$, $\E \left[ \vlambda_{i}^{(2)} \right] = \vzeros$ and  $\E \left[ \mLambda_i^{(2)} \right] = \mZeros$. It holds that $\vf^{(1)}$ and $\vf^{(2)}$ are $(m+1) \times 1$ and $(r-m-1) \times 1$, respectively. $\vlambda_{i}^{(1)}$ and $\vlambda_{i}^{(2)}$ are $(m+1) \times 1$ and $(r-m-1) \times 1$. Lastly, $\mLambda_i^{(1)}$ and $\mLambda_i^{(2)}$ are $(m+1) \times m$ and $(r-m-1) \times m$.
\end{lemma}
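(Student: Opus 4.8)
The plan is to exploit the fact that the factors and loadings enter model \eqref{eq:themodelFactor} only through the products $\mF\vlambda_{i}$ and $\mF\mLambda_{i}$, so that for any invertible $r\times r$ matrix $\mR$ the reparametrisation $\mF\mapsto\mF\mR$, $\vlambda_{i}\mapsto\mR^{-1}\vlambda_{i}$, $\mLambda_{i}\mapsto\mR^{-1}\mLambda_{i}$ leaves the observables $\vy_{i}$ and $\mX_{i}$ (and hence $\hat{\mF}=[\overline{\vy},\overline{\mX}]$) exactly unchanged. The task therefore reduces to constructing a single $\mR$ such that, after the transformation, the factor vector splits into a block $\vf_{t}^{(1)}$ of dimension $m+1$ and a block $\vf_{t}^{(2)}$ of dimension $r-m-1$ with (i) $\E[\vlambda_{i}^{(2)}]=\vzeros$ and $\E[\mLambda_{i}^{(2)}]=\mZeros$, and (ii) $\E[\vf_{t}^{(1)}\vf_{t}^{(2)\prime}]=\mZeros$. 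I would build $\mR=\mR_{1}\mR_{2}$ in two stages, the ordering being essential because a rotation achieving one of the two requirements generically destroys the other.

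Stage one handles the means. Under Assumption \ref{ass:failrankcond} the $r\times(m+1)$ matrix $[\vmu_{\vlambda},\vmu_{\mLambda}]$ has full column rank $m+1<r$, so $\col([\vmu_{\vlambda},\vmu_{\mLambda}])$ is an $(m+1)$-dimensional subspace of $\SR^{r}$. I would choose $\mR_{1}=[\mR_{1,a},\mR_{1,b}]$ with the first $m+1$ columns $\mR_{1,a}$ forming a basis of this column space and the remaining $r-m-1$ columns $\mR_{1,b}$ completing $\mR_{1}$ to an invertible matrix. Then $\mR_{1}^{-1}[\vmu_{\vlambda},\vmu_{\mLambda}]$ has its lower $(r-m-1)$ rows equal to zero, so the partitioned loadings $\tilde{\vlambda}_{i}=\mR_{1}^{-1}\vlambda_{i}$ and $\tilde{\mLambda}_{i}=\mR_{1}^{-1}\mLambda_{i}$ already satisfy $\E[\tilde{\vlambda}_{i}^{(2)}]=\vzeros$ and $\E[\tilde{\mLambda}_{i}^{(2)}]=\mZeros$. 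The transformed factor covariance $\mSigma_{1}=\mR_{1}'\mSigma_{\mF}\mR_{1}$, partitioned conformably into $\mSigma_{1,11},\mSigma_{1,12},\mSigma_{1,22}$, will in general have a nonzero off-diagonal block, so requirement (ii) is not yet met.

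Stage two decorrelates the two blocks without disturbing the means. Here I would take the unit block-triangular matrix $\mR_{2}=\left[\begin{smallmatrix}\mI&\mG\\\mZeros&\mI\end{smallmatrix}\right]$, whose inverse $\mR_{2}^{-1}=\left[\begin{smallmatrix}\mI&-\mG\\\mZeros&\mI\end{smallmatrix}\right]$ acts as the identity on the lower block of the loadings; consequently $\vlambda_{i}^{(2)}=\tilde{\vlambda}_{i}^{(2)}$ and $\mLambda_{i}^{(2)}=\tilde{\mLambda}_{i}^{(2)}$, and the zero-mean property from stage one is preserved. Writing the new factors as $\vf_{t}^{(1)}=\tilde{\vf}_{t}^{(1)}$ and $\vf_{t}^{(2)}=\mG'\tilde{\vf}_{t}^{(1)}+\tilde{\vf}_{t}^{(2)}$, the cross-block covariance becomes $\E[\vf_{t}^{(1)}\vf_{t}^{(2)\prime}]=\mSigma_{1,11}\mG+\mSigma_{1,12}$, which vanishes for $\mG=-\mSigma_{1,11}^{-1}\mSigma_{1,12}$. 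This inverse exists because $\mSigma_{1}$ is positive definite (as $\mSigma_{\mF}\succ\mZeros$ and $\mR_{1}$ is nonsingular), so its leading principal block $\mSigma_{1,11}$ is positive definite.

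The main obstacle, and the point to argue carefully, is precisely this compatibility: the mean-zeroing rotation of stage one is dictated by $\col([\vmu_{\vlambda},\vmu_{\mLambda}])$ and cannot simultaneously be chosen to block-diagonalise $\mSigma_{\mF}$, while a generic decorrelating rotation would reintroduce a nonzero mean in $\vlambda_{i}^{(2)}$. The unit upper-triangular form of $\mR_{2}$ resolves this by construction. It then remains only to note that all relevant parts of Assumptions \ref{ass:factors}--\ref{ass:loadings} and \ref{ass:indep} are invariant under the fixed (sample-size-independent) linear map $\mR$: covariance stationarity and absolute summability of the autocovariances of $\vf_{t}$ are preserved, the bounded fourth moments of $\vf_{t}$, $\vlambda_{i}$ and $\mLambda_{i}$ survive multiplication by finite matrices, the i.i.d.\ structure across $i$ and $t$ as well as the mutual independence of Assumption \ref{ass:indep} are unaffected, and $\mSigma_{\mF}$ remains positive definite. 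Hence the stated partition may be assumed without loss of generality, which is what the lemma asserts.
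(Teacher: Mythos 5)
Your proof is correct, and it reaches the lemma by a construction organized differently from the paper's, although both rest on the same two pillars: the rotational indeterminacy of the pair (factors, loadings) and the fact that under Assumption \ref{ass:failrankcond} the mean-loading matrix $\left[\vmu_{\vlambda},\vmu_{\mLambda}\right]$ spans only an $(m+1)$-dimensional subspace of $\SR^{r}$. The paper works on the factor side in one shot: it sets $\vf_t^{(1)} = \mC_0'\vf_t$ with $\mC_0 = \left[\vmu_{\vlambda},\vmu_{\mLambda}\right]\mB$, i.e.\ exactly the linear combination that cross-section averages identify in the limit, defines $\vf_t^{\bot}$ as the population projection residual of $\vf_t$ on $\vf_t^{(1)}$ (which gives block orthogonality by construction), shows that $\vf_t^{\bot}$ lies in $\ker(\mC_0')$, a space of dimension $r-m-1$, and writes it in a kernel basis $\mPhi_2$ to obtain $\vf_t^{(2)}$; the zero means of $\vlambda_{i}^{(2)} = \mPhi_2'\vlambda_{i}$ and $\mLambda_i^{(2)}$ then come for free from $\mPhi_2'\mC_0 = \mZeros$. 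You instead work on the loading side first (an adapted basis for $\col(\left[\vmu_{\vlambda},\vmu_{\mLambda}\right])$ zeroes the lower-block means) and then restore factor-block orthogonality with a unit block-triangular map whose coefficient $\mG$ is again a population regression coefficient; the triangular structure is what protects the means, a compatibility issue the paper never has to confront because its kernel-basis step secures both requirements simultaneously. Your route is more elementary linear algebra and makes the tension between the two requirements explicit; the paper's route has the advantage that $\vf_t^{(1)}$ is, by definition, the combination estimated by $\hat{\mF}$, so its later use in Proposition \ref{prop:CD_CCE_H1} is immediate. In your parametrization the same property holds, but only after observing that the zero lower-block means force $\mR^{-1}\mC_0$ to have zero lower block, so that the identified combination involves $\vf_t^{(1)}$ alone; it would be worth adding that one line, since this is what the lemma is for.

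One small slip: the lemma requires the uncentered second moment $\E\left[\vf_t^{(1)} \vf_t^{(2)\prime} \right]$ to vanish, whereas your $\mG$ is built from the covariance $\mSigma_{\mF}$; unless the factors are centered these differ. The fix is immediate: replace $\mR_1'\mSigma_{\mF}\mR_1$ by the second-moment matrix $\mR_1'\E\left[\vf_t\vf_t'\right]\mR_1$, whose leading block is positive definite whenever $\mSigma_{\mF}$ is, and your argument goes through verbatim (this matches the paper's own proof, which works with $\E\left[\vf_t\vf_t^{(1)\prime}\right]$ rather than with covariances).
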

\begin{proof}[Proof of Lemma \ref{lem:facpartition}]\ \\
	Under a violation of the rank condition, cross-section averages of all
	variables will only identify the $m+1$-dimensional linear combination $%
	\vf_{t}^{\left( 1\right) }=\mC_{0}^{\prime }\vf_{t}$ of
	the $r$ true common factors. The $\ r\times \left( m+1\right) $ matrix $%
	\mC_{0}$ is given by%
	\begin{align*}
		\mC_{0}
		&=\left[
		\E\left[ \vlambda_{i}\right],\E\left[
		\mLambda_{i}\right]\right]\mB.
	\end{align*}%
	Now define the $r \times (m+1)$ matrix $\mPhi _{1}=\E\left[ \vf_{t}\vf%
	_{t}^{\left( 1\right) \prime }\right] \left( \E\left[ \vf%
	_{t}^{\left( 1\right) }\vf_{t}^{\left( 1\right) \prime }\right]
	\right) ^{-1}$ and the $r \times 1$ vector $\vf_{t}^{\bot }=\vf_{t}-\mPhi_{1}\vf%
	_{t}^{\left( 1\right) }$ and note that by construction it holds that $%
	\E\left[ \vf_{t}^{\bot }\vf_{t}^{\left( 1\right)
		\prime }\right] =\mZeros$. Using these two expressions, we can decompose
	\begin{equation}
		\vlambda_{i}^{\prime }\vf_{t}=\vlambda%
		_{i}^{\prime }\mPhi_{1}\vf_{t}^{\left( 1\right) }+\vlambda_{i}^{\prime }\vf_{t}^{\bot }\text{.}  \label{eq:commoncomp_decomp}
	\end{equation}%
	The expression above decomposes the common component of a random variable
	that has a factor structure into the variation due to the $\left( m+1\right)
	$-dimensional linear combination of the $r$ true factors that is identified
	by a cross-section average-based factor estimator and residual common
	variation that is not picked up. This residual source of common variation is
	bound to be part of the error term in a misspecified factor model.
	
	Next, consider the product of $\mC_{0}^{\prime }$ and $\vf_{t}^{\bot }$
	
	\begin{eqnarray*}
		\mC_{0}^{\prime }\vf_{t}^{\bot } &=&\mC_{0}^{\prime }%
		\vf_{t}-\mC_{0}^{\prime }\E\left[ \vf_{t}%
		\vf_{t}^{\left( 1\right) \prime }\right] \left( \E\left[
		\vf_{t}^{\left( 1\right) }\vf_{t}^{\left( 1\right) \prime }%
		\right] \right) ^{-1}\vf_{t}^{\left( 1\right) } \\
		&=&\vf_{t}^{\left( 1\right) }-\E\left[ \mC_{0}^{\prime
		}\vf_{t}\vf_{t}^{\left( 1\right) \prime }\right] \left(
		\E\left[ \vf_{t}^{\left( 1\right) }\vf_{t}^{\left(
			1\right) \prime }\right] \right) ^{-1}\vf_{t}^{\left( 1\right) } \\
		&=&\vf_{t}^{\left( 1\right) }-\vf_{t}^{\left( 1\right) } \\
		&=&\vzeros\text{,}
	\end{eqnarray*}%
	In order to appreciate the implications of this result, let $\mathrm{col}%
	\left( \mC_{0}\right) $ denote the column space of $\mC_{0}$
	and $\ker \left( \mC_{0}^{\prime }\right) $ the kernel of $\mC%
	_{0}^{\prime }$. The result reported in the last sequence of equations above
	states that $\vf_{t}^{\bot }\in \ker \left( \mC_{0}^{\prime
	}\right) $. Note now that $\dim \left( \mathrm{col}\left( \mC%
	_{0}\right) \right) +\dim \left( \ker \left( \mC_{0}^{\prime }\right)
	\right) =r$ \citep[see e.g.][Exercise 4.3(c)]{abadir2005matrix} and that $%
	\dim \left( \mathrm{col}\left( \mC_{0}\right) \right) =\dim \left(
	\mathrm{col}\left( \mC_{0}^{\prime }\right) \right) $ \citep[Exercise 4.5(a)]{abadir2005matrix}. As a result of the rank condition on $%
	\mC_{0}$, we have $\dim \left( \mathrm{col}\left( \mC%
	_{0}^{\prime }\right) \right) =m+1$ and hence $\dim \left( \ker \left(
	\mC_{0}^{\prime }\right) \right) =$ $r-m-1$. Consequently, $\vf%
	_{t}^{\bot }$ is an element in the $s=$ $r-m-1$-dimensional vector space
	orthogonal to $\mC_{0}$. This implies that, given an arbitrary basis
	for $\ker \left( \mC_{0}^{\prime }\right) $, here denoted by the $%
	r\times s$-dimensional matrix $\mPhi _{2}$, we can express $\vf%
	_{t}^{\bot }$ as $\vf_{t}^{\bot }=\mPhi _{2}\vf_{t}^{\left(
		2\right) }$ where $\vf_{t}^{\left( 2\right) }$ is some $s\times 1$
	vector of coordinates. Plugging this expression into \eqref{eq:commoncomp_decomp} results in
	\begin{eqnarray*}
		\vlambda_{i}^{\prime }\vf_{t} &=&\vlambda%
		_{i}^{\prime }\mPhi _{1}\vf_{t}^{\left( 1\right) }+\vlambda_{i}^{\prime }\mPhi _{2}\vf_{t}^{\left( 2\right) } \\
		&=&\vlambda_{i}^{\prime }\left[
		\begin{array}{cc}
			\mPhi _{1}, & \mPhi _{2}%
		\end{array}%
		\right] \left[
		\begin{array}{c}
			\vf_{t}^{\left( 1\right) } \\
			\vf_{t}^{\left( 2\right) }%
		\end{array}%
		\right] \\
		&=&\left[
		\begin{array}{cc}
			\vlambda_{i}^{\left( 1\right) \prime }, & \vlambda%
			_{i}^{\left( 2\right) \prime }%
		\end{array}%
		\right] \left[
		\begin{array}{c}
			\vf_{t}^{\left( 1\right) } \\
			\vf_{t}^{\left( 2\right) }%
		\end{array}%
		\right]
	\end{eqnarray*}
	
	From the previously mentioned result $\E\left[ \vf_{t}^{\bot }%
	\vf_{t}^{\left( 1\right) \prime }\right] =\mZeros$ it follows that
	$\E\left[ \vf_{t}^{\left( 2\right) }\vf_{t}^{\left(
		1\right) \prime }\right] =\mZeros$. Furthermore, given that $\E%
	\left[ \vlambda_{i}\right] =\mC_{0}\mB^{-1}\left[1,\vzeros'\right]^{\prime }$ and that $\vlambda_{i}^{\left( 2\right)
		\prime }=$ $\vlambda_{i}^{\prime }\mPhi _{2}$ where $\mPhi _{2}$
	is a basis for $\ker \left( \mC_{0}^{\prime }\right) $,
	\begin{eqnarray*}
		\E\left[ \vlambda_{i}^{\left( 2\right) }\right]
		&=&\mPhi _{2}^{\prime }\mC_{0}\mB^{-1}\left[1,\vzeros'\right]^{\prime }\\
		&=&\vzeros.
	\end{eqnarray*}
	Analogously, we can define $\left[\mLambda_{i}^{( 1)\prime}, \mLambda_{i}^{(2)\prime} \right] = \mLambda_{i}^{\prime }\left[\mPhi _{1}, \mPhi _{2}
	\right]$ where $\E\left[ \mLambda_{i}^{\left( 2\right) }\right] = \mZeros$ by the same reasoning as above.
	
	Consequently, whenever the factor estimates include cross-section
	averages of the dependent variable $\mY$, we can rewrite the common
	component as being generated by a linear combination of the true factors and
	the true loadings such that the two moment conditions above are satisfied.
	The matrix $\mPhi =\left( \mPhi _{1},\mPhi _{2}\right) $ which links the
	loadings $\vlambda_{i}^{\prime }$ ($\mLambda_{i}^{\prime }$) and $\left[\vlambda_{i}^{\left( 1\right) \prime }, \vlambda_{i}^{\left( 2\right) \prime}\right] $ ($\left[\mLambda_{i}^{\left( 1\right) \prime }, \mLambda_{i}^{\left( 2\right) \prime}\right] $) is of full rank by the rank assumptions on $\E\left[
	\vf_{t}\vf_{t}^{\prime }\right] $ as well as $\mC_{0}$.
	Hence, $\mPhi $ is an invertible rotation and we can write $\left[
	\vlambda_{i}^{\left( 1\right) \prime },\vlambda_{i}^{\left(
		2\right) \prime }\right] \left[ \vf_{t}^{\left( 1\right) \prime },
	\vf_{t}^{\left( 2\right) \prime }\right] ^{\prime }=\vlambda_{i}^{\prime }\mPhi \mPhi ^{-1}\vf_{t}$. Now note that in
	factor models it is only the common component $\vlambda%
	_{i}^{\prime }\vf_{t}$ that is identified whereas the loadings $%
	\vlambda_{i}$ and factors $\vf_{t}$ are identified only
	up to an invertible rotation. As argued above, the matrix $\mPhi $ is
	invertible. Hence, irrespective of the true values of $\vlambda	_{i}$, $\mLambda_{i}$ and $\vf_{t}$ (subject to the assumptions we make on them), we
	can assume that the data are generated by $\mPhi ^{\prime}	\vlambda_{i}$,  $\mPhi ^{\prime}	\mLambda_{i}$ and $\mPhi^{-1}\vf_{t}$ instead.
\end{proof}

\bigskip

\begin{lemma}
	\label{lem:sumij_varpii_PF_varpij}
	Under Assumptions \ref{ass:errors}-\ref{ass:rank} and $\E[|\varsigma_{\nu,i}^{-1}|^{4}]<M$,
	\begin{align}
		k_{N,T}\sum_{i,j}^{N}\vvarpi_{i}^{\prime}\mP_{\widehat{\mF}}\vvarpi_{j} = \largeO_{P}\left[\max\left(N^{-1},T^{-1}\right)\left(k_{N,T}\sum_{i,j}^{N}\vvarpi_{i}^{\prime}\vvarpi_{j}\right)\right]
	\end{align}
	
\end{lemma}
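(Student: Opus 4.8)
The plan is to collapse both double sums into quadratic forms in the single aggregate vector $\vvarpi = \sum_{i=1}^N \vvarpi_i$. Since $\mP_{\hat{\mF}}$ is symmetric and idempotent, $\sum_{i,j}^{N}\vvarpi_i'\mP_{\hat{\mF}}\vvarpi_j = \vvarpi'\mP_{\hat{\mF}}\vvarpi = \|\mP_{\hat{\mF}}\vvarpi\|^2$ and $\sum_{i,j}^{N}\vvarpi_i'\vvarpi_j = \|\vvarpi\|^2$, so the assertion is equivalent to $\|\mP_{\hat{\mF}}\vvarpi\|^2 = \largeO_{P}(\max(N^{-1},T^{-1}))\|\vvarpi\|^2$. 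Bounding $\|\mP_{\hat{\mF}}\vvarpi\|^2 \le \|\hat{\mF}'\vvarpi\|^2 \, \|(\hat{\mF}'\hat{\mF})^{-1}\|$ and invoking \eqref{eq:FhatFhat} to get $\|(\hat{\mF}'\hat{\mF})^{-1}\| = \largeO_{P}(T^{-1})$, the task reduces to showing $\|\hat{\mF}'\vvarpi\|^2 = \largeO_{P}(\max(TN^{-1},1))\,\|\vvarpi\|^2$.

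The structural fact driving the argument is that $\vvarpi_i$ carries no common variation along $\mF^{(1)}$, the $(m+1)$-dimensional factor space consistently estimated by $\hat{\mF}$. By Lemma \ref{lem:facpartition} both $\vnu_i = \mF^{(2)}\vlambda_i^{(2)} + \vepsi_i$ and $\mD_i = \mF^{(2)}[\vlambda_i^{(2)}, \mLambda_i^{(2)}] + [\vepsi_i, \ve_i]$ load only on the orthogonal factors $\mF^{(2)}$, so I would write $\vvarpi_i = \mF^{(2)}\vg_i + \vh_i$, where $\vg_i = \varsigma_{i,v}^{-1}\vlambda_i^{(2)} - [\vlambda_i^{(2)}, \mLambda_i^{(2)}](\overline{\mC}^{(1)})^{-1}(N^{-1}\sum_{\ell}\vlambda_\ell^{(1)}\varsigma_{\ell,v}^{-1})$ collects the $\mF^{(2)}$-loadings and $\vh_i$ is the purely idiosyncratic remainder. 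Summing gives $\vvarpi = \mF^{(2)}\vg + \vh$ with $\vg = \sum_i \vg_i$, $\vh = \sum_i \vh_i$, hence $\hat{\mF}'\vvarpi = \hat{\mF}'\mF^{(2)}\vg + \hat{\mF}'\vh$, and the two pieces can be controlled separately.

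For the common piece, writing $\hat{\mF} = \mF\overline{\mC} + \overline{\mU}$ and using $\mC_0'\mF' = \mF^{(1)\prime}$ together with $\E[\vf_t^{(1)}\vf_t^{(2)\prime}] = \mZeros$ from Lemma \ref{lem:facpartition}, a law of large numbers gives $T^{-1}\mF^{(1)\prime}\mF^{(2)} = \largeO_{P}(T^{-1/2})$; combining this with $\overline{\mC} - \mC_0 = \largeO_{P}(N^{-1/2})$ acting on $\mF^{(2)\prime}\mF^{(2)} = \largeO_{P}(T)$ and with the bound \eqref{eq:UbarF} on $\overline{\mU}'\mF^{(2)}$ yields $\|\hat{\mF}'\mF^{(2)}\| = \largeO_{P}(\sqrt{T}) + \largeO_{P}(N^{-1/2}T)$. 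For the idiosyncratic piece I would mimic the proof of Lemma \ref{lem:I2}, splitting $\hat{\mF}'\vh = \overline{\mC}'\mF'\vh + \overline{\mU}'\vh$: independence of the idiosyncratic errors from the factors gives $\|\mF'\vh\| = \largeO_{P}(\sqrt{NT})$, whereas the nonvanishing diagonal contribution $N^{-1}\sum_i \mU_i'\vh_i$ to $\overline{\mU}'\vh$ is genuinely $\largeO_{P}(T)$, so that $\|\hat{\mF}'\vh\| = \largeO_{P}(\sqrt{NT}) + \largeO_{P}(T)$.

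It then remains to match these bounds against $\|\vvarpi\|^2$, and because $\E[\vg_i] = \cov[\varsigma_{1,v}^{-1}, \vlambda_1^{(2)}] + \smallO(1)$ the order of $\|\vg\|$ — and thus of the dominant term $\vg'\mF^{(2)\prime}\mF^{(2)}\vg$ in $\|\vvarpi\|^2$ — depends on whether this covariance vanishes, so I would close the argument in the two regimes of Proposition \ref{prop:CD_CCE_H1}. When $\cov[\varsigma_{1,v}^{-1},\vlambda_1^{(2)}] \neq \vzeros$ one has $\|\vg\| = \largeO_{P}(N)$, $\|\vvarpi\|^2 = \Theta_P(N^2 T)$ and $\|\hat{\mF}'\vvarpi\|^2 = \largeO_{P}(N^2 T + N T^2) = \max(TN^{-1},1)\|\vvarpi\|^2$; in the degenerate case $\|\vg\| = \largeO_{P}(\sqrt{N})$, $\|\vvarpi\|^2 = \Theta_P(NT)$ (bounded below by the positive diagonal $\sum_i \|\vh_i\|^2$) and $\|\hat{\mF}'\vvarpi\|^2 = \largeO_{P}(NT + T^2) = \max(TN^{-1},1)\|\vvarpi\|^2$, so the relative bound holds throughout. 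The main obstacle I anticipate is the bookkeeping around $\overline{\mU}'\vh$: isolating its non-vanishing diagonal term, which is exactly what produces the $T^{-1}$ branch of $\max(N^{-1},T^{-1})$, while verifying that all off-diagonal and cross terms are of strictly smaller order and that $\|\vvarpi\|^2$ is bounded below by its nominal order in each covariance regime.
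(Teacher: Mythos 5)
Your proposal is correct and follows essentially the same route as the paper's proof: the same split of $\vvarpi_{i}$ into a purely idiosyncratic part and an $\mF^{(2)}$-loading part via Lemma \ref{lem:facpartition}, the same key inputs (result \eqref{eq:FhatFhat} for $\|(T^{-1}\hat{\mF}'\hat{\mF})^{-1}\|$, the block orthogonality giving $T^{-1}\mF^{(1)\prime}\mF^{(2)}=\largeO_{P}(T^{-1/2})$, the zero mean of $\overline{\mC}^{(2)}$, and \eqref{eq:UbarF}), and the same two-regime matching against the exact order of $\sum_{i,j}^{N}\vvarpi_{i}^{\prime}\vvarpi_{j}$ according to whether $\cov\left[\varsigma_{1,v}^{-1},\vlambda_{1}^{(2)}\right]$ vanishes. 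The only cosmetic differences are that you aggregate to $\vvarpi=\sum_{i=1}^{N}\vvarpi_{i}$ before projecting rather than splitting the quadratic form into common and idiosyncratic blocks, and that you re-derive the idiosyncratic bound (the $\largeO_{P}(T)$ diagonal contribution to $\overline{\mU}'\vh$) directly where the paper simply invokes Lemma \ref{lem:I2} with weights $w_{i}=\varsigma_{i,\nu}^{-1}$.
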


\begin{proof}[Proof of Lemma \ref{lem:sumij_varpii_PF_varpij}.]
	Note that
	\[
	\vvarpi_{i}=\left[\varsigma_{\nu,i}^{-1}\vnu_{i}-\mD_{i}\left(\overline{\mC}^{\left(1\right)}\right)^{-1}\left(N^{-1}\sum_{\ell=1}^{N}\vlambda_{\ell}^{\left(1\right)}\varsigma_{\nu,\ell}^{-1}\right)\right]
	\]
	where
	\begin{align*}
		\vnu_{i} & =\vepsi_{i}+\mF^{\left(2\right)}\vlambda_{i}^{\left(2\right)},\\
		\mD_{i} & =\mU_{i}+\mF^{\left(2\right)}\mC_{i}^{\left(2\right)}.
	\end{align*}
	Then, we can rewrite
	\begin{align*}
		\vvarpi_{i} & =\left[\vepsi_{i}\varsigma_{\nu,i}^{-1}-\mU_{i}\left(\overline{\mC}^{\left(1\right)}\right)^{-1}\left(N^{-1}\sum_{\ell=1}^{N}\vlambda_{\ell}^{\left(1\right)}\varsigma_{\nu,\ell}^{-1}\right)\right]\\
		& +\mF^{\left(2\right)}\left[\vlambda_{i}^{\left(2\right)}\varsigma_{\nu,i}^{-1}-\mC_{i}^{\left(2\right)}\left(\overline{\mC}^{\left(1\right)}\right)^{-1}\left(N^{-1}\sum_{\ell=1}^{N}\vlambda_{\ell}^{\left(1\right)}\varsigma_{\nu,\ell}^{-1}\right)\right].
	\end{align*}
	Consequently, we can write out
	\begin{align}
		& k_{N,T}\sum_{i,j}^{N}\vvarpi_{i}^{\prime}\mP_{\widehat{\mF}}\textnormal{\ensuremath{\vvarpi_{j}}}\nonumber \\
		& \leq 2\sqrt{\frac{T}{2N\left(N-1\right)}}\left\lVert \left(T^{-1}\widehat{\mF}^{\prime}\widehat{\mF}\right)^{-1}\right\rVert \left\lVert T^{-1}\widehat{\mF}^{\prime}\mF^{\left(2\right)}\right\rVert^{2} \nonumber \\
		& \times \left\lVert \sum_{i=1}^{N}\vlambda_{i}^{\left(2\right)}\varsigma_{\nu,i}^{-1}-\sum_{i=1}^{N}\mC_{i}^{\left(2\right)}\left(\overline{\mC}^{\left(1\right)}\right)^{-1}\left(N^{-1}\sum_{\ell=1}^{N}\vlambda_{\ell}^{\left(1\right)}\varsigma_{\nu,\ell}^{-1}\right)\right\rVert ^{2}\nonumber \\
		& +2k_{N,T}\sum_{i,j}^{N}\left(\vepsi_{i}\varsigma_{\nu,i}^{-1}-\mU_{i}\left(\overline{\mC}^{\left(1\right)}\right)^{-1}\left(N^{-1}\sum_{\ell=1}^{N}\vlambda_{\ell}^{\left(1\right)}\varsigma_{\nu,\ell}^{-1}\right)\right)^{\prime}\mP_{\widehat{\mF}} \nonumber \\
		& \times \mP_{\widehat{\mF}} \left(\vepsi_{j}\varsigma_{\nu,j}^{-1}-\mU_{j}\left(\overline{\mC}^{\left(1\right)}\right)^{-1}\left(N^{-1}\sum_{\ell=1}^{N}\vlambda_{\ell}^{\left(1\right)}\varsigma_{\nu,\ell}^{-1}\right)\right),\label{eq:sumij_varpii_PF_varpij}
	\end{align}
	where the second line is of order $\largeO_{P}\left(T^{-1/2}\right)+\largeO_{P}\left(N^{-1/2}\right)+\largeO_{P}\left[N^{-1}\sqrt{T}\right]$
	by Lemma \ref{lem:I2} with $w_{i}=\varsigma_{\nu,i}^{-1}$. Concerning the
	first line:
	\begin{equation}
		\sqrt{\frac{T}{2N\left(N-1\right)}}\left\lVert \sum_{i=1}^{N}\vlambda_{i}^{\left(2\right)}\varsigma_{\nu,i}^{-1}-\sum_{i=1}^{N}\mC_{i}^{\left(2\right)}\left(\overline{\mC}^{\left(1\right)}\right)^{-1}\left(N^{-1}\sum_{\ell=1}^{N}\vlambda_{\ell}^{\left(1\right)}\varsigma_{\nu,\ell}^{-1}\right)\right\rVert ^{2}=\largeO_{P}\left(k_{N,T}\sum_{i,j}^{N}\vvarpi_{i}^{\prime}\textnormal{\ensuremath{\vvarpi_{j}}}\right).\label{eq:auxlem_order}
	\end{equation}
	To appreciate this result, we write
	\begin{align*}
		& \left\lVert \sum_{i=1}^{N}\vlambda_{i}^{\left(2\right)}\varsigma_{\nu,i}^{-1}-\sum_{i=1}^{N}\mC_{i}^{\left(2\right)}\left(\overline{\mC}^{\left(1\right)}\right)^{-1}\left(N^{-1}\sum_{\ell=1}^{N}\vlambda_{\ell}^{\left(1\right)}\varsigma_{\nu,\ell}^{-1}\right)\right\rVert ^{2}\\
		= & 3\left\lVert \sum_{i=1}^{N}\vlambda_{i}^{\left(2\right)}\varsigma_{\nu,i}^{-1}\right\rVert ^{2}+3\left\lVert \sum_{i=1}^{N}\mC_{i}^{\left(2\right)}\right\rVert ^{2}\left\lVert \left(\overline{\mC}^{\left(1\right)}\right)^{-1}\right\rVert ^{2}\left\lVert N^{-1}\sum_{\ell=1}^{N}\vlambda_{\ell}^{\left(1\right)}\varsigma_{\nu,\ell}^{-1}\right\rVert ^{2}
	\end{align*}
	and note that $N^{-2}\left\lVert \sum_{i=1}^{N}\vlambda_{i}^{\left(1\right)}\varsigma_{\nu,i}^{-1}\right\rVert ^{2}=\largeO_{P}\left(1\right)$
	by the same reasoning leading to \eqref{eq:lambda_barW}. Furthermore, $\left\lVert \sum_{i=1}^{N}\mC_{i}^{\left(2\right)}\right\rVert ^{2}=\largeO_{P}\left(N\right)$
	holds by the zero mean property of $\vlambda_{i}^{\left(2\right)}$
	and $\mLambda_{i}^{\left(2\right)}$. Consequently,
	\[
	\left\lVert \sum_{i=1}^{N}\mC_{i}^{\left(2\right)}\right\rVert ^{2}\left\lVert \left(\overline{\mC}^{\left(1\right)}\right)^{-1}\right\rVert ^{2}\left\lVert N^{-1}\sum_{\ell=1}^{N}\vlambda_{\ell}^{\left(1\right)}\varsigma_{\nu,\ell}^{-1}\right\rVert ^{2}=\largeO_{P}\left(N\right),
	\]
	which never has a higher order in probability than$\left\lVert \sum_{i=1}^{N}\vlambda_{i}^{\left(2\right)}\varsigma_{\nu,i}^{-1}\right\rVert ^{2}.$
	We prove this claim by noting that the expected value of $\left\lVert \sum_{i=1}^{N}\vlambda_{i}^{\left(2\right)}\varsigma_{\nu,i}^{-1}\right\rVert ^{2}$
	is given by
	\[
	\E\left[\left\lVert \sum_{i=1}^{N}\vlambda_{i}^{\left(2\right)}\varsigma_{\nu,i}^{-1}\right\rVert ^{2}\right]=\sum_{i=1}^{N}\E\left[\varsigma_{\nu,i}^{-2}\vlambda_{i}^{\left(2\right)\prime}\vlambda_{i}^{\left(2\right)}\right]+\sum_{i=1}^N \sum_{j\neq i}^{N}\E\left[\varsigma_{\nu,i}^{-1}\vlambda_{i}^{\left(2\right)\prime}\right]\E\left[\vlambda_{j}^{\left(2\right)}\varsigma_{\nu,j}^{-1}\right].
	\]
	It follows straightforwardly from Markov's inequality that
	\[
	\left\lVert \sum_{i=1}^{N}\vlambda_{i}^{\left(2\right)}\varsigma_{\nu,i}^{-1}\right\rVert ^{2}=\begin{cases}
		\largeO_{P}\left(N\right) & \text{if }\cov\left[\vlambda_{j}^{\left(2\right)};\varsigma_{\nu,j}^{-1}\right]=\vzeros\\
		\largeO_{P}\left(N^{2}\right) & otherwise
	\end{cases}.
	\]
	The reduction in the order in probability due to a zero correlation
	between $\vlambda_{i}^{\left(2\right)}$ and $\varsigma_{\nu,i}^{-1}$
	is the same as observed in the case of $k_{N,T}\sum_{i,j}^{N}\vvarpi_{i}^{\prime}\vvarpi_{j}$.
	Additionally using the scaling $\sqrt{\frac{T}{2N\left(N-1\right)}}$
	and using our result on the order of $k_{N,T}\sum_{i,j}^{N}\vvarpi_{i}^{\prime}\vvarpi_{j}$
	in the proof of Proposition 3, we arrive at
	\[
	\sqrt{\frac{T}{2N\left(N-1\right)}}\left\lVert \sum_{i=1}^{N}\vlambda_{i}^{\left(2\right)}\varsigma_{\nu,i}^{-1}\right\rVert ^{2}=\largeO_{P}\left(k_{N,T}\sum_{i,j}^{N}\vvarpi_{i}^{\prime}\vvarpi_{j}\right),
	\]
	which leads to result \eqref{eq:auxlem_order}. Now returning to the
	first line of expression \eqref{eq:sumij_varpii_PF_varpij}, we consider
	\begin{align}
		\label{eq:Fhat_Ftwo}
		&\left\lVert T^{-1}\widehat{\mF}^{\prime}\mF^{\left(2\right)}\right\rVert ^{2} \notag \\
		\leq & 3\left\lVert \overline{\mC}^{\left(1\right)\prime}\right\rVert ^{2}\left\lVert T^{-1}\mF^{\left(1\right)\prime}\mF^{\left(2\right)}\right\rVert ^{2}+9\left\lVert \overline{\mC}^{\left(2\right)\prime}\right\rVert ^{2}\left\lVert T^{-1}\mF^{\left(2\right)\prime}\mF^{\left(2\right)}\right\rVert ^{2}+k_{N,T}\sum_{i,j}^{N}\vvarpi_{i}^{\prime}\vvarpi_{j}\left\lVert T^{-1}\overline{\mU}^{\prime}\mF^{\left(2\right)}\right\rVert ^{2} \notag \\
		= & \largeO_{P}\left(T^{-1}\right)+\largeO_{P}\left(N^{-1}\right)+\largeO_{P}\left(\left(NT\right)^{-1}\right),
	\end{align}
	which holds by zero correlation between $\mF^{\left(1\right)}$
	and $\mF^{\left(2\right)}$, the zero expected value of $\overline{\mC}^{\left(2\right)}$
	and result \eqref{eq:UbarF}. Consequently, an upper bound for the order in probability
	of \eqref{eq:sumij_varpii_PF_varpij} is given by
	\[
	k_{N,T}\sum_{i,j}^{N}\vvarpi_{i}^{\prime}\mP_{\widehat{\mF}}\vvarpi_{j}=\largeO_{P}\left[\max\left[N^{-1},T^{-1}\right]\left(k_{N,T}\sum_{i,j}^{N}\vvarpi_{i}^{\prime}\vvarpi_{j}\right)\right].
	\]
\end{proof}

\bigskip

\begin{lemma}
	\label{lem:II_power}
	Under Assumptions \ref{ass:errors}-\ref{ass:rank} and $\E[|\varsigma_{\nu,i}^{-1}|^{4}]<M$,
	\begin{align*}
		&k_{N,T}\sum_{i=1}^{N}\varsigma_{\nu,i}^{-2}\left(\vnu_{i}^{\prime}-\vlambda_{i}^{\left(1\right)\prime}\left(\overline{\mC}^{\left(1\right)\prime}\right)^{-1}\overline{\mD}^{\prime}\right)\mM_{\widehat{\mF}}\left(\vnu_{i}-\overline{\mD}\left(\overline{\mC}^{\left(1\right)}\right)^{-1}\vlambda_{i}^{\left(1\right)}\right)\\
		&=  k_{N,T}\sum_{i=1}^{N}\varsigma_{\nu,i}^{-2}\vnu_{i}^{\prime}\vnu_{i}+\largeO_{P}\left(N^{-1/2}\sqrt{T}\right)+\largeO_{P}\left(T^{-1/2}\right)
	\end{align*}
\end{lemma}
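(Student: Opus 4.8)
The plan is to mirror the treatment of the term $II = II_1 - II_2$ from the proof of Theorem \ref{theorem::CCE}, under the substitutions $\vepsi_i \mapsto \vnu_i$, $\overline{\mU} \mapsto \overline{\mD}$, $\vlambda_i \mapsto \vlambda_i^{(1)}$, $\overline{\mC} \mapsto \overline{\mC}^{(1)}$ and $w_i \mapsto \varsigma_{i,\nu}^{-1}$. First I would write $\mM_{\hat{\mF}} = \mI - \mP_{\hat{\mF}}$, splitting the left-hand side into $II_1 - II_2$, where $II_1$ replaces $\mM_{\hat{\mF}}$ by the identity and $II_2$ by the projection. The decisive structural difference from Theorem \ref{theorem::CCE} is that, by Lemma \ref{lem:sumij_varpii_PF_varpij}, $\vnu_i = \vepsi_i + \mF^{(2)}\vlambda_i^{(2)}$ and $\mD_i = \mU_i + \mF^{(2)}\mC_i^{(2)}$ are no longer purely idiosyncratic: they carry the unaccounted common factor $\mF^{(2)}$, which is correlated across $i$. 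Throughout I would use the two moment conditions from Lemma \ref{lem:facpartition}, namely $\E[\vlambda_i^{(2)}] = \vzeros$ and $\E[\mLambda_i^{(2)}] = \mZeros$, which yield $\overline{\mC}^{(2)} = \largeO_{P}(N^{-1/2})$ and hence $\overline{\mD} = \overline{\mU} + \mF^{(2)}\overline{\mC}^{(2)}$ with both summands small.

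For $II_1$ I would expand the quadratic form into the leading term $k_{N,T}\sum_{i=1}^N \varsigma_{i,\nu}^{-2}\vnu_i^{\prime}\vnu_i$, a cross term, and a quadratic term in $\overline{\mD}(\overline{\mC}^{(1)})^{-1}\vlambda_i^{(1)}$. Substituting the factor decompositions, the cross term is dominated by the contribution containing $\vlambda_i^{(2)\prime}(\mF^{(2)\prime}\mF^{(2)})\overline{\mC}^{(2)}$; using $\mF^{(2)\prime}\mF^{(2)} = \largeO_{P}(T)$, $\overline{\mC}^{(2)} = \largeO_{P}(N^{-1/2})$ and $\|(\overline{\mC}^{(1)})^{-1}\| = \largeO_{P}(1)$, this is of order $k_{N,T}\cdot N^{1/2}T = \largeO_{P}(N^{-1/2}\sqrt{T})$, the leading remainder. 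The purely idiosyncratic portion of the cross term is bounded exactly as in \eqref{eq:II1_crossterm}, giving $\largeO_{P}(N^{-1/2}) + \largeO_{P}(N^{-1}\sqrt{T})$, and the quadratic term contributes $\largeO_{P}(N^{-1}\sqrt{T})$ through its $\mF^{(2)}$-piece (since $\|\mF^{(2)}\overline{\mC}^{(2)}\|^2 = \largeO_{P}(TN^{-1})$) together with the orders recorded in \eqref{eq:II1_limit} through its idiosyncratic piece. Collecting these yields $II_1 = k_{N,T}\sum_{i=1}^N \varsigma_{i,\nu}^{-2}\vnu_i^{\prime}\vnu_i + \largeO_{P}(N^{-1/2}\sqrt{T}) + \largeO_{P}(T^{-1/2})$.

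For $II_2 = k_{N,T}\sum_{i=1}^N \varsigma_{i,\nu}^{-2}\|\mP_{\hat{\mF}}(\vnu_i - \overline{\mD}(\overline{\mC}^{(1)})^{-1}\vlambda_i^{(1)})\|^2$ I would split the projected direction into its idiosyncratic component (built from $\vepsi_i$, $\overline{\mU}$) and its common component (built from $\mF^{(2)}\vlambda_i^{(2)}$, $\mF^{(2)}\overline{\mC}^{(2)}$). For the idiosyncratic component, Lemma \ref{lem:II2} applies with the bounded weights $\varsigma_{i,\nu}^{-1} \le \delta^{-1}$ — exactly as Lemma \ref{lem:sumij_varpii_PF_varpij} invokes Lemma \ref{lem:I2} with $w_i = \varsigma_{i,\nu}^{-1}$, since that lemma's bounds only use boundedness of the weight moments and the cross-sectional independence of $\vepsi_i$ and $\ve_i$ — and delivers a remainder within $\largeO_{P}(N^{-3/2}) + \largeO_{P}(T^{-1/2})$. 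For the common component, the key input is the asymptotic orthogonality $\|T^{-1}\hat{\mF}^{\prime}\mF^{(2)}\|^2 = \largeO_{P}(N^{-1}) + \largeO_{P}(T^{-1})$ from \eqref{eq:Fhat_Ftwo}; combined with \eqref{eq:FhatFhat} this gives $\mF^{(2)\prime}\mP_{\hat{\mF}}\mF^{(2)} = T\,\largeO_{P}(\max(N^{-1},T^{-1}))$, so that $k_{N,T}\sum_i \varsigma_{i,\nu}^{-2}\|\mP_{\hat{\mF}}\mF^{(2)}\vlambda_i^{(2)}\|^2 = k_{N,T}\cdot N\cdot T\,\largeO_{P}(\max(N^{-1},T^{-1})) = \largeO_{P}(N^{-1}\sqrt{T})$ and is absorbed into the stated remainder. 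The mixed terms of $II_2$ are then controlled by Cauchy--Schwarz from these two cases.

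The step I expect to be the main obstacle is precisely the common component of $II_2$: because $\mF^{(2)}\vlambda_i^{(2)}$ is shared across all cross-sections, the cross-sectional cancellation underlying Lemma \ref{lem:II2} is unavailable, and one must instead control $\mP_{\hat{\mF}}\mF^{(2)}$ through asymptotic orthogonality. Keeping the bookkeeping of the $\mF^{(2)}$-interactions precise — tracking which terms acquire an extra factor $T$ from $\mF^{(2)\prime}\mF^{(2)}$ and an extra $N^{-1/2}$ from $\overline{\mC}^{(2)}$ — is what fixes the final remainder at $\largeO_{P}(N^{-1/2}\sqrt{T}) + \largeO_{P}(T^{-1/2})$, coarser than its null-hypothesis counterpart \eqref{eq:II1_limit}.
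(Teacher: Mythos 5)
Your proposal is correct and follows essentially the same route as the paper's own proof: the same $\mM_{\hat{\mF}}=\mI-\mP_{\hat{\mF}}$ split, the same identification of the cross-term contribution $\vlambda_{i}^{(2)\prime}\mF^{(2)\prime}\mF^{(2)}\overline{\mC}^{(2)}$ (the paper's $a_{14}$) as the source of the leading $\largeO_{P}(N^{-1/2}\sqrt{T})$ remainder, and the same treatment of the projection part by separating an idiosyncratic piece handled via Lemma \ref{lem:II2} with bounded weights $\varsigma_{i,\nu}^{-1}$ from a common piece controlled through $\left\lVert T^{-1}\hat{\mF}^{\prime}\mF^{(2)}\right\rVert^{2}$ and \eqref{eq:FhatFhat}. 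The only difference is bookkeeping: the paper writes out explicit Cauchy--Schwarz bounds for the mixed cross terms ($a_{12}$, $a_{13}$, each $\largeO_{P}(N^{-1/2})$) that you subsume under the word ``dominated,'' which is a presentational rather than substantive gap.
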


\begin{proof}[Proof of Lemma \ref{lem:II_power}]
	The term that we are interested in can be expanded into
	\begin{align*}
		& k_{N,T}\sum_{i=1}^{N}\varsigma_{\nu,i}^{-2}\left(\vnu_{i}^{\prime}-\vlambda_{i}^{\left(1\right)\prime}\left(\overline{\mC}^{\left(1\right)\prime}\right)^{-1}\overline{\mD}^{\prime}\right)\mM_{\widehat{\mF}}\left(\vnu_{i}-\overline{\mD}\left(\overline{\mC}^{\left(1\right)}\right)^{-1}\vlambda_{i}^{\left(1\right)}\right)\\
		& = k_{N,T}\sum_{i=1}^{N}\varsigma_{\nu,i}^{-2}\vnu_{i}^{\prime}\vnu_{i}-2k_{N,T}\sum_{i=1}^{N}\varsigma_{\nu,i}^{-2}\vnu_{i}^{\prime}\overline{\mD}\left(\overline{\mC}^{\left(1\right)}\right)^{-1}\vlambda_{i}^{\left(1\right)}\\
		& + k_{N,T}\sum_{i=1}^{N}\varsigma_{\nu,i}^{-2}\vlambda_{i}^{\left(1\right)\prime}\left(\overline{\mC}^{\left(1\right)\prime}\right)^{-1}\overline{\mD}^{\prime}\overline{\mD}\left(\overline{\mC}^{\left(1\right)}\right)^{-1}\vlambda_{i}^{\left(1\right)}\\
		& + k_{N,T}\sum_{i=1}^{N}\varsigma_{\nu,i}^{-2}\left(\vnu_{i}^{\prime}-\vlambda_{i}^{\left(1\right)\prime}\left(\overline{\mC}^{\left(1\right)\prime}\right)^{-1}\overline{\mD}^{\prime}\right)\mP_{\widehat{\mF}}\left(\vnu_{i}-\overline{\mD}\left(\overline{\mC}^{\left(1\right)}\right)^{-1}\vlambda_{i}^{\left(1\right)}\right)\\
		& = k_{N,T}\sum_{i=1}^{N}\varsigma_{\nu,i}^{-2}\vnu_{i}^{\prime}\vnu_{i}-2a_{1}+a_{2}+a_{3},
	\end{align*}
	where we recall that $\vnu_{i} =\vepsi_{i}+\mF^{\left(2\right)}\vlambda_{i}^{\left(2\right)}$  and $\mD_{i} =\mU_{i}+\mF^{\left(2\right)}\mC_{i}^{\left(2\right)}$.
	Now consider $a_{3}$. Similar to the proof of Lemma \ref{lem:sumij_varpii_PF_varpij}, we can write
	\begin{align*}
		a_{3}& \leq 2k_{N,T}\sum_{i=1}^{N}\varsigma_{\nu,i}^{-2}\left(\vlambda_{i}^{\left(2\right)}-\overline{\mC}^{\left(2\right)}\left(\overline{\mC}^{\left(1\right)}\right)^{-1}\vlambda_{i}^{\left(1\right)}\right)^{\prime}\mF^{\left(2\right)\prime}\mP_{\widehat{\mF}}\mF^{\left(2\right)}\left(\vlambda_{i}^{\left(2\right)}-\overline{\mC}^{\left(2\right)}\left(\overline{\mC}^{\left(1\right)}\right)^{-1}\vlambda_{i}^{\left(1\right)}\right)\\
		& +  2k_{N,T}\sum_{i=1}^{N}\varsigma_{\nu,i}^{-2}\left(\vepsi_{i}-\overline{\mU}\left(\overline{\mC}^{\left(1\right)}\right)^{-1}\vlambda_{i}^{\left(1\right)}\right)^{\prime}\mP_{\widehat{\mF}}\left(\vepsi_{i}-\overline{\mU}\left(\overline{\mC}^{\left(1\right)}\right)^{-1}\vlambda_{i}^{\left(1\right)}\right).
	\end{align*}
	By Lemma \ref{lem:II2}, the last line in the expression above is $\largeO_{P}\left(N^{-2}\sqrt{T}\right)+\largeO_{P}\left(N^{-3/2}\right)+\largeO_{P}\left(T^{-1/2}\right)$.
	Concerning the remaining term, we can write
	\begin{align*}
		& k_{N,T}\sum_{i=1}^{N}\varsigma_{\nu,i}^{-2}\left(\vlambda_{i}^{\left(2\right)}-\overline{\mC}^{\left(2\right)}\left(\overline{\mC}^{\left(1\right)}\right)^{-1}\vlambda_{i}^{\left(1\right)}\right)^{\prime}\mF^{\left(2\right)\prime}\mP_{\widehat{\mF}}\mF^{\left(2\right)}\left(\vlambda_{i}^{\left(2\right)}-\overline{\mC}^{\left(2\right)}\left(\overline{\mC}^{\left(1\right)}\right)^{-1}\vlambda_{i}^{\left(1\right)}\right)\\
		& \leq Tk_{N,T}\sum_{i=1}^{N}\varsigma_{\nu,i}^{-2}\left\lVert \vlambda_{i}^{\left(2\right)}-\overline{\mC}^{\left(2\right)}\left(\overline{\mC}^{\left(1\right)}\right)^{-1}\vlambda_{i}^{\left(1\right)}\right\rVert ^{2}\left\lVert T^{-1}\widehat{\mF}^{\prime}\mF^{\left(2\right)}\right\rVert ^{2}\left\lVert \left(T^{-1}\widehat{\mF}^{\prime}\widehat{\mF}\right)^{-1}\right\rVert
	\end{align*}
	where we focus on the term
	\begin{align*}
		& Tk_{N,T}\sum_{i=1}^{N}\varsigma_{\nu,i}^{-2}\left\lVert \vlambda_{i}^{\left(2\right)}-\overline{\mC}^{\left(2\right)}\left(\overline{\mC}^{\left(1\right)}\right)^{-1}\vlambda_{i}^{\left(1\right)}\right\rVert ^{2}\\
		& \leq \left(Tk_{N,T}3\sum_{i=1}^{N}\varsigma_{\nu,i}^{-2}\vlambda_{i}^{\left(2\right)\prime}\vlambda_{i}^{\left(2\right)}\right)+\left(Tk_{N,T}\sum_{i=1}^{N}\varsigma_{\nu,i}^{-2}\vlambda_{i}^{\left(1\right)\prime}\vlambda_{i}^{\left(1\right)}\right)\left\lVert \overline{\mC}^{\left(2\right)}\right\rVert ^{2}\left\lVert \left(\overline{\mC}^{\left(1\right)}\right)^{-1}\right\rVert ^{2}\\
		& = \largeO_{P}\left(\sqrt{T}\right).
	\end{align*}
	This result follows from applying the reasoning given below equation
	\eqref{eq:norm_epsFhat} to both weighted sums over $i$ in the expression above. Combining
	it with result \eqref{eq:Fhat_Ftwo}, we arrive at
	\begin{align*}
		& k_{N,T}\sum_{i=1}^{N}\varsigma_{\nu,i}^{-2}\left(\vlambda_{i}^{\left(2\right)}-\overline{\mC}^{\left(2\right)}\left(\overline{\mC}^{\left(1\right)}\right)^{-1}\vlambda_{i}^{\left(1\right)}\right)^{\prime}\mF^{\left(2\right)\prime}\mP_{\widehat{\mF}}\mF^{\left(2\right)}\left(\vlambda_{i}^{\left(2\right)}-\overline{\mC}^{\left(2\right)}\left(\overline{\mC}^{\left(1\right)}\right)^{-1}\vlambda_{i}^{\left(1\right)}\right)\\
		& = \largeO_{P}\left(N^{-1}\sqrt{T}\right)+\largeO_{P}\left(T^{-1/2}\right).
	\end{align*}
	which, together with Lemma \ref{lem:II2}, leads to
	\begin{equation}
		a_{3}=\largeO_{P}\left(N^{-1}\sqrt{T}\right)+\largeO_{P}\left(T^{-1/2}\right).\label{eq:a3}
	\end{equation}
	Next, we investigate $a_{2}$ and write
	\begin{align}
		a_{2} & \leq k_{N,T}\sum_{i=1}^{N}\left\lVert \varsigma_{\nu,i}^{-1}\vlambda_{i}^{\left(1\right)\prime}\right\rVert ^{2}\left\lVert \left(\overline{\mC}^{\left(1\right)\prime}\right)^{-1}\right\rVert ^{2}3\left(\left\lVert \overline{\mC}^{\left(2\right)}\right\rVert ^{2}\left\lVert \mF\right\rVert ^{2}+\left\lVert \overline{\mU}\right\rVert ^{2}\right)\nonumber \\
		& =\largeO_{P}\left(N^{-1}\sqrt{T}\right),\label{eq:a2}
	\end{align}
	where the last line follows from result \eqref{eq:UbarUbar}, the zero mean of $\overline{\mC}^{\left(2\right)}$
	and the result on $\sum_{i=1}^{N}\varsigma_{\nu,i}^{-2}\vlambda_{i}^{\left(1\right)\prime}\vlambda_{i}^{\left(1\right)}$
	referred to above. The last term to consider is given by
	\begin{align*}
		a_{1} & =k_{N,T}\sum_{i=1}^{N}\varsigma_{\nu,i}^{-2}\vepsi^{\prime}\overline{\mU}\left(\overline{\mC}^{\left(1\right)}\right)^{-1}\vlambda_{i}^{\left(1\right)}+k_{N,T}\sum_{i=1}^{N}\varsigma_{\nu,i}^{-2}\vlambda_{i}^{\left(2\right)}{}^{\prime}\mF^{\left(2\right)\prime}\overline{\mU}\left(\overline{\mC}^{\left(1\right)}\right)^{-1}\vlambda_{i}^{\left(1\right)}\\
		& +k_{N,T}\sum_{i=1}^{N}\varsigma_{\nu,i}^{-2}\vepsi_{i}^{\prime}\mF^{\left(2\right)}\overline{\mC}^{\left(2\right)}\left(\overline{\mC}^{\left(1\right)}\right)^{-1}\vlambda_{i}^{\left(1\right)}\\
		& +k_{N,T}\sum_{i=1}^{N}\varsigma_{\nu,i}^{-2}\vlambda_{i}^{\left(2\right)}{}^{\prime}\mF^{\left(2\right)\prime}\mF^{\left(2\right)}\overline{\mC}^{\left(2\right)}\left(\overline{\mC}^{\left(1\right)}\right)^{-1}\vlambda_{i}^{\left(1\right)}\\
		& =a_{11}+a_{12}+a_{13}+a_{14}.
	\end{align*}
	Here, $a_{11}=\largeO_{P}\left(N^{-1/2}\right)+\largeO_{P}\left(N^{-1}\sqrt{T}\right)$
	results from proceeding almost exactly as in the steps leading to
	equation \eqref{eq:II1_crossterm} in the proof of Theorem \ref{theorem::CCE}. Concerning the second term,
	we can write
	\begin{align*}
		\left|a_{12}\right| & \leq k_{N,T}\sum_{i=1}^{N}\left\lVert \varsigma_{\nu,i}^{-1}\vlambda_{i}^{\left(2\right)}\right\rVert \left\lVert \mF^{\left(2\right)\prime}\overline{\mU}\left(\overline{\mC}^{\left(1\right)}\right)^{-1}\right\rVert \left\lVert \varsigma_{\nu,i}^{-1}\vlambda_{i}^{\left(1\right)}\right\rVert \\
		& \leq k_{N,T}N\sqrt{\frac{T}{N}}\left(N^{-1}\sum_{i=1}^{N}\left\lVert \varsigma_{\nu,i}^{-1}\vlambda_{i}^{\left(2\right)}\right\rVert ^{2}\right)^{1/2}\left(N^{-1}\sum_{i=1}^{N}\left\lVert \varsigma_{\nu,i}^{-1}\vlambda_{i}^{\left(1\right)}\right\rVert ^{2}\right)^{1/2}\left\lVert \sqrt{\frac{N}{T}}\mF^{\left(2\right)\prime}\overline{\mU}\right\rVert \left\lVert \left(\overline{\mC}^{\left(1\right)}\right)^{-1}\right\rVert \\
		& =\largeO_{P}\left(N^{-1/2}\right),
	\end{align*}
	where we make use of \eqref{eq:UbarF} with $w_{i}=1$.With regards to the
	third term, we have
	\begin{align*}
		\left|a_{13}\right| & \leq k_{N,T}\sqrt{NT}\left(\left(NT\right)^{-1}\sum_{i=1}^{N}\left\lVert \vepsi_{i}^{\prime}\mF^{\left(2\right)}\right\rVert ^{2}\right)^{1/2}\left(N^{-1}\sum_{i=1}^{N}\left\lVert \vlambda_{i}^{\left(1\right)}\varsigma_{\nu,i}^{-2}\right\rVert ^{2}\right)^{1/2}\left\lVert \sqrt{N}\overline{\mC}^{\left(2\right)}\right\rVert \left\lVert \left(\overline{\mC}^{\left(1\right)}\right)^{-1}\right\rVert \\
		& =\largeO_{P}\left(N^{-1/2}\right).
	\end{align*}
	Here, we can use results from the proof of Lemma \ref{lem:II2} to obtain $\left(NT\right)^{-1}\sum_{i=1}^{N}\left\lVert \vepsi_{i}^{\prime}\mF^{\left(2\right)}\right\rVert ^{2}=\largeO_{P}\left(1\right)$.
	Lastly, we can apply the Cauchy-Schwarz inequality to $\left|a_{14}\right|$ in order
	to arrive at
	\begin{align*}
		\left|a_{14}\right| & =k_{N,T}\sqrt{N}T\left(N^{-1}\sum_{i=1}^{N}\left\lVert \varsigma_{\nu,i}^{-1}\vlambda_{i}^{\left(2\right)}\right\rVert ^{2}\right)^{1/2}\left(N^{-1}\sum_{i=1}^{N}\left\lVert \varsigma_{\nu,i}^{-1}\vlambda_{i}^{\left(1\right)}\right\rVert ^{2}\right)^{1/2} \\
		&\times \left\lVert T^{-1/2}\mF^{\left(2\right)}\right\rVert ^{2}\left\lVert \sqrt{N}\overline{\mC}^{\left(2\right)}\right\rVert \left\lVert \left(\overline{\mC}^{\left(1\right)}\right)^{-1}\right\rVert \\
		& = \largeO_{P}\left(N^{-1/2}\sqrt{T}\right).
	\end{align*}
	Using the last four intermediary results allows us to conclude that
	\[
	a_{1}=\largeO_{P}\left(N^{-1/2}\sqrt{T}\right).
	\]
	This result, together with \eqref{eq:a3} and \eqref{eq:a2} establishes
	\begin{align*}
		& k_{N,T}\sum_{i=1}^{N}\varsigma_{\nu,i}^{-2}\left(\vnu_{i}^{\prime}-\vlambda_{i}^{\left(1\right)\prime}\left(\overline{\mC}^{\left(1\right)\prime}\right)^{-1}\overline{\mD}^{\prime}\right)\mM_{\widehat{\mF}}\left(\vnu_{i}^{\prime}-\overline{\mD}\left(\overline{\mC}^{\left(1\right)}\right)^{-1}\vlambda_{i}^{\left(1\right)}\right)\\
		& = k_{N,T}\sum_{i=1}^{N}\varsigma_{\nu,i}^{-2}\vnu_{i}^{\prime}\vnu_{i}+\largeO_{P}\left(N^{-1/2}\sqrt{T}\right)+\largeO_{P}\left(T^{-1/2}\right),
	\end{align*}
	which was to be shown.
\end{proof}


\clearpage	

\clearpage

\makeatletter\@input{auxmain.tex}\makeatother	
\end{document}